\title[Coinductive Streams in Monoidal Categories]{Coinductive Streams in Monoidal Categories}
\address{Tallinn University of Technology, Ehitajate tee 5, 12616 Tallinn, Estonia}
\address{Università di Pisa, Lungarno Antonio Pancinotti 43, 56126 Pisa, Italy}
\address{University of Oxford, Oxford OX1 2JD, United Kingdom}
\email{elena.di.lavore@cs.ox.ac.uk, mario.roman.garcia@cs.ox.ac.uk}
\address{Quantinuum, Oxford OX1 2JD, United Kingdom}
\email{giovanni.defelice@cambridgequantum.com}
\author[E. Di Lavore]{Elena Di Lavore\lmcsorcid{0000-0002-7783-5079}}[a,b,c]
\author[G. de Felice]{Giovanni de Felice\lmcsorcid{0009-0001-5222-6824}}[c,d]
\author[M. Rom\'an]{Mario Rom\'an\lmcsorcid{0000-0003-3158-1226}}[a,c]
\thanks{\noindent The present is an extended version of ``Monoidal Streams for Dataflow Programming'', by the same authors. Elena Di Lavore and Mario Rom\'an were supported by the ESF funded Estonian IT Academy research measure (project 2014-2020.4.05.19-0001). Mario Román was supported by the Air Force Office of Scientific Research (AFOSR) award number FA9550-21-1-0038. Elena Di Lavore and Mario Román were supported by the Advanced Research + Invention Agency (ARIA) Safeguarded AI Programme.}
\def\@copyrightspace{\relax}
\begin{document}

\begin{abstract}
  We introduce monoidal streams.
  Monoidal streams are a generalization of causal stream functions, which can be defined in cartesian monoidal categories, to arbitrary symmetric monoidal categories.
  In the same way that streams provide semantics to dataflow programming with pure functions, monoidal streams provide semantics to dataflow programming with arbitrary processes.
  Monoidal streams also form a feedback monoidal category.
  In the same way that the coinductive stream calculus reasons about signal flow graphs, coinductive string diagrams reason about feedback monoidal categories, with semantics in monoidal streams. We highlight the probabilistic case, characterizing probabilistic streams as causal stochastic processes.
\end{abstract}

\keywords{Monoidal stream, Stream, Monoidal category, Dataflow programming, Feedback, Signal flow graph, Coalgebra, Stochastic process.}
\maketitle

\newpage
\setcounter{tocdepth}{1}
\vspace*{-1.5\baselineskip}
\tableofcontents

\section{Introduction}\label{section:introduction}

\paragraph{Dataflow languages}
Dataflow (or \emph{stream-based}) programming languages, such as \textsc{Lucid}~\cite{wadge1985lucid,halbwachs1991lustre},
follow a paradigm in which every declaration represents an infinite list of values:
a \emph{stream}~\cite{benveniste93,uustalu05}.
The following program in a \textsc{Lucid}-like language (\Cref{diagram:dataflowfibonacci}) computes the Fibonacci sequence using the $\Fby$ (``followed by'') and $\Wait$ (``wait'') operators: $\Fby$ constructs a stream from a first element and a stream; $\Wait$ delays a stream by one unit of time.
\begin{figure}[H]
  \centering $\fib = 0\ \Fby\ (\fib + (1\ \Fby\ \Wait(\fib)))$
\caption{\textit{The Fibonacci sequence is 0 followed
by the Fibonacci sequence plus the Fibonacci sequence preceded by a 1.}}\label{diagram:dataflowfibonacci}
\end{figure}
The control structure of dataflow programs is inspired by \emph{signal flow graphs}~\cite{benveniste93,shannon42,mason53}.
Signal flow graphs are diagrammatic specifications of processes with feedback loops, widely used in control system engineering.
In a dataflow program, feedback loops represent how the current value of a stream may depend on its previous values.
For instance, the previous program (\Cref{diagram:dataflowfibonacci}) corresponds to the signal flow graph in \Cref{figure:fibonacci}, left.
\begin{figure}[h!]
  \begin{minipage}{0.3\linewidth}
 \tikzset{every picture/.style={line width=0.85pt}} %
\begin{tikzpicture}[x=0.75pt,y=0.75pt,yscale=-1,xscale=1]
\draw   (50,100) -- (90,100) -- (90,120) -- (50,120) -- cycle ;
\draw   (85,125) -- (105,125) -- (105,145) -- (85,145) -- cycle ;
\draw    (120,95) .. controls (120,135.2) and (116.29,135.8) .. (105,135) ;
\draw    (90,70) .. controls (89.8,55.8) and (120.2,56.6) .. (120,70) ;
\draw    (70,120) .. controls (72,132.09) and (72,134.66) .. (85,135) ;
\draw    (130,62) -- (130,90) ;
\draw [shift={(130,60)}, rotate = 90] [color={rgb, 255:red, 0; green, 0; blue, 0 }  ][line width=0.75]    (10.93,-3.29) .. controls (6.95,-1.4) and (3.31,-0.3) .. (0,0) .. controls (3.31,0.3) and (6.95,1.4) .. (10.93,3.29)   ;
\draw    (105.1,59.1) .. controls (104.9,44.9) and (130.2,46.6) .. (130,60) ;
\draw  [fill={rgb, 255:red, 0; green, 0; blue, 0 }  ,fill opacity=1 ] (102.2,59.1) .. controls (102.2,57.5) and (103.5,56.2) .. (105.1,56.2) .. controls (106.7,56.2) and (108,57.5) .. (108,59.1) .. controls (108,60.7) and (106.7,62) .. (105.1,62) .. controls (103.5,62) and (102.2,60.7) .. (102.2,59.1) -- cycle ;
\draw   (45,70) -- (65,70) -- (65,90) -- (45,90) -- cycle ;
\draw   (55,155) -- (95,155) -- (95,175) -- (55,175) -- cycle ;
\draw   (45,125) -- (65,125) -- (65,145) -- (45,145) -- cycle ;
\draw    (65,195) -- (65,210) ;
\draw   (70,70) -- (110,70) -- (110,90) -- (70,90) -- cycle ;
\draw    (55,90) .. controls (54.57,99.23) and (65.14,93.51) .. (65,100) ;
\draw    (120,70) -- (120,100) ;
\draw    (90,90) .. controls (89.57,99.23) and (80.14,93.51) .. (80,100) ;
\draw    (95,145) .. controls (94.57,154.23) and (85.14,148.51) .. (85,155) ;
\draw    (55,145) .. controls (54.57,154.23) and (65.14,148.51) .. (65,155) ;
\draw    (130,90) .. controls (130.25,158.17) and (105.14,170.09) .. (105,195) ;
\draw    (75,175) -- (75,183.87) ;
\draw    (65,195) .. controls (64.8,180.8) and (85.2,180.8) .. (85,195) ;
\draw  [fill={rgb, 255:red, 0; green, 0; blue, 0 }  ,fill opacity=1 ] (72.1,183.87) .. controls (72.1,182.27) and (73.4,180.97) .. (75,180.97) .. controls (76.6,180.97) and (77.9,182.27) .. (77.9,183.87) .. controls (77.9,185.47) and (76.6,186.77) .. (75,186.77) .. controls (73.4,186.77) and (72.1,185.47) .. (72.1,183.87) -- cycle ;
\draw    (85,195) .. controls (85,204.2) and (105,205) .. (105,195) ;
\draw (55,80) node    {$1$};
\draw (95,135) node    {$+$};
\draw (70,110) node    {$fby$};
\draw (55,135) node    {$0$};
\draw (75,165) node    {$fby$};
\draw (90,80) node    {$wait$};
\end{tikzpicture}
  \end{minipage} \begin{minipage}{0.5\linewidth}
    \begin{gather*} \mathsf{fib} \defn\fbk( \COPY \comp \\
      \partial (1 \times \WAIT) \times \im \comp \\
      \partial(\fby) \times \im \comp \\
      \partial(+) \comp \\
      0 \times \im \comp \\
      \fby \comp \\
      \COPY ) \end{gather*}
  \end{minipage}
  \caption{Fibonacci: signal flow graph and morphism.}\label{figure:fibonacci}\label{figure:finalfibonaccigraph}
\end{figure}

\paragraph{Monoidal categories}
Any theory of processes that \emph{compose sequentially and in parallel}, satisfying reasonable axioms, forms a \emph{monoidal category}.
Examples include functions~\cite{lambek1986a}, probabilistic channels~\cite{cho2019, fritz2020}, partial maps~\cite{robinson88,cockett02}, database queries \cite{Bonchi18}, linear resource theories~\cite{coeckeFS16} and quantum processes~\cite{abramsky2009categorical}.
Signal flow graphs are the graphical syntax for \emph{\feedbackMonoidalCategories{}}~\cite{katis02,bonchi14,bonchi15,feedbackspans2020,kaye22}: they are the \emph{string diagrams} for any of these theories extended with \emph{feedback}.
For instance, the previous program (\Cref{diagram:dataflowfibonacci}) corresponds to the formal morphism in \Cref{figure:fibonacci}.

Yet, semantics of dataflow languages have been mostly restricted to theories of pure functions~\cite{benveniste93,uustalu2008comonadic,cousot19,delpeuch19,oliveira84}: what are called \emph{cartesian} monoidal categories.
We claim that this restriction is actually inessential;
dataflow programs may take semantics in non-cartesian monoidal categories, exactly as their signal flow graphs do.

The present work provides this missing semantics:
we construct \emph{monoidal streams} over a symmetric monoidal category, which form a \emph{\feedbackMonoidalCategory{}}.
Monoidal streams model the values of a \emph{monoidal dataflow language}, in the same way that streams model the values of a classical dataflow language.
This opens the door to stochastic, effectful, or quantum dataflow languages.
In particular, we give semantics and string diagrams for a \emph{stochastic dataflow programming language}, where the following code can be run (\Cref{diagram:dataflowwalk}).

\begin{figure}[H]
  \centering  $\walk = 0\ \Fby\ (\uniform{(-1,1)} + \walk)$
  \caption{\textit{A stochastic dataflow program. A random walk is 0 followed by the random walk plus a stochastic stream of steps to the left (-1) or to the right (1), sampled uniformly.}}
\label{diagram:dataflowwalk}
\end{figure}

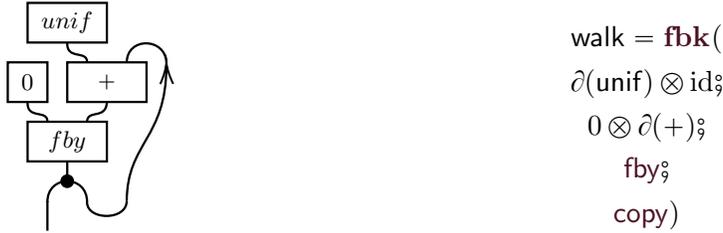
\begin{figure}[h!]
\begin{minipage}{0.3\linewidth}
\tikzset{every picture/.style={line width=0.85pt}}
\begin{tikzpicture}[x=0.75pt,y=0.75pt,yscale=-1,xscale=1]
\draw   (190,110) -- (230,110) -- (230,130) -- (190,130) -- cycle ;
\draw   (170,140) -- (210,140) -- (210,160) -- (170,160) -- cycle ;
\draw   (160,110) -- (180,110) -- (180,130) -- (160,130) -- cycle ;
\draw    (190,160) -- (190,170) ;
\draw    (220,110) .. controls (220.2,97) and (239.4,97.4) .. (240,110) ;
\draw   (170,80) -- (210,80) -- (210,100) -- (170,100) -- cycle ;
\draw    (170,130) .. controls (169.57,139.23) and (180.14,133.51) .. (180,140) ;
\draw    (210,130) .. controls (209.57,139.23) and (200.14,133.51) .. (200,140) ;
\draw    (180,180) -- (180,195) ;
\draw    (180,180.65) .. controls (179.8,166.45) and (200.2,166.45) .. (200,180.65) ;
\draw  [fill={rgb, 255:red, 0; green, 0; blue, 0 }  ,fill opacity=1 ] (187.1,170) .. controls (187.1,168.4) and (188.39,167.1) .. (190,167.1) .. controls (191.6,167.1) and (192.9,168.4) .. (192.9,170) .. controls (192.9,171.6) and (191.6,172.9) .. (190,172.9) .. controls (188.39,172.9) and (187.1,171.6) .. (187.1,170) -- cycle ;
\draw    (200,180.65) .. controls (200,189.85) and (220,190) .. (220,180) ;
\draw    (190,100) .. controls (189.57,109.23) and (200.14,103.51) .. (200,110) ;
\draw    (220,180) .. controls (220.2,150.25) and (239.99,148.64) .. (240.01,111.71) ;
\draw [shift={(240,110)}, rotate = 89.11] [color={rgb, 255:red, 0; green, 0; blue, 0 }  ][line width=0.75]    (10.93,-3.29) .. controls (6.95,-1.4) and (3.31,-0.3) .. (0,0) .. controls (3.31,0.3) and (6.95,1.4) .. (10.93,3.29)   ;

\draw (210,120) node    {$+$};
\draw (170,120) node    {$0$};
\draw (190,150) node    {$fby$};
\draw (190,90) node    {$unif$};
\end{tikzpicture}
\end{minipage} \begin{minipage}{0.5\linewidth}
  \begin{gather*} \mathsf{walk} \defn
    \fbk( \\
    \partial(\mathsf{unif}) \tensor  \im \comp \\
    0 \tensor \partial(+) \comp \\
    \fby \comp \\
    \COPY )
  \end{gather*}
\end{minipage}
\caption{Random walk: signal flow graph and morphism.}
\label{string:walk}
\end{figure}

\subsection{Contributions}

Our main novel contribution is the definition of a \feedbackMonoidalCategory{} of \emph{monoidal streams}
over a \symmetricMonoidalCategory{} ($\STREAM$,~\Cref{def:monoidalstream,th:monoidalstreamsfeedback}).
Monoidal streams form a \finalCoalgebra{}; for sufficiently well-behaved monoidal categories (\Cref{def:productive}), we give an explicit construction of this coalgebra (\Cref{def:observationalsequence}).

In \cartesianCategories{}, the \emph{causal functions} of Uustalu and Vene~\cite{uustalu05} (see also \cite{jacobs:causalfunctions})
are a particular case of our monoidal streams (\Cref{th:cartesianstreams,th:nelist}).
In the category of stochastic functions, our construction captures the notion of
\emph{controlled stochastic process}~\cite{fleming1975,ross1996stochastic} (\Cref{th:stochasticprocesses}).

In order to arrive to this definition, we unify the previous literature:
we characterize the cartesian ``intensional stateful sequences'' of Sprunger and Katsumata~\cite{katsumata19}
with a final coalgebra (\Cref{th:intensionalcoalgebra}), and then ``dinatural stateful sequences''
in terms of the ``feedback monoidal categories'' of Katis, Sabadini and Walters~\cite{katis02} (\Cref{th:ext-stateful-sequences}).
We justify observational equivalence with a refined fixpoint equation that employs \emph{coends} (\Cref{theorem:observationalfinalcoalgebra}).
We strictly generalize ``stateful sequences'' from the cartesian to the monoidal case.

\subsection{Related work}\label{sec:related-work}
The theory of monoidal streams that we present in this manuscript merges multiple research threads: feedback and traced monoidal categories, morphism sequences, coalgebra, and categorical dataflow programming. 
We briefly summarize here these different threads.

\paragraph{Stream programming.}
The theory of stream programming originates in the '70s with the work of Kahn~\cite{kahn1974semantics,kahn1976coroutines}, following the development of dataflow languages~\cite{fosseen1972representation,dennis1972schemas,dennis1974dataflow,weng1975steam}.
During this period, the interest in dataflow languages is widespread, leading to languages like \Lucid{} and \Lustre{}~\cite{kosinski1973data,ashcroft1977lucid,wadge1985lucid,halbwachs1991lustre,ackerman1979data}.

\paragraph{Coalgebraic streams.}
Uustalu and Vene~\cite{uustalu05} provide elegant \emph{comonadic} semantics for a (cartesian) \Lucid-like programming language.
We shall prove that their exact technique cannot be possibly extended to arbitrary monoidal categories (\Cref{th:nelist}).
However, we recover their semantics as a particular case of our monoidal streams (\Cref{th:cartesianstreams}).

\paragraph{Feedback.}
Feedback monoidal categories are a weakening of \emph{traced monoidal categories}.
The construction of the free such categories is originally due to Katis, Sabadini and Walters~\cite{katis02}.
Feedback monoidal categories and their free construction have been repurposed and rediscovered multiple times in the literature~\cite{sabadini95,hoshino14,bonchi19,gay03}.
Di Lavore et al.~\cite{feedbackspans2020} summarize these uses and introduce \emph{delayed feedback}.

\paragraph{Cartesian stateful sequences.}
Sprunger and Katsumata constructed the category of \emph{stateful sequences} in the cartesian case~\cite{katsumata19}.
The present work extends their ideas to the monoidal case.

Sprunger and Katsumata first introduced the idea of a \emph{delayed trace operator} with a \emph{time-shifting functor}.
Its type differs slightly from our \emph{feedback operator} (\Cref{def:feedback}) which follows Katis, Sabadini and Walters~\cite{katis02}: it adds a ``point'', and a forward operator is applied to the other side of the morphism.
However, its purpose and essence is similar: we take care that both operators are interderivable in the cartesian case (\Cref{rem:pointed-delayed-feedback}).

Sprunger and Katsumata also introduced for the first time the idea of equality by truncation of infinite sequences, and called it ``dinatural equivalence''.
We shall define a similar \emph{truncation up to a future} and prove it is actually the \emph{observational equivalence} of a final coalgebra.
Happily, this still particularizes to dinatural equality in the sense of Sprunger and Katsumata when a cartesian structure is present (\Cref{th:cartesianstreams}).
\emph{Sliding}, which appeared first as a consequence of truncation in the work of Sprunger and Katsumata (``Shim lemma'' in \cite{katsumata19}) can be now, for the first time, the fundamental notion of \emph{dinatural equivalence} (\Cref{def:extensionalequality}).

Finally, there is an unfortunate clash of terminology: we use $\St(\bullet)$ for the free category with feedback in the sense of Katis, Sabadini and Walters~\cite{sabadini95,hoshino14,bonchi19}.
\Cref{th:ext-stateful-sequences} relates this $\St(\bullet)$ to the work of Sprunger and Katsumata, who use $\mathrm{St}(\bullet)$ for the cartesian analogue of our $\STREAM$ construction.

\paragraph{Monoidal stateful sequences.}
Our work is based and significantly expands an unpublished work by Rom{\'a}n~\cite{roman2020}, who first exteneded Sprunger and Katsumata's~\cite{katsumata19} definition to the symmetric monoidal case, using \emph{coends} to justify \emph{dinatural equality}.
Shortly after, Carette, de Visme and Perdrix~\cite{carette21} rederived this construction and applied it to the case of completely positive maps between Hilbert spaces, using (a priori) a slightly different notion of equality.
We synthetise this previous work on monoidal sequences, we justify it for the first time using coalgebra and we particularize it to some cases of interest.

\paragraph{General and dependent streams.}
Our work concerns \emph{synchronous} streams: those where, at each point in time $t = 0,1,\dots$,
the stream process takes exactly one input and produces exactly one output.
This condition is important in certain contexts like, for instance, real-time embedded systems; %
but it is not always present.
The study of asynchronous stream transformers and their universal properties is considerably different~\cite{abadi15},
and we refer the reader to the recent work of Garner~\cite{garner2021stream} for a discussion on \emph{non-synchronous} streams.
Finally, when we are concerned with \emph{dependent streams} indexed by time steps,
a possible approach, when our base category is a topos, is to use the \emph{topos of trees}~\cite{birkedal11}.

\paragraph{Categorical dataflow programming.}
Category theory is a common tool of choice for dataflow programming~\cite{rutten00,gay03,mamouras20}.
In particular, profunctors and coends are used by Hildebrandt, Panangaden and Winskel~\cite{hildebrandt1998relational}
to generalise a model of non-deterministic dataflow,
which has been the main focus~\cite{panangaden1988computations,lynch1989proof,lee09} outside cartesian categories.

\subsection{Synopsis}

This manuscript contains three main definitions in terms of universal properties (\emph{intensional, dinatural and observational streams}, \Cref{def:intensionalmonoidalstream,def:extensionalmonoidalstreams,def:observationalmonoidalstream}); and three explicit constructions for them (\emph{intensional, dinatural and observational sequences}, \Cref{def:intensionalsequence,def:extensional-sequence,def:observationalsequence}).
Each definition is refined into the next one: each construction is a quotienting of the previous one.

\Cref{section-prelude-coalgebra,sec:dinaturality} contain expository material on coalgebra and dinaturality. \Cref{sec:int-stateful-sequences} presents intensional monoidal streams. \Cref{section:extensional} introduces dinatural monoidal streams in terms of feedback monoidal categories. \Cref{section:observational} introduces the definitive \emph{observational} equivalence and defines \emph{monoidal streams}. \Cref{sec:monoidal-streams} constructs the feedback monoidal category of monoidal streams. \Cref{section:classicalstreams,section:stochastic-streams} present two examples: cartesian and stochastic streams. %

\section{Coalgebra \& Monoidal Categories}

\subsection{Coalgebra}
\label{section-prelude-coalgebra}
In this preparatory section, we introduce some background material on coalgebra~\cite{rutten00,jacobs2005coalgebras,adamek2005introduction}.
Coalgebra is the category-theoretic study of stateful systems and infinite data-structures, such as streams.
These structures arise as \emph{final coalgebras}: universal solutions to certain functor equations.

Let us fix an endofunctor $F \colon \catC \to \catC$ through the section.

\begin{defi}
  A \emph{coalgebra} $(Y,\beta)$ is an object $Y \in \catC$, together with a
  morphism $\beta \colon Y \to FY$. A \emph{coalgebra
  morphism} $g \colon (Y,\beta) \to (Y',\beta')$ is a morphism
  $g \colon Y \to Y'$ such that $g \comp \beta' = \beta \comp Fg$.
  \[\begin{tikzcd}
    Y \rar{g}\dar[swap]{\beta} & Y' \dar{\beta'} \\
   FY \rar{Fg} & FY'
\end{tikzcd}\]
\end{defi}

Coalgebras for an endofunctor form a category with coalgebra morphisms between them.
A \emph{final coalgebra} is a final object in this category.
As such, final coalgebras are unique up to isomorphism when they exist.

\begin{defi}\defining{linkfinalcoalgebra}{}
A \emph{final coalgebra} is a coalgebra $(Z,\gamma)$ such that for any other coalgebra $(Y,\beta)$ there exists a unique coalgebra morphism $g \colon (Y,\beta) \to (Z,\gamma)$.
\end{defi}

Our interest in final coalgebras derives from the fact that they are canonical fixpoints of an endofunctor.
Specifically, Lambek's theorem (\Cref{th:lambektheorem}) states that whenever the final coalgebra exists, it is a fixpoint.

\begin{defi}
  A \emph{fixpoint} is a coalgebra $(Y,\beta)$ such that $\beta \colon Y \to FY$ is an
  isomorphism. A \emph{fixpoint morphism} is a coalgebra morphism between fixpoints:
  fixpoints and fixpoint morphisms form a full subcategory of the category of coalgebras.
  A \emph{final fixpoint} is a final object in this category.
\end{defi}

\begin{thm}[Lambek, \cite{lambek68}]\label{th:lambektheorem}
  Final coalgebras are fixpoints.
  As a consequence, when they exist, they are final fixpoints.
\end{thm}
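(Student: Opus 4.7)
The plan is to follow Lambek's classical argument, which shows $\gamma$ is an isomorphism by exhibiting an inverse produced via finality. Let $(Z,\gamma)$ be a final coalgebra. The key observation is that applying $F$ to $\gamma$ yields another coalgebra $(FZ, F\gamma)$, and finality will force compatible maps in both directions.

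First, I would note that $\gamma \colon Z \to FZ$ itself is a coalgebra morphism from $(Z,\gamma)$ to $(FZ, F\gamma)$: this is immediate, since the required square $F\gamma \comp \gamma = F\gamma \comp \gamma$ is a tautology. Next, by finality of $(Z,\gamma)$, there exists a unique coalgebra morphism $h \colon (FZ, F\gamma) \to (Z, \gamma)$, satisfying the defining equation $\gamma \comp h = Fh \comp F\gamma = F(h \comp \gamma)$.

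Now I would form the composite $h \comp \gamma \colon (Z,\gamma) \to (Z,\gamma)$, which is a coalgebra endomorphism. Since the identity $\mathrm{id}_Z$ is also such an endomorphism and finality of $(Z,\gamma)$ gives uniqueness, we deduce $h \comp \gamma = \mathrm{id}_Z$. For the other composition, substitute this into the equation above:
\[
\gamma \comp h \;=\; F(h \comp \gamma) \;=\; F(\mathrm{id}_Z) \;=\; \mathrm{id}_{FZ}.
\]
Hence $\gamma$ and $h$ are mutually inverse, so $(Z,\gamma)$ is a fixpoint.

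For the second sentence, since every fixpoint morphism is in particular a coalgebra morphism, the full subcategory of fixpoints inherits the universal property: the unique coalgebra morphism into $(Z,\gamma)$ from any fixpoint is automatically a fixpoint morphism, making $(Z,\gamma)$ final among fixpoints as well. The only subtle point is checking the uniqueness clause in the right category, but this is immediate because uniqueness in the larger category of coalgebras implies uniqueness in any full subcategory containing $(Z,\gamma)$. I do not expect a real obstacle here; the only thing to be careful about is the direction of the coalgebra morphism $h$ and resisting the temptation to invoke $F$-preservation of isomorphisms, which is not needed.
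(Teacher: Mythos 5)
Your proof is correct and is exactly the classical Lambek argument; the paper itself states this theorem without proof, deferring to the citation of Lambek, so there is nothing in the paper to diverge from. All steps check out against the paper's conventions for coalgebra morphisms (including the tautological square making $\gamma$ a morphism into $(FZ,F\gamma)$, the uniqueness argument giving $h \comp \gamma = \mathrm{id}_Z$, and the fullness of the subcategory of fixpoints for the final sentence).
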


The last question before continuing is how to explicitly construct a final coalgebra.
This is answered by Adamek's theorem (\Cref{th:adamek}).
The reader may be familiar with Kleene's theorem for constructing fixpoints~\cite{stoltenberg}:
the least fixpoint of a monotone function $f \colon X \to X$ in a directed-complete partial order $(X,\leq)$ is
the supremum of the chain $\bot \leq f(\bot) \leq f(f(\bot)) \leq \dots$,
where $\bot$ is the least element of the partial order, whenever this supremum is preserved by $f$.
This same result can be categorified into a fixpoint theorem for constructing final coalgebras:
the directed-complete poset becomes a category with $\omega$-chain limits;
the monotone function becomes an endofunctor;
and the least element becomes the final object.

\begin{thm}[Adamek, \cite{adamek74}]\label{th:adamek}
  Let $\catD$ be a category with a final object $1$ and $\omega^{op}$-shaped
  limits. Let $F \colon \catD \to \catD$ be an endofunctor. We write $L \defn \lim\nolimits_{n} F^{n}1$ for
  the limit of the following $\omega$-chain, which is called the \emph{terminal sequence} of $F$.
  \[1 \overset{!}\longleftarrow F1 \overset{F!}\longleftarrow FF1 \overset{FF!}\longleftarrow FFF1 \overset{FFF!}\longleftarrow \dots \]
  Assume that $F$ preserves this limit, meaning that the canonical morphism $FL \to L$ is an isomorphism.
  Then, $L$ is the final $F$-coalgebra.
\end{thm}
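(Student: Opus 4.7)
The plan is to first equip $L$ with a coalgebra structure and then to verify its universal property against an arbitrary coalgebra $(Y,\beta)$. The structure comes essentially for free from the preservation hypothesis: since the canonical morphism $FL \to L$ is an isomorphism, its inverse $\gamma \colon L \to FL$ makes $(L,\gamma)$ into a fixpoint coalgebra. The identity that will drive the rest of the proof is $\pi_{n+1} = \gamma \comp F\pi_n$, where $\pi_n \colon L \to F^n 1$ is the $n$-th projection of the limiting cone; it records that applying $F$ sends $(\pi_n)_n$ to the limiting cone of the shifted sequence $(F^{n+1}1)_n$, with $\gamma$ as the unique mediator.

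For the existence of a coalgebra morphism $(Y,\beta) \to (L,\gamma)$, I would recursively define a candidate cone $g_n \colon Y \to F^n 1$ by $g_0 = {!}$ and $g_{n+1} = \beta \comp F g_n$. A short induction on $n$, using functoriality of $F$ and terminality of $1$ at the base, shows that $(g_n)_n$ is a cone over the terminal sequence; the universal property of the limit then produces a unique $g \colon Y \to L$ satisfying $\pi_n \comp g = g_n$. To verify that $g$ is a coalgebra morphism, I would test the equation $g \comp \gamma = \beta \comp Fg$ componentwise by postcomposing with each $F\pi_n$: using $\pi_{n+1} = \gamma \comp F\pi_n$ on the left and the recursion for $g_{n+1}$ on the right, both sides reduce to $g_{n+1}$.

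For uniqueness, any coalgebra morphism $h \colon (Y,\beta) \to (L,\gamma)$ must satisfy $\pi_n \comp h = g_n$, which I would prove by induction on $n$: the coalgebra square $h \comp \gamma = \beta \comp Fh$ together with the identity $\pi_{n+1} = \gamma \comp F\pi_n$ reduce the $(n+1)$-th case to the $n$-th, and jointness of the projections then forces $h = g$. The argument contains no serious obstacle; the only step that requires genuine care is extracting the right formulation of the preservation hypothesis, namely that $(F\pi_n \colon FL \to F^{n+1}1)_n$ is itself a limiting cone, which is what makes $\gamma$ available as the mediator in the first place and what aligns the shifted and unshifted indices throughout the induction.
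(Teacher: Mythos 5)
Your argument is correct and is the standard proof of Ad\'amek's theorem; the paper itself only cites \cite{adamek74} and gives no proof of this statement, so there is nothing to diverge from. You correctly identify the one point needing care: the hypothesis that the canonical map $FL \to L$ is invertible is used precisely to make $(F\pi_n \colon FL \to F^{n+1}1)_n$ a limiting (hence jointly monic) cone, which is what lets you verify the coalgebra-morphism square and the uniqueness claim componentwise.
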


\subsection{Monoidal categories}
In order to fix notation, we include background material on monoidal categories.
Monoidal categories have a sound and complete syntax in terms of string diagrams, which is the one we will mostly use during this text \cite{joyal1991geometry}.
However, when proving that some category is indeed monoidal, we may prefer the classical definition of monoidal categories.

\begin{defiC}[\cite{maclane78}]
  A \defining{linkmonoidalcategory}{\textbf{monoidal category}},
  \[(\catC, \otimes, I, \alpha, \lambda, \rho),\] is a category $\catC$
  equipped with a functor $\tensor \colon \catC \times \catC \to \catC$,
  a unit $\sI \in \catC$, and three natural isomorphisms: the associator $\alpha_{\sA,\sB,\sC} \colon (\sA \tensor \sB) \tensor \sC \cong \sA \tensor (\sB \tensor \sC)$, the left unitor $\lambda_{\sA} \colon \sI \tensor \sA \cong \sA$ and
  the right unitor $\rho_{\sA} \colon \sA \tensor \sI \cong \sA$;
  such that $\alpha_{\sA,\sI,\sB} \comp (\im_{\sA} \tensor \lambda_{\sB}) = \rho_{\sA} \tensor \im_{\sB}$ and
  $(\alpha_{A,B,C} \tensor \im) \comp \alpha_{A,B \tensor C, D} \comp (\im_{A} \tensor \alpha_{B,C,D}) = \alpha_{A\tensor B,C,D} \comp \alpha_{A,B,C \tensor D}$.  A monoidal category is \emph{strict} if $\alpha$, $\lambda$ and $\rho$ are identities.
\end{defiC}

\begin{defi}[Monoidal functor, \cite{maclane78}]\defining{linkmonoidalfunctor}{}
  Let \[(\catC,\tensor,\sI,\alpha^{\catC},\lambda^{\catC},\rho^{\catC})\mbox{ and } (\catD,\boxtimes,\sJ,\alpha^{\catD},\lambda^{\catD},\rho^{\catD})\] be \hyperlink{linkmonoidalcategory}{monoidal categories}.
  A \defining{linkmonoidalfunctor}{\emph{monoidal functor}} (sometimes called \emph{strong monoidal functor}) is a triple
  $(F,\varepsilon,\mu)$ consisting of a functor $F \colon \catC \to \catD$ and two natural
  isomorphisms $\varepsilon \colon \sJ \cong F(\sI)$ and $\mu \colon F(\sA \tensor \sB) \cong F(\sA) \boxtimes F(\sB)$;
  such that the associators satisfy the \emph{pentagon equation} 
  \[\begin{aligned}
      \alpha^{\catD}_{FA,FB,FC} \comp (\im_{FA} \tensor \mu_{B,C}) \comp \mu_{A,B \tensor C}
      =  (\mu_{A,B} \tensor \im_{FC}) \comp \mu_{A \tensor B,C} \comp F(\alpha^{\catC}_{A,B,C}),\end{aligned}\]
  and the unitors satisfy $(\varepsilon \tensor \im_{FA}) \comp \mu_{I,A} \comp F(\lambda^{\catC}_{A}) = \lambda^{\catD}_{FA}$
  and $(\im_{FA} \tensor \varepsilon) \comp \mu_{A,I} \comp F(\rho^{\catC}_{FA}) = \rho^{\catD}_{FA}$.
  
  A monoidal functor is a \emph{monoidal equivalence} if it is moreover an equivalence of categories.  Two monoidal categories are monoidally equivalent if there exists a monoidal equivalence between them.
\end{defi}

During most of the paper, we omit all associators and unitors from monoidal categories, implicitly using the \emph{coherence theorem} for monoidal categories (\Cref{remark:usingcoherence}).

\begin{thm}[Coherence theorem, \cite{maclane78}]%
  \label{theorem:coherence}
  Every monoidal category is monoidally equivalent to a strict monoidal category.
\end{thm}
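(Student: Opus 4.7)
The plan is to construct, for any monoidal category $\catC$, an explicit strict monoidal category $\catC_{\mathrm{st}}$ together with a monoidal equivalence $F \colon \catC \to \catC_{\mathrm{st}}$. I would take objects of $\catC_{\mathrm{st}}$ to be finite lists $\vec{A} = (A_{1}, \ldots, A_{n})$ of objects of $\catC$, including the empty list. For each such list, write $|\vec{A}| \in \catC$ for its left-associated tensor $((\cdots(A_{1} \tensor A_{2}) \tensor \cdots) \tensor A_{n})$, with $|\vec{A}| = \sI$ when $\vec{A}$ is empty. Define $\catC_{\mathrm{st}}(\vec{A}, \vec{B}) \defn \catC(|\vec{A}|, |\vec{B}|)$, with composition and identities inherited from $\catC$. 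The monoidal product on objects is concatenation of lists, which is strictly associative and strictly unital with the empty list as unit.

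Next, I would extend the tensor product to morphisms. From the associators and unitors of $\catC$ one builds, by induction on lists, a canonical reassociation isomorphism $r_{\vec{A},\vec{C}} \colon |\vec{A} \cdot \vec{C}| \to |\vec{A}| \tensor |\vec{C}|$ in $\catC$, and then declares the $\catC_{\mathrm{st}}$-tensor of $f \in \catC_{\mathrm{st}}(\vec{A}, \vec{B})$ and $g \in \catC_{\mathrm{st}}(\vec{C}, \vec{D})$ to be the composite $r_{\vec{B},\vec{D}}^{-1} \comp (f \tensor g) \comp r_{\vec{A},\vec{C}}$, viewed as a morphism of $\catC_{\mathrm{st}}$ from $\vec{A} \cdot \vec{C}$ to $\vec{B} \cdot \vec{D}$. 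The functor $F$ sends $A$ to the singleton list $(A)$ and acts as the identity on morphisms under the identification $|(A)| = A$; its coherence isomorphisms $\varepsilon \colon () \cong F(\sI)$ and $\mu_{A,B} \colon F(A \tensor B) \cong F(A) \tensor F(B)$ are both given by identity morphisms of $\catC$ under the hom-set identifications.

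The main obstacle is that well-definedness and functoriality of the tensor in $\catC_{\mathrm{st}}$, the monoidal functor axioms for $F$, and the irrelevance of the particular choice of left-associated bracketing all reduce to a single technical claim: any two morphisms built from associators, unitors, identities and tensor products between the same source and target in $\catC$ must agree. This is the genuine content of coherence, and I would prove it directly by induction on object expressions, using the pentagon and triangle axioms together with naturality to rewrite every structural composite into a normal form determined only by its source and target. Once this uniqueness of coherent isomorphisms is in hand, the remaining verifications are routine: strict associativity and unitality of the tensor in $\catC_{\mathrm{st}}$ hold on the nose; $F$ is fully faithful by definition; and $F$ is essentially surjective because $\vec{A} \cong F(|\vec{A}|)$ via the identity morphism on $|\vec{A}|$. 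Hence $F$ is a monoidal equivalence, witnessing $\catC \simeq \catC_{\mathrm{st}}$.
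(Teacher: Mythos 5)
The paper does not prove this statement: it is quoted from Mac Lane with a citation, and the surrounding text only explains how the coherence theorem is \emph{used} (\Cref{remark:usingcoherence}). So there is no in-paper proof to compare against; I can only assess your sketch on its own terms. What you describe is the standard strictification argument (the ``list category'' construction), and it is essentially correct: the hom-set identification makes $F$ fully faithful for free, concatenation gives strict associativity and unitality on objects, essential surjectivity is immediate, and you correctly isolate the real content as the uniqueness of structural isomorphisms. One point deserves more care than your phrasing gives it. The lemma ``any two morphisms built from associators, unitors, identities and tensor products between the same source and target in $\catC$ must agree'' is false as literally stated for objects of $\catC$: accidental coincidences of objects (e.g.\ $A \tensor B$ happening to equal $B \tensor A$) can produce distinct structural composites between equal objects. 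The correct formulation is about the free monoidal category on a set of generators, i.e.\ about formal object expressions and formal structural morphisms, which are then interpreted in $\catC$. Your remark that the induction runs ``on object expressions'' suggests you have this in mind, but the well-definedness of the tensor in $\catC_{\mathrm{st}}$ must then be routed through the canonical interpretation map from the free monoidal category, not through an equality of morphisms of $\catC$ indexed by objects. With that adjustment the argument goes through, and the normal-form induction you propose (rewriting every structural composite to a canonical one using the pentagon, the triangle, and naturality) is exactly Mac Lane's proof of the ``all diagrams commute'' form, from which strictification follows as you describe.
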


\begin{rem}
  \label{remark:usingcoherence}
Let us comment further on how we use the coherence theorem. Each time we have a
morphism $f \colon A \to B$ in a monoidal category, we have a corresponding
morphism $A \to B$ in its strictification. This morphism can be lifted to the
original category to uniquely produce, say, a morphism $(\lambda_{A} \comp f
\comp \lambda_{B}^{-1}) \colon I \otimes A \to I \otimes B$. Each time the
source and the target are clearly determined, we simply write $f$ again for this
new morphism.
\end{rem}

Instead of dealing with arbitrary monoidal categories, we mostly work with these that represent theories of processes: \emph{symmetric monoidal categories}, those where resources can be rerouted freely thanks to an self-invertible \emph{symmetry} morphism $\sigma_{A,B} \colon A \otimes B \to B \otimes A$.

\begin{defi}[Symmetric monoidal category, \cite{maclane78}]
  A \defining{linksymmetricmonoidalcategory}{\emph{symmetric monoidal category}}
  $(\catC, \otimes, I, \alpha, \lambda, \rho, \sigma)$ is a monoidal category
  $(\catC, \otimes, I, \alpha, \lambda, \rho)$ equipped with a symmetry
  $\sigma_{A,B} \colon A \otimes B \to B \otimes A$, which satisfies the hexagon
  equation
  \[\alpha_{A,B,C} \comp \sigma_{A,B \tensor C} \comp \alpha_{B,C,A} = (\sigma_{A,B} \tensor \im) \comp \alpha_{B,A,C} \comp (\im \tensor \sigma_{A,C})\]
  and additionally satisifes $\sigma_{A,B} \comp \sigma_{B,A} = \im$.
\end{defi}

\begin{rem}[Notation]
  String diagrams will be drawn top to bottom. We write $\tid{a}$ for the
  morphism $a$ tensored with some identities when these can be deduced from the
  context; we write $\sigma$ for any symmetry, and only omit it when it can be
  deduced from the context. For instance, let $f \colon A \to B$, let $h \colon
  B \to D$ and let $g \colon C \tensor D \to E$. We write $\tid{(f \comp h)}
  \comp \sigma \comp g \colon A \tensor C \to E$ for the morphism $(f \tensor
  \im) \comp \sigma \comp (\im \tensor h) \comp g$, which could have been also
  written as $(f \tensor \im) \comp (h \tensor \im) \comp \sigma \comp g$.
\end{rem}

\begin{defiC}[\cite{maclane78}]
  A \defining{linksymmetricmonoidalfunctor}{symmetric monoidal functor}
  between two \hyperlink{linksymmetricmonoidalcategory}{symmetric monoidal categories} $(\catC, \sigma^{\catC})$
  and $(\catD, \sigma^{\catD})$ is a monoidal functor $F \colon \catC \to \catD$ such that $\sigma^{\catD} \comp \mu = \mu \comp F(\sigma^{\catC})$.
\end{defiC}

\begin{defi}
  A \defining{linkcartesianmonoidalcategory}{\emph{cartesian monoidal category}} is
  a monoidal category whose tensor is the categorical product and whose unit is
  a terminal object.
\end{defi}
\subsection{Size concerns, limits and colimits}

We call $\Set$ the category of sets and functions below a certain Grothendieck universe~\cite{maclane06:universe}.
We do take colimits (and coends) over this category without creating size issues: we can be sure of their existence in our metatheoretic category of sets.

\begin{prop}[from~{\cite[Theorem 13]{adamek19}}]
  Terminal coalgebras exist in $\Set_{\leq \lambda}$.
  More precisely, the category of sets below a certain regular uncountable cardinal $\lambda$ is algebraically complete and cocomplete; meaning that every $\Set$-endofunctor has a terminal coalgebra and an initial algebra.
\end{prop}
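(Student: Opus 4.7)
The plan is to deduce this from a transfinite generalization of Adamek's theorem (\Cref{th:adamek}). Since $\Set_{\leq \lambda}$ is closed under small limits and colimits whenever $\lambda$ is regular and uncountable, the remaining task is to show that every endofunctor $F$ of $\Set_{\leq \lambda}$ admits a stabilizing terminal sequence.

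First, I would continue the terminal $\omega^{op}$-chain of \Cref{th:adamek} into the transfinite, setting $W_{0} = 1$, $W_{\alpha+1} = FW_{\alpha}$ with the connecting map obtained by applying $F$ to the previous one, and $W_{\beta} = \lim_{\alpha < \beta} W_{\alpha}$ at limit ordinals; each $W_{\alpha}$ lies in $\Set_{\leq \lambda}$ because $F$ is an endofunctor of $\Set_{\leq \lambda}$ and this subcategory is closed under the relevant limits.

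Second, I would force the sequence to stabilize using a cardinality argument. All the sets $W_{\alpha}$ have cardinality bounded by $\lambda$, and the connecting morphisms eventually consist of surjections (after collapsing to the image of the long composite). A regular-cardinal/cofinality argument, in the spirit of Worrell and Adamek, then shows that some connecting morphism $W_{\alpha+1} \to W_{\alpha}$ is forced to be an isomorphism at an ordinal $\alpha$ below $\lambda^{+}$. By Lambek's theorem (\Cref{th:lambektheorem}) this fixpoint is the terminal coalgebra of $F$. For the initial algebra I would run the dual argument on the initial chain $0 \to F0 \to FF0 \to \cdots$, using colimits in $\Set_{\leq \lambda}$ rather than limits.

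The main obstacle is the second step: pinning down the precise set-theoretic bound that guarantees uniform stabilization inside $\Set_{\leq \lambda}$, independently of which endofunctor $F$ we pick. This is the technical heart of \cite[Theorem 13]{adamek19}, and the cleanest way to present the result here is simply to cite it, since the rest of the paper uses only its conclusion (namely, that the terminal coalgebras constructed throughout \Cref{section:observational,sec:monoidal-streams} exist as long as the ambient hom-sets stay within $\Set_{\leq \lambda}$).
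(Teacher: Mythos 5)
The paper offers no proof of this proposition: it is imported wholesale from the cited reference, so your final recommendation --- to let the citation carry the technical weight --- is exactly what the paper does, and for the purposes of this manuscript that is the right call.

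That said, the sketch you give would not survive as an actual proof, and the failure is at your \emph{first} step rather than the second. The category $\Set_{\leq \lambda}$ is not closed under small limits for an arbitrary regular uncountable $\lambda$: already the limit of the terminal $\omega^{op}$-chain of the endofunctor $F(X) = 2 \times X$ is $2^{\omega}$, of cardinality $2^{\aleph_{0}}$, which escapes $\Set_{\leq \lambda}$ whenever $\lambda < 2^{\aleph_{0}}$. Nor is this a repairable technicality in the construction: for $\lambda = \aleph_{1}$ under the negation of the continuum hypothesis, this very functor has \emph{no} terminal coalgebra in $\Set_{\leq \aleph_{1}}$ at all, since a terminal coalgebra $(Z,\zeta)$ would have to admit a surjective behaviour map onto $2^{\omega}$ (every binary sequence is the behaviour of a state of a countable coalgebra, namely the set of its tails), forcing $|Z| \geq 2^{\aleph_{0}} > \aleph_{1}$. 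So the phrase ``a certain regular uncountable cardinal'' in the statement is doing essential work: the result holds only for cardinals with additional closure properties, enough to keep the relevant limits inside the category, and a transfinite Adamek-chain argument can only be launched once those properties have been secured. Your plan assumes them away by asserting closure under small limits at the outset, and locates the difficulty solely in the stabilization bound; the cited theorem is precisely where both issues are actually dealt with.
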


A connected category is an inhabited category whose underlying graph is connected.
A connected limit is the limit of a diagram \(\fun{D} \colon \cat{J} \to \cat{C}\) whose domain category \(\cat{J}\) is connected.
The definition of intensional sequences (\Cref{sec:intensional-sequences}) relies on connected limits, and their characterisation on the following folklore result.

\begin{thm}[Coproducts commute with connected limits]
  \defining{linkconnectedlimits}
  Let $I$ be a set, understood as a discrete category, and let $\catA$ be a
  connected category with $F \colon I \times \catA \to \Set$ a functor. The
  canonical morphism
  \[ \sum_{i \in I} \lim_{a \in A} F(i,a) \to \lim_{a \in A}\sum_{i \in I} F(i,a)\]
  is an isomorphism.

  In particular, let $F_{n} \colon I \to \Set$ be a family of functors indexed
  by the natural numbers with a family of natural transformations
  $\alpha_{n} \colon F_{n+1} \to F_{n}$. The canonical morphism
  \[\sum_{i \in I} \lim_{n \in \naturals} F_{n}(i) \to \lim_{n \in \naturals}\sum_{i \in I} F_{n}(i) \]
  is an isomorphism.
\end{thm}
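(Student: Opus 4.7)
The plan is to prove the first statement directly by unpacking both sides as sets and exhibiting an explicit inverse to the canonical map; the second statement then follows by specialising to $\catA = \naturals^{op}$, which is connected as a totally ordered poset.

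First I would unpack the codomain. An element of $\lim_{a \in \catA}\sum_{i \in I} F(i,a)$ is a family $(x_a)_{a \in \catA}$ with $x_a \in \sum_{i \in I} F(i,a)$, i.e.\ $x_a = (i_a, y_a)$ with $i_a \in I$ and $y_a \in F(i_a, a)$, satisfying for every morphism $f \colon a \to a'$ in $\catA$ the equation $\bigl(\sum_{i} F(i,f)\bigr)(x_a) = x_{a'}$. Since $\sum_i F(i,-)$ acts on morphisms as $(i,y) \mapsto (i, F(i,f)(y))$, this equation splits into two conditions:
\[ i_a = i_{a'}, \qquad F(i_a, f)(y_a) = y_{a'}. \]

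The key step is the connectedness argument. Fix such a compatible family. For any two objects $a, a' \in \catA$, connectedness of $\catA$ provides a zigzag of morphisms $a = a_0 \to a_1 \leftarrow a_2 \to \cdots \leftrightarrow a_n = a'$. Along each edge of the zigzag, whether forward or backward, the first component is preserved by the condition above; hence $i_a = i_{a'}$. It follows that there is a single common value $i \in I$ with $i_a = i$ for all $a \in \catA$ (here we use that $\catA$ is inhabited to pick the value). The remaining data $(y_a)_{a \in \catA}$ is then a compatible family for the functor $F(i,-)\colon \catA \to \Set$, i.e.\ an element of $\lim_{a \in \catA} F(i,a)$, and hence an element of $\sum_{i \in I}\lim_{a \in \catA} F(i,a)$.

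This assignment is manifestly inverse to the canonical morphism, which sends $(i, (y_a)_a) \mapsto ((i, y_a))_a$: one direction reassembles the pair from its constant first coordinate, the other reads off the common index and the family of second coordinates. Both composites are the identity, so the canonical map is a bijection. For the second statement I would simply note that the poset $\naturals$ is connected, hence so is $\naturals^{op}$, and apply the first part to the functor $(i,n) \mapsto F_n(i)$ whose transition maps are given by the $\alpha_n$. The only conceptual subtlety is the zigzag argument, but because the first coordinate lives in a discrete set $I$ it is preserved by \emph{every} morphism in either direction, so connectedness alone (not filteredness or cofilteredness) suffices; this is precisely where the hypothesis is used.
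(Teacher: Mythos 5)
Your proof is correct and is essentially the paper's own argument, spelled out in full: the paper's proof is a three-sentence sketch observing that since $I$ is discrete, the index component is preserved along every transition map, so connectedness forces a single global choice of $i \in I$ — exactly your zigzag argument. The specialisation to the $\omega^{op}$-chain is likewise the same.
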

\begin{proof}
  Note that there are no morphisms between any two indices $i, j \in I$.
  Once some $i \in I$ is chosen in any factor of the connected limit, it
  forces any other factor to also choose $i \in I$. This makes the local
  choice of $i \in I$ be equivalent to the global choice of $i \in I$.
\end{proof}

\section{Intensional Monoidal Streams}
\label{sec:int-stateful-sequences}

This section introduces a preliminary definition of \emph{monoidal stream} in terms of a fixpoint equation (in \Cref{eq:intensionalstreamshort}).
In later sections, we refine both this definition and its characterization into the definitive notion of \emph{monoidal stream}.

We fix a small \symmetricMonoidalCategory{} $(\catC,\tensor,I)$.

\subsection{The fixpoint equation}\label{section:fixpoint}

Classically, type-variant streams have a neat coinductive definition~\cite{jacobs2005coalgebras,rutten00} that says:
\begin{quoting}\defining{linkstreamfun}{}
\emph{``A stream of type \(\stream{A} = \streamExpr{A}\) is an
  element of \(A_{0}\) together with a stream of type
  \(\tail{\stream{A}} = (A_{1}, A_{2}, \ldots)\)''}.
\end{quoting}
Formally, the set of streams is the final fixpoint of the equation
\[\streamFun(A_{0},A_{1},\ldots) \cong A_{0} \times \streamFun(A_{1},A_{2},\ldots);\]
and this fixpoint is computed to be \(\streamFun(\stream{A}) = \prod_{n \in \naturals}^{\infty} A_{n}\). In the same vein, we want to introduce not only streams but \emph{stream processes}
over a fixed theory of processes.
\begin{quoting}
\emph{``A stream process from
  \(\stream{X} = \streamExpr{X}\) to \(\stream{Y} = \streamExpr{Y}\)
  is a process from \(X_{0}\) to \(Y_{0}\) communicating along a channel $M$
  with a stream process from \(\tail{\stream{X}} = (X_{1}, X_{2}, \ldots)\) to
  \(\tail{\stream{Y}} = (Y_{1}, Y_{2}, \ldots)\).''}
\end{quoting}
Streams are recovered as stream processes on an empty input,
so we take this more general slogan as our preliminary definition of \emph{monoidal stream} (in~\Cref{def:intensionalmonoidalstream}).
Formally, they are the final fixpoint of the equation in \Cref{eq:intensionalstreamshort}.

\begin{defi}\label{def:intensionalmonoidalstream}\defining{linkintensionalmonoidalstream}{}
  The set of \emph{intensional monoidal streams} $\fun{T} \colon \NcatC^{op} \times \NcatC \to \Set$, depending on inputs and outputs, %
  is the final fixpoint of the equation in \Cref{eq:intensionalstreamshort}.
\end{defi}

\begin{figure}[!h]
  \centering \defining{linkistream}{}
  $\displaystyle \fun{T}(\stream{X},\stream{Y}) \cong \sum_{M \in \catC}
    \hom{}(X_{0}, M \tensor Y_{0}) \times \fun{T}(\act{M}{\tail{\stream{X}}}, \tail{\stream{Y}}).$
  \caption{Fixpoint equation for intensional streams.}
\label{eq:intensionalstreamshort}
\end{figure}
\begin{rem}[Notation]
  Let  $\stream{X} \in \NcatC$ be a sequence of objects $\streamExpr{X}$.
  We write $\defining{linktail}{\tail{\stream{X}}}$ for its \emph{tail} $(X_{1},X_{2},\dots)$.
  Given $M \in \catC$, we write $\act{M}{\stream{X}}$ for the sequence $(M \tensor X_{0},X_{1},X_{2},\dots)$;
  As a consequence, we write $\act{M}{\tail{\stream{X}}}$ for $(M \otimes X_{1},X_{2},X_{3},\dots)$.

\end{rem}

\begin{rem}[Initial fixpoint]
  There exists an obvious fixpoint for the equation in
  \Cref{eq:intensionalstreamshort}: the constant empty set.
  This solution is the \emph{initial} fixpoint, a minimal solution.
  The \emph{final} fixpoint will be realized by the set of \emph{intensional sequences}.
\end{rem}

\subsection{Intensional sequences}\label{sec:intensional-sequences}
We now construct the set of \intensionalStreams{} explicitly (\Cref{th:intensionalcoalgebra}).
For this purpose, we generalize the ``stateful morphism sequences'' of Sprunger and Katsumata~\cite{katsumata19} from cartesian to arbitrary \symmetricMonoidalCategories{} (\Cref{def:intensionalsequence}). We derive a novel characterization of these ``sequences'' as the desired final fixpoint (\Cref{th:intensionalcoalgebra}).

In the work of Sprunger and Katsumata, a stateful sequence is a sequence of morphisms $f_{n} \colon M_{n-1} \times X_{n} \to M_{n} \times Y_{n}$ in a \cartesianMonoidalCategory{}.
These morphisms represent a process at each point in time $n=0,1,2,\dots$.
At each step \(n\), the process takes an input \(X_{n}\) and,
together with the stored memory \(M_{n-1}\), produces some output \(Y_{n}\) and
writes to a new memory \(M_{n}\). The memory is initially empty, with
\(M_{-1} \defn \mathbf{1}\) being the final object by convention.
We extend this definition to any \symmetricMonoidalCategory{}.
\begin{defi}\label{def:intensionalsequence}\defining{linkintensionalstatefulsequence}{}
  Let $\stream{X}$ and $\stream{Y}$ be two sequences of objects
  representing inputs and outputs, respectively. An
  \defining{linkstreamtransducer}{\emph{intensional sequence}} is a
  sequence of objects $\streamExpr{M}$ together with a sequence of morphisms
  \[\intseq{f_{n} \colon M_{n-1} \tensor X_{n} \to M_{n} \tensor Y_{n}},\]
  where, by convention, $M_{-1} \defn I$ is the unit of the monoidal category.
  In other words, the set of intensional sequences is
  \[\iSeq(\stream{X},\stream{Y}) \coloneqq
    \sum_{M \in [\naturals,\catC]} \prod_{n = 0}^{\infty} \hom{}(M_{n-1} \tensor X_{n} , M_{n} \tensor Y_{n}).\]
\end{defi}

We now prove that \intensionalSequences{} are the final fixpoint of the equation in \Cref{eq:intensionalstreamshort}.
The following~\Cref{th:intensionalcoalgebra} serves two purposes:
it gives an explicit final solution to this fixpoint equation and
it gives a novel universal property to \intensionalSequences{}.

\begin{thm}\label{th:intensionalcoalgebra}\label{corollary:intensionalstreams}
\IntensionalSequences{} are the explicit construction of \intensionalStreams{}, $\iStream{} \cong \iSeq{}$.
  In other words, they are a fixpoint of the equation in \Cref{eq:intensionalstreamshort},
and they are the final such one.
\end{thm}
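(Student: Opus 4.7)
The plan is to prove the two claims of \Cref{th:intensionalcoalgebra} in sequence: first that $\iSeq$ satisfies the fixpoint equation \Cref{eq:intensionalstreamshort}, and then that it is the \emph{final} such fixpoint, invoking Lambek's theorem (\Cref{th:lambektheorem}) and Adamek's construction (\Cref{th:adamek}) to reduce the latter to a concrete computation of the terminal sequence of the endofunctor $F$ defined by the right-hand side of \Cref{eq:intensionalstreamshort}.

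For the fixpoint property, I would construct an explicit natural bijection directly from the definition of $\iSeq$. An intensional sequence $((M_n)_{n \in \naturals}, (f_n)_{n \in \naturals})$ on $(\stream{X},\stream{Y})$ decomposes uniquely into its first component $f_0 \colon X_0 \to M_0 \tensor Y_0$ (recall $M_{-1} = I$) together with the tail data $((M_{n+1})_{n \in \naturals}, (f_{n+1})_{n \in \naturals})$. Reindexing the latter, $M'_n \defn M_{n+1}$ and $f'_n \defn f_{n+1}$, this tail data is exactly an intensional sequence on $(\act{M_0}{\tail{\stream{X}}}, \tail{\stream{Y}})$: its first morphism has type $I \tensor (M_0 \tensor X_1) \to M'_0 \tensor Y_1$, which agrees with $f_1$. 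Summing over the choice of $M_0$ yields the desired bijection. Naturality in $\stream{X}$ and $\stream{Y}$ is immediate.

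For finality, I would apply Adamek's theorem to the functor $F$ on the presheaf category $\Set^{\NcatC^{op} \times \NcatC}$, which has all small limits computed pointwise. The terminal sequence is
\[F^n \mathbf{1}(\stream{X},\stream{Y}) \cong \sum_{(M_0,\dots,M_{n-1}) \in \catC^n} \prod_{k=0}^{n-1} \hom{}(M_{k-1} \tensor X_k, M_k \tensor Y_k),\]
with the transition $F^{n+1}\mathbf{1} \to F^n \mathbf{1}$ projecting away the final factor and the choice of $M_n$. The key step is to compute $\lim_n F^n \mathbf{1}$ by pushing the $\omega^{op}$-limit inside the coproduct using \Cref{linkconnectedlimits}: since $\omega^{op}$ is a connected diagram and a compatible thread through the tower determines a coherent choice of $(M_n)_{n \in \naturals} \in [\naturals,\catC]$ at every finite stage, the limit rearranges to
\[\sum_{M \in [\naturals,\catC]} \lim_{n} \prod_{k=0}^{n-1}\hom{}(M_{k-1} \tensor X_k, M_k \tensor Y_k) \;\cong\; \iSeq(\stream{X},\stream{Y}),\]
since products commute with limits. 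The same calculation, applied one step deeper, shows that $F$ preserves this limit (the canonical comparison $FL \to L$ matches the fixpoint bijection from the first part), so Adamek's theorem delivers finality.

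The main obstacle will be the careful bookkeeping in the second step: the coproduct indexed by finite tuples $\catC^n$ grows with $n$, so the invocation of \Cref{linkconnectedlimits} must be justified by first rewriting each $F^n\mathbf{1}$ as a coproduct over the fixed set $[\naturals,\catC]$ of infinite memory sequences (with only the first $n$ coordinates actually appearing in the product of hom-sets) so that the indexing category of the coproduct is the same at every stage of the tower. Once this reindexing is in place, the connectedness of $\omega^{op}$ makes the exchange legitimate and the identification with $\iSeq$ is transparent.
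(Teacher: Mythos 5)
Your proposal is correct and follows essentially the same route as the paper: the fixpoint property via peeling off the first component (products distributing over coproducts), and finality via Adamek's theorem, with the limit of the terminal sequence identified with $\iSeq$ by commuting the $\omega^{op}$-limit past the coproduct using the connected-limits lemma. The reindexing obstacle you flag — that the coproducts over finite memory tuples must first be rewritten as coproducts over the fixed index $[\naturals,\catC]$ before the exchange is legitimate — is exactly the step the paper handles with its ``mute variables'' factorization and the dominating-chain argument, so you have identified the same key difficulty and the same resolution.
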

\begin{proof}
  Let us first see that our candidate, Expression \ref{eq:limit-goal}, is indeed a fixpoint of Equation~\ref{eq:intensionalstreamshort}.
  \begin{equation}\label{eq:limit-goal}
    \sum_{M_n \in \NcatC}
    \prod_{i=0}^{\infty} \hom{}(M_{i-1} \tensor X_i; M_i \tensor Y_i).
  \end{equation}
  For this, we note that cartesian products distribute over coproducts: the following Expression~\ref{eq:fixpoint-expr-intensional} is isomorphic to Expression~\ref{eq:limit-goal}. This proves it is a fixpoint.
  \begin{equation}\label{eq:fixpoint-expr-intensional}
    \sum_{M_{0}} \hom{}(X_{0},M_{0} \tensor Y_{0}) \times \sum_{M \in \NcatC} \prod_{n = 1}^{\infty} \hom{}(M_{n-1} \tensor X_{n}, M_{n} \tensor Y_{n}).
  \end{equation}
  If Expression~\ref{eq:limit-goal} were to coincide with Expression \ref{eq:limit-intensional}, our desired result would follow by Adamek's theorem (\Cref{th:adamek}).
  Adamek's theorem    states that, if the following limit exists and is a fixpoint, then it is indeed the final fixpoint. We already know that it must exist: categories of functors over sets, such as $[\NcatC^{op}\times\NcatC,\Set]$, have all limits.
  \begin{equation}\label{eq:limit-intensional}
    \lim_{n \in \naturals} \sum_{M_{0},\dots,M_{n}} \prod^{n}_{t=0} \hom{}(M_{t-1} \tensor X_{t}, M_{t} \tensor Y_{t})
  \end{equation}
  Therefore, it suffices to prove that Expression~\ref{eq:limit-goal} and Expression~\ref{eq:limit-intensional} are isomorphic.
  
  Let us prove this isomorphism. For convenience, we call the main factor of this limit $\Omega_i(M_i,M_{i-1}) = \hom{}(M_{i-1} \tensor X_i; M_i \tensor Y_i)$. Let us first note that the limit in \Cref{eq:limit-goal} can be simplified, using that (\emph{i}) limits commute with limits and that (\emph{ii}) \hyperlink{linkconnectedlimits}{connected limits commute with coproducts}.
  \[
    \sum_{M_n \in \NcatC}
    \prod_{i=0}^{\infty} \Omega_i(M_i,M_{i-1})
    \overset{\emph{(i)}}{\cong}
    \sum_{M_n \in \NcatC}
    \lim_{k \in \mathbb{N}} \prod_{i=0}^{k} \Omega_i(M_i,M_{i-1})
    \overset{\emph{(ii)}}{\cong}
    \lim_{k \in \mathbb{N}} 
    \sum_{M_n \in \NcatC}
    \prod_{i=0}^{k} \Omega_i(M_i,M_{i-1}).
  \]
  Again for convenience, we now call $\Psi_k(M_0,\dots,M_k) = \prod_{i=0}^{k} \Omega_i(M_{i-1},M_i)$. We will use that the coproduct over mute variables can be read as a product, $\sum_{a \in A} B \cong A \times B$, to simplify each one of the factors of the limit. For each $k \in \mathbb{N}$, we have that
  \[\sum_{M_n \in \NcatC} \Psi_k(M_0,\dots,M_k)
    \cong
    \left( \prod_{j = k_n}^{\infty} \mathbb{C}_{\mathrm{obj}}  \right) \times \sum_{M_0,...,M_k} \Psi_k(M_0,\dots,M_k).
  \]
  As a result, we can compute the following limit.
  \[\lim_k \sum_{M_n \in \NcatC} \Psi_k(M_0,\dots,M_k)
    \cong
    \lim_k 
    \left( \prod_{j = k_n}^{\infty} \mathbb{C}_{\mathrm{obj}} \right) \times \sum_{M_0,...,M_k} \Psi_k(M_0,\dots,M_k).
  \]
  Under this isomorphism, the limit diagram gets transformed into a more tractable diagram that we will be able to compute. The original limit diagram consisted only of projections, 
  $$\pi_{k} \colon \sum_{M_n \in \NcatC} \Psi_{k+1}(M_0,\dots,M_{k+1}) \to \sum_{M_n \in \NcatC} \Psi_k(M_0,\dots,M_k).$$
  The diagram under the isomorphism contains objects $\mathbb{C}_{\textrm{obj}}^{(k,j)}$ for each $j > k$, and objects $\sum_{M_n \in \NcatC} \Psi_k(M_0,\dots,M_k)$ for each $k \in \mathbb{N}$; let us use $\Sigma\Psi_n$ as a shorthand for this term. The diagram contains three families of arrows: the projections between sums,
  $\alpha_k \colon \Sigma\Psi_{k+1} \to \Sigma\Psi_k$
  the projections out of sums, $\beta_k \colon \Sigma\Psi_{k+1} \to \mathbb{C}^{(k,k+1)},$
  and projections between objet choices, $\gamma_{k,j} \colon \mathbb{C}^{(k+1,j)} \to \mathbb{C}^{(k,j)}$. See the right hand side of the following \Cref{limit:simplified}.

  \begin{figure}[ht]
  \[
    \lim_{k,j}\left(
      \begin{tikzcd}[column sep=1ex,row sep=1ex]
          & \vdots \dlar & \vdots\dlar & \vdots\dlar & \vdots \dar \dlar \\
        \cdots & 
        \mathbb{C}^{(3,6)}\dlar & \mathbb{C}^{(3,5)}\dlar & \mathbb{C}^{(3,4)}\dlar & \Sigma\Psi_3 \dlar \dar \\
        \cdots & 
        \mathbb{C}^{(2,5)}\dlar & \mathbb{C}^{(2,4)}\dlar & \mathbb{C}^{(2,3)}\dlar & \Sigma\Psi_2 \dlar \dar \\
        \cdots & 
        \mathbb{C}^{(1,4)}\dlar & \mathbb{C}^{(1,3)}\dlar & \mathbb{C}^{(1,2)}\dlar & \Sigma\Psi_1 \dlar \dar \\
        \cdots & 
        \mathbb{C}^{(0,3)} & \mathbb{C}^{(0,2)} & \mathbb{C}^{(0,1)} & \Sigma\Psi_0 \\
      \end{tikzcd}
    \right) \cong
    \lim_{k,j}\left(
      \begin{tikzcd}[column sep=1ex,row sep=1ex]
        \vdots \dar \\
        \Sigma\Psi_3 \dar \\
        \Sigma\Psi_2 \dar \\
        \Sigma\Psi_1 \dar \\
        \Sigma\Psi_0 \\
      \end{tikzcd}
    \right)
  \]
  \caption{Simplification of a limit with an initial subdiagram.}
  \label{limit:simplified}
  \end{figure}

  Finally, we can use the fact that the limit of a diagram coincides with the limit of any of its dominating chains \cite[Theorem IX.3.1]{maclane78}. In other words, the chain of projections on the left hand side of \Cref{limit:simplified} is initial, and we can conclude that
  $$
  \lim_k \sum_{M_n \in \NcatC} \Psi_k(M_0,\dots,M_k)\ \cong\ 
  \lim_k \sum_{M_0,...,M_k} \Psi_k(M_0,\dots,M_k).
  $$
  Putting all these isomorphisms together, we have proven that the two limits in \Cref{eq:limit-intensional} and \Cref{eq:limit-goal} are isomorphic. This concludes the proof.
\end{proof}
\section{Dinaturality and Coends}
\subsection{Interlude: Dinaturality}
\label{sec:dinaturality}

During the rest of this text, we deal with two different definitions of what it means for two processes to be equal: \emph{dinatural} and \emph{observational equivalence}, apart from pure \emph{intensional equality}.
Fortunately, when working with functors of the form $P \colon \catC^{op} \times \catC \to \Set$, the so-called \emph{endoprofunctors}, we already have a canonical notion of equivalence.

Endoprofunctors $P \colon \catC^{op} \times \catC \to \Set$ can be thought as indexing families
of processes $P(M,N)$ by the types of an input channel $M$ and an output channel $N$. A process
$p \in P(M,N)$ writes to a channel of type $N$ and then reads from a channel of type $M$.

Now, assume we also have a transformation $r \colon N \to M$ translating from output to input types.
Then, we can \emph{plug the output to the input}:
the process $p$ writes with type $N$, then $r$ translates from $N$ to $M$, and then $p$ uses this same output as its input $M$.
This composite process can be given two sligthly different descriptions; the process could
\begin{itemize}[label=$\triangleright$]
  \item translate \emph{after writing}, $P(M,r)(p) \in P(M,M)$, or
  \item translate \emph{before reading}, $P(r,N)(p) \in P(N,N)$.
\end{itemize}
These two processes have different types.
However, with the output plugged to the input, it does not really matter when to apply the translation.
These two descriptions represent the same process: they are \emph{dinaturally equivalent}.

\begin{defi}[Dinatural equivalence] \label{def:dinaturality}\defining{linkdinaturality}
  For any functor $P \colon \catC^{op} \times \catC \to \Set$, consider the set
  \[S_{P} \defn \sum_{M \in \catC} P(M,M).\]
  \emph{Dinatural equivalence}, $(\sim)$, on the set $S_{P}$
  is the smallest equivalence relation satisfying
  $P(M,r)(p) \sim P(r,N)(p)$ for each $p \in P(M,N)$ and each $r \in \hom{}(N,M)$.
\end{defi}

Coproducts quotiented by \dinaturalEquivalence{} construct a particular form of colimit called a \emph{coend}.
Under the process interpretation of profunctors, taking a coend means \emph{plugging an output to an input}
of the same type.

\subsection{Coend calculus}

\emph{Coend calculus} is the name given to the a branch of category theory that
describes the behaviour of certain colimits called \emph{coends}.
MacLane \cite{maclane78} and Loregian \cite{loregian2021} give introductory presentations of the coend calculus.

\begin{defi}[Coend, \cite{maclane78,loregian2021}]
  Let $P \colon \catC^{op} \times \catC \to \Set$ be a functor.
  Its \emph{coend} is the coproduct of $P(M,M)$ indexed by $M \in \catC$, quotiented by \dinaturalEquivalence{}.
  \[\coend{M\in\catC}P(M,M) \defn \left(\sum_{M \in \catC} P(M,M) \bigg/ \sim \right).\]
  That is, the coend is the colimit of the diagram with a \emph{cospan} $P(M,M) \gets P(M,N) \to P(N,N)$
  for each $f \colon N \to M$.
\end{defi}

\begin{prop}[Yoneda reduction]
  \defining{linkcoyoneda}{} Let $\catC$ be any category and
  let $F \colon \catC \to \Set$ be a functor; the following isomorphism holds
  for any given object $A \in \catC$.
\label{prop:yonedareduction}
\[\coend{X \in \catC} \idProf(X,A) \times FX \cong FA. \]
Following the analogy with classical analysis, the $\idProf$ profunctor works as
a Dirac delta.
\end{prop}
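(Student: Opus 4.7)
The plan is to exhibit an explicit bijection between $\coend{X} \hom{}(X,A) \times FX$ and $FA$ and to verify that the forward map descends through dinatural equivalence. Unfolding the definition, the underlying endoprofunctor is $P \colon \catC^{op} \times \catC \to \Set$ with $P(M,N) \defn \hom{}(M,A) \times FN$; for a morphism $r \colon N \to M$, its contravariant action sends $(f,x) \mapsto (r \comp f, x)$ and its covariant action sends $(f,x) \mapsto (f, Fr(x))$. The generating dinatural relation therefore reads
\[(M,\ f,\ Fr(x))\ \sim\ (N,\ r \comp f,\ x).\]

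First I would define a forward map $\phi \colon \coend{X} \hom{}(X,A) \times FX \to FA$ on representatives by $\phi(X,f,x) \defn Ff(x)$. Checking that $\phi$ respects the generator above is just functoriality of $F$: since $Ff(Fr(x)) = F(r \comp f)(x)$, the two representatives collapse to the same element of $FA$. I would then define the candidate inverse $\psi \colon FA \to \coend{X} \hom{}(X,A) \times FX$ by $\psi(a) \defn [A, \im_{A}, a]$, sending each element of $FA$ to the class of the triple with memory object $A$ and identity morphism.

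To finish, I would verify the two round-trips. The composite $\phi \comp \psi$ is immediate since $F\im_{A} = \im_{FA}$, so $\phi(A, \im_{A}, a) = a$. For $\psi \comp \phi$, applying the dinatural generator with $r \defn f$, $M \defn A$, $N \defn X$, and first coordinate $\im_{A} \in \hom{}(A,A)$ yields
\[(A,\ \im_{A},\ Ff(x))\ \sim\ (X,\ f \comp \im_{A},\ x) = (X, f, x),\]
so $\psi(\phi[X,f,x]) = [A, \im_{A}, Ff(x)] = [X, f, x]$ inside the coend.

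The only real difficulty is variance bookkeeping: deciding which slot of $P$ carries the contravariance and writing the generator in the right direction. Once that is pinned down, the remainder collapses to a pair of one-line uses of functoriality. A small conceptual point worth flagging is that a single dinatural move already identifies $[X,f,x]$ with $[A, \im_{A}, Ff(x)]$, so no reasoning about the transitive closure of $(\sim)$ is required.
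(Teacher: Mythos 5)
Your proof is correct. The paper itself gives no argument for this proposition --- it is stated as a standard fact of the coend calculus with references to MacLane and Loregian --- so there is nothing internal to compare against; your element-wise argument is exactly the standard one from those sources. The two essential points are both present and handled properly: well-definedness of $\phi(X,f,x) = Ff(x)$ reduces to functoriality of $F$ applied to the generating dinatural relation (and since $\sim$ is the equivalence relation generated by those instances, checking the generators suffices), and the round-trip $\psi \comp \phi = \im$ follows from the single dinaturality move identifying $[X,f,x]$ with $[A,\im_{A},Ff(x)]$, instantiated at $r \defn f$. Your variance bookkeeping also matches the paper's conventions (diagrammatic composition, contravariance in the first slot of $\idProf(-,A)$), so the argument can stand as written.
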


\begin{prop}[Fubini rule]
  \defining{linkfubini}{} Coends commute between them; that is,
  there exists a natural isomorphism
  \label{prop:fubinirule}
  \[\begin{aligned}
    & \coend{X_{1} \in \catC} \coend{X_{2} \in \catC} P(X_{1},X_{2},X_{1},X_{2})
    \cong & 
    & \coend{X_{2} \in \catC} \coend{X_{1} \in \catC}  P(X_{1},X_{2},X_{1},X_{2}).
  \end{aligned}\]
  In fact, they are both isomorphic to the coend over the product category,
  \[ \coend{(X_{1}, X_{2}) \in \catC \times \catC} P(X_{1},X_{2},X_{1},X_{2}). \]
  Following the analogy with classical analysis, coends follow the Fubini rule
  for integrals.
\end{prop}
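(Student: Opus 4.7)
The plan is to identify all three objects as classifying the same universal data, namely cowedges out of $P$. Recall that, for any functor $Q \colon \catC^{op} \times \catC \to \Set$, the coend $\coend{X \in \catC} Q(X,X)$ is initial among sets $C$ equipped with a family $w_X \colon Q(X,X) \to C$ satisfying $w_X \comp Q(f,X) = w_Y \comp Q(Y,f)$ for every $f \colon X \to Y$; equivalently, it is the coproduct $\sum_X Q(X,X)$ quotiented by the dinaturality relation of \Cref{def:dinaturality}.

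The first step is to unpack what a cowedge with vertex $C$ for $P$, now viewed as a functor on the product category $\catC \times \catC$, looks like: a family $w_{(X_1,X_2)} \colon P(X_1,X_2,X_1,X_2) \to C$ satisfying the joint dinaturality equation for every morphism $(f_1,f_2) \colon (X_1,X_2) \to (Y_1,Y_2)$ in $\catC \times \catC$. Using the factorisation $(f_1,f_2) = (\im_{X_1}, f_2) \comp (f_1, \im_{Y_2})$, I would show that this joint condition is equivalent to the conjunction of dinaturality in $X_1$ alone (indexed by morphisms of the form $(f_1, \im)$) and dinaturality in $X_2$ alone (indexed by morphisms $(\im, f_2)$). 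One implication is immediate by functoriality of $P$; the other follows by pasting the two single-variable dinaturality squares through the intermediate factor $P(X_1, Y_2, X_1, X_2)$.

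The second step is to iterate the universal property of the single-variable coend. For each fixed $X_2$, a family $w_{(X_1,X_2)}$ dinatural in $X_1$ corresponds uniquely to a family of maps $\widetilde{w}_{X_2} \colon \coend{X_1 \in \catC} P(X_1,X_2,X_1,X_2) \to C$. The residual dinaturality in $X_2$ translates exactly into dinaturality of $\widetilde{w}$ as a cowedge for the functor $X_2 \mapsto \coend{X_1} P(X_1,X_2,X_1,X_2)$, which in turn corresponds to a unique map out of $\coend{X_2 \in \catC} \coend{X_1 \in \catC} P(X_1,X_2,X_1,X_2)$. Both iterated coends therefore represent the same $\Set$-valued functor as the coend over $\catC \times \catC$, so the three are canonically isomorphic by the Yoneda lemma; the reversed order is obtained by the symmetric argument.

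The main obstacle is the careful verification that joint dinaturality in $\catC \times \catC$ decomposes into two independent single-variable dinaturalities. Unlike naturality, dinaturality mixes covariant and contravariant actions of $P$, so the pasting diagram must be set up with care, interpolating through the mixed term $P(X_1, Y_2, X_1, X_2)$ and invoking the single-variable conditions along the two edges of the factorisation. Once this decomposition is established, the rest of the argument is a routine double application of the universal property of the coend.
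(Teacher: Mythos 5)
Your argument is correct, but note that the paper does not prove this proposition at all: it is stated as standard background in the coend-calculus section and deferred to the cited references (MacLane's \emph{Categories for the Working Mathematician}, \S IX.8, and Loregian's \emph{Coend Calculus}). What you have written is precisely the standard proof found there: all three objects corepresent cowedges out of $P$ viewed over $\catC \times \catC$, and the only non-trivial point is that joint dinaturality for a cowedge decomposes into dinaturality in each variable separately. You correctly flag that this decomposition is the crux --- for general dinatural transformations, separate dinaturality does not imply joint dinaturality, but for (co)wedges (one leg constant) the pasting through the mixed term $P(X_1,Y_2,X_1,X_2)$ along the factorisation $(f_1,f_2) = (f_1,\im)\comp(\im,f_2)$ does go through. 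With that lemma in hand, the double application of the universal property and the Yoneda argument are routine, so the proposal is a complete and standard proof of the statement the paper takes on citation.
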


\subsection{Towards dinatural memory channels}
\label{sec:towardsextensional}
Let us go back to intensional monoidal streams.
Consider a family of processes $f_{n} \colon M_{n-1} \tensor X_{n} \to Y_{n} \tensor N_{n}$ reading from memories of type $M_{n}$ but writing to memories of type $N_{n}$.
Assume we also have processes $r_{n} \colon N_{n} \to M_{n}$ translating from output to input memory.
Then, we can consider the process that does \(f_{n}\), translates from memory \(N_{n}\) to memory \(M_{n}\) and then does \(f_{n+1}\).
This process is described by two different \intensionalSequences{},
  \begin{itemize}
    \item $\intseq{f_{n} ; (r_{n} \otimes \im) \colon M_{n-1} 
      \tensor X_{n} \to M_{n} \tensor Y_{n}}$, and
    \item $\intseq{(r_{n-1} \otimes \im) ; f_{n} \colon N_{n-1} 
      \tensor X_{n} \to N_{n} \tensor Y_{n}}$.
  \end{itemize}
These two \intensionalSequences{} have different types for the memory channels.
However, in some sense, they represent \emph{the same process description}.
If we do not care about what exactly it is that we save to memory, we should consider two such processes to be equal (as in \Cref{diagram:dataflowwalk2}, where ``the same process'' can keep two different values in memory).
Indeed, dinaturality in the memory channels $M_{n}$ is the smallest equivalence relation $(\sim)$ satisfying
\[(f_{n} ; (r_{n} \otimes \im))_{n \in \naturals} \sim ((r_{n-1} \otimes \im); f_{n})_{n \in \naturals}.\]
This is precisely the quotienting that we
perform in order to define \emph{dinatural sequences}.

\begin{defi}\label{def:extensionalequality}\defining{linkextensionalequality}{}\label{def:extensional-sequence}
\emph{Dinatural equivalence} of \intensionalSequences{}, $(\sim)$, is dinatural equivalence in the memory channels $M_{n}$.
A \defining{linkextensionalstatefulsequence}{\emph{dinatural sequence}} from \(\stream{X}\) to \(\stream{Y}\) is an equivalence class
  \[\extseq{f_{n} \colon M_{n-1} \tensor X \to M_{n} \tensor Y}\]
of \intensionalSequences{} under dinatural equivalence.

In other words, the set of dinatural sequences is the set of intensional sequences substituting the coproduct by a coend,
\[\eSeq(\stream{X},\stream{Y}) \defn \coend{M \in [\naturals,\catC]} \prod^{\infty}_{i=0} \hom{}(X_{i} \tensor M_{i-1}, Y_{i} \tensor M_{i}).\]
\end{defi}

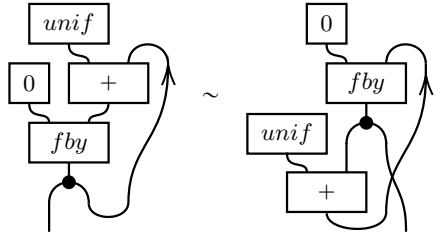
\begin{figure}[h!]
\tikzset{every picture/.style={line width=0.85pt}} %
\begin{tikzpicture}[x=0.75pt,y=0.75pt,yscale=-1,xscale=1]
\draw   (190,110) -- (230,110) -- (230,130) -- (190,130) -- cycle ;
\draw   (170,140) -- (210,140) -- (210,160) -- (170,160) -- cycle ;
\draw   (160,110) -- (180,110) -- (180,130) -- (160,130) -- cycle ;
\draw    (190,160) -- (190,170) ;
\draw    (220,110) .. controls (220.2,97) and (239.4,97.4) .. (240,110) ;
\draw   (170,80) -- (210,80) -- (210,100) -- (170,100) -- cycle ;
\draw    (170,130) .. controls (169.57,139.23) and (180.14,133.51) .. (180,140) ;
\draw    (210,130) .. controls (209.57,139.23) and (200.14,133.51) .. (200,140) ;
\draw    (180,180) -- (180,195) ;
\draw    (180,180.65) .. controls (179.8,166.45) and (200.2,166.45) .. (200,180.65) ;
\draw  [fill={rgb, 255:red, 0; green, 0; blue, 0 }  ,fill opacity=1 ] (187.1,170) .. controls (187.1,168.4) and (188.39,167.1) .. (190,167.1) .. controls (191.6,167.1) and (192.9,168.4) .. (192.9,170) .. controls (192.9,171.6) and (191.6,172.9) .. (190,172.9) .. controls (188.39,172.9) and (187.1,171.6) .. (187.1,170) -- cycle ;
\draw    (200,180.65) .. controls (200,189.85) and (220,190) .. (220,180) ;
\draw    (190,100) .. controls (189.57,109.23) and (200.14,103.51) .. (200,110) ;
\draw    (220,180) .. controls (220.2,150.25) and (239.99,148.64) .. (240.01,111.71) ;
\draw [shift={(240,110)}, rotate = 89.11] [color={rgb, 255:red, 0; green, 0; blue, 0 }  ][line width=0.75]    (10.93,-3.29) .. controls (6.95,-1.4) and (3.31,-0.3) .. (0,0) .. controls (3.31,0.3) and (6.95,1.4) .. (10.93,3.29)   ;
\draw   (299.99,165) -- (339.99,165) -- (339.99,185) -- (299.99,185) -- cycle ;
\draw   (319.97,110) -- (359.97,110) -- (359.97,130) -- (319.97,130) -- cycle ;
\draw   (309.97,80) -- (329.97,80) -- (329.97,100) -- (309.97,100) -- cycle ;
\draw   (279.99,135.56) -- (319.99,135.56) -- (319.99,155.56) -- (279.99,155.56) -- cycle ;
\draw    (319.97,100) .. controls (319.55,109.23) and (330.12,103.51) .. (329.97,110) ;
\draw    (329.97,150) -- (329.99,165) ;
\draw    (319.99,185.65) .. controls (319.99,195.63) and (350.32,196.97) .. (349.99,185) ;
\draw    (349.99,185) .. controls (350.18,155.25) and (369.98,153.64) .. (370,116.71) ;
\draw [shift={(369.99,115)}, rotate = 89.11] [color={rgb, 255:red, 0; green, 0; blue, 0 }  ][line width=0.75]    (10.93,-3.29) .. controls (6.95,-1.4) and (3.31,-0.3) .. (0,0) .. controls (3.31,0.3) and (6.95,1.4) .. (10.93,3.29)   ;
\draw    (369.99,110) -- (369.99,115) ;
\draw    (339.98,130) -- (339.98,140) ;
\draw    (329.99,150.65) .. controls (329.79,136.45) and (350.19,136.45) .. (349.99,150.65) ;
\draw  [fill={rgb, 255:red, 0; green, 0; blue, 0 }  ,fill opacity=1 ] (337.08,140) .. controls (337.08,138.4) and (338.38,137.1) .. (339.98,137.1) .. controls (341.58,137.1) and (342.88,138.4) .. (342.88,140) .. controls (342.88,141.6) and (341.58,142.9) .. (339.98,142.9) .. controls (338.38,142.9) and (337.08,141.6) .. (337.08,140) -- cycle ;
\draw    (299.99,155) .. controls (299.56,164.23) and (310.13,158.51) .. (309.99,165) ;
\draw    (349.99,110) .. controls (350.19,97) and (369.39,97.4) .. (369.99,110) ;
\draw    (349.99,150) .. controls (349.99,171.97) and (360.32,171.63) .. (359.99,195) ;
\draw (210,120) node    {$+$};
\draw (170,120) node    {$0$};
\draw (190,150) node    {$fby$};
\draw (190,90) node    {$unif$};
\draw (319.98,175) node    {$+$};
\draw (319.97,90) node    {$0$};
\draw (339.97,120) node    {$fby$};
\draw (299.99,145.56) node    {$unif$};
\draw (251,123.4) node [anchor=north west][inner sep=0.75pt]    {$\ \sim $};
\end{tikzpicture}
\caption{Dinaturally equivalent walks keeping different memories: the \emph{current} position vs. the \emph{next} position.}
\label{diagram:dataflowwalk2}
\end{figure}

 \newpage
\section{Dinatural Monoidal Streams}
\label{section:extensional}

In this section, we introduce \emph{dinatural monoidal streams} in terms of a universal property: \emph{dinatural streams are the morphisms of the free delayed-feedback monoidal category} (\Cref{th:extensionalfreefeedback}).

Feedback monoidal categories come from the work of Katis, Sabadini and Walters \cite{katis02}.
They are instrumental to our goal of describing and composing signal flow graphs: they axiomatize a graphical calculus that extends the well-known string diagrams for monoidal categories with \emph{feedback loops}~\cite{katis02,feedbackspans2020}.
Constructing the \emph{free feedback monoidal category}~(\Cref{def:stconstruction}) will lead to the main result of this section: dinatural sequences are the explicit construction of dinatural streams (\Cref{th:extensionalfreefeedback}).

We finish the section by exploring how dinatural equivalence may not be enough to capture true observational equality of processes (\Cref{example:observe}).

\subsection{Feedback monoidal categories}

\emph{Feedback monoidal categories} are \symmetricMonoidalCategories{} with a ``feedback'' operation that connects outputs to inputs.
They have a natural axiomatisation (\Cref{def:feedback}) that has been rediscovered independently multiple times, with only slight variations~\cite{bloom93,katis02,katis99,bonchi19}.
Feedback is weaker than \emph{trace} while still satisfying a normalisation property that allows a straightforward construction of the free \feedbackMonoidalCategory{} (\Cref{th:free-feedback}, cf.~\cite{feedbackspans2020}). Traced monoidal categories are better known---yet more restrictive: it is the failure of their yanking axiom what will allow stateful morphisms.
We present a novel definition that generalizes the previous ones by allowing the feedback operator to be guarded by a \monoidalEndofunctor{}.

\begin{defi}\label{def:feedback}\defining{linkcategorywithfeedback}{}
  A \emph{feedback monoidal category} is a symmetric monoidal category $(\catC,\otimes,I)$ endowed with a symmetric strong monoidal endofunctor $\fun{F} \colon \catC \to \catC$ and an operation
  \[\defining{linkfbkop}{\ensuremath{\mathbf{fbk}}}_{S} \colon
     \idProf(\fun{F}(S) \tensor X, S \tensor Y) \to \idProf(X,Y)\]
  for all \(S\), \(X\) and \(Y\) objects of \(\catC\);
  this operation needs to satisfy the following axioms.
  \begin{enumerate}[ label={(A\arabic*).}, ref={\textbf{(A\arabic*)}}, start=1 ]
    \item\label{axiom:tight} Tightening: \(u \dcomp \fbk[S](f) \dcomp v = \fbk[S]((\id{\fun{F}S} \tensor u) \dcomp f \dcomp(\id{S} \tensor v))\).
    \item\label{axiom:vanish} Vanishing: \(\fbk[\monoidalunit](f) = f\).
    \item\label{axiom:join} Joining: \(\fbk[T](\fbk[S](f)) = \fbk[S \tensor T](f)\)
    \item\label{axiom:strength} Strength: \(\fbk[S](f) \tensor g = \fbk[S](f \tensor g)\).
    \item\label{axiom:slide} Sliding: \(\fbk[S]((\fun{F}h \tensor \id{X}) \dcomp f) = \fbk[T](f \dcomp (h \tensor \id{Y}))\).
  \end{enumerate}
\end{defi}

\begin{figure}[H]
\tikzset{every picture/.style={line width=0.85pt}} %
\begin{tikzpicture}[x=0.75pt,y=0.75pt,yscale=-1,xscale=1]
\draw    (40,60) -- (40,70) ;
\draw   (30,70) -- (70,70) -- (70,90) -- (30,90) -- cycle ;
\draw    (10,40.35) .. controls (10,20.2) and (40,20.2) .. (40,40.35) ;
\draw   (29.5,40) -- (49.5,40) -- (49.5,60) -- (29.5,60) -- cycle ;
\draw    (10,90) .. controls (10,109) and (40.4,110.2) .. (40,90) ;
\draw    (10,62) -- (10,90) ;
\draw [shift={(10,60)}, rotate = 90] [color={rgb, 255:red, 0; green, 0; blue, 0 }  ][line width=0.75]    (10.93,-3.29) .. controls (6.95,-1.4) and (3.31,-0.3) .. (0,0) .. controls (3.31,0.3) and (6.95,1.4) .. (10.93,3.29)   ;
\draw    (60,20) -- (60,70) ;
\draw    (60,90) -- (60,110) ;
\draw    (140,60) -- (140,70) ;
\draw   (130,40) -- (170,40) -- (170,60) -- (130,60) -- cycle ;
\draw    (110,40.35) .. controls (109.6,19.8) and (140.4,19.8) .. (140,40.35) ;
\draw   (129.5,70) -- (149.5,70) -- (149.5,90) -- (129.5,90) -- cycle ;
\draw    (110,90) .. controls (110.4,109.4) and (140.4,109.8) .. (140,90) ;
\draw    (110,40) -- (110,90) ;
\draw    (160,60) -- (160,110) ;
\draw    (160,20) -- (160,40) ;
\draw    (10,40) -- (10,80) ;
\draw    (110,62) -- (110,90) ;
\draw [shift={(110,60)}, rotate = 90] [color={rgb, 255:red, 0; green, 0; blue, 0 }  ][line width=0.75]    (10.93,-3.29) .. controls (6.95,-1.4) and (3.31,-0.3) .. (0,0) .. controls (3.31,0.3) and (6.95,1.4) .. (10.93,3.29)   ;
\draw (50,80) node    {$f$};
\draw (39.5,50) node    {$Fh$};
\draw (81,53.4) node [anchor=north west][inner sep=0.75pt]    {$=$};
\draw (150,50) node    {$f$};
\draw (139.5,80) node    {$h$};
\end{tikzpicture}
\caption{The sliding axiom (A5).}\label{diagram:sliding}
\end{figure}

\begin{defi}\label{def:feedbackfunctor}
  A \defining{linkfeedbackfunctor}{\emph{feedback functor}}
  between two \feedbackMonoidalCategories{}
  $(\catC,\fun{F}^{\catC},\fbk^{\catC})$ and $(\catD,\fun{F}^{\catD},\fbk^{\catD})$ is a symmetric \monoidalFunctor{} $G \colon \catC \to \catD$ such that $G \comp \fun{F}^{\catD} = \fun{F}^{\catC} \comp G$ and
  \[
    G(\fbk[S]^\catC(\fm)) = \fbk[GS]^{\catD}(\mu_{\fun{F}S,A} \comp Gf \comp \mu^{-1}_{S,B})
  \]
  for each $\fm \colon \fun{F}S \tensor X \to S \tensor Y$, where \(\mu_{A,B} \colon G(\sA) \tensor G(\sB) \to G(\sA \tensor \sB )\) is the structure morphism of the monoidal functor \(G\).
\end{defi}

A traced monoidal category is, in fact, a feedback monoidal category with an extra axiom, the \emph{yanking axiom}, that makes the feedback loop ``instantaneous'' (\Cref{fig:yanking-wait}, left).
This requirement imposes that the internal state does not change between iterations and the process has already reached equilibrium: the yanking axiom implies memoryless processes.
These conditions are too strong for our purposes of modelling dataflow computations, so we choose feedback monoidal categories instead.
For a thorough discussion about the conceptual difference between trace and feedback, and the necessity to drop the yanking axiom for modelling state, see~\cite[Sections~3.2 and~3.3]{2023canonicalalgebra}.

\begin{rem}[Wait or trace]\label{remark:wait}\defining{linkmorphwait}
  In a \feedbackMonoidalCategory{} $(\catC,\fbk)$,
  we construct the morphism $\mathsf{wait}_{X} \colon X \to FX$ as a feedback loop over the symmetry,
  $\mathsf{wait}_{X} = \fbk(\sigma_{F(X),X})$ (\Cref{fig:yanking-wait}, right).
  A \emph{traced monoidal category}~\cite{joyal96} is a \feedbackMonoidalCategory{} guarded by the identity functor
  such that $\mathsf{wait}_{X} = \im_{X}$.
  \begin{figure}[h!]
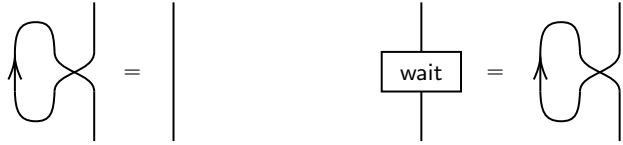

    \[\waitTraceFig\]
    \caption{The yanking axiom of traced monoidal categories (left) and the morphism $\mathsf{wait}$ in feedback monoidal categories (right).\label{fig:yanking-wait}}
  \end{figure}

\end{rem}

The ``state construction'', \(\St(\bullet)\), realizes the \emph{free} \feedbackMonoidalCategory{}.
As it happens with feedback monoidal categories, this construction has appeared in the literature in slightly different forms. It has been used for describing a ``memoryful geometry of interaction''~\cite{hoshino14}, ``stateful resource calculi''~\cite{bonchi19}, and ``processes with feedback''~\cite{sabadini95,katis02}.
The idea in all of these cases is the same: we allow the morphisms of a monoidal category to depend on a ``state space'' $S$, possibly guarded by a functor.
Adding a state space is equivalent to freely adding feedback~\cite{katis02,feedbackspans2020}.

\begin{defi}
  A \emph{stateful morphism} is a pair $(S,f)$ consisting of a ``state space'' $S \in \catC$ and a morphism $f \colon \fun{F}S \tensor X \to S \tensor Y$.
  We say that two stateful morphisms are \emph{sliding equivalent} if they are related by the smallest equivalence relation satisfying
  $\reprCoend{(\fun{F}r \tensor \im) \dcomp h}{S} \sim \reprCoend{h \dcomp (r \tensor \im)}{T}$
  for each $h \colon X \tensor \fun{F}T \to S \tensor Y$ and each $r \colon S \to T$.

  In other words, sliding equivalence is \emph{dinaturality in $S$}.
\end{defi}

\begin{defi}[St($\bullet$) construction,~\cite{katis02,feedbackspans2020}]\label{def:stconstruction}
  We write \(\defining{linkSt}{\ensuremath{\mathsf{St}}}_{\fun{F}}(\catC)\) for the \symmetricMonoidalCategory{} that has the same objects as \(\catC\) and whose morphisms from $X$ to $Y$ are stateful morphisms $f \colon \fun{F}S \tensor X \to S \tensor Y$ \emph{up to sliding}.
  \[\hom{\St_{\fun{F}}(\catC)}(X,Y) \coloneqq \coend{S \in \catC} \idProf[\catC](\fun{F}S \tensor X, S \tensor Y).\]
\end{defi}

\begin{thm}[see~\cite{katis02}]\label{th:free-feedback}\label{ax:th:free-feedback}
  \(\St_{\fun{F}}(\catC)\) is the free feedback monoidal category over \((\catC,\fun{F})\).
\end{thm}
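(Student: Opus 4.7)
The plan is to prove this in two stages: first, verify that $\St_{\fun{F}}(\catC)$ genuinely carries a feedback monoidal structure, and second, establish the universal property by exhibiting a canonical embedding $\iota \colon \catC \to \St_{\fun{F}}(\catC)$ and showing every symmetric monoidal functor $H \colon \catC \to \catD$ into a feedback monoidal category $(\catD, \fun{G}, \fbk^\catD)$ compatible with the endofunctors (i.e.\ $H \circ \fun{F} = \fun{G} \circ H$) extends uniquely along $\iota$ to a feedback functor.

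For the monoidal structure on $\St_{\fun{F}}(\catC)$, I would define composition of $\reprCoend{f \colon \fun{F}S \tensor X \to S \tensor Y}{S}$ and $\reprCoend{g \colon \fun{F}T \tensor Y \to T \tensor Z}{T}$ by pooling state as $\reprCoend{(\mu^{-1}_{S,T} \tensor \id{X}) \dcomp (\id{\fun{F}S} \tensor \text{reshuffle}) \dcomp \ldots}{S \tensor T}$, using the monoidal structure $\mu$ of $\fun{F}$; identities are $\reprCoend{\varepsilon^{-1} \tensor \id{X}}{I}$. Well-definedness on representatives follows from the dinaturality quotient. The feedback operator $\fbk_R$ turns a $\St$-morphism $\fun{F}R \tensor X \to R \tensor Y$, represented by some $f \colon \fun{F}S \tensor \fun{F}R \tensor X \to S \tensor R \tensor Y$, into the $\St$-morphism $X \to Y$ represented by $(\mu^{-1}_{S,R} \tensor \id{X}) \dcomp f$ on state space $S \tensor R$. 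Axioms A2 (vanishing) and A4 (strength) are immediate; A3 (joining) and A1 (tightening) unravel into routine bookkeeping on state objects and structural isomorphisms; crucially, A5 (sliding) is exactly the dinatural equivalence we quotiented by, so it holds by construction.

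For the universal property, define the embedding $\iota \colon \catC \to \St_{\fun{F}}(\catC)$ sending $f \colon X \to Y$ to the class of $(\varepsilon^{-1} \tensor f) \colon \fun{F}I \tensor X \to I \tensor Y$; this is a symmetric strong monoidal functor. Given $H$ as above, define the extension $\tilde{H} \colon \St_{\fun{F}}(\catC) \to \catD$ by
\[\tilde{H}\reprCoend{f \colon \fun{F}S \tensor X \to S \tensor Y}{S} \defn \fbk^{\catD}_{HS}\bigl(\mu^{H}_{S,Y} \dcomp Hf \dcomp (\mu^{H})^{-1}_{\fun{F}S,X}\bigr).\]
Well-definedness on dinatural equivalence classes reduces directly to axiom A5 applied in $\catD$. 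Functoriality requires A3 (joining) to collapse the nested feedback over $S \tensor T$, together with A4 (strength) and A1 (tightening) to slot in the structural isomorphisms. Preservation of the tensor and the symmetry is straightforward from A4. That $\tilde{H}$ commutes with $\fbk$ on the nose follows from A3: feedback inside $\St$ simply enlarges the state object, which after applying $\tilde{H}$ becomes a joined feedback in $\catD$, and A3 identifies this with the nested feedback.

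The main obstacle is uniqueness. Any feedback functor $K$ extending $H$ must send a stateful morphism $\reprCoend{f}{S}$ to $K(\fbk^{\St}_{S}(\iota f))$ (after suitable restructuring), because every stateful morphism is a feedback loop around a morphism lifted from $\catC$; since $K$ is a feedback functor and extends $H$, this forces $K \reprCoend{f}{S} = \fbk^{\catD}_{HS}(\mu^{H} \dcomp Hf \dcomp (\mu^{H})^{-1}) = \tilde{H}\reprCoend{f}{S}$. The subtle point is verifying that every morphism in $\St_{\fun{F}}(\catC)$ admits such a presentation as feedback-around-$\iota$, which amounts to checking that $\reprCoend{f}{S} = \fbk^{\St}_{S}(\iota(f))$ holds by vanishing and the definition of $\iota$; the remaining coherences between $\mu^{H}$, $\varepsilon^{H}$, and the structural isomorphisms in $\St$ are a diagram chase that we carry out by strictifying via \Cref{theorem:coherence} to avoid associator clutter.
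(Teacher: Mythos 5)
Your proposal is correct and follows essentially the same strategy as the paper's (very terse) proof sketch, which defers to Katis, Sabadini and Walters: the extension $\tilde{H}$ is forced because every morphism of $\St_{\fun{F}}(\catC)$ is a feedback loop around a morphism lifted from $\catC$, and well-definedness on sliding classes is exactly axiom A5 in the target. Your write-up is more explicit about verifying the feedback structure on $\St_{\fun{F}}(\catC)$ and the coherence bookkeeping, but the key ideas coincide.
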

\begin{proof}[Proof sketch]
  Let $(\catD,\fun{F}^{\catD},\fbk^{\catD})$ be any other symmetric monoidal category with an endofunctor,
  and let $H \colon \catC \to \catD$ be such that $\fun{F} \comp H = H \comp \fun{F}^{\catD}$.
  We will prove that it can be extended uniquely to a feedback functor $\tilde{H} \colon \St_{\fun{F}}(\catC) \to \catD$.

  It can be proven that any expression involving feedback can be reduced applying the feedback axioms to an expression of the form $\fbk(f)$ for some $f \colon \fun{F}S \tensor X \to S \tensor Y$. After this, the definition of $\tilde{H}$ in this morphism is forced to be $\tilde{H}(\fbk{f}) = \fbk^{\catD}(\catD)$. This reduction is uniquely determined, up to sliding, and the morphisms of the $\St(\bullet)$ construction are precisely morphisms $f \colon \fun{F}S \tensor X \to S \tensor Y$ quotiented by sliding equivalence.  This is the core of the proof by Katis, Sabadini, and Walters \cite{katis02}.
\end{proof}

\subsection{Dinatural monoidal streams}

Monoidal streams should be, in some sense, the minimal way of adding feedback to a theory of processes.
The output of this feedback, however, should be delayed by one unit of time: the category $\NcatC$ is naturally equipped with a \emph{delay} endofunctor that shifts a sequence by one.
Dinatural monoidal streams form the free delayed-feedback category.

\begin{defi}[Delay functor]\label{def:delay-fun-cn}
  Let \(\defining{linkdelay}{\delay} \colon \sequencesCat{\catC} \to \sequencesCat{\catC}\) be the endofunctor defined on objects \(\stream{X} = \streamExpr{X}\), as \(\delay[\stream{X}] \defn \streamExprExp{\monoidalunit, X_0, X_1}\); and on morphisms \(\stream{f} = \streamExpr{f}\) as \(\delay[\stream{f}] \defn \streamExprExp{\id{\monoidalunit}, f_0, f_1}\).
\end{defi}

\begin{defi}\label{def:extensionalmonoidalstreams}
  \defining{linkestream}{}\defining{linkextensionalmonoidalstream}{}
  The set of \emph{dinatural monoidal streams} from \(\stream{X}\) to \(\stream{Y}\) is $\eStream(\stream{X},\stream{Y})$, the hom-set of the free feedback monoidal category over \((\NcatC, \delay)\).
\end{defi}

We characterize now dinatural streams in terms of dinatural sequences and the $\St(\bullet)$-construction.

\begin{thm}\label{th:ext-stateful-sequences}\label{th:extensionalfreefeedback}
  \ExtensionalSequences{} are the explicit construction of \extensionalStreams{}, $\eStream(\stream{X},\stream{Y}) \cong \St_{\delay}(\NcatC)(\stream{X},\stream{Y})$, naturally on $\stream{X}$ and $\stream{Y}$.
\end{thm}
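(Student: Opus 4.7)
The plan is to unfold both sides of the claimed isomorphism and observe that they are the same coend, up to canonical coherence isomorphisms. By \Cref{def:extensionalmonoidalstreams}, $\eStream(\stream{X},\stream{Y})$ is, by definition, the hom-set $\St_{\delay}(\NcatC)(\stream{X},\stream{Y})$, so the isomorphism statement reduces to recognising this hom-set as the dinatural-sequence coend of \Cref{def:extensional-sequence}. The main work is therefore bookkeeping, and naturality in $\stream{X}$ and $\stream{Y}$ is free from the coend presentation once the isomorphism is established pointwise.

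\paragraph{Step by step.}
First, I would apply \Cref{def:stconstruction} to unpack
\[ \St_{\delay}(\NcatC)(\stream{X},\stream{Y}) \;=\; \coend{\stream{S}\in\NcatC} \idProf[\NcatC](\delay\stream{S}\tensor\stream{X},\, \stream{S}\tensor\stream{Y}). \]
Second, I would use that morphisms and the monoidal product in $\NcatC = [\naturals,\catC]$ are taken componentwise, so that the hom-set inside the coend splits as a product over $n\in\naturals$:
\[ \idProf[\NcatC](\delay\stream{S}\tensor\stream{X},\, \stream{S}\tensor\stream{Y}) \;\cong\; \prod_{n=0}^{\infty}\hom{\catC}\bigl((\delay\stream{S})_{n}\tensor X_{n},\; S_{n}\tensor Y_{n}\bigr). \]
Third, I would compute $(\delay\stream{S})_{n}$ from \Cref{def:delay-fun-cn}: it is $I$ for $n=0$ and $S_{n-1}$ for $n\ge 1$, which is precisely the convention $M_{-1}\defn I$ used in \Cref{def:intensionalsequence,def:extensional-sequence}. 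Applying the symmetry of $\catC$ to swap inputs and outputs yields a natural isomorphism with $\hom{\catC}(X_{n}\tensor M_{n-1},\, Y_{n}\tensor M_{n})$, giving exactly the set appearing inside the coend in \Cref{def:extensional-sequence}. Since both coends range over the same category $\NcatC=[\naturals,\catC]$ and the componentwise identification is functorial in $\stream{S}$, the two coends are canonically isomorphic.

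\paragraph{Main obstacle.}
The routine calculation is easy; the only delicate step is verifying that the quotient by sliding (dinaturality in $\stream{S}$ inside $\St_{\delay}(\NcatC)$) coincides with the dinatural equivalence of \Cref{def:extensionalequality}, which is phrased as dinaturality in the whole sequence of memories $(M_n)_n$. Both equivalences are generated by postcomposing with a morphism $\stream{r}\colon \stream{S}\to\stream{T}$ in $\NcatC$ on one memory boundary while precomposing with $\delay\stream{r}$ on the other; because $\delay$ shifts sequences by one position, a global sliding by $\stream{r}$ acts at index $n$ as post\-com\-pos\-ing by $r_{n}$ on the output memory and at index $n+1$ as pre\-com\-pos\-ing by $r_{n}$ on the input memory, which is precisely the single generator of dinaturality for $(M_n)_n$. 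Hence the two equivalence relations generate the same quotient, and the componentwise bijection of sets descends to the claimed natural isomorphism $\eStream(\stream{X},\stream{Y})\cong \St_{\delay}(\NcatC)(\stream{X},\stream{Y})$.
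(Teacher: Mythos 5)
Your proposal follows the same route as the paper: identify $\eSeq(\stream{X},\stream{Y})$ with $\St_{\delay}(\NcatC)(\stream{X},\stream{Y})$ by unfolding the componentwise structure of $\NcatC$ and of the delay functor, and then connect this to $\eStream$. Your computation — including the check that sliding along $\stream{r}\colon\stream{S}\to\stream{T}$ generates exactly dinaturality in the memory sequence $(M_{n})_{n}$, with the index shift coming from $\delay$ — is correct, and in fact more detailed than the paper's own proof, which disposes of that identification with the single phrase that dinatural sequences coincide with the state construction ``by \Cref{def:stconstruction}''.

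The one place you are too quick is your opening move: you assert that, by \Cref{def:extensionalmonoidalstreams}, $\eStream(\stream{X},\stream{Y})$ \emph{is} the hom-set of $\St_{\delay}(\NcatC)$. The definition actually says it is the hom-set of \emph{the free feedback monoidal category over $(\NcatC,\delay)$}, an object specified only up to equivalence by a universal property. Identifying that abstract object with the concrete state construction requires \Cref{th:free-feedback}, and this is exactly the citation the paper's proof makes explicit (``it follows by \Cref{th:free-feedback} that both categories are equivalent, inducing a natural bijection on the hom-sets''). Adding that one appeal closes the gap; the rest of your argument stands as written.
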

\begin{proof}
  Dinatural sequences coincide, by \Cref{def:stconstruction}, with the state construction, \linebreak[5] $\eSeq(\stream{X},\stream{Y}) = \St_{\delay}(\NcatC)(\stream{X},\stream{Y})$. This is still different from our definition of dinatural monoidal streams as morphisms of the free feedback monoidal category; but it follows by \Cref{th:free-feedback} that they both categories are equivalent, inducing a natural bijection on the hom-sets.
\end{proof}

As a consequence, the calculus of signal flow graphs given by the syntax of feedback monoidal categories is sound and complete for dinatural equivalence over $\NcatC$.

\subsection{Towards observational processes}

\ExtensionalSequences{} were an improvement over \intensionalSequences{} because they allowed us to equate process descriptions that were \emph{essentially the same}.
However, we could still have two processes that are ``observationally the same'' without them being described in the same way.

\begin{rem}[Followed by]
  \defining{linkmorphfollowedby}{}
  As we saw in the Introduction, \emph{``followed by''} is a crucial operation in dataflow programming.
  Any sequence can be decomposed as $\stream{X} \cong \act{X_{0}}{\delay(\tail{\stream{X}})}$.\footnote{This can also be seen as the isomorphism making ``sequences'' a final coalgebra. That is, the first slogan we saw in~\Cref{section:fixpoint}.}
  We call \emph{``followed by''} the coherence isomorphism in $\NcatC$ that witnesses this decomposition.
  \[\morphfby_{\stream{X}} \colon \act{X_{0}}{\delay{(\tail{\stream{X}})}} \to \stream{X}\]
  In the case of constant sequences $\stream{X} = (X,X,\dots)$, we have that $\tail{\stream{X}} = \stream{X}$; which means that ``followed by'' has type
  $\morphfby_{\stream{X}} \colon \act{X}{\delay\stream{X}} \to \stream{X}$.
\end{rem}

\begin{exa}\label{example:observe}
  Consider the \extensionalStatefulSequence{}, in any \cartesianMonoidalCategory{}, that saves the first input to memory without ever producing an output.
  \emph{Observationally}, this is no different from simply discarding the first input, $(\blackComonoidUnit)_{X} \colon X \to 1$. However, in principle, we cannot show that these are \emph{dinaturally equal}, that is, $\fbk(\morphfby_{\stream{X}}) \neq (\blackComonoidUnit)_{X}$.

  More generally, discarding the result of any stochastic or deterministic signal flow graph is, \emph{observationally}, the same as doing nothing (\Cref{figure:silentwalk}, consequence of \Cref{prop:semicartesian-sequences}).
\begin{figure}
\tikzset{every picture/.style={line width=0.85pt}} %
\begin{tikzpicture}[x=0.75pt,y=0.75pt,yscale=-1,xscale=1]
\draw   (299.99,165) -- (339.99,165) -- (339.99,185) -- (299.99,185) -- cycle ;
\draw   (319.97,110) -- (359.97,110) -- (359.97,130) -- (319.97,130) -- cycle ;
\draw   (309.97,80) -- (329.97,80) -- (329.97,100) -- (309.97,100) -- cycle ;
\draw   (279.99,135.56) -- (319.99,135.56) -- (319.99,155.56) -- (279.99,155.56) -- cycle ;
\draw    (319.97,100) .. controls (319.55,109.23) and (330.12,103.51) .. (329.97,110) ;
\draw    (329.97,150) -- (329.99,165) ;
\draw    (319.99,185.65) .. controls (319.99,195.63) and (350.32,196.97) .. (349.99,185) ;
\draw    (349.99,185) .. controls (350.18,155.25) and (369.98,153.64) .. (370,116.71) ;
\draw [shift={(369.99,115)}, rotate = 89.11] [color={rgb, 255:red, 0; green, 0; blue, 0 }  ][line width=0.75]    (10.93,-3.29) .. controls (6.95,-1.4) and (3.31,-0.3) .. (0,0) .. controls (3.31,0.3) and (6.95,1.4) .. (10.93,3.29)   ;
\draw    (369.99,110) -- (369.99,115) ;
\draw    (339.98,130) -- (339.98,140) ;
\draw    (329.99,150.65) .. controls (329.79,136.45) and (350.19,136.45) .. (349.99,150.65) ;
\draw  [fill={rgb, 255:red, 0; green, 0; blue, 0 }  ,fill opacity=1 ] (337.08,140) .. controls (337.08,138.4) and (338.38,137.1) .. (339.98,137.1) .. controls (341.58,137.1) and (342.88,138.4) .. (342.88,140) .. controls (342.88,141.6) and (341.58,142.9) .. (339.98,142.9) .. controls (338.38,142.9) and (337.08,141.6) .. (337.08,140) -- cycle ;
\draw    (299.99,155) .. controls (299.56,164.23) and (310.13,158.51) .. (309.99,165) ;
\draw    (349.99,110) .. controls (350.19,97) and (369.39,97.4) .. (369.99,110) ;
\draw    (349.99,150) .. controls (349.99,171.97) and (360.32,171.63) .. (359.99,195) ;
\draw  [fill={rgb, 255:red, 0; green, 0; blue, 0 }  ,fill opacity=1 ] (357.09,195) .. controls (357.09,193.4) and (358.39,192.1) .. (359.99,192.1) .. controls (361.59,192.1) and (362.89,193.4) .. (362.89,195) .. controls (362.89,196.6) and (361.59,197.9) .. (359.99,197.9) .. controls (358.39,197.9) and (357.09,196.6) .. (357.09,195) -- cycle ;
\draw  [dash pattern={on 4.5pt off 4.5pt}] (410,80) -- (470,80) -- (470,200) -- (410,200) -- cycle ;
\draw (319.98,175) node    {$+$};
\draw (319.97,90) node    {$0$};
\draw (339.97,120) node    {$fby$};
\draw (299.99,145.56) node    {$unif$};
\draw (381,132.4) node [anchor=north west][inner sep=0.75pt]    {$\approx $};
\end{tikzpicture}
\caption{\emph{Observationally}, a silent process does nothing.}
\label{figure:silentwalk}
\end{figure}
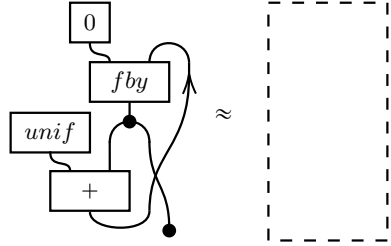
\end{exa}
\section{Observational Monoidal Streams}\label{section:observational}

In this section, we introduce our definitive \emph{monoidal streams}: \emph{\observationalStreams{}} (\Cref{def:observationalmonoidalstream}).
Their explicit construction is given by \observationalSequences{}: \extensionalSequences{} quotiented by \observationalEquivalence{}.

Intuitively, two processes are observationally equal if they are ``equal up to stage \(n\)'', for any \(n \in \naturals\).
We show that, in sufficiently well-behaved monoidal categories (which we call \emph{productive}, \Cref{def:productive}), the set of observational sequences given some inputs and outputs is the final coalgebra of a fixpoint equation (\Cref{eq:observationalstreamshort}).
The name ``observational equivalence'' is commonly used to denote equality on the final coalgebra: \Cref{theorem:observationalfinalcoalgebra} justifies our use of the term.

\subsection{Observational streams}

We saw in~\Cref{sec:int-stateful-sequences} that we can define \intensionalSequences{} as a solution to a fixpoint equation.
We now consider the same equation, just substituting the coproduct for a coend.

\begin{defi}
  For a monoidal category \(\cat{C}\), the endofunctor $(\hom{} \andThen\ \bullet) \colon [\NcatC^{op} \times \NcatC, \Set] \to [\NcatC^{op} \times \NcatC, \Set]$ is defined by
  \[(\hom{} \andThen\ Q)(\stream{X};\stream{Y}) \coloneqq \coend{M \in \catC} \hom{}(X_{0}, M \tensor Y_{0}) \times Q(M \cdot \tail{\stream{X}}, \tail{\stream{Y}}).\]
\end{defi}

\begin{defi}[Observational streams]\label{def:observationalmonoidalstream}\defining{linkobservationalmonoidalstream}{}
  The set of \emph{observational monoidal streams}, depending on inputs and outputs, %
  is the functor $\fun{Q} \colon \NcatC^{op} \times \NcatC \to \Set$ given by %
  the final fixpoint of the equation in \Cref{eq:observationalstreamshort}.
\end{defi}
\vspace{-1em}
\begin{figure}[!h]
  \centering
  $\displaystyle \fun{Q}(\stream{X},\stream{Y}) \cong \coend{M \in \catC}
    \hom{}(X_{0}, M \tensor Y_{0}) \times \fun{Q}(\act{M}{\tail{\stream{X}}}, \tail{\stream{Y}})$.
  \caption{Fixpoint equation for observational streams.}\label{eq:observationalstreamshort}
\end{figure}

The explicit construction of this final fixpoint will be given by \observationalSequences{} (\Cref{theorem:observationalfinalcoalgebra}).

\subsection{Observational sequences}

We said that observational equivalence is ``equality up to stage $n$'', so our first step will be to define what it means to truncate an \extensionalSequence{} at any given $n \in \naturals$.

\begin{defi}[$n$-Stage process]\label{def:n-stages-process}
  An \emph{n-stage process} from inputs $\stream{X} = \streamExpr{X}$ to outputs
  $\stream{Y} = \streamExpr{Y}$ is an element of the set
  \[\defining{linkstage}{\ensuremath{\fun{Stage}_{n}}}(\stream{X},\stream{Y}) \defn \StageExpr{n}{X}{Y}.\]
\end{defi}

\begin{rem}\label{remark:stagenotation}
  In other words, $n$-stage processes are $n$-tuples $(f_{i} \colon M_{i-1} \tensor X_{i} \to M_{i} \tensor Y_{i})^{n}_{i=0}$, for some choice of $M_{i}$ \emph{up to dinaturality}, that we write as
  \[\bra{f_{0}|f_{1}|\dots|f_{n}} \in \Stage{n}(\stream{X},\stream{Y}).\]
  In this notation, \emph{dinaturality} means that morphisms can \emph{slide past the bars}.
  That is, for any $r_{i} \colon N_{i} \to M_{i}$ and any tuple, dinaturality says that
  \[\bra{f_{0} \comp (r_{0} \tensor \im)|f_{1} \comp (r_{1} \tensor \im) | \dots | f_{n} \comp (r_{n} \tensor \im)} = 
  \bra{f_{0} | (r_{0} \tensor \im) \comp f_{1} | \dots | (r_{n-1} \tensor \im) \comp f_{n}}.\]
  Note that the last $r_{n}$ is removed by dinaturality.
\end{rem}

\begin{defi}[Truncation]
  Let $\braket{f_{n} \colon M_{n-1} \tensor X_{n} \to M_{n} \tensor Y_{n}} \in \eSeq(\stream{X},\stream{Y})$ be an \extensionalSequence{}.
  Its \emph{$k$th-truncation} is the element $\bra{f_{0}|\dots|f_{k}} \in \Stage{k}(\stream{X},\stream{Y})$.
  Truncation is well-defined under dinatural equivalence (see \Cref{remark:stagenotation}).

  For $k \leq n$, the \emph{$k$th-truncation} of an n-stage process given by $\bra{f_{0}|f_{1}|\dots|f_{n}} \in \Stage{n}(\stream{X},\stream{Y})$
  is $\bra{f_{0}|\dots|f_{k}} \in \Stage{k}(\stream{X},\stream{Y})$. This induces functions $\pi_{n,k} \colon \Stage{n}(\stream{X},\stream{Y}) \to \Stage{k}(\stream{X},\stream{Y})$, with
  the property that $\pi_{n,m} \comp \pi_{m,k} = \pi_{n,k}$.
\end{defi}

\begin{defi}[Observational equivalence]
  \defining{linkobservationallyequal}{}\label{def:observationallyequal}
  Two dinatural stateful sequences
  \[ \extseq{f}, \extseq{g} \in \coend{M \in [\naturals,\catC]} \prod^{\infty}_{i=0} \hom{}(M_{i-1} \tensor X_{i}, Y_{i} \tensor M_{i})\]
  are \emph{observationally equivalent} when all their n-stage truncations are equal. That is, $\bra{f_{0}|\dots | f_{n}} = \bra{g_{0}|\dots | g_{n}}$, for each $n \in \naturals$.
  We write this as $f \obsEqRel g$.
\end{defi}

\begin{rem}
  Formally, this is to say that the sequences \(\extseq{f}\) and \(\extseq{g}\) have the same image on the limit
  \[\lim\nolimits_{n} \Stage{n}(\stream{X},\stream{Y}),\]
  over the chain $\pi_{n,k} \colon \Stage{n}(\stream{X},\stream{Y}) \to \Stage{k}(\stream{X},\stream{Y})$.
\end{rem}

\begin{defi}\label{def:observationalsequence}
  An \defining{linkobservationalsequence}{\emph{observational sequence}} from \(\stream{X}\) to \(\stream{Y}\) is an
  equivalence class
  \[[\extseq{f_{n} \colon M_{n-1} \tensor X_{n} \to M_{n} \tensor Y_{n}}]_{\approx}\]
  of \extensionalSequences{} under \observationalEquality{}.
  In other words, the set of observational sequences is
  \[\oSeq(\stream{X},\stream{Y}) \cong \left(\coend{M \in [\naturals,\catC]} \prod^{\infty}_{i=0} \hom{}(M_{i-1} \tensor X_{i}, M_{i} \tensor Y_{i})\right)
  \bigg/\approx.\]
\end{defi}

Discarding the output of a process is, \emph{observationally}, the same as discarding its input (cf.~\Cref{figure:silentwalk}).

\begin{prop}\label{prop:semicartesian-sequences} In any semicartesian category (any symmetric
  monoidal category whose unit is terminal), observational sequences with no
  output coincide with the discard sequence.
\end{prop}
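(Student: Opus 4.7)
The plan is to show that, for every $n \in \naturals$, the $n$-stage truncation of any observational sequence $\stream{X} \to (I,I,\ldots)$ agrees with the $n$-stage truncation of the discard sequence (the one with all memories $M_i = I$ and every $f_i \colon I \tensor X_i \to I$ the unique morphism into the unit). By \Cref{def:observationallyequal}, agreement of all such truncations is exactly observational equivalence, so this will suffice. The only input from semicartesianness I need is that $I$ is terminal, together with the consequence $\blackComonoidUnit_{A \tensor B} = \blackComonoidUnit_A \tensor \blackComonoidUnit_B$ (both sides being maps into $I$, hence unique).

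First, I fix an arbitrary representative $\bra{f_0 | f_1 | \cdots | f_n}$ with $f_i \colon M_{i-1} \tensor X_i \to M_i \tensor I$. I then proceed by downward induction on $k = n, n-1, \ldots, 0$, maintaining the invariant that after the $k$-th step every memory $M_j$ with $j \ge k$ has been replaced by $I$ and every $f_j$ with $j \ge k$ is the unique morphism to $I$. For the base step $k = n$, the Remark on $n$-stage notation records that ``the last $r_n$ is removed by dinaturality''; taking $r_n = \blackComonoidUnit_{M_n}$ and sliding it off the end of the representative produces an equivalent $n$-stage process in which the last memory is $I$ and $f_n$ has become the unique map $M_{n-1} \tensor X_n \to I$. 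For the inductive step, suppose $f_{k+1} \colon M_k \tensor X_{k+1} \to I$ is already the unique morphism. Decompose
\[f_{k+1} \;=\; (\blackComonoidUnit_{M_k} \tensor \im_{X_{k+1}}) \comp g,\]
where $g \colon I \tensor X_{k+1} \to I$ is again the unique morphism to $I$. Sliding $\blackComonoidUnit_{M_k}$ from the right of the $k$-th bar to the left (\Cref{def:extensionalequality}) yields an equivalent representative in which the $k$-th memory is $I$, the new $f_{k+1}$ is $g$, and the new $f_k \colon M_{k-1} \tensor X_k \to I \tensor I$ is, by uniqueness after postcomposing with $\blackComonoidUnit_{M_k}$, again the unique map to $I$.

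After $n+1$ iterations the representative has all memories equal to $I$ and every $f_j$ equal to the unique arrow $I \tensor X_j \to I$, which is precisely the $n$-stage truncation of the discard sequence. The only real obstacle is bookkeeping: tracking the coherence isomorphisms identifying $M_i \tensor I$ with $M_i$ and $I \tensor I$ with $I$, and making sure the sliding step at position $k$ is applied to a factorisation of $f_{k+1}$ that actually exposes a morphism out of $M_k$. Both are straightforward once one uses that any two morphisms into $I$ agree, so no coherence choice matters. The ``no output'' hypothesis is essential: it is precisely what makes every intermediate codomain a product of memory and $I$-factors, letting uniqueness into $I$ drive the whole reduction.
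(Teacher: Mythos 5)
Your argument is correct and uses exactly the same two ingredients as the paper's proof: uniqueness of morphisms into $I$ to factor each $f_{k+1}$ through $\coUnit_{M_k} \tensor \im$, and dinaturality to slide that discard across the bar (plus the removal of the trailing $r_n$). The paper organises this as a forward induction on the truncation length with the invariant $\bra{f_0|\cdots|f_n \dcomp \coUnit_{M_n}} = \bra{\coUnit_{X_0}|\cdots|\coUnit_{X_n}}$, whereas you sweep the discards backwards through a fixed truncation; unrolled, the two are the same sequence of rewrites.
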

\begin{proof}
  Let \(\coUnit_{X}\) indicate the unique morphism to the terminal object in the base category \(\cat{C}\).
  For a stream of objects \(\stream{X}\), the discard sequence is \(\extseq{\coUnit_{X_{n}}}\).
  Consider an observational sequence \(\extseq{f_{n}} \colon \stream{X} \to \stream{I}\) with no output.
  We show by induction that \(\bra{f_{0} | \cdots | f_{n} \dcomp \coUnit_{M_{n}}} = \bra{\coUnit_{X_{0}} | \cdots | \coUnit_{X_{n}}}\) without sliding on the last memory \(M_{n}\).
  For \(n=0\), \(f_{0} \dcomp \coUnit_{M_{0}} = \coUnit_{X_{0}}\) by semicartesianity in \(\cat{C}\), then \(\bra{f_{0} \dcomp \coUnit_{M_{0}}} = \bra{\coUnit_{X_{0}}}\).
  Suppose that \(\bra{f_{0} | \cdots | f_{n} \dcomp \coUnit_{M_{n}}} = \bra{\coUnit_{X_{0}} | \cdots | \coUnit_{X_{n}}}\) without sliding on the last memory \(M_{n}\), then, 
  \begin{align*}
    & \bra{f_{0} | \cdots | f_{n} | f_{n+1} \dcomp \coUnit_{M_{n+1}} }  &&\\
    =\ & \bra{f_{0} | \cdots | f_{n} | \coUnit_{M_{n}} \tensor \coUnit_{X_{n+1}}} && \text{(by semicartesianity)}\\
    =\ & \bra{f_{0} | \cdots | f_{n} \dcomp \coUnit_{M_{n}} | \coUnit_{X_{n+1}}} && \text{(by dinaturality)}\\
    =\ & \bra{\coUnit_{X_{0}} | \cdots | \coUnit_{X_{n}} | \coUnit_{X_{n+1}}} && \text{(by induction).}
  \end{align*}
  Thus, we can rewrite the truncations of \(\extseq{f_{n}}\), by dinaturality, as follows
  \[
     \bra{f_{0} | \cdots | f_{n}}
     = \bra{f_{0} | \cdots | f_{n} \dcomp \coUnit_{M_{n}}} 
     = \bra{\coUnit_{X_{0}} | \cdots | \coUnit_{X_{n}}}. 
  \]
  This proves that \(\extseq{f_{n}} = \extseq{\coUnit_{X_{n}}}\) as observational sequences.
\end{proof}
\subsection{Productive categories}
\label{section:productive}

The interaction between dinatural and observational equivalence is of particular interest in some well-behaved categories that we call \emph{productive categories}.
In productive categories, observational sequences are the final fixpoint of an equation (\Cref{theorem:observationalfinalcoalgebra}), analogous to that of \Cref{sec:int-stateful-sequences}.

An important property of programs is \emph{termination}: a terminating program always halts in a finite amount of time.
However, some programs (such as servers, drivers) are not actually intended to terminate but to produce infinite output streams.
A more appropriate notion in these cases is that of \emph{productivity}:
a program that outputs an infinite stream of data is productive if each individual component of the stream is produced in finite time.
To quip, \emph{``a productive stream is a terminating first component followed by a productive stream''}.

The first component of our streams is only defined \emph{up to some future}.
It is an equivalence class $\alpha \in \Stage{1}(\stream{X},\stream{Y})$,
with representatives $\alpha_{i} \colon X_{0} \to M_{i} \tensor Y_{0}$ that do not necessarily share the same memory.
But, if it does terminate, there is a process $\alpha_{0} \colon X_{0} \to M_{0} \tensor Y_{0}$ in our theory representing the process just until $Y_{0}$ is output.

\begin{defi}[Terminating component]
  A single stage (0-stage) process $\alpha \in \Stage{0}(\stream{X},\stream{Y})$
  is \emph{terminating relative to $\catC$} if there exist $M_0$ and $\alpha_{0}
  \colon X_{0} \to M_{0} \tensor Y_{0}$ such that each one of its
  representatives, $\bra{\alpha'} = \alpha$, can be written as $\alpha' =
  \alpha_{0} \comp (s' \tensor \im)$ for some $s' \colon M_{0} \to
  M^{(i)}$.

The morphisms (e.g.~$s'$ or $s''$) represent what is unique to each
representative (e.g.~$\alpha'$, or $\alpha''$, respectively), and so we ask
that, for any $u\colon M_{0} \tensor A \to U \tensor B$ and $v \colon M_{0}
\tensor A \to V \tensor B$, the equality $\bra{\alpha' \tensor \im_{A} \comp u
\tensor \im_{Y_{0}}} = \bra{\alpha'' \tensor \im_{A} \comp v \tensor
\im_{Y_{0}}}$ implies $\bra{s' \tensor \im_{A} \comp u} = \bra{s'' \tensor
\im_{A} \comp v}$.
\end{defi}

\begin{defi}[Productive category]
  \defining{linkproductive}{}\label{def:productive}
  A symmetric monoidal category $(\catC,\tensor,\monoidalunit)$ is \emph{productive} when
every 0-stage process is terminating relative to $\catC$.
\end{defi}

\begin{rem}
  Cartesian monoidal categories are \productive{} (\Cref{prop:cartesianproductive}).
  \emph{Markov categories} \cite{fritz2020} with \emph{conditionals} and \emph{ranges} are \productive{} (\Cref{appendix:productivemarkov}).
  Free symmetric monoidal categories and compact closed categories are always productive.
\end{rem}

\subsection{Observational streams in productive categories}
  We conclude this section proving the following \Cref{theorem:observationalfinalcoalgebra}: whenever the category is \productive{}, \observationalSequences{} are explicit construction of \observationalStreams{}. The proof shows that, from the mild assumption of producitvity, we can extract all of the relevant mathematical structure to make the fixpoint equation work as expected.

  \begin{defi}[Truncating coherently]\label{def:truncatingcoherently}
    \defining{linktruncatingcoherently}{}
    Let $f_{n}^{k} \colon  M_{n-1} \tensor X_{n}  \to M_{n} \tensor Y_{n}$ be a family of families of morphisms of increasing length, indexed by $k \in \naturals$ and $n \leq k$.
    We say that this family \emph{truncates coherently} if
    $\bra{f_{0}^{p}|\dots|f_{n}^{p}} = \bra{f_{0}^{q}|\dots|f_{n}^{q}}$
    for each $p, q \in \naturals$ and each $n \leq \min\{p,q\}$.
  \end{defi}

  \begin{lem}[Factoring a family of processes]\label{lemma:familyfactoring}
    In a \productive{} category, let
    $(\bra{f_{0}^{k}|\dots|f_{k}^{k}})_{k \in \naturals}$ be a sequence
    of sequences that \truncatesCoherently{}. Then, there exists another sequence
    $h_{i} \colon N_{i-1} \tensor X_{i}  \to N_{i} \tensor Y_{i}$, together with morphisms $s^{k}_{i} \colon N_i \to M_i$ satisfying $s^{k}_{i-1}f^{k}_{i} = h_{i}s^{k}_{i}$ and such that, for each $k \in \naturals$ and each $n \leq k$,
    $\bra{f_{0}^{k}|\dots|f_{n}^{k}} = \bra{h_{0}|\dots|h_{n}}$.
    Moreover, this family $h_{i}$ is such that
    $\bra{h_{0}\dots h_{n}s^{p}_{n}u} = \bra{h_{0}\dots h_{n}s^{q}_{n}v}$
    implies $\bra{s^{p}_{n}u} = \bra{s^{q}_{n}v}$.
  \end{lem}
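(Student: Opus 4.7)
The plan is to carry out a simultaneous induction on $n$ that builds $h_0, \dots, h_n$ together with the factoring morphisms $s^k_i \colon N_i \to M_i$, and that establishes the cancellation clause of the moreover statement in parallel. The simultaneity will be essential: the cancellation clause at stage $n$ is precisely what certifies that the relevant family of morphisms at stage $n{+}1$ defines a well-posed $1$-stage process, which in turn is what will licence an application of productivity at that stage.

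For the base case I would invoke coherent truncation to identify all the $1$-stage truncations $\bra{f_0^k}$ with a single element $\alpha$ of $\Stage{1}(\stream{X},\stream{Y})$. Productivity applied to $\alpha$ then yields $N_0$, a morphism $h_0 \colon X_0 \to N_0 \tensor Y_0$, and morphisms $s^k_0 \colon N_0 \to M_0$ with $f^k_0 = h_0 \comp (s^k_0 \tensor \im)$, and the cancellation clause from the definition of terminating component supplies the base case of the moreover claim; the commutation at $i=0$ is trivial under the convention $N_{-1}=M_{-1}=I$.

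For the inductive step I would set $\xi^k \defn (s^k_n \tensor \im_{X_{n+1}}) \comp f^k_{n+1}$ and view the family $\{\xi^k\}_k$ as a putative representative of a single $1$-stage process $N_n \tensor X_{n+1} \to Y_{n+1}$. Sliding the $s^k_i$'s across bars via the inductive commutations rewrites the full truncations as $\bra{f^k_0|\dots|f^k_n|f^k_{n+1}} = \bra{h_0|\dots|h_n|\xi^k}$, so coherent truncation forces $\bra{h_0|\dots|h_n|\xi^p} = \bra{h_0|\dots|h_n|\xi^q}$ for all $p, q$. Applying the inductive cancellation clause to strip the first $n{+}1$ bars would leave $\bra{\xi^p} = \bra{\xi^q}$, confirming that $\{\xi^k\}_k$ is well-posed. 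Productivity then produces $h_{n+1}$, $N_{n+1}$ and $s^k_{n+1}$ with $\xi^k = h_{n+1} \comp (s^k_{n+1}\tensor\im)$---which is exactly the commutation at $i=n{+}1$---together with a fresh cancellation clause. I would then chain this new cancellation with the inductive one through the new commutation to extend the moreover statement to length $n{+}1$.

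The main obstacle will be the step in which the inductive cancellation clause is used to strip the first $n{+}1$ bars from an equality of $(n{+}2)$-tuples. Coherent truncation on its own only yields equality of full tuples, and extracting $\bra{\xi^p} = \bra{\xi^q}$ requires the strong cancellation clause carried by the induction. Arranging the induction so that the moreover clause travels alongside (rather than after) the construction of the $h_i$'s is what will make each successive application of productivity legitimate.
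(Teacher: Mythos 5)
Your proposal is correct and follows essentially the same route as the paper's proof: a simultaneous induction that builds $h_{i}$ and $s^{k}_{i}$ while carrying the cancellation clause along, using coherent truncation plus the inductive cancellation to show that $\bra{s^{p}_{n}f^{p}_{n+1}} = \bra{s^{q}_{n}f^{q}_{n+1}}$ before each fresh application of productivity. The only point the paper makes explicit that you leave implicit is that one first composes the bar-separated tuples into a single $1$-stage process (``composition along a bar is dinatural'') so that the cancellation clause, whose hypothesis concerns composed morphisms, actually applies — but you correctly identify this as the delicate step.
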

  \begin{proof}
    We construct the family by induction. In the case $n=0$, we use that
    the family \truncatesCoherently{} to have that $\bra{f^{p}_{0}} = \bra{f^{q}_{0}}$
    and thus, by \productivity{}, create an $h_{0}$ with $f^{k}_{0} = h_{0}s^{k}_{0}$
    such that $\bra{h_{0}s^{p}_{0}u} = \bra{h_{0}s^{q}_{0}v}$ implies $\bra{s^{p}_{0}u} = \bra{s^{q}_{0}v}$.
  
    In the general case, assume we already have constructed $h_{0},\dots,h_{n-1}$
    with $s^{k}_{i-1}f^{k}_{i} = h_{i}s^{k}_{i}$ such that, for each $k \in \naturals$ and
    $\bra{f_{0}^{k}|\dots|f_{n-1}^{k}} = \bra{h_{0}|\dots|h_{n-1}}$. Moreover,
    $\bra{h_{0}\dots h_{n-1}s^{p}_{n-1}u} = \bra{h_{0}\dots h_{n-1}s^{q}_{n-1}v}$
    implies $\bra{s^{p}_{n-1}u} = \bra{s^{q}_{n-1}v}$.
  
    In this case, we use the fact that composition ``along a bar'' is dinatural:
    $\bra{f_{0}^p|\dots|f_{n}^{p}} = \bra{f_{0}^{q}|\dots|f_{n}^{q}}$ implies that
    $\bra{\tid{f_{0}^p\dots f_{n}^{p}}} = \bra{\tid{f_{0}^q\dots f_{n}^{q}}}$. This can be then
    rewritten as
    $$\bra{h_{0}\dots h_{n-1}s^{p}_{n-1}f_{n}^{p}} = \bra{h_{0}\dots h_{n-1}s^{q}_{n-1}f_{n}^{q}},$$
    which in turn implies
    $\bra{s^{p}_{n-1}f_{n}^{p}} = \bra{s^{q}_{n-1}f_{n}^{q}}$. By \productivity{},
    there exists $h_{n}$ with $s_{n-1}^{k}f^{k}_{n} = h_{n}s^{k}_{n}$ such that
    $\bra{f_{0}^{k}|\dots|f_{n}^{k}} = \bra{h_{0}|\dots|h_{n}}$.
  
    Finally, assume that
    $\bra{h_{0}\dots h_{n-1}h_{n}s_{n}^{p}u} = \bra{h_{0} \dots  h_{n-1}h_{n}s_{n}^{q}v}$. Thus, we
    have $$\bra{h_{0}\dots  h_{n-1}s_{n-1}^{p}f_{n}^{p}u} = \bra{h_{0} \dots  h_{n-1}s_{n-1}^{q}f_{n}^{q}v}$$
    and $\bra{s_{n-1}^{p}f_{n}^{p}u} = \bra{s_{n-1}^{q}f_{n}^{q}v}$. This can be rewritten as
    $\bra{h_{n}s_{n}^{p}u} = \bra{h_{n}s_{n}^{q}v}$, which in turn implies $\bra{s_{n}^{p}u} = \bra{s_{n}^{q}v}$.
    We have shown that the $h_{n}$ that we constructed satisfies the desired property.
  \end{proof}

  \begin{lem}\label{lemma:observationalstatefulisterminalsequence}
    In a \productive{} category, the set of observational sequences is isomorphic to the limit of the terminal
    sequence of the endofunctor $(\hom{} \odot\, \bullet)$ via the canonical
    map between them.
  \end{lem}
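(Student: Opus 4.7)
The plan is to make the limit of the terminal sequence concrete, and then to verify that the canonical map out of $\oSeq(\stream{X},\stream{Y})$ is both injective and surjective. Writing $F$ for the endofunctor in question, I would first unfold each iterate $F^n 1(\stream{X},\stream{Y})$ using the Fubini rule for coends (\Cref{prop:fubinirule}). Because each application of $F$ adds one coend over a fresh memory object, $F^n 1(\stream{X},\stream{Y})$ becomes a coend over a tuple of memories,
\[ F^n 1(\stream{X},\stream{Y}) \cong \coend{M_0,\ldots,M_{n-1}} \prod_{i=0}^{n-1} \hom{}(M_{i-1} \tensor X_i, M_i \tensor Y_i), \]
with $M_{-1} = I$. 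This is, up to reindexing, the set of truncations in the sense of \Cref{def:n-stages-process}. Under this identification, the connecting map $F^{n+1} 1 \to F^n 1$ induced by the terminal map $F^n 1 \to 1$ corresponds to dropping the last factor, i.e.\ to the truncation projection.

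The canonical map $\oSeq(\stream{X},\stream{Y}) \to \lim_n F^n 1(\stream{X},\stream{Y})$ then sends a class $[\extseq{f}]$ to the coherent family of its truncations $(\bra{f_0|\ldots|f_n})_{n \in \naturals}$. It is well-defined because sliding a morphism past a bar does not alter any truncation class (\Cref{remark:stagenotation}), and injective because, by \Cref{def:observationallyequal}, two dinatural sequences are observationally equivalent precisely when they induce the same element of this limit.

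Surjectivity is where productivity intervenes. An element of the limit is a family $(\bra{f_0^k|\ldots|f_k^k})_{k \in \naturals}$ compatible with the projections; this compatibility is exactly the ``truncates coherently'' hypothesis of \Cref{def:truncatingcoherently}. Invoking \Cref{lemma:familyfactoring}, productivity supplies a single dinatural sequence $(h_i)_{i \in \naturals}$ together with witnesses $s_i^k$ such that $\bra{h_0|\ldots|h_n} = \bra{f_0^k|\ldots|f_n^k}$ for all $n \leq k$. The observational class $[\extseq{h}]$ is then a preimage of the given element, so the canonical map is surjective.

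The main obstacle is essentially already discharged by \Cref{lemma:familyfactoring}; what remains is mostly bookkeeping. The one subtlety worth a careful argument is that the elements of the limit come as coherent families of representatives that may live in different memory tuples at different stages, so one must invoke dinaturality on each $F^n 1$ before the family can be recognised as truncating coherently and fed into \Cref{lemma:familyfactoring}.
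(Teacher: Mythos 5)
Your proposal is correct and follows essentially the same route as the paper: identify the terminal-sequence limit with the limit of the truncation sets $\Stage{n}(\stream{X},\stream{Y})$, note that injectivity of the induced map is immediate from the definition of observational equivalence, and obtain surjectivity by feeding a coherently truncating family into \Cref{lemma:familyfactoring}. The only difference is that you spell out the identification of $F^{n}1$ with the stage sets via the Fubini rule, which the paper leaves implicit.
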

  \begin{proof}
    We start by noting that observational equivalence of sequences is, by
    definition, the same thing as being equal under the canonical map to the
    limit of the terminal sequence
    \[\lim_{n} \int^{M_{0},\dots,M_{n}} \prod_{i=0}^{n} 
      \hom{}(M_{i-1}\tensor X_{i},  M_{i} \tensor Y_{i}).\]
    We will show that this canonical map is surjective. This means that the domain quotiented by equality under the map is isomorphic to the codomain, q.e.d.
  
    Indeed, given any family $f_{n}^{k}$ that \truncatesCoherently{}, we can
    apply \Cref{lemma:familyfactoring} to find a sequence $h_{i}$ such
    that $\bra{f_{0}^{k}|\dots|f_{n}^{k}} = \bra{h_{0}|\dots|h_{n}}$. This means
    that it is the image of the stateful sequence $h_{i}$.
  \end{proof}

  \begin{lem}[Factoring two processes]\label{lemma:multiplefactoring}
    In a \productive{} category, let $\bra{f_{0}} = \bra{g_{0}}$. Then there
    exists $h_{0}$ with $f_{0} = h_{0}s_{0}$ and $g_{0} = h_{0}t_{0}$ such that
    \[\bra{f_{0}|\dots|f_{n}} = \bra{g_{0}|\dots|g_{n}}\]
    implies the existence of a family $h_{i}$ together with $s_{i}$ and $t_{i}$ such that
    $s_{i-1}f_{i} = h_{i}s_{i}$ and $t_{i-1}g_{i} = h_{i}t_{i}$; and moreover, such that
    \[\bra{h_{0}\dots h_{n}s_{n}u} = \bra{h_{0}\dots h_{n}t_{n}v} \mbox{ implies } \bra{s_{n}u} = \bra{t_{n}v}.\]
  \end{lem}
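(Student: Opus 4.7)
The plan is to induct on $n$, mirroring the structure of \Cref{lemma:familyfactoring} but carrying the two families $(f_i)$ and $(g_i)$ in lockstep. At each stage the strategy is to use \productivity{} on a single 1-stage equality obtained by collapsing bars with dinaturality, and from it extract a common $h_n$ together with two parallel witnesses $s_n, t_n$.

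For the base case, the hypothesis $\bra{f_0} = \bra{g_0}$ presents $f_0$ and $g_0$ as two representatives of the same 1-stage process. By \productivity{} (\Cref{def:productive}), this process is terminating, supplying $h_0$ with witnesses such that $f_0 = h_0 \comp (s_0 \tensor \im)$ and $g_0 = h_0 \comp (t_0 \tensor \im)$. The ``$u, v$'' clause of \Cref{def:productive}, instantiated with these two representatives, directly delivers the required implication $\bra{h_0 s_0 u} = \bra{h_0 t_0 v}$ $\Rightarrow$ $\bra{s_0 u} = \bra{t_0 v}$. For the inductive step, assume $h_0,\dots,h_{n-1}$ with witnesses $s_i,t_i$ have been built so that $s_{i-1}f_i = h_i s_i$, $t_{i-1}g_i = h_i t_i$, and the implication at stage $n-1$ holds. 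From $\bra{f_0|\cdots|f_n} = \bra{g_0|\cdots|g_n}$, collapse all bars by dinaturality of composition to obtain $\bra{\tid{f_0 \cdots f_n}} = \bra{\tid{g_0 \cdots g_n}}$, rewrite using the existing factorisations as $\bra{h_0 \cdots h_{n-1} s_{n-1} f_n} = \bra{h_0 \cdots h_{n-1} t_{n-1} g_n}$, and apply the inductive implication to deduce $\bra{s_{n-1}f_n} = \bra{t_{n-1}g_n}$. A second appeal to \productivity{} on this 1-stage equality produces $h_n$ with $s_{n-1}f_n = h_n s_n$ and $t_{n-1}g_n = h_n t_n$, and the same clause of \Cref{def:productive} propagates the implication to stage $n$.

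The main obstacle is purely bookkeeping. One must check that the ``$u, v$'' clause in \Cref{def:productive} is flexible enough to compare two \emph{distinct} representatives simultaneously (it is, since it is stated for any such pair), and that the collapse-by-dinaturality step genuinely reduces each multi-bar equality to a single-bar equality at the representative level where \productivity{} can be invoked. This is the same mechanism used for a single family in \Cref{lemma:familyfactoring}, now run twice in parallel; no essentially new idea is required, and the result is what \Cref{lemma:observationalstatefulisterminalsequence} will later use to control pairs of sequences mapping to the same point of the terminal sequence.
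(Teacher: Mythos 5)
Your proof is correct and follows essentially the same route as the paper's: induction on $n$, collapsing the bars by dinaturality to reduce the multi-stage equality to the 1-stage equality $\bra{s_{n-1}f_{n}} = \bra{t_{n-1}g_{n}}$, and invoking \productivity{} once at the base case and once per inductive step to extract the common $h_{n}$ with parallel witnesses $s_{n}, t_{n}$. The only point you compress is the final propagation: the chained implication at stage $n$ is obtained by composing the inductive implication (which strips $h_{0}\dots h_{n-1}$ and reduces $\bra{h_{0}\dots h_{n}s_{n}u} = \bra{h_{0}\dots h_{n}t_{n}v}$ to $\bra{h_{n}s_{n}u} = \bra{h_{n}t_{n}v}$) with the local clause supplied by the fresh application of \productivity{} to $h_{n}$, exactly as the paper spells out in its closing paragraph.
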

  \begin{proof}
    By \productivity{}, we can find such a factorization $f_{0} = h_{0}s_{0}$ and $g_{0} = h_{0}t_{0}$.
  
    Assume now that we have a family of morphisms such that $\bra{f_{0}|\dots|f_{n}} = \bra{g_{0}|\dots|g_{n}}$.
    We proceed by induction on $n$, the size of the family. The case $n = 0$ follows from
    the definition of \productive{} category.
  
    In the general case, we will construct the relevant $h_{n}$. The assumption $\bra{f_{0}|\dots|f_{n}} = \bra{g_{0}|\dots|g_{n}}$
    implies, in particular, that
    $\bra{f_{0}|\dots|f_{n-1}} = \bra{g_{0}|\dots|g_{n-1}}$.
    Thus, by induction
    hypothesis, there exist $h_{1},\dots,h_{n-1}$ together with
    $s_{i-1}f_{i} = h_{i}s_{i}$ and $t_{i-1}g_{i} = h_{i}t_{i}$, such that
    \[\bra{h_{0}\dots h_{n}s_{n-1}u} = \bra{h_{0}\dots h_{n}t_{n-1}v} \mbox{ implies } \bra{s_{n-1}u} = \bra{t_{n-1}v}.\]
    We know that $\bra{f_{0}\dots f_{n}} = \bra{g_{0} \dots g_{n}}$ and thus,
    \[\bra{h_{0}\dots h_{n-1}s_{n-1}f_{n}} = \bra{h_{0} \dots h_{n-1}t_{n-1}g_{n}},\]
    which, by induction hypothesis, implies $\bra{s_{n-1}f_{n}} = \bra{t_{n-1}g_{n}}$.
    By \productivity{}, there exists $h_{n}$ with $s_{n-1}f_{n} = h_{n}s_{n}$ and
    $t_{n-1}g_{n} = h_{n}t_{n}$ such that
    $\bra{h_{n}s_{n}u} = \bra{h_{n}t_{n}v} \mbox{ implies } \bra{s_{n}u} = \bra{t_{n}v}$.
  
    Finally, assume that $\bra{h_{0}\dots h_{n-1}h_{n}s_{n}u} = \bra{h_{0} \dots  h_{n-1}h_{n}t_{n}v}$. Thus, we have $$\bra{h_{0}\dots  h_{n-1}s_{n-1}f_{n}u} = \bra{h_{0} \dots  h_{n-1}t_{n-1}g_{n}v}$$ and $\bra{s_{n-1}f_{n}u} = \bra{t_{n-1}g_{n}v}$. This can be rewritten as $\bra{h_{n}s_{n}u} = \bra{h_{n}t_{n}v}$, which in turn implies $\bra{s_{n}u} = \bra{t_{n}v}$.
    We have shown that the $h_{n}$ that we constructed satisfies the desired property.
  \end{proof}

  \begin{lem}[Removing the first step]\label{lemma:removestep}
    In a \productive{} category, let $\bra{f_{0}} = \bra{g_{0}}$. Then there
    exists $h$ with $f_{0} = hs$ and $g_{0} = ht$ such that
    $\bra{f_{0}|\dots|f_{n}} = \bra{g_{0}|\dots|g_{n}}$ implies
    $\bra{sf_{1}|\dots|f_{n}} = \bra{tg_{1}|\dots|g_{n}}$.
  \end{lem}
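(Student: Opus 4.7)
The plan is to bootstrap directly from Lemma~\ref{lemma:multiplefactoring}. First, I would apply productivity to the hypothesis $\bra{f_0} = \bra{g_0}$ to produce a common factor $h$ together with morphisms $s, t$ satisfying $f_0 = hs$ and $g_0 = ht$; this is exactly the base case already carried out at the start of the proof of Lemma~\ref{lemma:multiplefactoring}, so the choice of $h$, $s$, $t$ depends only on $f_0$ and $g_0$, as the statement requires. Then, assuming additionally that $\bra{f_0|\dots|f_n} = \bra{g_0|\dots|g_n}$, I would invoke Lemma~\ref{lemma:multiplefactoring} to extend this base factorization into families $h_1, \dots, h_n$, $s_1, \dots, s_n$, and $t_1, \dots, t_n$ satisfying $s_{i-1} f_i = h_i s_i$ and $t_{i-1} g_i = h_i t_i$ for every $i \geq 1$, with $s_0 = s$ and $t_0 = t$.

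The remaining step is then a purely formal telescoping showing that both $\bra{sf_1|f_2|\dots|f_n}$ and $\bra{tg_1|g_2|\dots|g_n}$ are equal to the common sequence $\bra{h_1|h_2|\dots|h_n}$. Starting from $\bra{s_0 f_1 | f_2 | \dots | f_n}$, substitute $s_0 f_1 = h_1 s_1$ to obtain $\bra{h_1 s_1 | f_2 | \dots | f_n}$, slide $s_1$ across the first bar into $\bra{h_1 | s_1 f_2 | \dots | f_n}$, substitute $s_1 f_2 = h_2 s_2$, slide $s_2$, and iterate. After $n$ such passes the expression reaches $\bra{h_1 | h_2 | \dots | h_{n-1} | h_n s_n}$, whose trailing $s_n$ sits on the rightmost memory channel and is absorbed by dinaturality, cf.~Remark~\ref{remark:stagenotation}. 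A symmetric telescoping on the $g$-side yields $\bra{tg_1|g_2|\dots|g_n} = \bra{h_1|h_2|\dots|h_n}$, and transitivity closes the argument.

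The argument is essentially mechanical once Lemma~\ref{lemma:multiplefactoring} is in hand; the only piece that needs a little care is checking that every intermediate sliding step is actually legal, i.e.\ that each $s_i$ or $t_i$ being slid across a bar lives on a memory object that is quantified by the coend at the stage in question. This is exactly what the universal factorization clause inside the definition of productivity guarantees, and it has already been exercised inside the proof of Lemma~\ref{lemma:multiplefactoring}, so no new ingredient is required here. The present lemma simply repackages that recursive factorization into the ``remove the leading bar'' form that will be convenient for later exhibiting observational sequences as the final fixpoint of the equation in Figure~\ref{eq:observationalstreamshort}.
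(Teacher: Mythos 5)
Your proposal is correct and takes essentially the same route as the paper's proof: both obtain the base factorization $f_{0} = hs$, $g_{0} = ht$ and the families $h_{i}$, $s_{i}$, $t_{i}$ with $s_{i-1}f_{i} = h_{i}s_{i}$ and $t_{i-1}g_{i} = h_{i}t_{i}$ from \Cref{lemma:multiplefactoring}, and then identify $\bra{sf_{1}|\dots|f_{n}}$ and $\bra{tg_{1}|\dots|g_{n}}$ with $\bra{h_{1}|\dots|h_{n}}$ by sliding the $s_{i}$, $t_{i}$ across the bars and absorbing the trailing one by dinaturality. The only cosmetic difference is that you run the telescoping forward from $\bra{sf_{1}|\dots|f_{n}}$, whereas the paper runs the same chain of rewrites backward starting from $\bra{h_{1}|\dots|h_{n}s_{n}} = \bra{h_{1}|\dots|h_{n}t_{n}}$.
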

  \begin{proof}
    By \Cref{lemma:multiplefactoring}, we obtain a factorization $f_{0} = hs$ and
    $g_{0} = ht$. Moreover, each time that we have
    $\bra{f_{0}|\dots|f_{n}} = \bra{g_{0}|\dots|g_{n}}$, we can obtain a family
    $h_{i}$ together with $s_{i}$ and $t_{i}$ such that
    $s_{i-1}f_{i} = h_{i}s_{i}$ and $t_{i-1}f_{i} = h_{i}t_{i}$. Using the fact
    that $\bra{h_{n}s_{n}} = \bra{h_{n}t_{n}}$, we have that
    $\bra{h_{1}|\dots|h_{n}s_{n}} = \bra{h_{1}|\dots|h_{n}t_{n}}$, which can be
    rewritten using dinaturality as
    $\bra{s_{0}f_{1}|\dots|f_{n}} = \bra{t_{0}g_{1}|\dots|g_{n}}$.
  \end{proof}
  
  \begin{lem}\label{lemma:coalgebraexists}
    In a \productive{} category, the final coalgebra of the endofunctor $(\hom{} \odot\, \bullet)$ does exist and it is given by the limit of the terminal sequence
    \[L \coloneqq \lim_{n} \int^{M_{0},\dots,M_{n}} 
      \prod_{i=0}^{n} \hom{}(M_{i-1} \tensor X_{i}, M_{i} \tensor Y_{i}).\]
  \end{lem}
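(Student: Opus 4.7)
The plan is to invoke Adamek's theorem (\Cref{th:adamek}): to show that $L$ is the final coalgebra of the endofunctor $F \defn (\hom{} \odot\, \bullet)$, it suffices to verify that $F$ preserves the limit $L$, i.e., that the canonical morphism $FL \to L$ is an isomorphism. By \Cref{lemma:observationalstatefulisterminalsequence}, $L$ is naturally isomorphic to $\oSeq(\stream{X}, \stream{Y})$, so I will carry out the verification at the concrete level of observational sequences.

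Under this identification the canonical map $FL \to L$ becomes a concatenation map
\[\Phi \colon \coend{M \in \catC} \hom{}(X_0, M \tensor Y_0) \times \oSeq(\act{M}{\tail{\stream{X}}}, \tail{\stream{Y}}) \longrightarrow \oSeq(\stream{X}, \stream{Y}),\]
sending a representative $(M, f_0, [g_1, g_2, \dots])$ to $[f_0, g_1, g_2, \dots]$. Well-definedness, both under dinaturality in $M$ and under observational equivalence of the tail, is an immediate consequence of sliding in the bra-ket notation. Surjectivity of $\Phi$ is then clear: any observational sequence $[f_0, f_1, f_2, \dots]$ is the image of $(M_0, f_0, [f_1, f_2, \dots])$, obtained by breaking off the first stage.

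The main obstacle will be injectivity of $\Phi$, and this is precisely where \productivity{} plays its role. Suppose $\Phi(M, f_0, [g_n]) = \Phi(N, f'_0, [g'_n])$; then the sequences $[f_0, g_1, g_2, \dots]$ and $[f'_0, g'_1, g'_2, \dots]$ agree on every truncation, so in particular $\bra{f_0} = \bra{f'_0}$. I will apply \Cref{lemma:removestep} to obtain a common factorization $f_0 = h \comp (s \tensor \im)$ and $f'_0 = h \comp (t \tensor \im)$ through some $h \colon X_0 \to K \tensor Y_0$, together with the residual clause that the tails $[sg_1, g_2, \dots]$ and $[tg'_1, g'_2, \dots]$ remain observationally equivalent. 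Dinaturality in the coend over $M$ then slides $s$ and $t$ past the first bar,
\[(M, f_0, [g_n]) \sim (K, h, [sg_1, g_2, \dots]) = (K, h, [tg'_1, g'_2, \dots]) \sim (N, f'_0, [g'_n]),\]
which closes the argument. The hard part is thus entirely concentrated in the residual clause of \Cref{lemma:removestep}: only productivity lets us simultaneously realign two memory presentations of the same observational head and push the resulting correction down into the tails while preserving observational equality.
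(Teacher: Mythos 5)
Your proposal is correct and follows essentially the same route as the paper: both apply Adamek's theorem and reduce to showing that the canonical map $(\hom{} \odot L) \to L$ is a bijection, with injectivity handled by exactly the same appeal to \Cref{lemma:removestep} followed by sliding in the coend. The only cosmetic difference is that you transport everything along the isomorphism $L \cong \oSeq$ from \Cref{lemma:observationalstatefulisterminalsequence} before checking bijectivity, which makes surjectivity immediate, whereas the paper invokes \Cref{lemma:familyfactoring} directly at that step --- the same content, since that lemma is also what underlies \Cref{lemma:observationalstatefulisterminalsequence}.
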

  \begin{proof}
    We will apply \Cref{th:adamek}.
    The endofunctor $(\hom{} \odot\, \bullet)$ acts on the category $[(\NcatC)^{op} \times \NcatC, \Set]$, which, being a presheaf category, has all small limits. We will show that there is an isomorphism $\hom{} \odot\ L \cong L$ given by the canonical morphism between them.
  
    First, note that the set $L(\stream{X};\stream{Y})$ is, explicitly,
    \[\lim_{n} \int^{M_{1},\dots,M_{n}} \prod_{i=1}^{n} \hom{}(M_{i-1} \tensor X_{i}, M_{i}  \tensor Y_{i}).\]
    A generic element from this set is a \emph{sequence of sequences of increasing length}.
    Moreover, the sequences must \emph{truncate coherently} (\Cref{def:truncatingcoherently}).
  
    Secondly, note that the set
    $(\hom{} \odot\ L)(\stream{X};\stream{Y})$ is, explicitly,
    \begin{equation*}
      \int^{M_{0}}\left( \hom{}(X_{0},  M_{0} \tensor Y_{0}) \times \lim_{n} \int^{M_{1},\dots,M_{n}} \prod_{i=1}^{n} \hom{}(M_{i-1} \tensor X_{i}, M_{i} \tensor Y_{i})\right).
    \end{equation*}
    A generic element from this set is of the form
    \[\bra{f|(\bra{f_{1}^{k}|\dots|f_{k}^{k}})_{k \in \naturals}},\]
    that is, a pair consisting on a first morphism
    $f \colon X_{0} \to Y_{0} \tensor M_{0}$ and a family of sequences
    $(\bra{f_{1}^{k}|\dots|f_{k}^{k}})$, quotiented by dinaturality of $M_{0}$ and
    truncating coherently. The canonical map to $L(\stream{X};\stream{Y})$ maps
    this generic element to the family of sequences
    $(\bra{f_{0}|f_{1}^{k}|\dots|f_{k}^{k}})_{k \in \naturals}$, which truncates
    coherently because the previous family did and we are precomposing with $f_{0}$,
    which is dinatural.
  
    Thirdly, this map is injective. Imagine a pair of elements
    $\bra{f_{0}|(\bra{f_{1}^{k}|\dots|f_{k}^{k}})_{k \in \naturals}}$ and
    $\bra{g_{0}|(\bra{g_{1}^{k}|\dots|g_{k}^{k}})_{k \in \naturals}}$ that have the
    same image, meaning that, for each $k \in \naturals$,
    \[\bra{f_{0}|f_{1}^{k}|\dots|f_{k}^{k}} = \bra{g_{0}|g_{1}^{k}|\dots|g_{k}^{k}}.\]
    By \Cref{lemma:removestep}, we can find $h$ with $f_{0} = hs$ and $g_{0} = ht$
    such that, for each $k \in \naturals$,
    $\bra{sf_{1}^{k}|\dots|f_{k}^{k}} = \bra{tg_{1}^{k}|\dots|g_{k}^{k}}$. Thus,
    \[\begin{aligned}
      \bra{f_{0}|(\bra{f_{1}^{k}|\dots|f_{k}^{k}})_{k \in \naturals}} =
      \bra{h|(\bra{sf_{1}^{k}|\dots|f_{k}^{k}})_{k \in \naturals}} = \\
      \bra{h|(\bra{tg_{1}^{k}|\dots|g_{k}^{k}})_{k \in \naturals}} =
      \bra{g_{0}|(\bra{tg_{1}^{k}|\dots|g_{k}^{k}})_{k \in \naturals}}.
    \end{aligned}\]
  
  Finally, this map is also surjective. From \Cref{lemma:familyfactoring}, it follows
  that any family that \truncatesCoherently{} can be equivalently written as $\bra{h_{0}|\dots|h_{n}}_{n \in \naturals}$,
  which is the image of the element $\bra{h_{0}|\bra{h_{1}|\dots|h_{n}}_{n \in \naturals}}$.
  \end{proof}

\begin{thm}\label{theorem:observationalfinalcoalgebra}
  \ObservationalSequences{} are the explicit construction of \observationalStreams{} when the category is \productive{}.
More precisely, in a \productive{} category, the final fixpoint of the equation in \Cref{eq:observationalstreamshort} is given by the set of \observationalSequences{}, $\oSeq$.

In a \productive{} category, the final coalgebra of the endofunctor
$(\hom{} \odot\, \bullet)$ exists and it is given by the set of stateful
sequences quotiented by observational equivalence.
\[\left(\int^{M \in [\naturals,\catC]} \prod^{\infty}_{i=0} 
  \hom{}(M_{i-1}\tensor X_{i}, M_{i} \tensor Y_{i})\right)
\bigg/\approx\]
\end{thm}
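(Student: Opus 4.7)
The plan is to assemble the theorem directly from the four preceding lemmas, invoking Adamek's theorem to identify the final coalgebra with a limit and then identifying that limit with $\oSeq$. The statement has two parts that are really the same result phrased differently: (i) $\oSeq$ realises the final fixpoint of the equation in the figure for observational streams, and (ii) $\oSeq$ is the final coalgebra of $(\hom{} \odot\, \bullet)$. By Lambek's theorem, a final coalgebra is always a fixpoint, so it suffices to prove (ii).

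First, I would invoke \Cref{lemma:coalgebraexists}, which already establishes that the final coalgebra of $(\hom{} \odot\, \bullet)$ exists in a productive category and is given explicitly by the limit
\[
L(\stream{X},\stream{Y}) \;=\; \lim_{n} \coend{M_{0},\dots,M_{n}} \prod_{i=0}^{n} \hom{}(M_{i-1}\tensor X_{i}, M_{i}\tensor Y_{i}),
\]
obtained by checking that the canonical comparison $(\hom{} \odot\, L) \to L$ is an isomorphism (using \Cref{lemma:familyfactoring} for surjectivity and \Cref{lemma:removestep} for injectivity). Second, I would apply \Cref{lemma:observationalstatefulisterminalsequence}, which shows that the canonical map $\oSeq(\stream{X},\stream{Y}) \to L(\stream{X},\stream{Y})$ is a bijection: it is injective by the very definition of observational equivalence (two dinatural sequences are observationally equal exactly when they agree on every truncation, hence when they have the same image in $L$), and it is surjective by \Cref{lemma:familyfactoring}, which lets us realise any coherently truncating family as the truncations of a single dinatural sequence.

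Composing these two identifications gives an isomorphism of coalgebras $\oSeq \cong L$, where the coalgebra structure on $\oSeq$ is the evident ``peel off the first step'' map (which is well-defined on observational equivalence classes by dinaturality and coherent truncation). Since $L$ is the final coalgebra, so is $\oSeq$. Lambek's theorem (\Cref{th:lambektheorem}) then gives that the structure map is an isomorphism, so $\oSeq$ is in fact the final fixpoint of the equation in the figure for observational streams, proving (i).

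The main conceptual obstacle has already been handled in the preparatory lemmas: namely, that productivity is exactly the hypothesis needed to extract a \emph{single} dinatural representative out of a coherently truncating family, and to show that agreeing truncations can be ``peeled'' past the first step. Once these are granted, the theorem is a clean packaging: Adamek produces $L$, the factoring lemmas identify $L$ with $\oSeq$, and Lambek turns ``final coalgebra'' into ``final fixpoint''. The only care needed at the level of the present proof is to check that the coalgebra structure transported from $L$ to $\oSeq$ is indeed the natural one, so that the bijection is a coalgebra morphism and not merely a bijection of underlying sets.
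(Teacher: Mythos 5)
Your proposal is correct and follows essentially the same route as the paper's own proof: it invokes \Cref{lemma:coalgebraexists} to obtain the final coalgebra as the limit of the terminal sequence via Adamek's theorem, and \Cref{lemma:observationalstatefulisterminalsequence} to identify that limit with $\oSeq$, with Lambek's theorem closing the gap between ``final coalgebra'' and ``final fixpoint''. The extra care you flag about transporting the coalgebra structure to $\oSeq$ is a reasonable refinement but does not change the argument.
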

\begin{proof}[Proof sketch]
  The terminal sequence for this final coalgebra is given by $\Stage{n}(\stream{X},\stream{Y})$.
  In productive categories, we have proven that the limit $\lim_{n} \Stage{n}(\stream{X},\stream{Y})$ is a fixpoint of the equation in \Cref{eq:observationalstreamshort} (\Cref{lemma:coalgebraexists}). The hypothesis of productivity is used to make a particular coend commute with a limit, so that Adamek's theorem can be applied (\Cref{th:adamek}).
  By \Cref{lemma:coalgebraexists}, we know that the final coalgebra exists and is given by the limit of the terminal sequence.
  By \Cref{lemma:observationalstatefulisterminalsequence}, we know that it is isomorphic to the set of stateful sequences quotiented by observational equivalence.
\end{proof}

 \newpage
\section{The Category of Monoidal Streams}\label{sec:monoidal-streams}

We are ready to construct $\STREAM$: the \feedbackMonoidalCategory{} of monoidal streams.
Let us recast the definitive notion of monoidal stream (\Cref{def:observationalmonoidalstream}) coinductively.

\begin{defi}%
\label{def:monoidalstream}\defining{linkmonoidalstream}
A \emph{monoidal stream} $f \in \STREAM(\stream{X},\stream{Y})$ is a triple consisting of
  \begin{itemize}
    \item $M(f) \in \obj{\catC}$, the \emph{memory},
    \item $\defining{linknow}{\ensuremath{\now}}(f) \in \hom{}(X_{0}, M(f) \tensor Y_{0})$, the \emph{first action},
    \item $\defining{linklater}{\ensuremath{\later}}(f) \in \STREAM(\act{M(f)}{\tail{\stream{X}}},\tail{\stream{Y}})$, the \emph{rest of the action},
  \end{itemize}
  quotiented by dinaturality in $M(f)$.
Explicitly, monoidal streams are quotiented by the equivalence relation $f \sim g$ generated by
\begin{itemize}
  \item the existence of $r \colon M(g) \to M(f)$,
  \item such that $\now(f) = \now(g) \comp r$,
  \item and such that $\act{r}{\later(f)} \sim \later(g)$.
\end{itemize}
Here, $\act{r}{\later(f)} \in \STREAM(\act{M(g)}{\tail{\stream{X}}},\tail{\stream{Y}})$ is obtained by precomposition of the first action of $\later(f)$ with $r$.
\end{defi}

\begin{rem}
  In terms of string diagrams (\Cref{strings:monoidal-stream}, left), a monoidal stream is defined by two parts connected by some wires and separated by some space representing the context between them. The first part represents the first action which is a morphism in the base monoidal category; the second part represents the rest of the action, which is again a stream. This is, the two parts of the diagram are in two different categories. However, we can still slide morphisms along the wires connecting both parts; this sliding encodes dinaturality (\Cref{strings:monoidal-stream}, right).
  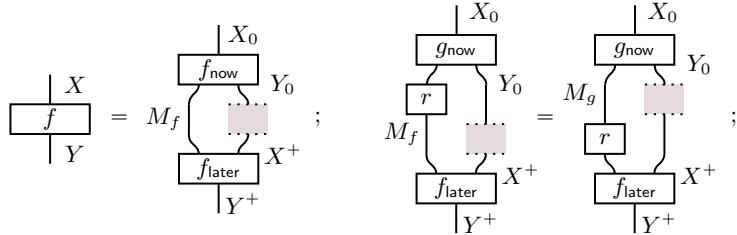
\begin{figure}[ht]

\tikzset{every picture/.style={line width=0.75pt}} %

\begin{tikzpicture}[x=0.75pt,y=0.75pt,yscale=-1,xscale=1]
\draw  [line width=0.75]  (5,55) -- (45,55) -- (45,70) -- (5,70) -- cycle ;
\draw [line width=0.75]    (25,40) -- (25,55) ;
\draw [line width=0.75]    (25,70) -- (25,85) ;
\draw  [color={rgb, 255:red, 0; green, 0; blue, 0 }  ,draw opacity=1 ][line width=0.75]  (89.82,30) -- (129.82,30) -- (129.82,45) -- (89.82,45) -- cycle ;
\draw [color={rgb, 255:red, 0; green, 0; blue, 0 }  ,draw opacity=1 ][line width=0.75]    (110,15) -- (110,30) ;
\draw [color={rgb, 255:red, 0; green, 0; blue, 0 }  ,draw opacity=1 ][line width=0.75]    (95,55) -- (95,70) ;
\draw [color={rgb, 255:red, 0; green, 0; blue, 0 }  ,draw opacity=1 ]   (100,45) .. controls (100.03,49.67) and (95.37,49.33) .. (95,55) ;
\draw [color={rgb, 255:red, 0; green, 0; blue, 0 }  ,draw opacity=1 ][line width=0.75]  [dash pattern={on 0.84pt off 2.51pt}]  (115,55) -- (135,55) ;
\draw [color={rgb, 255:red, 0; green, 0; blue, 0 }  ,draw opacity=1 ][line width=0.75]  [dash pattern={on 0.84pt off 2.51pt}]  (115,70) -- (135,70) ;
\draw  [line width=0.75]  (90,80) -- (130,80) -- (130,95) -- (90,95) -- cycle ;
\draw [line width=0.75]    (110,95) -- (110,110) ;
\draw    (125,70) .. controls (124.7,75.67) and (119.7,75) .. (120,80) ;
\draw [color={rgb, 255:red, 0; green, 0; blue, 0 }  ,draw opacity=1 ]   (120,45) .. controls (119.7,50.67) and (124.7,50) .. (125,55) ;
\draw [color={rgb, 255:red, 0; green, 0; blue, 0 }  ,draw opacity=1 ]   (95,70) .. controls (94.7,75.67) and (99.7,75) .. (100,80) ;
\draw  [draw opacity=0][fill={rgb, 255:red, 75; green, 17; blue, 33 }  ,fill opacity=0.15 ] (115,55) -- (135,55) -- (135,70) -- (115,70) -- cycle ;
\draw  [draw opacity=0][line width=0.75]  (45,50) -- (75,50) -- (75,75) -- (45,75) -- cycle ;
\draw  [draw opacity=0] (70,45) -- (95,45) -- (95,80) -- (70,80) -- cycle ;
\draw  [draw opacity=0] (110.18,10) -- (135,10) -- (135,30) -- (110.18,30) -- cycle ;
\draw  [draw opacity=0] (130,35) -- (154.82,35) -- (154.82,55) -- (130,55) -- cycle ;
\draw  [draw opacity=0] (130.18,70) -- (155,70) -- (155,90) -- (130.18,90) -- cycle ;
\draw  [draw opacity=0] (110,95) -- (134.82,95) -- (134.82,115) -- (110,115) -- cycle ;
\draw  [draw opacity=0] (25,35) -- (49.82,35) -- (49.82,55) -- (25,55) -- cycle ;
\draw  [draw opacity=0] (25.18,70) -- (50,70) -- (50,90) -- (25.18,90) -- cycle ;
\draw  [draw opacity=0] (25.18,70) -- (50,70) -- (50,90) -- (25.18,90) -- cycle ;
\draw  [draw opacity=0][line width=0.75]  (140,55) -- (170,55) -- (170,70) -- (140,70) -- cycle ;
\draw  [color={rgb, 255:red, 0; green, 0; blue, 0 }  ,draw opacity=1 ][line width=0.75]  (209.82,20) -- (249.82,20) -- (249.82,35) -- (209.82,35) -- cycle ;
\draw [color={rgb, 255:red, 0; green, 0; blue, 0 }  ,draw opacity=1 ][line width=0.75]    (230,5) -- (230,20) ;
\draw [color={rgb, 255:red, 0; green, 0; blue, 0 }  ,draw opacity=1 ][line width=0.75]    (214.82,60) -- (214.82,80) ;
\draw [color={rgb, 255:red, 0; green, 0; blue, 0 }  ,draw opacity=1 ]   (220,35) .. controls (220.03,39.67) and (215.37,39.33) .. (215,45) ;
\draw [color={rgb, 255:red, 0; green, 0; blue, 0 }  ,draw opacity=1 ][line width=0.75]  [dash pattern={on 0.84pt off 2.51pt}]  (235,65) -- (255,65) ;
\draw [color={rgb, 255:red, 0; green, 0; blue, 0 }  ,draw opacity=1 ][line width=0.75]  [dash pattern={on 0.84pt off 2.51pt}]  (235,80) -- (255,80) ;
\draw  [line width=0.75]  (209.82,90) -- (249.82,90) -- (249.82,105) -- (209.82,105) -- cycle ;
\draw [line width=0.75]    (229.82,105) -- (229.82,120) ;
\draw    (244.82,80) .. controls (244.52,85.67) and (239.52,85) .. (239.82,90) ;
\draw [color={rgb, 255:red, 0; green, 0; blue, 0 }  ,draw opacity=1 ]   (240,35) .. controls (239.7,40.67) and (244.7,40) .. (245,45) ;
\draw [color={rgb, 255:red, 0; green, 0; blue, 0 }  ,draw opacity=1 ]   (214.82,80) .. controls (214.52,85.67) and (219.52,85) .. (219.82,90) ;
\draw  [draw opacity=0][fill={rgb, 255:red, 75; green, 17; blue, 33 }  ,fill opacity=0.15 ] (234.82,65) -- (254.82,65) -- (254.82,80) -- (234.82,80) -- cycle ;
\draw  [draw opacity=0] (230,0) -- (254.82,0) -- (254.82,20) -- (230,20) -- cycle ;
\draw  [draw opacity=0] (245,35) -- (269.82,35) -- (269.82,55) -- (245,55) -- cycle ;
\draw  [draw opacity=0] (250,80) -- (274.82,80) -- (274.82,100) -- (250,100) -- cycle ;
\draw  [line width=0.75]  (204.82,45) -- (224.82,45) -- (224.82,60) -- (204.82,60) -- cycle ;
\draw [color={rgb, 255:red, 0; green, 0; blue, 0 }  ,draw opacity=1 ][line width=0.75]    (244.82,45) -- (244.82,65) ;
\draw  [color={rgb, 255:red, 0; green, 0; blue, 0 }  ,draw opacity=1 ][line width=0.75]  (299.82,20) -- (339.82,20) -- (339.82,35) -- (299.82,35) -- cycle ;
\draw [color={rgb, 255:red, 0; green, 0; blue, 0 }  ,draw opacity=1 ][line width=0.75]    (320,5) -- (320,20) ;
\draw [color={rgb, 255:red, 0; green, 0; blue, 0 }  ,draw opacity=1 ][line width=0.75]    (304.82,45) -- (304.82,65) ;
\draw [color={rgb, 255:red, 0; green, 0; blue, 0 }  ,draw opacity=1 ]   (310,35) .. controls (310.03,39.67) and (305.37,39.33) .. (305,45) ;
\draw [color={rgb, 255:red, 0; green, 0; blue, 0 }  ,draw opacity=1 ][line width=0.75]  [dash pattern={on 0.84pt off 2.51pt}]  (325,45) -- (345,45) ;
\draw [color={rgb, 255:red, 0; green, 0; blue, 0 }  ,draw opacity=1 ][line width=0.75]  [dash pattern={on 0.84pt off 2.51pt}]  (325,60) -- (345,60) ;
\draw  [line width=0.75]  (299.82,90) -- (339.82,90) -- (339.82,105) -- (299.82,105) -- cycle ;
\draw [line width=0.75]    (319.82,105) -- (319.82,120) ;
\draw    (334.82,80) .. controls (334.52,85.67) and (329.52,85) .. (329.82,90) ;
\draw [color={rgb, 255:red, 0; green, 0; blue, 0 }  ,draw opacity=1 ]   (330,35) .. controls (329.7,40.67) and (334.7,40) .. (335,45) ;
\draw [color={rgb, 255:red, 0; green, 0; blue, 0 }  ,draw opacity=1 ]   (304.82,80) .. controls (304.52,85.67) and (309.52,85) .. (309.82,90) ;
\draw  [draw opacity=0][fill={rgb, 255:red, 75; green, 17; blue, 33 }  ,fill opacity=0.15 ] (324.82,45) -- (344.82,45) -- (344.82,60) -- (324.82,60) -- cycle ;
\draw  [draw opacity=0] (340,25) -- (364.82,25) -- (364.82,45) -- (340,45) -- cycle ;
\draw  [draw opacity=0] (340,80) -- (364.82,80) -- (364.82,100) -- (340,100) -- cycle ;
\draw  [line width=0.75]  (294.82,65) -- (314.82,65) -- (314.82,80) -- (294.82,80) -- cycle ;
\draw [color={rgb, 255:red, 0; green, 0; blue, 0 }  ,draw opacity=1 ][line width=0.75]    (334.82,60) -- (334.82,80) ;
\draw  [draw opacity=0][line width=0.75]  (259.82,50) -- (289.82,50) -- (289.82,75) -- (259.82,75) -- cycle ;
\draw  [draw opacity=0][line width=0.75]  (354.82,55) -- (384.82,55) -- (384.82,70) -- (354.82,70) -- cycle ;
\draw  [draw opacity=0] (229.82,105) -- (254.64,105) -- (254.64,125) -- (229.82,125) -- cycle ;
\draw  [draw opacity=0] (319.82,0) -- (344.64,0) -- (344.64,20) -- (319.82,20) -- cycle ;
\draw  [draw opacity=0] (319.82,105) -- (344.64,105) -- (344.64,125) -- (319.82,125) -- cycle ;
\draw  [draw opacity=0] (190,60) -- (214.82,60) -- (214.82,80) -- (190,80) -- cycle ;
\draw  [draw opacity=0] (279.82,40) -- (304.64,40) -- (304.64,60) -- (279.82,60) -- cycle ;

\draw (25,62.5) node  [font=\footnotesize]  {${\textstyle f}$};
\draw (110,37.5) node  [font=\footnotesize,color={rgb, 255:red, 0; green, 0; blue, 0 }  ,opacity=1 ]  {$f_{\mathsf{now}}$};
\draw (110,87.5) node  [font=\footnotesize,color={rgb, 255:red, 0; green, 0; blue, 0 }  ,opacity=1 ]  {$f_{\mathsf{later}}$};
\draw (60,62.5) node  [font=\footnotesize]  {$=$};
\draw (82.5,62.5) node  [font=\footnotesize,color={rgb, 255:red, 0; green, 0; blue, 0 }  ,opacity=1 ]  {$M_{f}$};
\draw (122.59,20) node  [font=\footnotesize,color={rgb, 255:red, 0; green, 0; blue, 0 }  ,opacity=1 ]  {$X_{0}$};
\draw (142.41,45) node  [font=\footnotesize,color={rgb, 255:red, 0; green, 0; blue, 0 }  ,opacity=1 ]  {$Y_{0}$};
\draw (142.59,80) node  [font=\footnotesize,color={rgb, 255:red, 0; green, 0; blue, 0 }  ,opacity=1 ]  {$X^{+}$};
\draw (122.41,105) node  [font=\footnotesize,color={rgb, 255:red, 0; green, 0; blue, 0 }  ,opacity=1 ]  {$Y^{+}$};
\draw (37.59,45) node  [font=\footnotesize,color={rgb, 255:red, 0; green, 0; blue, 0 }  ,opacity=1 ]  {$X$};
\draw (37.59,80) node  [font=\footnotesize,color={rgb, 255:red, 0; green, 0; blue, 0 }  ,opacity=1 ]  {$Y$};
\draw (160,62.5) node  [font=\footnotesize]  {$;$};
\draw (229.82,27.5) node  [font=\footnotesize,color={rgb, 255:red, 0; green, 0; blue, 0 }  ,opacity=1 ]  {$g_{\mathsf{now}}$};
\draw (244.14,8.5) node  [font=\footnotesize,color={rgb, 255:red, 0; green, 0; blue, 0 }  ,opacity=1 ]  {$X_{0}$};
\draw (257.41,45) node  [font=\footnotesize,color={rgb, 255:red, 0; green, 0; blue, 0 }  ,opacity=1 ]  {$Y_{0}$};
\draw (262.41,90) node  [font=\footnotesize,color={rgb, 255:red, 0; green, 0; blue, 0 }  ,opacity=1 ]  {$X^{+}$};
\draw (242.23,115) node  [font=\footnotesize,color={rgb, 255:red, 0; green, 0; blue, 0 }  ,opacity=1 ]  {$Y^{+}$};
\draw (214.82,52.5) node  [font=\footnotesize,color={rgb, 255:red, 0; green, 0; blue, 0 }  ,opacity=1 ]  {$r$};
\draw (229.82,97.5) node  [font=\footnotesize,color={rgb, 255:red, 0; green, 0; blue, 0 }  ,opacity=1 ]  {$f_{\mathsf{later}}$};
\draw (319.82,27.5) node  [font=\footnotesize,color={rgb, 255:red, 0; green, 0; blue, 0 }  ,opacity=1 ]  {$g_{\mathsf{now}}$};
\draw (352.41,35) node  [font=\footnotesize,color={rgb, 255:red, 0; green, 0; blue, 0 }  ,opacity=1 ]  {$Y_{0}$};
\draw (352.41,90) node  [font=\footnotesize,color={rgb, 255:red, 0; green, 0; blue, 0 }  ,opacity=1 ]  {$X^{+}$};
\draw (304.82,72.5) node  [font=\footnotesize,color={rgb, 255:red, 0; green, 0; blue, 0 }  ,opacity=1 ]  {$r$};
\draw (319.82,97.5) node  [font=\footnotesize,color={rgb, 255:red, 0; green, 0; blue, 0 }  ,opacity=1 ]  {$f_{\mathsf{later}}$};
\draw (274.82,62.5) node  [font=\footnotesize]  {$=$};
\draw (369.82,62.5) node  [font=\footnotesize]  {$;$};
\draw (334.14,8.5) node  [font=\footnotesize,color={rgb, 255:red, 0; green, 0; blue, 0 }  ,opacity=1 ]  {$X_{0}$};
\draw (332.23,115) node  [font=\footnotesize,color={rgb, 255:red, 0; green, 0; blue, 0 }  ,opacity=1 ]  {$Y^{+}$};
\draw (202.41,70) node  [font=\footnotesize,color={rgb, 255:red, 0; green, 0; blue, 0 }  ,opacity=1 ]  {$M_{f}$};
\draw (292.23,50) node  [font=\footnotesize,color={rgb, 255:red, 0; green, 0; blue, 0 }  ,opacity=1 ]  {$M_{g}$};

\end{tikzpicture}

     \caption{String diagrammatic definition of a monoidal stream.}
    \label{strings:monoidal-stream}
  \end{figure}
\end{rem}

\begin{rem}
  This definition of monoidal streams is the coinductive definition of the functor
  \[\STREAM \colon \NcatC^{op} \times \NcatC \to \Set.\]
  In principle, arbitrary final coalgebras do not need to exist. Moreover, it is usually difficult to explicitly construct such coalgebras~\cite{adamek74}. However, in \productive{} categories, this coalgebra does exist and is constructed by \observationalSequences{}.

  From now on, we reason \emph{coinductively}~\cite{kozen17}, a style particularly suited for all the following definitions. Coinduction is the dual to induction. It uses coalgebra to formalize the reasoning with non-well-founded structures \cite{park1981concurrency,honsell1981modelli,turi1998foundations,aczel1989final}, and it has been employed for many decades now for reasoning with infinite data in computer science \cite{rutten00,hermida98}. In practice, we can appeal to the \emph{coinduction hypothesis} after providing the first step. When constructing, we employ the universal property of the final coalgebra to uniquely extend any coalgebra morphism; when proving, we use the fact that two elements of the final coalgebra are equal when they are indistinguishable, or \emph{bisimilar} \cite{bloom1989remark,sangiorgi2009origins}.

  Coinductive reasoning will not only occur at the level of expressions but at the level of string diagrams. 
  Each string diagram of monoidal streams can be unfold repeatedly using the notation of \Cref{strings:monoidal-stream}.
  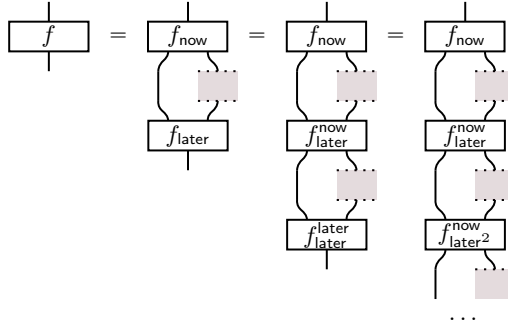
\begin{figure}[ht]

\tikzset{every picture/.style={line width=0.75pt}} %

\begin{tikzpicture}[x=0.75pt,y=0.75pt,yscale=-1,xscale=1]
\draw  [line width=0.75]  (5,15) -- (45,15) -- (45,30) -- (5,30) -- cycle ;
\draw [line width=0.75]    (25,5) -- (25,15) ;
\draw [line width=0.75]    (25,30) -- (25,40) ;
\draw  [color={rgb, 255:red, 0; green, 0; blue, 0 }  ,draw opacity=1 ][line width=0.75]  (75,15) -- (115,15) -- (115,30) -- (75,30) -- cycle ;
\draw [color={rgb, 255:red, 0; green, 0; blue, 0 }  ,draw opacity=1 ][line width=0.75]    (95.18,5) -- (95.18,15) ;
\draw [color={rgb, 255:red, 0; green, 0; blue, 0 }  ,draw opacity=1 ][line width=0.75]    (80.18,40) -- (80.18,55) ;
\draw [color={rgb, 255:red, 0; green, 0; blue, 0 }  ,draw opacity=1 ]   (85.18,30) .. controls (85.22,34.67) and (80.55,34.33) .. (80.18,40) ;
\draw [color={rgb, 255:red, 0; green, 0; blue, 0 }  ,draw opacity=1 ][line width=0.75]  [dash pattern={on 0.84pt off 2.51pt}]  (100.18,40) -- (120.18,40) ;
\draw [color={rgb, 255:red, 0; green, 0; blue, 0 }  ,draw opacity=1 ][line width=0.75]  [dash pattern={on 0.84pt off 2.51pt}]  (100.18,55) -- (120.18,55) ;
\draw  [line width=0.75]  (75.18,65) -- (115.18,65) -- (115.18,80) -- (75.18,80) -- cycle ;
\draw [line width=0.75]    (95.18,80) -- (95.18,90) ;
\draw    (110.18,55) .. controls (109.88,60.67) and (104.88,60) .. (105.18,65) ;
\draw [color={rgb, 255:red, 0; green, 0; blue, 0 }  ,draw opacity=1 ]   (105.18,30) .. controls (104.88,35.67) and (109.88,35) .. (110.18,40) ;
\draw [color={rgb, 255:red, 0; green, 0; blue, 0 }  ,draw opacity=1 ]   (80.18,55) .. controls (79.88,60.67) and (84.88,60) .. (85.18,65) ;
\draw  [draw opacity=0][fill={rgb, 255:red, 75; green, 17; blue, 33 }  ,fill opacity=0.15 ] (100.18,40) -- (120.18,40) -- (120.18,55) -- (100.18,55) -- cycle ;
\draw  [draw opacity=0][line width=0.75]  (45,10) -- (75,10) -- (75,35) -- (45,35) -- cycle ;
\draw  [draw opacity=0] (95.18,-5) -- (120,-5) -- (120,15) -- (95.18,15) -- cycle ;
\draw  [draw opacity=0] (95.36,80) -- (120.18,80) -- (120.18,100) -- (95.36,100) -- cycle ;
\draw  [draw opacity=0][line width=0.75]  (35,115) -- (65,115) -- (65,130) -- (35,130) -- cycle ;
\draw  [draw opacity=0][line width=0.75]  (115,10) -- (145,10) -- (145,35) -- (115,35) -- cycle ;
\draw  [color={rgb, 255:red, 0; green, 0; blue, 0 }  ,draw opacity=1 ][line width=0.75]  (145,15) -- (185,15) -- (185,30) -- (145,30) -- cycle ;
\draw [color={rgb, 255:red, 0; green, 0; blue, 0 }  ,draw opacity=1 ][line width=0.75]    (165.18,5) -- (165.18,15) ;
\draw [color={rgb, 255:red, 0; green, 0; blue, 0 }  ,draw opacity=1 ][line width=0.75]    (150.18,40) -- (150.18,55) ;
\draw [color={rgb, 255:red, 0; green, 0; blue, 0 }  ,draw opacity=1 ]   (155.18,30) .. controls (155.22,34.67) and (150.55,34.33) .. (150.18,40) ;
\draw [color={rgb, 255:red, 0; green, 0; blue, 0 }  ,draw opacity=1 ][line width=0.75]  [dash pattern={on 0.84pt off 2.51pt}]  (170.18,40) -- (190.18,40) ;
\draw [color={rgb, 255:red, 0; green, 0; blue, 0 }  ,draw opacity=1 ][line width=0.75]  [dash pattern={on 0.84pt off 2.51pt}]  (170.18,55) -- (190.18,55) ;
\draw  [line width=0.75]  (145.18,65) -- (185.18,65) -- (185.18,80) -- (145.18,80) -- cycle ;
\draw    (180.18,55) .. controls (179.88,60.67) and (174.88,60) .. (175.18,65) ;
\draw [color={rgb, 255:red, 0; green, 0; blue, 0 }  ,draw opacity=1 ]   (175.18,30) .. controls (174.88,35.67) and (179.88,35) .. (180.18,40) ;
\draw [color={rgb, 255:red, 0; green, 0; blue, 0 }  ,draw opacity=1 ]   (150.18,55) .. controls (149.88,60.67) and (154.88,60) .. (155.18,65) ;
\draw  [draw opacity=0][fill={rgb, 255:red, 75; green, 17; blue, 33 }  ,fill opacity=0.15 ] (170.18,40) -- (190.18,40) -- (190.18,55) -- (170.18,55) -- cycle ;
\draw [color={rgb, 255:red, 0; green, 0; blue, 0 }  ,draw opacity=1 ][line width=0.75]    (150.18,90) -- (150.18,105) ;
\draw [color={rgb, 255:red, 0; green, 0; blue, 0 }  ,draw opacity=1 ]   (155.18,80) .. controls (155.22,84.67) and (150.55,84.33) .. (150.18,90) ;
\draw [color={rgb, 255:red, 0; green, 0; blue, 0 }  ,draw opacity=1 ][line width=0.75]  [dash pattern={on 0.84pt off 2.51pt}]  (170.18,90) -- (190.18,90) ;
\draw [color={rgb, 255:red, 0; green, 0; blue, 0 }  ,draw opacity=1 ][line width=0.75]  [dash pattern={on 0.84pt off 2.51pt}]  (170.18,105) -- (190.18,105) ;
\draw    (180.18,105) .. controls (179.88,110.67) and (174.88,110) .. (175.18,115) ;
\draw [color={rgb, 255:red, 0; green, 0; blue, 0 }  ,draw opacity=1 ]   (175.18,80) .. controls (174.88,85.67) and (179.88,85) .. (180.18,90) ;
\draw [color={rgb, 255:red, 0; green, 0; blue, 0 }  ,draw opacity=1 ]   (150.18,105) .. controls (149.88,110.67) and (154.88,110) .. (155.18,115) ;
\draw  [draw opacity=0][fill={rgb, 255:red, 75; green, 17; blue, 33 }  ,fill opacity=0.15 ] (170.18,90) -- (190.18,90) -- (190.18,105) -- (170.18,105) -- cycle ;
\draw  [line width=0.75]  (145.18,115) -- (185.18,115) -- (185.18,130) -- (145.18,130) -- cycle ;
\draw [color={rgb, 255:red, 0; green, 0; blue, 0 }  ,draw opacity=1 ][line width=0.75]    (165.18,130) -- (165.18,140) ;
\draw  [draw opacity=0][line width=0.75]  (185,10) -- (215,10) -- (215,35) -- (185,35) -- cycle ;
\draw  [color={rgb, 255:red, 0; green, 0; blue, 0 }  ,draw opacity=1 ][line width=0.75]  (214.82,15) -- (254.82,15) -- (254.82,30) -- (214.82,30) -- cycle ;
\draw [color={rgb, 255:red, 0; green, 0; blue, 0 }  ,draw opacity=1 ][line width=0.75]    (235,5) -- (235,15) ;
\draw [color={rgb, 255:red, 0; green, 0; blue, 0 }  ,draw opacity=1 ][line width=0.75]    (220,40) -- (220,55) ;
\draw [color={rgb, 255:red, 0; green, 0; blue, 0 }  ,draw opacity=1 ]   (225,30) .. controls (225.03,34.67) and (220.37,34.33) .. (220,40) ;
\draw [color={rgb, 255:red, 0; green, 0; blue, 0 }  ,draw opacity=1 ][line width=0.75]  [dash pattern={on 0.84pt off 2.51pt}]  (240,40) -- (260,40) ;
\draw [color={rgb, 255:red, 0; green, 0; blue, 0 }  ,draw opacity=1 ][line width=0.75]  [dash pattern={on 0.84pt off 2.51pt}]  (240,55) -- (260,55) ;
\draw  [line width=0.75]  (215,65) -- (255,65) -- (255,80) -- (215,80) -- cycle ;
\draw    (250,55) .. controls (249.7,60.67) and (244.7,60) .. (245,65) ;
\draw [color={rgb, 255:red, 0; green, 0; blue, 0 }  ,draw opacity=1 ]   (245,30) .. controls (244.7,35.67) and (249.7,35) .. (250,40) ;
\draw [color={rgb, 255:red, 0; green, 0; blue, 0 }  ,draw opacity=1 ]   (220,55) .. controls (219.7,60.67) and (224.7,60) .. (225,65) ;
\draw  [draw opacity=0][fill={rgb, 255:red, 75; green, 17; blue, 33 }  ,fill opacity=0.15 ] (240,40) -- (260,40) -- (260,55) -- (240,55) -- cycle ;
\draw [color={rgb, 255:red, 0; green, 0; blue, 0 }  ,draw opacity=1 ][line width=0.75]    (220,90) -- (220,105) ;
\draw [color={rgb, 255:red, 0; green, 0; blue, 0 }  ,draw opacity=1 ]   (225,80) .. controls (225.03,84.67) and (220.37,84.33) .. (220,90) ;
\draw [color={rgb, 255:red, 0; green, 0; blue, 0 }  ,draw opacity=1 ][line width=0.75]  [dash pattern={on 0.84pt off 2.51pt}]  (240,90) -- (260,90) ;
\draw [color={rgb, 255:red, 0; green, 0; blue, 0 }  ,draw opacity=1 ][line width=0.75]  [dash pattern={on 0.84pt off 2.51pt}]  (240,105) -- (260,105) ;
\draw    (250,105) .. controls (249.7,110.67) and (244.7,110) .. (245,115) ;
\draw [color={rgb, 255:red, 0; green, 0; blue, 0 }  ,draw opacity=1 ]   (245,80) .. controls (244.7,85.67) and (249.7,85) .. (250,90) ;
\draw [color={rgb, 255:red, 0; green, 0; blue, 0 }  ,draw opacity=1 ]   (220,105) .. controls (219.7,110.67) and (224.7,110) .. (225,115) ;
\draw  [draw opacity=0][fill={rgb, 255:red, 75; green, 17; blue, 33 }  ,fill opacity=0.15 ] (240,90) -- (260,90) -- (260,105) -- (240,105) -- cycle ;
\draw  [line width=0.75]  (215,115) -- (255,115) -- (255,130) -- (215,130) -- cycle ;
\draw [color={rgb, 255:red, 0; green, 0; blue, 0 }  ,draw opacity=1 ][line width=0.75]    (220,140) -- (220,155) ;
\draw [color={rgb, 255:red, 0; green, 0; blue, 0 }  ,draw opacity=1 ]   (225,130) .. controls (225.03,134.67) and (220.37,134.33) .. (220,140) ;
\draw [color={rgb, 255:red, 0; green, 0; blue, 0 }  ,draw opacity=1 ][line width=0.75]  [dash pattern={on 0.84pt off 2.51pt}]  (240,140) -- (260,140) ;
\draw [color={rgb, 255:red, 0; green, 0; blue, 0 }  ,draw opacity=1 ]   (245,130) .. controls (244.7,135.67) and (249.7,135) .. (250,140) ;
\draw  [draw opacity=0][fill={rgb, 255:red, 75; green, 17; blue, 33 }  ,fill opacity=0.15 ] (240,140) -- (260,140) -- (260,155) -- (240,155) -- cycle ;

\draw (25,22.5) node  [font=\footnotesize]  {${\textstyle f}$};
\draw (95.18,22.5) node  [font=\footnotesize,color={rgb, 255:red, 0; green, 0; blue, 0 }  ,opacity=1 ]  {$f_{\mathsf{now}}$};
\draw (95.18,72.5) node  [font=\footnotesize,color={rgb, 255:red, 0; green, 0; blue, 0 }  ,opacity=1 ]  {$f_{\mathsf{later}}$};
\draw (60,22.5) node  [font=\footnotesize]  {$=$};
\draw (130,22.5) node  [font=\footnotesize]  {$=$};
\draw (165.18,22.5) node  [font=\footnotesize,color={rgb, 255:red, 0; green, 0; blue, 0 }  ,opacity=1 ]  {$f_{\mathsf{now}}$};
\draw (165.18,122.5) node  [font=\footnotesize,color={rgb, 255:red, 0; green, 0; blue, 0 }  ,opacity=1 ]  {$f_{\mathsf{later}}^{\mathsf{later}}$};
\draw (165.18,72.5) node  [font=\footnotesize,color={rgb, 255:red, 0; green, 0; blue, 0 }  ,opacity=1 ]  {$f_{\mathsf{later}}^{\mathsf{now}}$};
\draw (200,22.5) node  [font=\footnotesize]  {$=$};
\draw (235,22.5) node  [font=\footnotesize,color={rgb, 255:red, 0; green, 0; blue, 0 }  ,opacity=1 ]  {$f_{\mathsf{now}}$};
\draw (235,122.5) node  [font=\footnotesize,color={rgb, 255:red, 0; green, 0; blue, 0 }  ,opacity=1 ]  {$f_{\mathsf{later}^{2}}^{\mathsf{now}}$};
\draw (235,72.5) node  [font=\footnotesize,color={rgb, 255:red, 0; green, 0; blue, 0 }  ,opacity=1 ]  {$f_{\mathsf{later}}^{\mathsf{now}}$};
\draw (235,165) node  [font=\footnotesize,color={rgb, 255:red, 0; green, 0; blue, 0 }  ,opacity=1 ]  {$\dotsc $};

\end{tikzpicture}
     \caption{Coinductive unfolding of a string diagram.}
    \label{strings:unfold}
  \end{figure}

  Moreover, every time we introduce a new morphism, we can define it in terms of a string diagram as in \Cref{strings:monoidal-stream}: the second part of this string diagram can depend, coinductively, on the same definition we are writing. 
\end{rem}

\subsection{The Symmetric Monoidal Category of Streams}

The definitions for the operations of sequential and parallel composition are described in two steps.
We first define an operation that takes into account an extra \emph{memory channel} (see, for instance, \Cref{strings:monoidal-stream-sequential}); we use this extra generality to strengthen the \emph{coinduction hypothesis}.
We then define the desired operation as a particular case of this coinductively defined one.

\begin{defi}[Sequential composition]
  \label{def:sequentialstream}
  Given two streams $f \in \STREAM(\act{\sA}{\stream{X}},\stream{Y})$ and
  $g \in \STREAM(\act{\sB}{\stream{Y}},\stream{Z})$, we compute
  $(\Ncomp{f}{\sA}{g}{\sB}) \in\STREAM(\act{(\sA \tensor \sB)}{\stream{X}},\stream{Z})$,
  their \emph{sequential composition with memories $\sA$ and $\sB$}, as
  \begin{itemize}
    \item $M(\Ncomp{f}{\sA}{g}{\sB}) = M(f) \tensor M(g)$,
    \item $\now(\Ncomp{f}{\sA}{g}{\sB}) = \sigma \dcomp (\now(\fm) \tensor \im) \dcomp \sigma \dcomp (\now(\gm) \tensor \im)$,
    \item $\later(\Ncomp{f}{\sA}{g}{\sB}) = \Ncomp{\later(f)}{M(f)}{\later(g)}{M(g)}$.
  \end{itemize}
  We write $(f \dcomp g)$ for $(\Ncomp{f}{\monoidalunit}{g}{\monoidalunit}) \in \STREAM(\stream{X},\stream{Z})$; the \emph{sequential composition} of $f  \in \STREAM(\stream{X},\stream{Y})$ and $g \in \STREAM(\stream{Y},\stream{Z})$.
\end{defi}

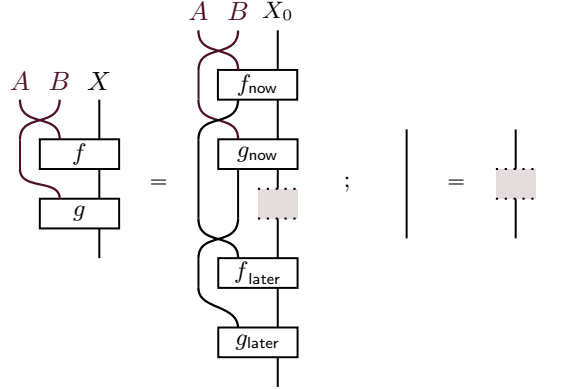
\begin{figure}[ht]

\tikzset{every picture/.style={line width=0.85pt}} %

\begin{tikzpicture}[x=0.75pt,y=0.75pt,yscale=-1,xscale=1]
\draw  [draw opacity=0][fill={rgb, 255:red, 75; green, 17; blue, 33 }  ,fill opacity=0.15 ] (170,105) -- (190,105) -- (190,120) -- (170,120) -- cycle ;
\draw  [line width=0.75]  (60,80) -- (100,80) -- (100,95) -- (60,95) -- cycle ;
\draw [line width=0.75]    (90,60) -- (90,80) ;
\draw [line width=0.75]    (90,95) -- (90,110) ;
\draw  [color={rgb, 255:red, 0; green, 0; blue, 0 }  ,draw opacity=1 ][line width=0.75]  (149.82,45) -- (189.82,45) -- (189.82,60) -- (149.82,60) -- cycle ;
\draw [color={rgb, 255:red, 0; green, 0; blue, 0 }  ,draw opacity=1 ][line width=0.75]    (180,25) -- (180,45) ;
\draw [color={rgb, 255:red, 0; green, 0; blue, 0 }  ,draw opacity=1 ][line width=0.75]  [dash pattern={on 0.84pt off 2.51pt}]  (170,120) -- (190,120) ;
\draw  [draw opacity=0][line width=0.75]  (105,95) -- (135,95) -- (135,110) -- (105,110) -- cycle ;
\draw  [line width=0.75]  (60,110) -- (100,110) -- (100,125) -- (60,125) -- cycle ;
\draw [line width=0.75]    (90,125) -- (90,140) ;
\draw [color={rgb, 255:red, 75; green, 17; blue, 33 }  ,draw opacity=1 ]   (50,60) .. controls (49.82,74.69) and (70.15,66.24) .. (70,80) ;
\draw [color={rgb, 255:red, 75; green, 17; blue, 33 }  ,draw opacity=1 ]   (70,60) .. controls (69.82,74.69) and (50.15,66.24) .. (50,80) ;
\draw [color={rgb, 255:red, 75; green, 17; blue, 33 }  ,draw opacity=1 ][line width=0.75]    (50,80) -- (50,95) ;
\draw [color={rgb, 255:red, 75; green, 17; blue, 33 }  ,draw opacity=1 ]   (50,95) .. controls (49.82,106.02) and (70.15,99.68) .. (70,110) ;
\draw [color={rgb, 255:red, 75; green, 17; blue, 33 }  ,draw opacity=1 ]   (140,25) .. controls (139.96,28.27) and (140.93,30.39) .. (142.47,31.86) .. controls (147.85,37.02) and (160.12,34.3) .. (160,44.99) ;
\draw [color={rgb, 255:red, 75; green, 17; blue, 33 }  ,draw opacity=1 ]   (160,25.09) .. controls (159.82,39.74) and (140.2,31.32) .. (140.05,45.04) ;
\draw  [color={rgb, 255:red, 0; green, 0; blue, 0 }  ,draw opacity=1 ][line width=0.75]  (150,80) -- (190,80) -- (190,95) -- (150,95) -- cycle ;
\draw [color={rgb, 255:red, 75; green, 17; blue, 33 }  ,draw opacity=1 ]   (140,60) .. controls (139.82,74.69) and (160.15,66.24) .. (160,80) ;
\draw [color={rgb, 255:red, 0; green, 0; blue, 0 }  ,draw opacity=1 ]   (160,60) .. controls (159.82,74.69) and (140.15,66.24) .. (140,80) ;
\draw [color={rgb, 255:red, 0; green, 0; blue, 0 }  ,draw opacity=1 ][line width=0.75]    (180,60) -- (180,80) ;
\draw [color={rgb, 255:red, 75; green, 17; blue, 33 }  ,draw opacity=1 ][line width=0.75]    (140.05,45.04) -- (140.05,60) ;
\draw [color={rgb, 255:red, 0; green, 0; blue, 0 }  ,draw opacity=1 ][line width=0.75]    (160,95) -- (160,120) ;
\draw  [draw opacity=0] (180.18,160) -- (205,160) -- (205,180) -- (180.18,180) -- cycle ;
\draw  [line width=0.75]  (150,140) -- (190,140) -- (190,155) -- (150,155) -- cycle ;
\draw [line width=0.75]    (180,120) -- (180,140) ;
\draw [line width=0.75]    (180,155) -- (180,175) ;
\draw  [line width=0.75]  (150,175) -- (190,175) -- (190,190) -- (150,190) -- cycle ;
\draw [line width=0.75]    (180,190) -- (180,205) ;
\draw [color={rgb, 255:red, 0; green, 0; blue, 0 }  ,draw opacity=1 ]   (140,120) .. controls (139.82,134.69) and (160.15,126.24) .. (160,140) ;
\draw [color={rgb, 255:red, 0; green, 0; blue, 0 }  ,draw opacity=1 ]   (160,120) .. controls (159.82,134.69) and (140.15,126.24) .. (140,140) ;
\draw [color={rgb, 255:red, 0; green, 0; blue, 0 }  ,draw opacity=1 ][line width=0.75]    (140,140) -- (140,155) ;
\draw [color={rgb, 255:red, 0; green, 0; blue, 0 }  ,draw opacity=1 ]   (140,155) .. controls (139.82,166.02) and (159.43,163.2) .. (160,175) ;
\draw [color={rgb, 255:red, 0; green, 0; blue, 0 }  ,draw opacity=1 ][line width=0.75]    (140,80) -- (140,120) ;
\draw [line width=0.75]    (180,95) -- (180,105) ;
\draw [color={rgb, 255:red, 0; green, 0; blue, 0 }  ,draw opacity=1 ][line width=0.75]  [dash pattern={on 0.84pt off 2.51pt}]  (170,105) -- (190,105) ;
\draw  [draw opacity=0][line width=0.75]  (200,95) -- (230,95) -- (230,110) -- (200,110) -- cycle ;
\draw [line width=0.75]    (245,75) -- (245,130) ;
\draw  [draw opacity=0][line width=0.75]  (255,95) -- (285,95) -- (285,110) -- (255,110) -- cycle ;
\draw  [draw opacity=0][fill={rgb, 255:red, 75; green, 17; blue, 33 }  ,fill opacity=0.15 ] (290,95) -- (310,95) -- (310,110) -- (290,110) -- cycle ;
\draw [color={rgb, 255:red, 0; green, 0; blue, 0 }  ,draw opacity=1 ][line width=0.75]  [dash pattern={on 0.84pt off 2.51pt}]  (290,110) -- (310,110) ;
\draw [line width=0.75]    (300,110) -- (300,130) ;
\draw [line width=0.75]    (300,75) -- (300,95) ;
\draw [color={rgb, 255:red, 0; green, 0; blue, 0 }  ,draw opacity=1 ][line width=0.75]  [dash pattern={on 0.84pt off 2.51pt}]  (290,95) -- (310,95) ;
\draw  [draw opacity=0][line width=0.75]  (315,95) -- (345,95) -- (345,110) -- (315,110) -- cycle ;

\draw (80,87.5) node  [font=\footnotesize]  {${\textstyle f}$};
\draw (120,102.5) node  [font=\footnotesize]  {$=$};
\draw (80,117.5) node  [font=\footnotesize]  {$g$};
\draw (170,147.5) node  [font=\footnotesize]  {${\textstyle f}_{\mathsf{later}}$};
\draw (170,182.5) node  [font=\footnotesize]  {$g_{\mathsf{later}}$};
\draw (169.82,52.5) node  [font=\footnotesize,color={rgb, 255:red, 0; green, 0; blue, 0 }  ,opacity=1 ]  {$f_{\mathsf{now}}$};
\draw (170,87.5) node  [font=\footnotesize,color={rgb, 255:red, 0; green, 0; blue, 0 }  ,opacity=1 ]  {$g_{\mathsf{now}}$};
\draw (50,56.6) node [anchor=south] [inner sep=0.75pt]  [font=\small,color={rgb, 255:red, 75; green, 17; blue, 33 }  ,opacity=1 ]  {$A$};
\draw (70,56.6) node [anchor=south] [inner sep=0.75pt]  [font=\small,color={rgb, 255:red, 75; green, 17; blue, 33 }  ,opacity=1 ]  {$B$};
\draw (90,56.6) node [anchor=south] [inner sep=0.75pt]  [font=\small,color={rgb, 255:red, 0; green, 0; blue, 0 }  ,opacity=1 ]  {$X$};
\draw (140,21.6) node [anchor=south] [inner sep=0.75pt]  [font=\small,color={rgb, 255:red, 75; green, 17; blue, 33 }  ,opacity=1 ]  {$A$};
\draw (160,21.6) node [anchor=south] [inner sep=0.75pt]  [font=\small,color={rgb, 255:red, 75; green, 17; blue, 33 }  ,opacity=1 ]  {$B$};
\draw (180,21.6) node [anchor=south] [inner sep=0.75pt]  [font=\footnotesize,color={rgb, 255:red, 0; green, 0; blue, 0 }  ,opacity=1 ]  {$X_{0}$};
\draw (215,102.5) node  [font=\footnotesize]  {$;$};
\draw (270,102.5) node  [font=\footnotesize]  {$=$};
\draw (330,102.5) node  [font=\footnotesize]  {$;$};

\end{tikzpicture}
   \caption{Sequential composition and identity of monoidal streams.}
  \label{strings:monoidal-stream-sequential}
\end{figure}

\begin{defi} The \emph{identity}
  $\im_{\stream{X}} \in \STREAM(\stream{X},\stream{X})$ is defined by $M(\im_{\stream{X}}) = \monoidalunit$,
  $\now(\im_{\stream{X}}) = \im_{X_{0}}$, and $\later(\im_{\stream{X}}) = \im_{\tail{\stream{X}}}$.
\end{defi}

\begin{lem}
  \label{lemma:sequentialcompositionassociativememories}
  Sequential composition of streams with memories (\Cref{def:sequentialstream}) is associative.
  Given three streams $f \in \STREAM(\sA \cdot \stream{X},\stream{Y})$, $g \in \STREAM(\sB \cdot \stream{Y},\stream{Z})$, and $h \in \STREAM(\sB \cdot \stream{Z},\stream{W})$;
  we can compose them in two different ways,
  \begin{itemize}
    \item $(f^{\sA} \comp g^{\sB}) \comp h^{\sC} \in \STREAM((\sA \tensor \sB) \tensor \sC \cdot \stream{X},\stream{W})$, or
    \item $f^{\sA} \comp (g^{\sB} \comp h^{\sC}) \in \STREAM(\sA \tensor (\sB \tensor \sC) \cdot \stream{X},\stream{W})$.
  \end{itemize}
  We claim that
  \[((f^{\sA} \comp g^{\sB}) \comp h^{\sC}) = 
  \alpha_{\sA,\sB,\sC} \cdot (f^{\sA} \comp (g^{\sB} \comp h^{\sC})).\]
\end{lem}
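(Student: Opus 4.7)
The plan is to proceed by coinduction on the stream structure, using the definition of monoidal streams as a final coalgebra (\Cref{def:monoidalstream}). It suffices to exhibit a bisimulation: a relation between streams that is preserved under taking \(\now\) (up to the coherence isomorphisms on memories) and under taking \(\later\). By the universal property of the final coalgebra, two streams related by a bisimulation are equal.

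First I would unfold both sides according to \Cref{def:sequentialstream}. For the left composite \(((f^{\sA} \comp g^{\sB}) \comp h^{\sC})\), the memory is \((M(f) \tensor M(g)) \tensor M(h)\) and the \(\now\)-part is the triply-nested composition
\[\sigma \dcomp (\sigma \dcomp (\now(f) \tensor \im) \dcomp \sigma \dcomp (\now(g) \tensor \im) \tensor \im) \dcomp \sigma \dcomp (\now(h) \tensor \im)\]
arising from applying the binary formula twice. For the right composite \(f^{\sA} \comp (g^{\sB} \comp h^{\sC})\), the memory is \(M(f) \tensor (M(g) \tensor M(h))\) and the \(\now\)-part results from the other nesting.

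Second, I would verify that the two \(\now\)-parts agree after precomposing with \(\alpha_{\sA,\sB,\sC}\) on the input side, provided that we identify the memory channels \((M(f) \tensor M(g)) \tensor M(h)\) and \(M(f) \tensor (M(g) \tensor M(h))\) by their associator. This is a finite calculation in the symmetric monoidal category \(\catC\): both expressions route \(X_0\) sequentially through \(\now(f)\), then \(\now(g)\), then \(\now(h)\), stacking the resulting memory outputs on the left. That the two reroutings coincide follows from the coherence theorem (\Cref{theorem:coherence}) together with the hexagon equation, and is captured by the dinatural equivalence of \Cref{def:monoidalstream} applied to the associator \(\alpha_{M(f),M(g),M(h)}\) on memories.

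Third, I would appeal to the coinduction hypothesis for the \(\later\)-parts. By definition, \(\later((f^{\sA} \comp g^{\sB}) \comp h^{\sC}) = (\later(f)^{M(f)} \comp \later(g)^{M(g)}) \comp \later(h)^{M(h)}\), while \(\later(f^{\sA} \comp (g^{\sB} \comp h^{\sC})) = \later(f)^{M(f)} \comp (\later(g)^{M(g)} \comp \later(h)^{M(h)})\). These are again two nested composites of three streams with the analogous memory parameters, and so belong to the bisimulation by hypothesis.

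The main obstacle I anticipate is the bookkeeping of the symmetries \(\sigma\) threading inputs through successive memory channels, and in particular verifying that the associator on the input memories \(\sA,\sB,\sC\) and the associator arising on the output memories \(M(f),M(g),M(h)\) are coherently linked by the \(\now\)-part calculation. The convention from \Cref{remark:usingcoherence} keeps this manageable: once one strictifies, the equality of \(\now\)-parts is forced, and the coinductive step on \(\later\) is immediate, so the full bisimulation is obtained.
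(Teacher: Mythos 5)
Your proposal is correct and follows essentially the same route as the paper's proof: both sides are shown to be related by dinaturality over the associator on the memory objects $M(f), M(g), M(h)$, with the $\now$-parts agreeing by a string-diagram/coherence calculation and the $\later$-parts handled by the coinduction hypothesis. The only difference is presentational — you spell out the nested $\now$-formula and invoke coherence explicitly where the paper defers to a figure.
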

\begin{proof}
  First, we note that both sides of the equation represent streams with different memories.
  \begin{itemize}
    \item $M((f^{\sA} \comp g^{\sB}) \comp h^{\sC}) = 
      (M(f) \tensor M(g)) \tensor M(h)$,
    \item $M(f^{\sA} \comp (g^{\sB} \comp h^{\sC})) = M(f) \tensor (M(g) \tensor M(h))$.
  \end{itemize}
We will prove they are related by dinaturality over the associator $\alpha$.
We know that $\now((f^{\sA} \comp g^{\sB}) \comp h^{\sC}) = \now(f^{\sA} \comp (g^{\sB} \comp h^{\sC}))$ by string diagrams (see~\Cref{strings:monoidal-stream-assoc}).
Then, by coinduction, we know that
\[\begin{gathered}
  (\later(f)^{M(f)} ⨾ \later(g)^{M(g)}) ⨾ \later(h)^{M(h)} = \\
  \act{α}{(\later(f)^{M(f)} ⨾ (\later(g)^{M(g)} ⨾ \later(h)^{M(h)}))},
\end{gathered}\]
that is, $\later((f^{\sA} ⨾ g^{\sB}) ⨾ h^{\sC}) = 
α \cdot \later(f^{\sA} ⨾ (g^{\sB} ⨾ h^{\sC}))$.
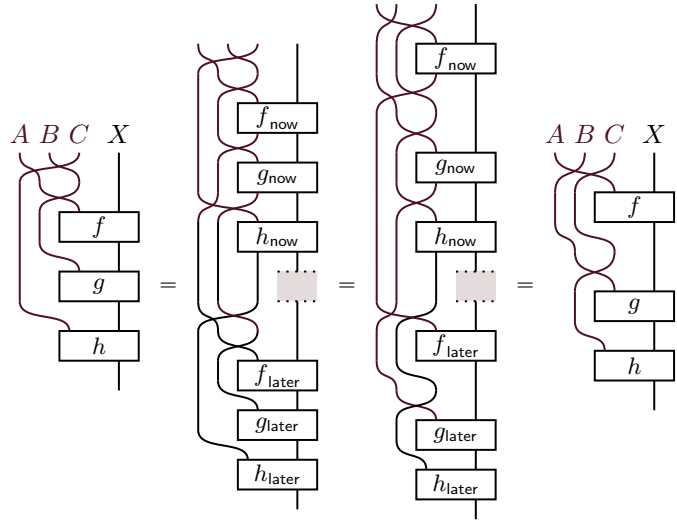
\begin{figure}[ht]

\tikzset{every picture/.style={line width=0.75pt}} %

\begin{tikzpicture}[x=0.75pt,y=0.75pt,yscale=-1,xscale=1]
\draw  [draw opacity=0][fill={rgb, 255:red, 75; green, 17; blue, 33 }  ,fill opacity=0.15 ] (170,140) -- (190,140) -- (190,155) -- (170,155) -- cycle ;
\draw  [line width=0.75]  (60,110) -- (100,110) -- (100,125) -- (60,125) -- cycle ;
\draw [line width=0.75]    (90,80) -- (90,110) ;
\draw [line width=0.75]    (90,125) -- (90,140) ;
\draw [color={rgb, 255:red, 0; green, 0; blue, 0 }  ,draw opacity=1 ][line width=0.75]  [dash pattern={on 0.84pt off 2.51pt}]  (170,155) -- (190,155) ;
\draw  [draw opacity=0][line width=0.75]  (100,140) -- (130,140) -- (130,155) -- (100,155) -- cycle ;
\draw  [line width=0.75]  (60,140) -- (100,140) -- (100,155) -- (60,155) -- cycle ;
\draw [line width=0.75]    (90,155) -- (90,170) ;
\draw [color={rgb, 255:red, 75; green, 17; blue, 33 }  ,draw opacity=1 ]   (70,80) .. controls (69.82,91.02) and (40.15,84.68) .. (40,95) ;
\draw [color={rgb, 255:red, 75; green, 17; blue, 33 }  ,draw opacity=1 ]   (40,80) .. controls (39.82,91.02) and (50.15,84.68) .. (50,95) ;
\draw [color={rgb, 255:red, 75; green, 17; blue, 33 }  ,draw opacity=1 ][line width=0.75]    (50,110) -- (50,125) ;
\draw [color={rgb, 255:red, 75; green, 17; blue, 33 }  ,draw opacity=1 ]   (50,125) .. controls (49.82,136.02) and (70.15,129.68) .. (70,140) ;
\draw [color={rgb, 255:red, 0; green, 0; blue, 0 }  ,draw opacity=1 ][line width=0.75]  [dash pattern={on 0.84pt off 2.51pt}]  (170,140) -- (190,140) ;
\draw [color={rgb, 255:red, 75; green, 17; blue, 33 }  ,draw opacity=1 ]   (55,80) .. controls (54.82,91.02) and (70.15,84.68) .. (70,95) ;
\draw [color={rgb, 255:red, 75; green, 17; blue, 33 }  ,draw opacity=1 ]   (50,95) .. controls (49.82,106.02) and (70.15,99.68) .. (70,110) ;
\draw [color={rgb, 255:red, 75; green, 17; blue, 33 }  ,draw opacity=1 ]   (70,95) .. controls (69.82,106.02) and (50.15,99.68) .. (50,110) ;
\draw  [line width=0.75]  (60,170) -- (100,170) -- (100,185) -- (60,185) -- cycle ;
\draw [line width=0.75]    (90,185) -- (90,200) ;
\draw [color={rgb, 255:red, 75; green, 17; blue, 33 }  ,draw opacity=1 ][line width=0.75]    (40,95) -- (40,155) ;
\draw [color={rgb, 255:red, 75; green, 17; blue, 33 }  ,draw opacity=1 ]   (40,155) .. controls (39.82,166.02) and (65.15,159.68) .. (65,170) ;
\draw  [line width=0.75]  (150,55) -- (190,55) -- (190,70) -- (150,70) -- cycle ;
\draw [line width=0.75]    (180,25) -- (180,55) ;
\draw [line width=0.75]    (180,70) -- (180,85) ;
\draw  [line width=0.75]  (150,85) -- (190,85) -- (190,100) -- (150,100) -- cycle ;
\draw [line width=0.75]    (180,100) -- (180,115) ;
\draw [color={rgb, 255:red, 75; green, 17; blue, 33 }  ,draw opacity=1 ]   (160,25) .. controls (159.82,36.02) and (130.15,29.68) .. (130,40) ;
\draw [color={rgb, 255:red, 75; green, 17; blue, 33 }  ,draw opacity=1 ]   (130,25) .. controls (129.82,36.02) and (140.15,29.68) .. (140,40) ;
\draw [color={rgb, 255:red, 75; green, 17; blue, 33 }  ,draw opacity=1 ][line width=0.75]    (140,55) -- (140,70) ;
\draw [color={rgb, 255:red, 75; green, 17; blue, 33 }  ,draw opacity=1 ]   (145,25) .. controls (144.82,36.02) and (160.15,29.68) .. (160,40) ;
\draw [color={rgb, 255:red, 75; green, 17; blue, 33 }  ,draw opacity=1 ]   (140,40) .. controls (139.82,51.02) and (160.15,44.68) .. (160,55) ;
\draw [color={rgb, 255:red, 75; green, 17; blue, 33 }  ,draw opacity=1 ]   (160,40) .. controls (159.82,51.02) and (140.15,44.68) .. (140,55) ;
\draw  [line width=0.75]  (150,115) -- (190,115) -- (190,130) -- (150,130) -- cycle ;
\draw [line width=0.75]    (180,130) -- (180,140) ;
\draw [color={rgb, 255:red, 75; green, 17; blue, 33 }  ,draw opacity=1 ][line width=0.75]    (130,40) -- (130,100) ;
\draw  [line width=0.75]  (150,185) -- (190,185) -- (190,200) -- (150,200) -- cycle ;
\draw [line width=0.75]    (180,155) -- (180,185) ;
\draw [line width=0.75]    (180,200) -- (180,210) ;
\draw  [line width=0.75]  (150,210) -- (190,210) -- (190,225) -- (150,225) -- cycle ;
\draw [line width=0.75]    (180,225) -- (180,235) ;
\draw [color={rgb, 255:red, 0; green, 0; blue, 0 }  ,draw opacity=1 ]   (160,155) .. controls (159.82,166.02) and (130.15,159.68) .. (130,170) ;
\draw [color={rgb, 255:red, 0; green, 0; blue, 0 }  ,draw opacity=1 ]   (130,155) .. controls (129.82,166.02) and (140.15,159.68) .. (140,170) ;
\draw [color={rgb, 255:red, 0; green, 0; blue, 0 }  ,draw opacity=1 ][line width=0.75]    (140,185) -- (140,195) ;
\draw [color={rgb, 255:red, 0; green, 0; blue, 0 }  ,draw opacity=1 ]   (140,195) .. controls (139.82,206.02) and (160.15,199.68) .. (160,210) ;
\draw [color={rgb, 255:red, 75; green, 17; blue, 33 }  ,draw opacity=1 ]   (140,155) .. controls (139.82,166.02) and (160.15,159.68) .. (160,170) ;
\draw [color={rgb, 255:red, 0; green, 0; blue, 0 }  ,draw opacity=1 ]   (140,170) .. controls (139.82,181.02) and (160.15,174.68) .. (160,185) ;
\draw [color={rgb, 255:red, 0; green, 0; blue, 0 }  ,draw opacity=1 ]   (160,170) .. controls (159.82,181.02) and (140.15,174.68) .. (140,185) ;
\draw  [line width=0.75]  (150,235) -- (190,235) -- (190,250) -- (150,250) -- cycle ;
\draw [line width=0.75]    (180,250) -- (180,260) ;
\draw [color={rgb, 255:red, 0; green, 0; blue, 0 }  ,draw opacity=1 ][line width=0.75]    (130,170) -- (130,220) ;
\draw [color={rgb, 255:red, 0; green, 0; blue, 0 }  ,draw opacity=1 ]   (130,220) .. controls (129.82,231.02) and (155.15,224.68) .. (155,235) ;
\draw [line width=0.75]    (160,130) -- (160,155) ;
\draw [color={rgb, 255:red, 75; green, 17; blue, 33 }  ,draw opacity=1 ][line width=0.75]    (140,85) -- (140,100) ;
\draw [line width=0.75]    (140,115) -- (140,155) ;
\draw [color={rgb, 255:red, 75; green, 17; blue, 33 }  ,draw opacity=1 ]   (140,70) .. controls (139.82,81.02) and (160.15,74.68) .. (160,85) ;
\draw [color={rgb, 255:red, 75; green, 17; blue, 33 }  ,draw opacity=1 ]   (160,70) .. controls (159.82,81.02) and (140.15,74.68) .. (140,85) ;
\draw [color={rgb, 255:red, 75; green, 17; blue, 33 }  ,draw opacity=1 ]   (130,100) .. controls (129.82,111.02) and (160.15,104.68) .. (160,115) ;
\draw [color={rgb, 255:red, 75; green, 17; blue, 33 }  ,draw opacity=1 ]   (160,100) .. controls (159.82,111.02) and (140.15,104.68) .. (140,115) ;
\draw [color={rgb, 255:red, 75; green, 17; blue, 33 }  ,draw opacity=1 ]   (140,100) .. controls (139.82,111.02) and (130.15,104.68) .. (130,115) ;
\draw [line width=0.75]    (130,115) -- (130,155) ;
\draw  [draw opacity=0][line width=0.75]  (190,140) -- (220,140) -- (220,155) -- (190,155) -- cycle ;
\draw  [draw opacity=0][line width=0.75]  (20,55) -- (60,55) -- (60,70) -- (20,70) -- cycle ;
\draw  [line width=0.75]  (240.01,170) -- (280.01,170) -- (280.01,185) -- (240.01,185) -- cycle ;
\draw [line width=0.75]    (270.01,155) -- (270.01,170) ;
\draw [line width=0.75]    (270.01,185) -- (270.01,215) ;
\draw  [line width=0.75]  (240.01,215) -- (280.01,215) -- (280.01,230) -- (240.01,230) -- cycle ;
\draw [line width=0.75]    (270.01,230) -- (270,240) ;
\draw [color={rgb, 255:red, 0; green, 0; blue, 0 }  ,draw opacity=1 ]   (250.01,155) .. controls (249.82,166.02) and (230.16,159.68) .. (230.01,170) ;
\draw [color={rgb, 255:red, 0; green, 0; blue, 0 }  ,draw opacity=1 ][line width=0.75]    (230.01,170) -- (230.01,185) ;
\draw [color={rgb, 255:red, 75; green, 17; blue, 33 }  ,draw opacity=1 ]   (230.01,200) .. controls (229.82,211.02) and (250.16,204.68) .. (250.01,215) ;
\draw [color={rgb, 255:red, 75; green, 17; blue, 33 }  ,draw opacity=1 ]   (230.01,155) .. controls (229.82,166.02) and (220.15,159.68) .. (220,170) ;
\draw  [line width=0.75]  (240.01,240) -- (280.01,240) -- (280.01,255) -- (240.01,255) -- cycle ;
\draw [line width=0.75]    (270.01,255) -- (270.01,265) ;
\draw [color={rgb, 255:red, 75; green, 17; blue, 33 }  ,draw opacity=1 ][line width=0.75]    (220,170) -- (220,185) ;
\draw [color={rgb, 255:red, 0; green, 0; blue, 0 }  ,draw opacity=1 ]   (230.01,225) .. controls (229.82,236.02) and (245.16,229.68) .. (245.01,240) ;
\draw [color={rgb, 255:red, 75; green, 17; blue, 33 }  ,draw opacity=1 ]   (220.01,155) .. controls (219.82,166.02) and (250.16,159.68) .. (250.01,170) ;
\draw [color={rgb, 255:red, 75; green, 17; blue, 33 }  ,draw opacity=1 ]   (220,185) .. controls (219.82,196.02) and (230.16,189.68) .. (230.01,200) ;
\draw [color={rgb, 255:red, 0; green, 0; blue, 0 }  ,draw opacity=1 ]   (230.01,185) .. controls (229.82,196.02) and (250.16,189.68) .. (250.01,200) ;
\draw [color={rgb, 255:red, 0; green, 0; blue, 0 }  ,draw opacity=1 ]   (250.01,200) .. controls (249.82,211.02) and (230.16,204.68) .. (230.01,215) ;
\draw [color={rgb, 255:red, 0; green, 0; blue, 0 }  ,draw opacity=1 ][line width=0.75]    (230.01,215) -- (230.01,225) ;
\draw  [draw opacity=0][fill={rgb, 255:red, 75; green, 17; blue, 33 }  ,fill opacity=0.15 ] (260,140) -- (280,140) -- (280,155) -- (260,155) -- cycle ;
\draw [color={rgb, 255:red, 0; green, 0; blue, 0 }  ,draw opacity=1 ][line width=0.75]  [dash pattern={on 0.84pt off 2.51pt}]  (260,155) -- (280,155) ;
\draw [color={rgb, 255:red, 0; green, 0; blue, 0 }  ,draw opacity=1 ][line width=0.75]  [dash pattern={on 0.84pt off 2.51pt}]  (260,140) -- (280,140) ;
\draw  [draw opacity=0][line width=0.75]  (280,140) -- (310,140) -- (310,155) -- (280,155) -- cycle ;
\draw  [line width=0.75]  (240,25) -- (280,25) -- (280,40) -- (240,40) -- cycle ;
\draw [line width=0.75]    (270,5) -- (270,25) ;
\draw [line width=0.75]    (270,40) -- (270,80) ;
\draw  [line width=0.75]  (240.01,80) -- (280.01,80) -- (280.01,95) -- (240.01,95) -- cycle ;
\draw [line width=0.75]    (270.01,95) -- (270,115) ;
\draw [color={rgb, 255:red, 75; green, 17; blue, 33 }  ,draw opacity=1 ]   (250,5) .. controls (249.82,19.69) and (230.15,11.24) .. (230,25) ;
\draw [color={rgb, 255:red, 75; green, 17; blue, 33 }  ,draw opacity=1 ][line width=0.75]    (230,25) -- (230,40) ;
\draw [color={rgb, 255:red, 75; green, 17; blue, 33 }  ,draw opacity=1 ]   (230.01,60) .. controls (229.82,74.69) and (250.16,66.24) .. (250.01,80) ;
\draw [color={rgb, 255:red, 75; green, 17; blue, 33 }  ,draw opacity=1 ]   (230,5) .. controls (229.82,19.69) and (220.15,11.24) .. (220,25) ;
\draw  [line width=0.75]  (240.01,115) -- (280.01,115) -- (280.01,130) -- (240.01,130) -- cycle ;
\draw [line width=0.75]    (270.01,130) -- (270,140) ;
\draw [color={rgb, 255:red, 75; green, 17; blue, 33 }  ,draw opacity=1 ][line width=0.75]    (220,25) -- (220,40) ;
\draw [color={rgb, 255:red, 75; green, 17; blue, 33 }  ,draw opacity=1 ]   (230.01,95) .. controls (229.82,109.69) and (250.15,101.24) .. (250,115) ;
\draw [color={rgb, 255:red, 75; green, 17; blue, 33 }  ,draw opacity=1 ]   (220,5) .. controls (219.82,19.69) and (250.15,11.24) .. (250,25) ;
\draw [color={rgb, 255:red, 75; green, 17; blue, 33 }  ,draw opacity=1 ]   (220,40) .. controls (219.82,54.69) and (230.16,46.24) .. (230.01,60) ;
\draw [color={rgb, 255:red, 75; green, 17; blue, 33 }  ,draw opacity=1 ]   (230.01,40) .. controls (229.82,54.69) and (250.16,46.24) .. (250.01,60) ;
\draw [color={rgb, 255:red, 75; green, 17; blue, 33 }  ,draw opacity=1 ]   (250.01,60) .. controls (249.82,74.69) and (230.16,66.24) .. (230.01,80) ;
\draw [color={rgb, 255:red, 75; green, 17; blue, 33 }  ,draw opacity=1 ][line width=0.75]    (230.01,80) -- (230.01,95) ;
\draw [color={rgb, 255:red, 75; green, 17; blue, 33 }  ,draw opacity=1 ]   (250,40) .. controls (249.82,54.69) and (220.15,46.24) .. (220,60) ;
\draw [color={rgb, 255:red, 75; green, 17; blue, 33 }  ,draw opacity=1 ][line width=0.75]    (220,60) -- (220.01,155) ;
\draw [color={rgb, 255:red, 75; green, 17; blue, 33 }  ,draw opacity=1 ]   (250,95) .. controls (249.82,109.69) and (230.15,101.24) .. (230,115) ;
\draw [color={rgb, 255:red, 75; green, 17; blue, 33 }  ,draw opacity=1 ][line width=0.75]    (230,115) -- (230,155) ;
\draw [line width=0.75]    (250,130) -- (250.01,155) ;
\draw  [draw opacity=0][line width=0.75]  (280,140) -- (310,140) -- (310,155) -- (280,155) -- cycle ;
\draw  [line width=0.75]  (330,100) -- (370,100) -- (370,115) -- (330,115) -- cycle ;
\draw [line width=0.75]    (360,80) -- (360,100) ;
\draw [line width=0.75]    (360,115) -- (360,150) ;
\draw  [line width=0.75]  (330,150) -- (370,150) -- (370,165) -- (330,165) -- cycle ;
\draw [line width=0.75]    (360,165) -- (360,180) ;
\draw [color={rgb, 255:red, 75; green, 17; blue, 33 }  ,draw opacity=1 ]   (340,80) .. controls (339.82,94.69) and (320.15,86.24) .. (320,100) ;
\draw [color={rgb, 255:red, 75; green, 17; blue, 33 }  ,draw opacity=1 ][line width=0.75]    (320,100) -- (320,115) ;
\draw [color={rgb, 255:red, 75; green, 17; blue, 33 }  ,draw opacity=1 ]   (320,130) .. controls (319.82,144.69) and (340.15,136.24) .. (340,150) ;
\draw [color={rgb, 255:red, 75; green, 17; blue, 33 }  ,draw opacity=1 ]   (325,80) .. controls (324.82,94.69) and (310.15,86.24) .. (310,100) ;
\draw  [line width=0.75]  (330,180) -- (370,180) -- (370,195) -- (330,195) -- cycle ;
\draw [line width=0.75]    (360,195) -- (360,210) ;
\draw [color={rgb, 255:red, 75; green, 17; blue, 33 }  ,draw opacity=1 ][line width=0.75]    (310,100) -- (310,115) ;
\draw [color={rgb, 255:red, 75; green, 17; blue, 33 }  ,draw opacity=1 ]   (320,165) .. controls (319.82,176.02) and (335.15,169.68) .. (335,180) ;
\draw  [draw opacity=0][line width=0.75]  (295,60) -- (380,60) -- (380,80) -- (295,80) -- cycle ;
\draw [color={rgb, 255:red, 75; green, 17; blue, 33 }  ,draw opacity=1 ]   (310,80) .. controls (309.82,94.69) and (340.15,86.24) .. (340,100) ;
\draw [color={rgb, 255:red, 75; green, 17; blue, 33 }  ,draw opacity=1 ]   (310,115) .. controls (309.82,126.02) and (320.15,119.68) .. (320,130) ;
\draw [color={rgb, 255:red, 75; green, 17; blue, 33 }  ,draw opacity=1 ]   (320,115) .. controls (319.82,126.02) and (340.15,119.68) .. (340,130) ;
\draw [color={rgb, 255:red, 75; green, 17; blue, 33 }  ,draw opacity=1 ]   (340,130) .. controls (339.82,144.69) and (320.15,136.24) .. (320,150) ;
\draw [color={rgb, 255:red, 75; green, 17; blue, 33 }  ,draw opacity=1 ][line width=0.75]    (320,150) -- (320,165) ;

\draw (80,117.5) node  [font=\footnotesize]  {${\textstyle f}$};
\draw (115,147.5) node  [font=\footnotesize]  {$=$};
\draw (80,147.5) node  [font=\footnotesize]  {$g$};
\draw (170,192.5) node  [font=\footnotesize]  {${\textstyle f}_{\mathsf{later}}$};
\draw (40,76.6) node [anchor=south] [inner sep=0.75pt]  [font=\small,color={rgb, 255:red, 75; green, 17; blue, 33 }  ,opacity=1 ]  {$A$};
\draw (55,76.6) node [anchor=south] [inner sep=0.75pt]  [font=\small,color={rgb, 255:red, 75; green, 17; blue, 33 }  ,opacity=1 ]  {$B$};
\draw (90,76.6) node [anchor=south] [inner sep=0.75pt]  [font=\small,color={rgb, 255:red, 0; green, 0; blue, 0 }  ,opacity=1 ]  {$X$};
\draw (70,76.6) node [anchor=south] [inner sep=0.75pt]  [font=\small,color={rgb, 255:red, 75; green, 17; blue, 33 }  ,opacity=1 ]  {$C$};
\draw (80,177.5) node  [font=\footnotesize]  {$h$};
\draw (205,147.5) node  [font=\footnotesize]  {$=$};
\draw (170,217.5) node  [font=\footnotesize]  {$g_{\mathsf{later}}$};
\draw (170,242.5) node  [font=\footnotesize]  {$h_{\mathsf{later}}$};
\draw (260.01,177.5) node  [font=\footnotesize]  {${\textstyle f}_{\mathsf{later}}$};
\draw (260.01,222.5) node  [font=\footnotesize]  {$g_{\mathsf{later}}$};
\draw (260.01,247.5) node  [font=\footnotesize]  {$h_{\mathsf{later}}$};
\draw (295,147.5) node  [font=\footnotesize]  {$=$};
\draw (350,107.5) node  [font=\footnotesize]  {${\textstyle f}$};
\draw (350,157.5) node  [font=\footnotesize]  {$g$};
\draw (310,76.6) node [anchor=south] [inner sep=0.75pt]  [font=\small,color={rgb, 255:red, 75; green, 17; blue, 33 }  ,opacity=1 ]  {$A$};
\draw (325,76.6) node [anchor=south] [inner sep=0.75pt]  [font=\small,color={rgb, 255:red, 75; green, 17; blue, 33 }  ,opacity=1 ]  {$B$};
\draw (360,76.6) node [anchor=south] [inner sep=0.75pt]  [font=\small,color={rgb, 255:red, 0; green, 0; blue, 0 }  ,opacity=1 ]  {$X$};
\draw (340,76.6) node [anchor=south] [inner sep=0.75pt]  [font=\small,color={rgb, 255:red, 75; green, 17; blue, 33 }  ,opacity=1 ]  {$C$};
\draw (350,187.5) node  [font=\footnotesize]  {$h$};
\draw (170,62.5) node  [font=\footnotesize]  {${\textstyle f}_{\mathsf{now}}$};
\draw (170,92.5) node  [font=\footnotesize]  {$g_{\mathsf{now}}$};
\draw (170,122.5) node  [font=\footnotesize]  {$h_{\mathsf{now}}$};
\draw (260,32.5) node  [font=\footnotesize]  {${\textstyle f}_{\mathsf{now}}$};
\draw (260.01,87.5) node  [font=\footnotesize]  {$g_{\mathsf{now}}$};
\draw (260.01,122.5) node  [font=\footnotesize]  {$h_{\mathsf{now}}$};

\end{tikzpicture}
   \caption{Associativity of sequential composition.}
  \label{strings:monoidal-stream-assoc}
\end{figure}
\end{proof}

\begin{lem}
  \label{lemma:sequentialcompositionassociative}
  Sequential composition of streams (\Cref{def:sequentialstream}) is associative.
  Given three streams
  $f \in \STREAM(\stream{X},\stream{Y})$,
  $g \in \STREAM(\stream{Y},\stream{Z})$,
  and $h \in \STREAM(\stream{Z},\stream{W})$;
  we claim that $$((f ⨾ g) ⨾ h) = (f ⨾ (g ⨾ h)).$$
\end{lem}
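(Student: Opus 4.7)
The plan is to derive this directly from Lemma \ref{lemma:sequentialcompositionassociativememories} by specialising all three memory objects to the monoidal unit. By definition of sequential composition, $(f \comp g) = (\Ncomp{f}{\monoidalunit}{g}{\monoidalunit})$, and similarly for the other bracketings, so both sides of the equation we wish to prove are instances of the general composition-with-memories operation with $\sA = \sB = \sC = \monoidalunit$.

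Invoking Lemma \ref{lemma:sequentialcompositionassociativememories} with these choices yields
\[((f^{\monoidalunit} \comp g^{\monoidalunit}) \comp h^{\monoidalunit}) = \alpha_{\monoidalunit,\monoidalunit,\monoidalunit} \cdot (f^{\monoidalunit} \comp (g^{\monoidalunit} \comp h^{\monoidalunit})).\]
Both sides are streams from $(\monoidalunit \tensor \monoidalunit) \tensor \monoidalunit \cdot \stream{X}$ and $\monoidalunit \tensor (\monoidalunit \tensor \monoidalunit) \cdot \stream{X}$ respectively into $\stream{W}$. By the coherence theorem for monoidal categories (\Cref{theorem:coherence,remark:usingcoherence}), the associator $\alpha_{\monoidalunit,\monoidalunit,\monoidalunit}$ is the identity on $\monoidalunit$, and the action of $\monoidalunit$ on a sequence is trivial, so $\monoidalunit \tensor \monoidalunit \cdot \stream{X} = \stream{X}$. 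Hence the dinatural action of $\alpha_{\monoidalunit,\monoidalunit,\monoidalunit}$ reduces to the identity, and the equation collapses to
\[((f \comp g) \comp h) = (f \comp (g \comp h)),\]
as required.

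The only subtlety is making the bookkeeping between the statement with explicit memories and the special case coherent; since no real computation remains (the hard work was done in \Cref{lemma:sequentialcompositionassociativememories}), this is just an application of the coherence theorem to discharge the associator on units.
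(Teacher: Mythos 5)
Your proposal is correct and follows exactly the paper's own route: the paper's proof is a one-line appeal to \Cref{lemma:sequentialcompositionassociativememories} ``after considering the appropriate coherence morphisms,'' which is precisely the specialisation to $\sA = \sB = \sC = \monoidalunit$ and the discharge of $\alpha_{\monoidalunit,\monoidalunit,\monoidalunit}$ via coherence that you spell out. No issues.
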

\begin{proof}
  Direct consequence of~\Cref{lemma:sequentialcompositionassociativememories},
  after considering the appropriate coherence morphisms.
\end{proof}

\begin{defi}[Parallel composition]\label{def:parallelstream}
  Given two streams $f \in \STREAM(\act{\sA}{\stream{X}}, \stream{Y})$
  and $g \in \STREAM(\act{\sB}{\stream{X}'},\stream{Y}')$, we compute $(\Ntensor{f}{\sA}{g}{\sB}) \in \STREAM(\act{(\sA \tensor \sB)}{(\stream{X} \tensor \stream{X'})},\stream{Y}  \tensor \stream{Y'})$, their \emph{parallel composition
  with memories $\sA$ and $\sB$}, as
  \begin{itemize}
    \item $M(\Ntensor{f}{\sA}{g}{\sB}) = M(f) ⊗ M(g)$,
    \item $\now(\Ntensor{f}{\sA}{g}{\sB}) = 
      \sigma ⨾ (\now(f) ⊗ \now(g)) ⨾ \sigma$,
    \item $\later(\Ntensor{f}{\sA}{g}{\sB}) = 
      \Ntensor{\later(f)}{M(f)}{\later(g)}{M(g)}$.
  \end{itemize}
  We write $(f ⊗ g)$ for 
  $(\Ntensor{f}{\monoidalunit}{g}{\monoidalunit}) \in 
  \STREAM(\stream{X} ⊗ \stream{X'},\stream{Y}  \tensor \stream{Y'})$; 
  we call it the \emph{parallel composition} of $f \in \STREAM(\stream{X},\stream{Y})$ and $g \in \STREAM(\stream{X}',\stream{Y}')$.
  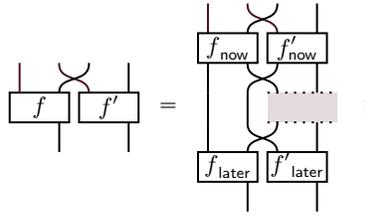
\begin{figure}[ht]

\tikzset{every picture/.style={line width=0.75pt}} %

\begin{tikzpicture}[x=0.75pt,y=0.75pt,yscale=-1,xscale=1]
\draw  [line width=0.75]  (15,55) -- (45,55) -- (45,70) -- (15,70) -- cycle ;
\draw [color={rgb, 255:red, 75; green, 17; blue, 33 }  ,draw opacity=1 ]   (40,40) .. controls (39.15,50.13) and (54.75,45.73) .. (55,55) ;
\draw  [line width=0.75]  (50,55) -- (80,55) -- (80,70) -- (50,70) -- cycle ;
\draw [line width=0.75]    (75,40) -- (75,55) ;
\draw [line width=0.75]    (75,70) -- (75,85) ;
\draw [line width=0.75]    (40,70) -- (40,85) ;
\draw [color={rgb, 255:red, 0; green, 0; blue, 0 }  ,draw opacity=1 ]   (55,40) .. controls (55.35,50.73) and (39.75,45.73) .. (40,55) ;
\draw [color={rgb, 255:red, 75; green, 17; blue, 33 }  ,draw opacity=1 ][line width=0.75]    (20,40) -- (20,55) ;
\draw  [draw opacity=0][line width=0.75]  (80,55) -- (110,55) -- (110,70) -- (80,70) -- cycle ;
\draw [color={rgb, 255:red, 75; green, 17; blue, 33 }  ,draw opacity=1 ]   (135,10) .. controls (134.15,20.13) and (149.75,15.73) .. (150,25) ;
\draw [line width=0.75]    (170,10) -- (170,25) ;
\draw [color={rgb, 255:red, 0; green, 0; blue, 0 }  ,draw opacity=1 ]   (150,10) .. controls (150.35,20.73) and (134.75,15.73) .. (135,25) ;
\draw [color={rgb, 255:red, 75; green, 17; blue, 33 }  ,draw opacity=1 ][line width=0.75]    (115,10) -- (115,25) ;
\draw  [line width=0.75]  (110,25) -- (140,25) -- (140,40) -- (110,40) -- cycle ;
\draw  [line width=0.75]  (145,25) -- (175,25) -- (175,40) -- (145,40) -- cycle ;
\draw [color={rgb, 255:red, 0; green, 0; blue, 0 }  ,draw opacity=1 ]   (135.03,40) .. controls (134.18,50.13) and (149.78,45.73) .. (150.03,55) ;
\draw [color={rgb, 255:red, 0; green, 0; blue, 0 }  ,draw opacity=1 ]   (150.03,40) .. controls (150.38,50.73) and (134.78,45.73) .. (135.03,55) ;
\draw [line width=0.75]    (170,40) -- (170,55) ;
\draw [line width=0.75]    (115,40) -- (115,55) ;
\draw  [draw opacity=0][fill={rgb, 255:red, 75; green, 17; blue, 33 }  ,fill opacity=0.15 ] (145,55) -- (180,55) -- (180,70) -- (145,70) -- cycle ;
\draw [color={rgb, 255:red, 0; green, 0; blue, 0 }  ,draw opacity=1 ][line width=0.75]  [dash pattern={on 0.84pt off 2.51pt}]  (145,70) -- (180,70) ;
\draw [color={rgb, 255:red, 0; green, 0; blue, 0 }  ,draw opacity=1 ][line width=0.75]  [dash pattern={on 0.84pt off 2.51pt}]  (145,55) -- (180,55) ;
\draw  [line width=0.75]  (110,85) -- (140,85) -- (140,100) -- (110,100) -- cycle ;
\draw [color={rgb, 255:red, 0; green, 0; blue, 0 }  ,draw opacity=1 ]   (135,70) .. controls (134.15,80.13) and (149.75,75.73) .. (150,85) ;
\draw  [line width=0.75]  (145,85) -- (175,85) -- (175,100) -- (145,100) -- cycle ;
\draw [line width=0.75]    (170,70) -- (170,85) ;
\draw [line width=0.75]    (170,100) -- (170,115) ;
\draw [line width=0.75]    (135,100) -- (135,115) ;
\draw [color={rgb, 255:red, 0; green, 0; blue, 0 }  ,draw opacity=1 ]   (150,70) .. controls (150.35,80.73) and (134.75,75.73) .. (135,85) ;
\draw [color={rgb, 255:red, 0; green, 0; blue, 0 }  ,draw opacity=1 ][line width=0.75]    (115,70) -- (115,85) ;
\draw [line width=0.75]    (135,55) -- (135,70) ;
\draw [line width=0.75]    (115,55) -- (115,70) ;
\draw  [draw opacity=0][line width=0.75]  (180,55) -- (210,55) -- (210,70) -- (180,70) -- cycle ;

\draw (30,62.5) node  [font=\footnotesize]  {${\textstyle f}$};
\draw (65,62.5) node  [font=\footnotesize]  {${\textstyle f'}$};
\draw (95,62.5) node  [font=\footnotesize]  {$=$};
\draw (125,32.5) node  [font=\footnotesize]  {${\textstyle f}_{\mathsf{now}}$};
\draw (160,32.5) node  [font=\footnotesize]  {$f'_{\mathsf{now}}$};
\draw (125,92.5) node  [font=\footnotesize]  {${\textstyle f}_{\mathsf{later}}$};
\draw (160,91.5) node  [font=\footnotesize]  {${\textstyle f'}_{\mathsf{later}}$};
\draw (195,62.5) node  [font=\footnotesize]  {$;$};

\end{tikzpicture}
     \caption{Parallel composition.}
    \label{strings:parallel}
  \end{figure}
\end{defi}

\begin{lem}\label{lemma:parfunassociativememories}
  Parallel composition of streams with memories is functorial with regards to sequential composition of streams with memories.
  Given four streams
  $f \in \STREAM(\sA \cdot \stream{X},\stream{Y})$,
  $f' \in \STREAM(\sA' \cdot \stream{X}',\stream{Y}')$,
  $g \in \STREAM(\sB \cdot \stream{Y},\stream{Z})$, and
  $g' \in \STREAM(\sB' \cdot \stream{Y}',\stream{Z}')$,
  we can compose them in two different ways,
  \begin{itemize}
    \item $\Ncomp{(\Ntensor{f}{A}{f'}{A'})}{(A \tensor A')}{(\Ntensor{g}{B}{g'}{B'})}{(B \tensor B')}$, and
    \item $\Ntensor{(\Ncomp{f}{A}{g}{B})}{(A \tensor B)}{(\Ncomp{f'}{A'}{g'}{B'})}{(A' \tensor B')}$,
  \end{itemize}
  having slightly different types, respectively,
  \begin{itemize}
    \item $\STREAM(\act{(A \tensor A') \tensor (B \tensor B')}{\stream{X} \tensor \stream{X}'},\stream{Z} \tensor \stream{Z}')$, and
    \item $\STREAM(\act{(A \tensor B) \tensor (A' \tensor B')}{\stream{X} \tensor \stream{X}'},\stream{Z} \tensor \stream{Z}')$.
  \end{itemize}
  We claim that
  \[\begin{gathered}
      \Ncomp{(\Ntensor{f}{A}{f'}{A'})}{(A \tensor A')}{(\Ntensor{g}{B}{g'}{B'})}{(B \tensor B')} =
      \act{\sigma_{A',B}}{\Ntensor{(\Ncomp{f}{A}{g}{B})}{(A \tensor B)}{(\Ncomp{f'}{A'}{g'}{B'})}{(A' \tensor B')}}.
    \end{gathered}\]
\end{lem}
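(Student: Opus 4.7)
The plan is to proceed by coinduction, following the template already established for \Cref{lemma:sequentialcompositionassociativememories}. Both sides denote monoidal streams whose memories carry the same components $M(f), M(f'), M(g), M(g')$ but associated differently: the left-hand side has memory $(M(f) \tensor M(f')) \tensor (M(g) \tensor M(g'))$ while the right-hand side has $(M(f) \tensor M(g)) \tensor (M(f') \tensor M(g'))$. These are related by the middle symmetry $\sigma_{M(f'), M(g)}$, and the statement we have to prove says that a symmetry $\sigma_{A', B}$ on the first-step input types witnesses the desired equality after we push this middle symmetry through dinaturality.

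First I would compute the $\now$ part of both sides by unfolding \Cref{def:sequentialstream} and \Cref{def:parallelstream}. On the left, $\now$ of the outer sequential composition is, up to symmetries, $(\now(f) \tensor \now(f')) \comp (\now(g) \tensor \now(g'))$, threaded through the swap pattern prescribed by parallel composition. On the right, the $\now$ of the outer parallel composition is $(\now(f) \comp \now(g)) \tensor (\now(f') \comp \now(g'))$, again threaded through its own swap pattern. Working in the strict setting afforded by the coherence theorem (\Cref{theorem:coherence}) and drawing the string diagrams as in \Cref{strings:monoidal-stream-sequential,strings:parallel}, the two $\now$ diagrams differ exactly by the symmetry $\sigma_{A', B}$ acting on the first-step inputs, because swapping the order in which we ``first split into $f,f'$ and then sequentially post-compose $g,g'$'' versus ``first sequentially compose along each wire and then split'' is precisely a middle-four-interchange of the input channels.

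Next, I would apply the coinduction hypothesis to the tails. By the definitions of sequential and parallel composition with memories, the $\later$ of both sides is, respectively,
\[
  \Ncomp{\Ntensor{\later(f)}{M(f)}{\later(f')}{M(f')}}{M(f)\tensor M(f')}{\Ntensor{\later(g)}{M(g)}{\later(g')}{M(g')}}{M(g)\tensor M(g')}
\]
and
\[
  \Ntensor{\Ncomp{\later(f)}{M(f)}{\later(g)}{M(g)}}{M(f)\tensor M(g)}{\Ncomp{\later(f')}{M(f')}{\later(g')}{M(g')}}{M(f')\tensor M(g')},
\]
which fits the same statement one step later, hence coinduction yields their equality up to $\sigma_{M(f'), M(g)}$. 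This middle symmetry on the second-step memories is precisely what dinaturality converts into the promised $\sigma_{A', B}$ on the first-step inputs when we recombine $\now$ and $\later$ into a stream (\Cref{def:monoidalstream}).

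The main obstacle is purely bookkeeping: tracking the multiple symmetries and making sure the coinductive invariant we prove is strong enough, i.e.\ that at each depth the discrepancy between the two constructions is exactly a swap of the ``$A'$'' and ``$B$''-slots of the memory at that stage. The cleanest presentation is to do the $\now$-step computation with a string diagram (mirroring \Cref{strings:monoidal-stream-assoc}) and then invoke coinduction on the tails; the final equation follows by combining both via \Cref{def:monoidalstream}, after which the result in the stated form is an immediate consequence once coherence isomorphisms are reintroduced.
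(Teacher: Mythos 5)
Your proposal is correct and follows essentially the same route as the paper's proof: identify the two memories $(M(f)\otimes M(f'))\otimes(M(g)\otimes M(g'))$ and $(M(f)\otimes M(g))\otimes(M(f')\otimes M(g'))$, check by string diagrams that the $\now$ parts agree up to the middle symmetry, and close the argument by coinduction on the tails together with dinaturality over that symmetry. No gaps.
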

\begin{proof}
  First, we note that both sides of the equation (which, from now on, we call $LHS$ and $RHS$, respectively) represent strams with different memories.
  \begin{itemize}
    \item $M(LHS) = (M(f) \tensor M(f')) \tensor (M(g) \tensor M(g'))$,
    \item $M(RHS) = (M(f) \tensor M(g)) \tensor (M(f') \tensor M(g'))$.
  \end{itemize}
  We will prove they are related by dinaturality over the symmetry $\sigma$.
  We know that $\now(LHS) ⨾ \sigma = \now(RHS)$ by string diagrams (see \Cref{strings:functorpar}). Then, by coinduction, we know that $\later(LHS) = \act{\sigma}{\later(RHS)}$.
\end{proof}

\begin{figure}
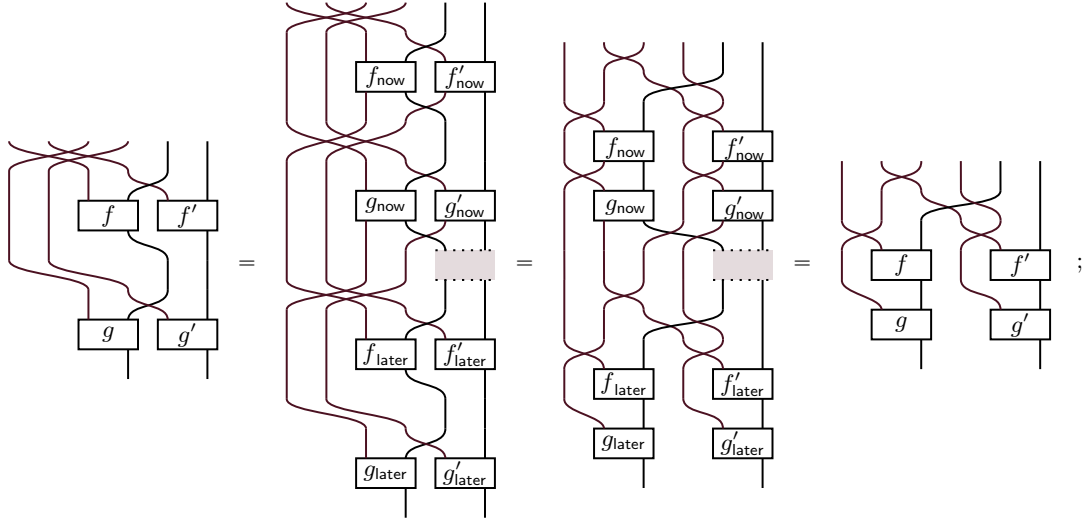


\tikzset{every picture/.style={line width=0.75pt}} %


   \caption{Functoriality of parallel composition.}
  \label{strings:functorpar}
\end{figure}

\begin{lem}\label{lemma:parfun}
  Parallel composition of streams is functorial with respect to sequential composition of streams.
  Given four streams
   $f \in \STREAM(\stream{X},\stream{Y})$,
   $f' \in \STREAM(\stream{X}',\stream{Y}')$,
   $g \in \STREAM(\stream{Y},\stream{Z})$, and
   $g' \in \STREAM(\stream{Y},\stream{Z})$;
  we claim that $(f ⊗ f') ⨾ (g ⊗ g') = (f ⨾ g) ⊗ (f' ⨾ g')$.
\end{lem}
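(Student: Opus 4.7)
The plan is to reduce the statement to the previously established \Cref{lemma:parfunassociativememories} by specialising all four ``memory'' parameters to the monoidal unit~$\monoidalunit$. Indeed, unfolding the shorthands in \Cref{def:sequentialstream,def:parallelstream}, the left-hand side $(f \tensor f') \comp (g \tensor g')$ is literally $\Ncomp{(\Ntensor{f}{\monoidalunit}{f'}{\monoidalunit})}{(\monoidalunit \tensor \monoidalunit)}{(\Ntensor{g}{\monoidalunit}{g'}{\monoidalunit})}{(\monoidalunit \tensor \monoidalunit)}$, while the right-hand side $(f \comp g) \tensor (f' \comp g')$ is $\Ntensor{(\Ncomp{f}{\monoidalunit}{g}{\monoidalunit})}{(\monoidalunit \tensor \monoidalunit)}{(\Ncomp{f'}{\monoidalunit}{g'}{\monoidalunit})}{(\monoidalunit \tensor \monoidalunit)}$. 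These are exactly the two composites compared in~\Cref{lemma:parfunassociativememories}.

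The lemma tells us that the two streams agree after the action of the symmetry $\sigma_{\monoidalunit,\monoidalunit}$ on their shared memory channel. I would then conclude by invoking the standard coherence fact that, in any symmetric monoidal category, $\sigma_{\monoidalunit,\monoidalunit} = \im_{\monoidalunit}$ (this follows from the hexagon together with the unitor laws). Hence the sliding implicit in the conclusion of~\Cref{lemma:parfunassociativememories} is trivial, and the two streams are literally equal.

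The only mildly fiddly step is handling the implicit coherence isomorphisms that mediate between the various bracketings of $\monoidalunit \tensor \monoidalunit$ appearing in the memories; but these are absorbed by the dinatural quotient defining monoidal streams (\Cref{def:monoidalstream}), since a morphism $r \colon \monoidalunit \to \monoidalunit$ generated by unitors and symmetries is the identity. No genuine obstacle is expected, and the proof will be a single-line appeal to \Cref{lemma:parfunassociativememories} plus coherence.
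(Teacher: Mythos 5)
Your proposal is correct and matches the paper's own proof, which likewise derives the result as a direct consequence of \Cref{lemma:parfunassociativememories} by specialising the memories to the monoidal unit and absorbing the resulting coherence and symmetry morphisms (which act trivially since $\sigma_{\monoidalunit,\monoidalunit}=\im$). Your additional remarks about the dinatural quotient trivialising the unitor-generated morphisms are exactly the content of the paper's phrase ``after considering the appropriate coherence morphisms.''
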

\begin{proof}
  Direct consequence of~\Cref{lemma:parfunassociativememories},
  after considering the appropriate coherence morphisms.
\end{proof}

\begin{defi}[Memoryless and constant streams]
  \label{def:inclusion}
  Each sequence \(\stream{f} = \streamExpr{f}\), with $f_{n} \colon X_{n} \to Y_{n}$, induces a stream $[\stream{f}] \in \STREAM(\stream{X}, \stream{Y})$ defined by $M([\stream{f}]) \defn \monoidalunit$, $\now([\stream{f}]) \defn f_{0}$, and $\later([\stream{f}]) \defn [\lift{\tail{\stream{f}}}]$.
  Streams of this form are called \emph{memoryless}, i.e. their memories are given by the monoidal unit.

  Moreover, each morphism $f_0 \colon X \to Y$ induces a \emph{constant} memoryless stream that we call
  $f \in \STREAM(X,Y)$, defined by $M(\lift{f}) \defn \monoidalunit$, $\now(\lift{f}) \defn f_{0}$, and $\later(\lift{f}) \defn \lift{f}$.
\end{defi}

\begin{thm}[see~\cite{roman2020}]
  \label{th:category}\label{th:monoidalstreamscategory}
  Monoidal streams over a \productive{} symmetric monoidal category $(\catC, \tensor, \monoidalunit)$ form a \symmetricMonoidalCategory{} $\STREAM$ with a symmetric monoidal identity-on-objects functor from $\NcatC$.
\end{thm}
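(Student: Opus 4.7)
The plan is to assemble the proof in four coinductive steps, using productivity only to justify (via Theorem~\ref{theorem:observationalfinalcoalgebra}) that $\STREAM$ is a well-defined final coalgebra so that coinductive definition and bisimulation reasoning are sound. At the level of diagrams, every equation to be checked has a ``now'' part that is an equality in $\catC$ and a ``later'' part that unfolds to the coinduction hypothesis.

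First, I would establish the category structure. Associativity of sequential composition is already Lemma~\ref{lemma:sequentialcompositionassociative}, so only the unit laws $\im_{\stream{X}} \comp f = f = f \comp \im_{\stream{Y}}$ remain. By coinduction, $\now(\im_{\stream{X}} \comp f)$ collapses via the identity law in $\catC$ to $\now(f)$, while $\later(\im_{\stream{X}} \comp f) = \im_{\tail{\stream{X}}} \comp \later(f)$ equals $\later(f)$ by the coinductive hypothesis; the right unit law is dual. Second, I would upgrade $\tensor$ to a bifunctor: functoriality with respect to $\comp$ is Lemma~\ref{lemma:parfun}, and preservation of identities $\im_{\stream{X}} \tensor \im_{\stream{Y}} = \im_{\stream{X} \tensor \stream{Y}}$ follows by a one-line coinduction.

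Third, I would introduce the functor $[\bullet] \colon \NcatC \to \STREAM$ from Definition~\ref{def:inclusion}, which is by construction identity on objects. Coinductive arguments analogous to those above show $[\stream{f} \comp \stream{g}] = [\stream{f}] \comp [\stream{g}]$, $[\stream{f} \tensor \stream{g}] = [\stream{f}] \tensor [\stream{g}]$, and $[\im_{\stream{X}}] = \im_{\stream{X}}$, making $[\bullet]$ a strict symmetric monoidal functor once we transport the associator, unitors, and symmetry of $\NcatC$ to $\STREAM$ along $[\bullet]$. The pentagon, triangle, and hexagon equations and the involutivity of the symmetry in $\STREAM$ then reduce to the corresponding equations in $\NcatC$ (which hold componentwise from $\catC$) by applying $[\bullet]$ to both sides.

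The main subtlety is bookkeeping: each coinductive step must respect the dinaturality quotient on memory channels. The ``with memories'' generalizations (Lemmas~\ref{lemma:sequentialcompositionassociativememories} and~\ref{lemma:parfunassociativememories}) already strengthen the coinduction hypothesis precisely enough to pass through this quotient, since the resulting equalities are witnessed by the structural isomorphisms of $\catC$ acting on the memory wires. This is also where productivity plays its role: it underlies the existence of the final coalgebra that validates coinductive reasoning throughout, so the proof is essentially a citation of prior lemmas plus routine coinductive verification.
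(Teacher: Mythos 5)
Your proposal is correct and follows essentially the same route as the paper: associativity via Lemma~\ref{lemma:sequentialcompositionassociative}, bifunctoriality of the tensor via Lemma~\ref{lemma:parfun}, unit laws and identity preservation by routine coinduction, and coherence morphisms and symmetry imported from $\NcatC$ along the memoryless-stream functor so that the pentagon, triangle, and hexagon equations are inherited. The paper's proof is just a terser statement of the same decomposition, so nothing further is needed.
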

\begin{proof}
  Sequential composition of streams (\Cref{def:sequentialstream}) is associative (\Cref{lemma:sequentialcompositionassociative}) and unital with respect to identities.
  Parallel composition is bifunctorial with respect to sequential composition (\Cref{lemma:parfun}); this determines a bifunctor, which is the tensor of the monoidal category.
  The coherence morphisms and the symmetry can be included from sets, so they still satisfy the pentagon and triangle equations.
\end{proof}

The construction of monoidal streams preserves semicartesianity.
This is another way of saying that silent processes are equivalent to doing nothing (\Cref{figure:silentwalk}).

\begin{prop}\label{prop:semicartesian-streams}
  Monoidal streams over a \productive{} semicartesian category \(\cat{C}\) form a semicartesian category. 
\end{prop}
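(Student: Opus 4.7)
The plan is to show that the constant sequence $\stream{I} = (I, I, \dots)$, which is the monoidal unit of $\STREAM$, is a terminal object. For existence, the family of discard maps $(\coUnit_{X_{n}})_{n \in \naturals}$ in \(\cat{C}\) forms a sequence in $\NcatC$, and so lifts to a memoryless monoidal stream $\coUnit_{\stream{X}} \in \STREAM(\stream{X}, \stream{I})$ via the inclusion from $\NcatC$ described in \Cref{def:inclusion}.

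For uniqueness, the cleanest path is to pass through the explicit construction. Since \(\cat{C}\) is productive, \Cref{theorem:observationalfinalcoalgebra} gives a natural bijection $\STREAM(\stream{X}, \stream{I}) \cong \oSeq(\stream{X}, \stream{I})$, under which $\coUnit_{\stream{X}}$ corresponds to the observational sequence $\extseq{\coUnit_{X_{n}}}$. Then \Cref{prop:semicartesian-sequences}, already proved in \Cref{section:observational}, shows that in a semicartesian base category every observational sequence from $\stream{X}$ to $\stream{I}$ equals $\extseq{\coUnit_{X_{n}}}$. Transporting this equality back across the bijection yields $f = \coUnit_{\stream{X}}$ for every $f \in \STREAM(\stream{X}, \stream{I})$.

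Alternatively, one may argue directly by coinduction, without invoking observational sequences. Given $f \in \STREAM(\stream{X}, \stream{I})$, take $r \defn \coUnit_{M(f)} \colon M(f) \to I$; semicartesianity of \(\cat{C}\) forces $\now(f) \dcomp (\coUnit_{M(f)} \tensor \im_{I}) = \coUnit_{X_{0}} = \now(\coUnit_{\stream{X}})$, and the coinduction hypothesis applied to the stream $\act{\coUnit_{M(f)}}{\later(f)} \in \STREAM(\tail{\stream{X}}, \tail{\stream{I}})$ supplies the equality of tails, so that $r$ witnesses the dinatural equivalence $f \sim \coUnit_{\stream{X}}$. The only delicate step in either approach is checking that the semicartesian discard on the memory channel yields the required dinatural witness; this is precisely the content repackaged by \Cref{prop:semicartesian-sequences} in the first approach, and delivered by the coinduction hypothesis in the second.
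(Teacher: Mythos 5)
Your proposal is correct. Your second, ``direct coinduction'' argument is essentially the paper's own proof: the paper shows by coinduction that the counit stream is natural, i.e.\ \(f \dcomp [\coUnit_{Y_{n}}] = \act{\coUnit_{M}}{[\coUnit_{X_{n}}]}\) for any \(f \in \STREAM(\act{M}{\stream{X}},\stream{Y})\), using the discard \(\coUnit_{M(f)}\) as the dinaturality witness on the first actions and the coinduction hypothesis on the tails. Two remarks on your version of it. First, the paper carries an explicit memory channel \(M\) and an arbitrary codomain \(\stream{Y}\); your specialisation to \(\stream{Y} = \stream{I}\) still closes up under coinduction because the tail comparison happens at \(\STREAM(\act{M(f)}{\tail{\stream{X}}}, \tail{\stream{I}})\), which is again an instance of ``a stream into \(\stream{I}\)'', now with input sequence \(\act{M(f)}{\tail{\stream{X}}}\). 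Second, the expression \(\act{\coUnit_{M(f)}}{\later(f)} \in \STREAM(\tail{\stream{X}},\tail{\stream{I}})\) is not well typed: \(\later(f)\) has domain \(\act{M(f)}{\tail{\stream{X}}}\) and \(\coUnit_{M(f)}\) points \emph{out of} \(M(f)\), so it cannot be precomposed there. The correct bookkeeping is to compare \(\later(f)\) with \(\act{\coUnit_{M(f)}}{\later(\coUnit_{\stream{X}})}\), both in \(\STREAM(\act{M(f)}{\tail{\stream{X}}},\tail{\stream{I}})\); by uniformity of discards the latter is the discard stream on \(\act{M(f)}{\tail{\stream{X}}}\), so the coinduction hypothesis applies. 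Your first route --- existence via the memoryless lift of \((\coUnit_{X_{n}})\) and uniqueness by transporting \Cref{prop:semicartesian-sequences} across the bijection \(\STREAM(\stream{X},\stream{I}) \cong \oSeq(\stream{X},\stream{I})\) of \Cref{theorem:observationalfinalcoalgebra} --- is genuinely different from the paper's and also valid; it buys a shorter argument by reusing an already-proved lemma, at the cost of leaning on the full final-coalgebra characterisation (and hence on productivity in an essential way), whereas the direct coinduction only needs productivity as the standing hypothesis under which \(\STREAM\) is constructed.
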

\begin{proof}
  We will show by coinduction that the counit stream is natural, i.e.~\(f \dcomp [\coUnit_{Y_{n}}] = \act{\coUnit_{M}}{[\coUnit_{X_{n}}]}\) for any stream with memory $f \in \STREAM(\act{M}{\stream{X}},\stream{Y})$.
  We will prove that they are related by dinaturality over the discard map, $\coUnit_{M_{0}}$.
  We can see that the first actions are related by semicartesianity,
  \[
      \now(f \dcomp [\coUnit_{Y_{n}}]) \dcomp \coUnit_{M_{0}} %
     = \now(f) \dcomp (\coUnit_{M_{0}} \tensor \coUnit_{Y_{0}}) %
    = \coUnit_{M} \tensor \coUnit_{X_{0}} %
    = \now(\act{\coUnit_{M}}{[\coUnit_{X_{n}}]}).\] %
  We can see that the rest of the actions are related by coinduction.
  \[\later(f \dcomp [\coUnit_{Y_{n}}])
  = \later(f) \dcomp [\coUnit_{Y_{n+1}}] 
  = \act{\coUnit_{M_{0}}}{[\coUnit_{X_{n+1}}]}
  = \act{\coUnit_{M_{0}}}{\later([\coUnit_{X_{n}}])}.\]
  This shows \(f \dcomp [\coUnit_{Y_{n}}] = \act{\coUnit_{M}}{[\coUnit_{X_{n}}]}\), by definition of equality in monoidal streams (\Cref{def:monoidalstream}).
\end{proof}

\subsection{Delayed feedback for streams}

Monoidal streams form a \emph{delayed feedback} monoidal category.
Given some stream in $\STREAM(\delay \stream{S} ⊗ \stream{X}, \stream{S} ⊗ \stream{Y})$, we can create a new stream in $\STREAM(\stream{X}, \stream{Y})$ that passes the output in $\stream{S}$ as a memory channel that gets used as the input in $\delay\stream{S}$.
As a consequence, the category of monoidal streams has a graphical calculus given by that of \categoriesWithFeedback{}. This graphical calculus is complete for dinatural equivalence (as we saw in \Cref{th:ext-stateful-sequences}).

\begin{defi}[Delay functor]\label{def:delay-fun-stream}
  The functor from \Cref{def:delay-fun-cn} can be lifted to a monoidal functor $\mathbf{\delay} \colon \STREAM \to \STREAM$ that acts on objects in the same way.
  It acts on morphisms by sending a stream $f \in \STREAM(\stream{X},\stream{Y})$ to the stream given by $M(\delay f) = \monoidalunit$, $\now(\delay f) = \im_{\monoidalunit}$ and $\later(\delay f) = f$.
\end{defi}

\begin{defi}[Feedback operation]
  Given any morphism of the form 
  $\fm \in \STREAM(
    \act{N}{\delay \stream{S} ⊗ \stream{X}}, 
    \stream{S} ⊗ \stream{Y})$,
  we define $\fbk(f^{N}) \in \STREAM(\act{N}{\stream{X}}, \stream{Y})$ as
  \begin{itemize}
    \item $M(\fbk(f^{N})) = M(f) \tensor S_{0}$,
    \item $\now(\fbk(f^{N})) = \now(f)$ and
    \item $\later(\fbk(f^{N})) = \fbk(\later(f)^{M(f) \tensor S_{0}})$.
  \end{itemize}
  We write \(\fbk(f) \in \STREAM(\stream{X}, \stream{Y})\) for \(\fbk(f^{\monoidalunit})\), the feedback of $f \in \STREAM(\delay \stream{S} \tensor \stream{X}, \stream{S} \tensor \stream{Y})$ tensor.
\end{defi}

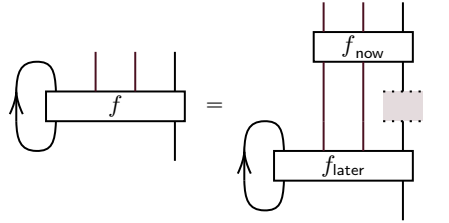
\begin{figure}[ht]

\tikzset{every picture/.style={line width=0.75pt}} %

\begin{tikzpicture}[x=0.75pt,y=0.75pt,yscale=-1,xscale=1]
\draw  [line width=0.75]  (65,75) -- (135,75) -- (135,90) -- (65,90) -- cycle ;
\draw [line width=0.75]    (130,55) -- (130,75) ;
\draw [color={rgb, 255:red, 0; green, 0; blue, 0 }  ,draw opacity=1 ]   (50,75) .. controls (50.25,65.45) and (52.08,60.22) .. (60,60) .. controls (67.92,59.78) and (69.58,64.12) .. (70,75) ;
\draw [color={rgb, 255:red, 75; green, 17; blue, 33 }  ,draw opacity=1 ][line width=0.75]    (90,55) -- (90,75) ;
\draw [color={rgb, 255:red, 75; green, 17; blue, 33 }  ,draw opacity=1 ][line width=0.75]    (110,55) -- (110,75) ;
\draw [color={rgb, 255:red, 0; green, 0; blue, 0 }  ,draw opacity=1 ]   (50,90) .. controls (50.25,99.55) and (52.08,104.78) .. (60,105) .. controls (67.92,105.22) and (69.58,100.88) .. (70,90) ;
\draw [line width=0.75]    (130,90) -- (130,110) ;
\draw  [draw opacity=0][line width=0.75]  (135,75) -- (165,75) -- (165,90) -- (135,90) -- cycle ;
\draw  [draw opacity=0][fill={rgb, 255:red, 75; green, 17; blue, 33 }  ,fill opacity=0.15 ] (235,75) -- (255,75) -- (255,90) -- (235,90) -- cycle ;
\draw [color={rgb, 255:red, 0; green, 0; blue, 0 }  ,draw opacity=1 ][line width=0.75]  [dash pattern={on 0.84pt off 2.51pt}]  (235,90) -- (255,90) ;
\draw [color={rgb, 255:red, 0; green, 0; blue, 0 }  ,draw opacity=1 ][line width=0.75]  [dash pattern={on 0.84pt off 2.51pt}]  (235,75) -- (255,75) ;
\draw  [line width=0.75]  (200,45) -- (250,45) -- (250,60) -- (200,60) -- cycle ;
\draw [color={rgb, 255:red, 75; green, 17; blue, 33 }  ,draw opacity=1 ][line width=0.75]    (205,30) -- (205,45) ;
\draw [color={rgb, 255:red, 75; green, 17; blue, 33 }  ,draw opacity=1 ][line width=0.75]    (225,30) -- (225,45) ;
\draw [line width=0.75]    (245,30) -- (245,45) ;
\draw  [line width=0.75]  (180,105.01) -- (250,105.01) -- (250,120.01) -- (180,120.01) -- cycle ;
\draw [line width=0.75]    (245,90.01) -- (245,105.01) ;
\draw [color={rgb, 255:red, 0; green, 0; blue, 0 }  ,draw opacity=1 ]   (165,105.01) .. controls (165.25,95.46) and (167.08,90.22) .. (175,90.01) .. controls (182.92,89.79) and (184.58,94.12) .. (185,105.01) ;
\draw [color={rgb, 255:red, 75; green, 17; blue, 33 }  ,draw opacity=1 ][line width=0.75]    (205,90.01) -- (205,105.01) ;
\draw [color={rgb, 255:red, 75; green, 17; blue, 33 }  ,draw opacity=1 ][line width=0.75]    (225,90.01) -- (225,105.01) ;
\draw [line width=0.75]    (165,105.01) -- (165,120.01) ;
\draw [shift={(165,105.51)}, rotate = 90] [color={rgb, 255:red, 0; green, 0; blue, 0 }  ][line width=0.75]    (10.93,-3.29) .. controls (6.95,-1.4) and (3.31,-0.3) .. (0,0) .. controls (3.31,0.3) and (6.95,1.4) .. (10.93,3.29)   ;
\draw [color={rgb, 255:red, 0; green, 0; blue, 0 }  ,draw opacity=1 ]   (165,120.01) .. controls (165.25,129.56) and (167.08,134.79) .. (175,135.01) .. controls (182.92,135.22) and (184.58,130.89) .. (185,120.01) ;
\draw [line width=0.75]    (245,120.01) -- (245,140.01) ;
\draw  [draw opacity=0][line width=0.75]  (250,105.01) -- (280,105.01) -- (280,120.01) -- (250,120.01) -- cycle ;
\draw [color={rgb, 255:red, 75; green, 17; blue, 33 }  ,draw opacity=1 ][line width=0.75]    (205,60) -- (205,90) ;
\draw [color={rgb, 255:red, 75; green, 17; blue, 33 }  ,draw opacity=1 ][line width=0.75]    (225,60) -- (225,90) ;
\draw [line width=0.75]    (50,75) -- (50,90) ;
\draw [shift={(50,75.5)}, rotate = 90] [color={rgb, 255:red, 0; green, 0; blue, 0 }  ][line width=0.75]    (10.93,-3.29) .. controls (6.95,-1.4) and (3.31,-0.3) .. (0,0) .. controls (3.31,0.3) and (6.95,1.4) .. (10.93,3.29)   ;
\draw [line width=0.75]    (245,60) -- (245,75) ;
\draw  [draw opacity=0][line width=0.75]  (255,75) -- (285,75) -- (285,90) -- (255,90) -- cycle ;

\draw (100,82.5) node  [font=\footnotesize]  {${\textstyle f}$};
\draw (150,82.5) node  [font=\footnotesize]  {$=$};
\draw (225,52.5) node  [font=\footnotesize]  {${\textstyle f}_{\mathsf{now}}$};
\draw (215,112.51) node  [font=\footnotesize]  {${\textstyle f_{\mathsf{later}}}$};
\draw (270,82.5) node  [font=\footnotesize]  {$;$};

\end{tikzpicture}
   \caption{Definition of feedback.}
  \label{strings:feedback}
\end{figure}

\begin{lem}\label{lemma:tighteninglong}
  The structure $(\STREAM, \fbk)$ with memories satisfies the tightening axiom (A1). Given three streams
  $u \in \STREAM(\act{A}{\stream{X}'}, \stream{X})$,
  $f \in \STREAM(
    \act{B ⊗ T}{\delay\stream{S} ⊗ \stream{X}}, 
  \stream{S} ⊗ \stream{Y})$, and
  $v \in \STREAM(\act{C}{\stream{Y}}, \stream{Y}')$;
  we claim that
  \[\fbk^{S}(u^{A} ⨾ f^{B} ⨾ v^{C}) = \act{\sigma}{u^{A} ⨾ \fbk^{S}(f^{B \tensor T}) ⨾ v^{C}}.\]
\end{lem}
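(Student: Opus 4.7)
The plan is to prove this by coinduction, following the pattern of Lemma~\ref{lemma:sequentialcompositionassociativememories} and Lemma~\ref{lemma:parfunassociativememories}: both sides of the claimed equation are monoidal streams with slightly different memory objects related by a symmetry, and the equality is to be understood up to the dinaturality encoding this rearrangement.

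First, I would unfold the definitions to identify the memories. By Definition~\ref{def:sequentialstream} and the definition of the feedback operation, the memory of the left-hand side is
\[M(\fbk^S(u^A \comp f^B \comp v^C)) = (M(u) \tensor M(f) \tensor M(v)) \tensor S_0,\]
while the memory of the right-hand side is
\[M(u^A \comp \fbk^S(f^{B \tensor T}) \comp v^C) = M(u) \tensor (M(f) \tensor S_0) \tensor M(v).\]
These two objects differ precisely by a symmetry moving $S_0$ past $M(v)$, which is exactly the outer $\sigma$ decorating the right-hand side in the statement. The first actions then agree modulo this symmetry by a string-diagrammatic calculation: by the definitions of $\now$ for sequential composition and for feedback, $\now$ on either side is an interleaving of $\now(u)$, $\now(f)$, $\now(v)$ with symmetries on the memory wires, and their equality is an instance of tightening in the base category $\catC$ applied at time zero.

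Finally, I would invoke the coinduction hypothesis on the later parts. By the coinductive definitions of sequential composition with memories and of feedback, the later of the left-hand side is the feedback, relative to $\tail{\stream{S}}$, of $\later(u)^{M(u)} \comp \later(f)^{M(f)} \comp \later(v)^{M(v)}$, and analogously for the right-hand side with the same updated sequential-composition memories. The coinduction hypothesis applied to $\later(u)$, $\later(f)$, $\later(v)$ with memories $M(u), M(f), M(v)$ then closes the inductive step. The main obstacle is bookkeeping: the memories on the two sides differ only by a symmetry, but pinpointing the correct witness of this dinatural identification at each coinductive stage, and verifying that the shifted state $\tail{\stream{S}}$ interacts correctly with the tails of $u$, $f$, $v$, requires care; as in the cited lemmas, however, once the memory layout is understood, the diagrammatic part is straightforward and coinduction does the rest.
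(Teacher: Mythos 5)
Your proposal matches the paper's proof: both identify that the two sides carry memories differing only by the symmetry exchanging $M(v)$ (resp.\ $C$) with $S_0$ (resp.\ $T$), verify $\now(LHS) \dcomp \sigma = \now(RHS)$ by a string-diagram computation, and close the argument by applying the coinduction hypothesis to the $\later$ components, exhibiting the two streams as dinaturally related over $\sigma$. The bookkeeping you flag (the shifted state $\tail{\stream{S}}$ and the updated memories $M(u), M(f), M(v)$ in the coinductive step) is handled in the paper exactly as you describe, so no gap remains.
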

\begin{proof}
  First, we note that both sides of the equation (which, from now on, we call $LHS$ and $RHS$, respectively)
  represent streams with different memories.
  \begin{itemize}
    \item $M(LHS) = A ⊗ B ⊗ C ⊗ T$,
    \item $M(RHS) = A ⊗ B ⊗ T ⊗ C$.
  \end{itemize}
  We will prove that they are related by dinaturality over the symmetry $\sigma$.
  We know that $\now(LHS) ⨾ \sigma = \now(RHS)$ by string diagrams (see \Cref{strings:tightening}).
  Then, by coinduction, we know that $\later(LHS) = \act{\sigma}{\later(RHS)}$.
\end{proof}
\begin{figure}
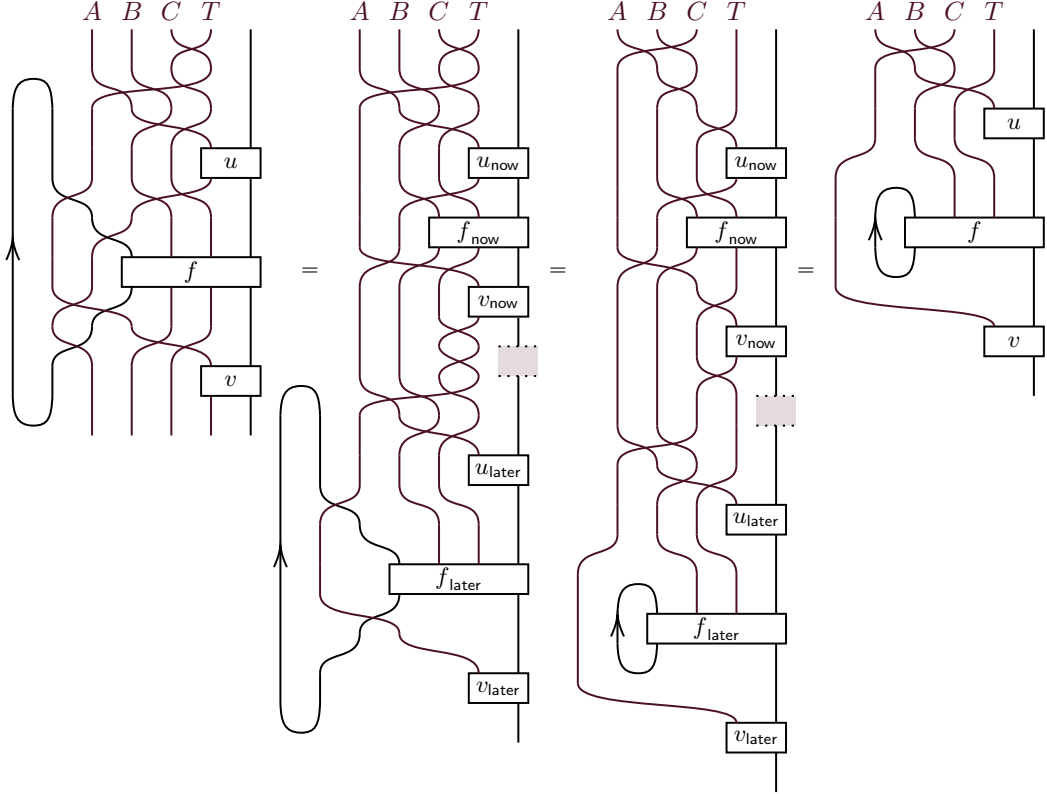


\tikzset{every picture/.style={line width=0.75pt}} %


 \caption{The tightening axiom (A1).}
\label{strings:tightening}
\end{figure}

\begin{lem}\label{lemma:tightening}
  The structure $(\STREAM, \fbk)$ satisfies the tightening axiom (A1).
  Given streams
  $u \in \STREAM(\stream{X}',\stream{X})$,
  $f \in \STREAM(\delay\stream{S}\tensor \stream{X},\stream{S}\tensor \stream{Y})$, and
  $v \in \STREAM(\stream{Y},\stream{Y}')$;
  we claim that $$\fbk^{S}(u ⨾ f ⨾ v) = u ⨾ \fbk^{S}(f) ⨾ v.$$
\end{lem}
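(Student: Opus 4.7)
The plan is to derive this statement as a direct corollary of the already-proven memoryful version, Lemma \ref{lemma:tighteninglong}. Specifically, I would instantiate that lemma at $A = B = C = \monoidalunit$, so that the three input streams $u$, $f$, $v$ are exactly of the types appearing in the statement here (after identifying $\monoidalunit \tensor \stream{X} \cong \stream{X}$ via the relevant unitor in $\NcatC$).

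Under this specialisation, the memory-reshuffling symmetry $\sigma$ that appears on the right-hand side of Lemma \ref{lemma:tighteninglong}, of the form $\sigma_{A',B}$, collapses to the symmetry between unit objects, which is itself a coherence isomorphism. Similarly, the nested memories $A \tensor B \tensor C$ and $A \tensor B \tensor T \tensor C$ on both sides reduce to $T$ modulo unitors. So the two sides of the memoryful equation become, up to the canonical coherence isomorphisms of $\NcatC$, exactly the two sides of the equation claimed here.

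Concretely, I would first write out the types: $\fbk^{S}(u \dcomp f \dcomp v)$ is a stream in $\STREAM(\stream{X}', \stream{Y}')$, and so is $u \dcomp \fbk^{S}(f) \dcomp v$. Both may be viewed as instances of $\fbk^{S}(u^{\monoidalunit} \dcomp f^{\monoidalunit \tensor T} \dcomp v^{\monoidalunit})$ and $u^{\monoidalunit} \dcomp \fbk^{S}(f^{\monoidalunit \tensor T}) \dcomp v^{\monoidalunit}$ in the memoryful calculus. Lemma \ref{lemma:tighteninglong} equates them up to the action of a symmetry on the memory channel. Since all the extra tensor factors are the monoidal unit, this symmetry is (the image of) a coherence isomorphism in $\NcatC$ and acts as the identity on the underlying stream, witnessed by the appropriate structural morphism $r$ of Definition \ref{def:monoidalstream}.

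I do not expect a conceptual obstacle here; the only thing to be careful about is bookkeeping the coherence morphisms so that the two streams are recognised as equal in $\STREAM(\stream{X}', \stream{Y}')$ rather than in a merely isomorphic hom-set. That bookkeeping is exactly what the coherence theorem (Theorem \ref{theorem:coherence}) and Remark \ref{remark:usingcoherence} allow us to suppress in the final write-up.
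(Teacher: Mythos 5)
Your proposal is correct and takes essentially the same route as the paper, whose proof of this lemma is precisely ``consequence of Lemma~\ref{lemma:tighteninglong}, after applying the necessary coherence morphisms'': specialise the memory objects to the monoidal unit and absorb the residual memory-reshuffling symmetry into coherence isomorphisms via dinaturality. (One trivial slip: the symmetry appearing in Lemma~\ref{lemma:tighteninglong} reorders the factors $C$ and $T$, not $A'$ and $B$ as you wrote, but since all of these collapse to the unit this does not affect your argument.)
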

\begin{proof}
  Consequence of \Cref{lemma:tighteninglong}, after applying the necessary coherence morphisms.
\end{proof}

\begin{lem}\label{lemma:vanishinglong}
  The structure $(\STREAM, \fbk)$ with memories satisfies the vanishing axiom (A2).
  Given a stream
    $f \in \STREAM(\act{A}{\delay\stream{S} \tensor \stream{X}}, \stream{S} \tensor \stream{Y})$,
  we claim that $$\fbk^{I}(f^{A}) = \act{\rho}{f}.$$
\end{lem}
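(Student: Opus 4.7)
The plan is to follow the same coinductive template used in \Cref{lemma:tighteninglong,lemma:sequentialcompositionassociativememories}. First I would unfold both sides and observe that they represent streams whose memories differ only by an extra copy of the monoidal unit: by the definition of feedback, $M(\fbk^{I}(f^{A})) = M(f) \tensor I$ (the feedback operation tensors the state space, here $I$, into the new memory), while $M(\act{\rho}{f}) = M(f)$. The claim is that these two streams are related by dinaturality over the right unitor $\rho_{M(f)} \colon M(f) \tensor I \to M(f)$.

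For the base step, I would verify that the first actions agree up to $\rho$: by construction $\now(\fbk^{I}(f^{A})) = \now(f)$, and a short string-diagram check (of the same flavour as in \Cref{strings:feedback}) shows that this matches $\now(\act{\rho}{f})$ after inserting $\rho$, essentially because tensoring a wire with $I$ is trivial. For the coinductive step, by definition $\later(\fbk^{I}(f^{A})) = \fbk^{I}(\later(f)^{M(f) \tensor I})$, and by the coinduction hypothesis this equals $\act{\rho}{\later(f)}$, which is precisely the later action of $\act{\rho}{f}$. Closing the coinduction yields the desired equality of streams, according to \Cref{def:monoidalstream}.

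The main obstacle I expect is purely notational: keeping track of the various unitors associated with $\delay\stream{I}$, $\stream{I} \tensor \stream{Y}$, and the fed-back $M(f) \tensor I$, since these are only isomorphic rather than syntactically equal to their unit-free counterparts. By the coherence theorem (\Cref{theorem:coherence}) and the convention of absorbing unitors described in \Cref{remark:usingcoherence}, these isomorphisms can be handled implicitly and should not present any conceptual difficulty.
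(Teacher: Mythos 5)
Your proof is correct and follows exactly the paper's own argument: both identify the memories as $M(f)\tensor I$ versus $M(f)$, relate the two sides by dinaturality over the right unitor $\rho$, note that $\now(\fbk^{I}(f^{A})) = \now(f)$ by definition, and close the coinduction on $\later$. The extra remarks on absorbing unitors via coherence are consistent with the paper's stated conventions and do not change the argument.
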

\begin{proof}
  First, we note that both sides of the equation represent streams with different memories, $M(\fbk^{I}(f^{A})) = M(f) \tensor I$.
  We will prove that they are related by dinaturality over the right unitor $\rho$.
  We know that $\now(\fbk^{I}(f^{A})) = \now(f)$ by definition.
  Then, by coinduction, we konw that $\later(\fbk^{I}(f^{A})) = \act{\rho}{\later(f)}$.
\end{proof}

\begin{lem}\label{lemma:vanishing}
  The structure $(\STREAM, \fbk)$ satisfies the vanishing axiom (A2).
  Given a stream $f \in \STREAM(\delay\stream{S} \tensor \stream{X}, \stream{S} \tensor \stream{Y})$,
  we claim that $$\fbk^{I}(f) = f.$$
\end{lem}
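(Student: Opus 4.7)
The plan is to derive this as a direct consequence of \Cref{lemma:vanishinglong}, in exactly the same way \Cref{lemma:sequentialcompositionassociative} is deduced from \Cref{lemma:sequentialcompositionassociativememories} and \Cref{lemma:tightening} from \Cref{lemma:tighteninglong}. Specializing \Cref{lemma:vanishinglong} to the case $A = I$ yields $\fbk^{I}(f^{I}) = \rho \cdot f$, where $\rho$ is the right unitor relating the memories $M(\fbk^{I}(f^{I})) = M(f) \tensor I$ and $M(f)$.

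Concretely, I would first observe that a stream $f \in \STREAM(\delay\stream{S} \tensor \stream{X}, \stream{S} \tensor \stream{Y})$ can be regarded as a stream with trivial auxiliary memory $A = I$, since $I \tensor (\delay\stream{S} \tensor \stream{X}) \cong \delay\stream{S} \tensor \stream{X}$ via the unitor. Plugging this into \Cref{lemma:vanishinglong} gives the equality of the two feedback morphisms up to the right unitor on memory. Then, invoking the coherence theorem for monoidal categories (\Cref{theorem:coherence,remark:usingcoherence}), the unitor $\rho$ is absorbed and we obtain the equation $\fbk^{I}(f) = f$ on the nose in the (strictified) category of monoidal streams.

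There is no real obstacle here beyond bookkeeping of the coherence isomorphisms; the essential content of vanishing was already handled in \Cref{lemma:vanishinglong}, where the dinaturality-over-$\rho$ argument was carried out by coinduction on $\now$ and $\later$. The present lemma is just the quotient of that statement by the coherence morphisms that relate $M(f) \tensor I$ and $M(f)$ as witnesses for the same stream under \Cref{def:monoidalstream}.
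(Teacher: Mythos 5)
Your proposal is correct and matches the paper's own proof, which simply states that the lemma is a consequence of \Cref{lemma:vanishinglong} after applying the necessary coherence morphisms. The specialization to $A = I$ and the absorption of the right unitor $\rho$ via the coherence theorem is exactly the intended argument.
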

\begin{proof}
  Consequence of \Cref{lemma:vanishinglong}, after applying the necessary coherence morphisms.
\end{proof}

\begin{lem}\label{lemma:joininglong}
  The structure $(\STREAM, \fbk)$ with memories satisfies the joining axiom (A3).
  Given a stream
  $f \in \STREAM(\act{(A \tensor P \tensor Q)}{\delay \stream{S} \tensor \stream{X}}, \stream{S} \tensor \stream{Y})$,
  we claim that
  \[\fbk^{S \tensor T}(f^{A \tensor (P \tensor Q)}) = 
  \act{\alpha}{\fbk^{T}(\act{\sigma}{\fbk^{S}(\act{\sigma}{f^{(A \tensor Q) \tensor P}})})}.\]
\end{lem}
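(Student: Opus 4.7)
The plan mirrors the strategy used for tightening (\Cref{lemma:tighteninglong}) and vanishing (\Cref{lemma:vanishinglong}): we observe that the two streams in the equation carry different memory objects, we exhibit a coherence morphism witnessing their dinatural equivalence, and we then check the first action by a direct string-diagrammatic calculation, deferring the rest to coinduction.

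First I would compute the memories of both sides using the defining clause $M(\fbk(f^{N})) = M(f) \tensor S_{0}$ from \Cref{def:feedback}. Unfolding, the left-hand side $\fbk^{S \tensor T}(f^{A \tensor (P \tensor Q)})$ has memory $M(f) \tensor (S_{0} \tensor T_{0})$, while the right-hand side, after one application of the inner $\fbk^{S}$ and then of the outer $\fbk^{T}$, has memory $(M(f) \tensor S_{0}) \tensor T_{0}$. These two memories are related precisely by the associator $\alpha$; this is exactly the coherence that the statement quotients by on the right-hand side, so dinaturality in the memory channel can match them.

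Next I would verify the first-action equation. Since $\now(\fbk(f^{N})) = \now(f)$ by definition, both sides reduce to $\now(f)$ up to the symmetries $\sigma$ inserted around the inner feedback and the associator $\alpha$ applied on the outside. This reduces to a pure string-diagrammatic identity between coherence morphisms of $\catC$: after cancelling the two symmetries $\sigma$ that wrap the inner $\fbk^{S}$ (one on the input side of $f$, one on the output side), what remains is the reassociation of $A \tensor (P \tensor Q)$ and the splitting of the joint memory $S_{0} \tensor T_{0}$, which is exactly $\alpha$. This step is bureaucratic but unobtrusive, and is the diagrammatic analogue of the one in \Cref{strings:tightening}.

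Finally, for the later component, I would apply the coinduction hypothesis. By the recursive clause $\later(\fbk(f^{N})) = \fbk(\later(f)^{M(f) \tensor S_{0}})$, the later of both sides is again of the form ``two nested feedbacks of $\later(f)$'' versus ``one feedback along $S \tensor T$ of $\later(f)$'', with memory types extended by $M(f)$ on the left. By the coinduction hypothesis applied to $\later(f)$, these are equal up to the same associator $\alpha$, which gives the required dinatural relation $\later(\text{LHS}) = \act{\alpha}{\later(\text{RHS})}$. The main conceptual point — and the only part that is not purely mechanical — is to set up the bookkeeping of associators and symmetries so that the coherence morphism witnessing the dinaturality matches exactly in the first and later components; this is the standard difficulty when dealing with non-strict monoidal categories, and is handled uniformly by invoking the coherence theorem (\Cref{theorem:coherence}) as remarked in \Cref{remark:usingcoherence}.
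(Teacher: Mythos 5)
Your proposal is correct and follows essentially the same route as the paper's proof: compare the memories $M(f)\tensor(S_0\tensor T_0)$ and $(M(f)\tensor S_0)\tensor T_0$, exhibit the associator $\alpha$ as the dinaturality witness, check $\now(\mathrm{LHS})\comp\alpha = \now(\mathrm{RHS})$ from the definition of $\fbk$, and close the later component by coinduction. The paper's version is terser (it asserts the first-action equality "by definition" rather than tracking the cancelling symmetries explicitly), but the argument is the same.
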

\begin{proof}
  First, we note that both sides of the equation (which, from now on, we call $LHS$ and $RHS$, respectively)
  represent strams with different memories.
  \begin{itemize}
    \item $M(LHS) = M(f) \tensor (S_{0} \tensor T_{0})$,
    \item $M(RHS) = (M(f) \tensor S_{0}) \tensor T_{0}$.
  \end{itemize}
  We will prove that they are related by dinaturality over the associator $\alpha$.
  We know that $\now(LHS) \comp \alpha = \now(RHS)$ by definition.
  Then, by coinduction, we know that $\later(LHS) = \act{\alpha}{\later(RHS)}$.
\end{proof}

\begin{lem}\label{lemma:joining}
  The structure $(\STREAM, \fbk)$ satisfies the joining axiom (A3).
  Given a stream $f \in \STREAM(\delay \stream{S} \tensor \stream{X}, \stream{S} \tensor \stream{Y})$, we claim that 
  $$\fbk^{S \tensor T}(f) = \fbk^{T}(\fbk^{S}(f)).$$
\end{lem}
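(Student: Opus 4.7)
The plan is to derive Lemma~\ref{lemma:joining} as a direct corollary of Lemma~\ref{lemma:joininglong}, exactly mirroring how Lemmas~\ref{lemma:tightening} and~\ref{lemma:vanishing} were obtained from their ``with memory'' counterparts. Concretely, I would instantiate Lemma~\ref{lemma:joininglong} at the extra memory $A = I$, so that the source memory $A \tensor (P \tensor Q)$ becomes $P \tensor Q$ and the rearranged memory $(A \tensor Q) \tensor P$ becomes $Q \tensor P$ (both up to unitors). With this specialization the equation
\[\fbk^{S \tensor T}(f^{I \tensor (P \tensor Q)}) = \act{\alpha}{\fbk^{T}(\act{\sigma}{\fbk^{S}(\act{\sigma}{f^{(I \tensor Q) \tensor P}})})}\]
reads, on the nose, as the desired identity $\fbk^{S \tensor T}(f) = \fbk^{T}(\fbk^{S}(f))$ once we absorb the symmetries and the associator into the coherence conventions of Remark~\ref{remark:usingcoherence}.

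The first step is to verify that, under $A = I$, the two sides of Lemma~\ref{lemma:joininglong} have memories $M(f) \tensor (S_{0} \tensor T_{0})$ and $(M(f) \tensor S_{0}) \tensor T_{0}$ respectively, and that these agree up to the associator $\alpha$ that the lemma explicitly provides. Because streams are defined up to dinaturality in their memory channel (Definition~\ref{def:monoidalstream}), this associator is absorbed. The second step is to note that the two intermediate symmetries $\sigma$ appearing in the right-hand side of Lemma~\ref{lemma:joininglong} also disappear when $A = I$, since they act on $I \tensor Q \cong Q \tensor I$ and on the corresponding reassociation of memories; these are again coherence isomorphisms that we omit by Remark~\ref{remark:usingcoherence}.

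No further coinductive argument is required: all the genuine content---the equality of first actions and the coinductive step on the rest---has already been discharged in Lemma~\ref{lemma:joininglong}. The only mild obstacle is bookkeeping the coherence data so that both the original statement and the specialization typecheck once the unit is inserted; this is routine once one recognizes that the associator provided by Lemma~\ref{lemma:joininglong} is precisely the witness of dinatural equivalence of the two memory channels. Hence the proof reduces to a one-line invocation: \emph{``Consequence of Lemma~\ref{lemma:joininglong}, after applying the necessary coherence morphisms.''}
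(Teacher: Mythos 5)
Your proposal is correct and follows exactly the paper's own route: the published proof of this lemma is literally ``Consequence of Lemma~\ref{lemma:joininglong}, after applying the necessary coherence morphisms,'' and you have simply spelled out what that instantiation and coherence bookkeeping look like. The only nitpick is that to match the memoryless statement you should trivialise \emph{all} the auxiliary memories (take $P$ and $Q$, not just $A$, to be $I$, since an unsuperscripted $\fbk(f)$ abbreviates $\fbk(f^{\monoidalunit})$), but this changes nothing of substance.
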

\begin{proof}
  Consequence of \Cref{lemma:joininglong}, after applying the necessary coherence morphisms.
\end{proof}

\begin{lem}\label{lemma:strengthlong}
  The structure $(\STREAM, \fbk)$ with memories satisfies the strength axiom (A4).
  Given two streams 
  $f \in \STREAM(\act{(A \tensor P)}{\delay \stream{S} \tensor \stream{X}}, \stream{S} \tensor \stream{Y})$, and 
  $g \in \STREAM(\act{B}{\stream{X}'},\stream{Y}')$,
  we claim that
  \[\act{\alpha}{\fbk^{S}(f^{A} ⊗ g^{B})} = \fbk^{S}(f^{A ⊗ P})^{A ⊗ P} ⊗ g^{B}.\]
\end{lem}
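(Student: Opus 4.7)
The plan is to mirror the strategy used for the previous feedback axioms (\Cref{lemma:tighteninglong}, \Cref{lemma:vanishinglong}, \Cref{lemma:joininglong}): identify the memories carried by each side, verify that the first actions agree up to the relevant coherence isomorphism, and then invoke the coinduction hypothesis to conclude that the rest of the actions also agree. Appeals to the coinductive definition of monoidal streams (\Cref{def:monoidalstream}) reduce equalities of streams to an equality of memories-modulo-dinaturality plus matching first actions plus a bisimilar ``rest of the action''.

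First I would compute both memories. Call the left and right hand sides $LHS$ and $RHS$. Unfolding \Cref{def:sequentialstream} for $\fbk$ and \Cref{def:parallelstream}, we obtain
\[M(LHS) = (M(f)\tensor M(g))\tensor S_{0}, \qquad M(RHS) = (M(f)\tensor S_{0})\tensor M(g).\]
These are canonically related by the symmetry/associator bundle $\alpha \comp (\im \tensor \sigma) \comp \alpha^{-1}$, which is precisely the coherence morphism witnessing the prefixed $\act{\alpha}{(\cdot)}$ on the left hand side together with the implicit symmetric-monoidal rearrangement of the parallel composition. So the comparison of the two sides is to be proved by dinaturality along this coherence.

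Next I would check the first actions. By the definition of $\fbk$, the first action of $\fbk^{S}(-)$ is just the first action of $-$, so $\now(LHS) = \now(f^{A}\tensor g^{B})$ and $\now(RHS) = \now(f^{A\tensor P})\tensor \now(g^{B})$. By \Cref{def:parallelstream}, both reduce to $(\now(f)\tensor \now(g))$ pre- and post-composed with symmetries; a straightforward string-diagram argument (analogous to \Cref{strings:tightening}) verifies that $\now(LHS)$ composed with the coherence morphism coincides with $\now(RHS)$.

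Finally, for the rest of the action I would apply the coinduction hypothesis. By definition, $\later(LHS) = \fbk^{S}(\later(f^{A}\tensor g^{B})^{M(f)\tensor M(g)\tensor S_{0}})$ and $\later(RHS) = \fbk^{S}(\later(f)^{M(f)\tensor S_{0}\tensor P})^{(M(f)\tensor S_{0})\tensor P}\tensor \later(g)^{M(g)}$, each of which again fits the shape of the statement being proved, with memories updated to $M(f)$, $M(g)$ and identical input/output sequences shifted by one. Coinduction then gives $\later(LHS) = \act{\alpha'}{\later(RHS)}$ for the appropriate coherence $\alpha'$. The main obstacle, as in the previous axioms, is not any deep argument but careful bookkeeping of the associator/symmetry that rearranges $(M(f)\tensor M(g))\tensor S_{0}$ into $(M(f)\tensor S_{0})\tensor M(g)$; once this coherence is identified and shown to make the first actions match, the coinductive step is immediate and \Cref{def:monoidalstream} yields equality of streams.
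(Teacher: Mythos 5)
Your proposal follows the paper's proof essentially verbatim: identify the memories carried by each side, check by a string-diagram computation that the first actions agree up to the coherence isomorphism relating those memories, and close by coinduction on the tails, exactly as in the proofs of the other feedback axioms. One remark: your memory bookkeeping, $M(LHS)=(M(f)\otimes M(g))\otimes S_{0}$ versus $M(RHS)=(M(f)\otimes S_{0})\otimes M(g)$ related by an associator--symmetry coherence, is what the definitions of $\otimes$ and $\mathbf{fbk}$ actually yield, whereas the paper's printed proof lists the memories as $(M(f)\otimes S_{0})\otimes M(g)$ and $M(f)\otimes(S_{0}\otimes M(g))$ and invokes a bare associator --- an apparent slip that does not change the strategy, so your version is, if anything, the more careful one (modulo the stray $P$ in your formula for $\later(RHS)$, which should not appear after one unfolding).
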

\begin{proof}
  First, we note that both sides of the equation (which, from now on, we call $LHS$ and $RHS$, respectively)
  represent streams with different memories.
  \begin{itemize}
     \item $M(LHS) = (M(f) ⊗ S_{0}) ⊗ M(g)$,
     \item $M(RHS) = M(f) ⊗ (S_{0} ⊗ M(g))$.
  \end{itemize}
  We will prove that they are related by dinaturality over the symmetry $\alpha$.
  We know that $\now(LHS) = \now(RHS) \comp \alpha$ by string diagrams (see \Cref{strings:strength}).
  Then, by coinduction, we know that $\act{\alpha}{\later(LHS)} = \later(RHS)$.
\end{proof}

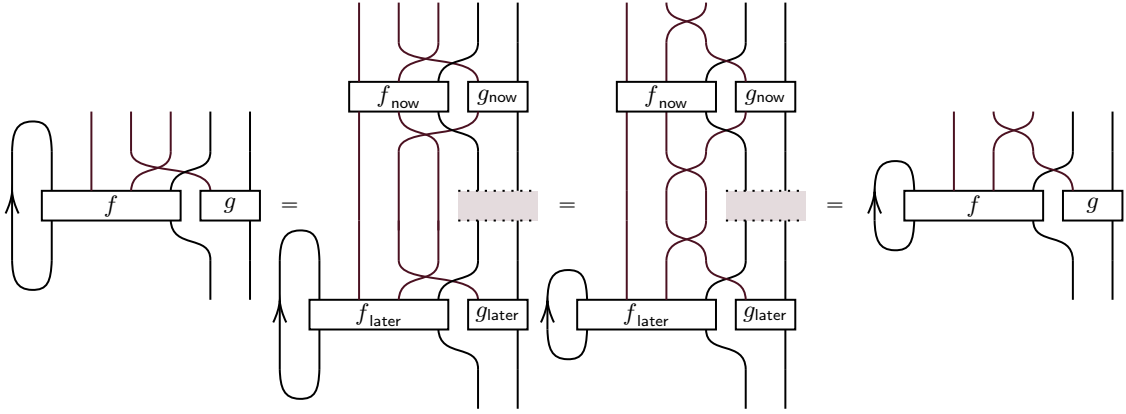
\begin{figure}

\tikzset{every picture/.style={line width=0.75pt}} %

\begin{tikzpicture}[x=0.75pt,y=0.75pt,yscale=-1,xscale=1]
\draw  [line width=0.75]  (30,115) -- (100,115) -- (100,130) -- (30,130) -- cycle ;
\draw [color={rgb, 255:red, 75; green, 17; blue, 33 }  ,draw opacity=1 ]   (95,95) .. controls (95.21,108.24) and (75.88,100.24) .. (75,115) ;
\draw [color={rgb, 255:red, 75; green, 17; blue, 33 }  ,draw opacity=1 ]   (75,95) .. controls (75.21,108.24) and (114.57,100.12) .. (115,115) ;
\draw [color={rgb, 255:red, 0; green, 0; blue, 0 }  ,draw opacity=1 ]   (115,95) .. controls (115.21,108.24) and (95.88,100.24) .. (95,115) ;
\draw  [line width=0.75]  (110,115) -- (140,115) -- (140,130) -- (110,130) -- cycle ;
\draw [color={rgb, 255:red, 0; green, 0; blue, 0 }  ,draw opacity=1 ][line width=0.75]    (135,95) -- (135,115) ;
\draw [color={rgb, 255:red, 75; green, 17; blue, 33 }  ,draw opacity=1 ][line width=0.75]    (55,75) -- (55,115) ;
\draw [color={rgb, 255:red, 0; green, 0; blue, 0 }  ,draw opacity=1 ]   (15,95) .. controls (15.25,85.45) and (17.08,80.22) .. (25,80) .. controls (32.1,79.81) and (34.17,83.27) .. (34.83,91.84) .. controls (34.9,92.82) and (34.96,93.88) .. (35,95) ;
\draw [line width=0.75]    (15,95) -- (15,150) ;
\draw [shift={(15,115.5)}, rotate = 90] [color={rgb, 255:red, 0; green, 0; blue, 0 }  ][line width=0.75]    (10.93,-3.29) .. controls (6.95,-1.4) and (3.31,-0.3) .. (0,0) .. controls (3.31,0.3) and (6.95,1.4) .. (10.93,3.29)   ;
\draw [color={rgb, 255:red, 0; green, 0; blue, 0 }  ,draw opacity=1 ][line width=0.75]    (35,95) -- (35,115) ;
\draw [color={rgb, 255:red, 75; green, 17; blue, 33 }  ,draw opacity=1 ][line width=0.75]    (75,75) -- (75,95) ;
\draw [color={rgb, 255:red, 75; green, 17; blue, 33 }  ,draw opacity=1 ][line width=0.75]    (95,75) -- (95,95) ;
\draw [color={rgb, 255:red, 0; green, 0; blue, 0 }  ,draw opacity=1 ][line width=0.75]    (115,75) -- (115,95) ;
\draw [color={rgb, 255:red, 0; green, 0; blue, 0 }  ,draw opacity=1 ][line width=0.75]    (135,75) -- (135,95) ;
\draw [color={rgb, 255:red, 0; green, 0; blue, 0 }  ,draw opacity=1 ]   (115,150) .. controls (115.21,136.76) and (95.88,144.76) .. (95,130) ;
\draw [color={rgb, 255:red, 0; green, 0; blue, 0 }  ,draw opacity=1 ][line width=0.75]    (135,150) -- (135,130) ;
\draw [color={rgb, 255:red, 0; green, 0; blue, 0 }  ,draw opacity=1 ][line width=0.75]    (115,170) -- (115,150) ;
\draw [color={rgb, 255:red, 0; green, 0; blue, 0 }  ,draw opacity=1 ][line width=0.75]    (135,170) -- (135,150) ;
\draw [color={rgb, 255:red, 0; green, 0; blue, 0 }  ,draw opacity=1 ]   (15,150) .. controls (15.25,159.55) and (17.08,164.78) .. (25,165) .. controls (32.92,165.22) and (34.58,160.88) .. (35,150) ;
\draw [color={rgb, 255:red, 0; green, 0; blue, 0 }  ,draw opacity=1 ][line width=0.75]    (35,130) -- (35,150) ;
\draw  [draw opacity=0][line width=0.75]  (140,115.5) -- (170,115.5) -- (170,130.5) -- (140,130.5) -- cycle ;
\draw  [line width=0.75]  (185,60) -- (235,60) -- (235,75) -- (185,75) -- cycle ;
\draw [color={rgb, 255:red, 75; green, 17; blue, 33 }  ,draw opacity=1 ]   (230,40) .. controls (230.21,53.24) and (210.88,45.24) .. (210,60) ;
\draw [color={rgb, 255:red, 75; green, 17; blue, 33 }  ,draw opacity=1 ]   (210,40) .. controls (210.21,53.24) and (249.57,45.12) .. (250,60) ;
\draw [color={rgb, 255:red, 0; green, 0; blue, 0 }  ,draw opacity=1 ]   (250,40) .. controls (250.21,53.24) and (230.88,45.24) .. (230,60) ;
\draw  [line width=0.75]  (245,60) -- (275,60) -- (275,75) -- (245,75) -- cycle ;
\draw [color={rgb, 255:red, 0; green, 0; blue, 0 }  ,draw opacity=1 ][line width=0.75]    (270,40) -- (270,60) ;
\draw [color={rgb, 255:red, 75; green, 17; blue, 33 }  ,draw opacity=1 ][line width=0.75]    (190,20) -- (190,60) ;
\draw [color={rgb, 255:red, 75; green, 17; blue, 33 }  ,draw opacity=1 ][line width=0.75]    (210,20) -- (210,40) ;
\draw [color={rgb, 255:red, 75; green, 17; blue, 33 }  ,draw opacity=1 ][line width=0.75]    (230,20) -- (230,40) ;
\draw [color={rgb, 255:red, 0; green, 0; blue, 0 }  ,draw opacity=1 ][line width=0.75]    (250,20) -- (250,40) ;
\draw [color={rgb, 255:red, 0; green, 0; blue, 0 }  ,draw opacity=1 ][line width=0.75]    (270,20) -- (270,40) ;
\draw [color={rgb, 255:red, 75; green, 17; blue, 33 }  ,draw opacity=1 ]   (230,95) .. controls (230.21,81.76) and (210.88,89.76) .. (210,75) ;
\draw [color={rgb, 255:red, 75; green, 17; blue, 33 }  ,draw opacity=1 ]   (210,95) .. controls (210.21,81.76) and (249.57,89.88) .. (250,75) ;
\draw [color={rgb, 255:red, 0; green, 0; blue, 0 }  ,draw opacity=1 ]   (250,95) .. controls (250.21,81.76) and (230.88,89.76) .. (230,75) ;
\draw [color={rgb, 255:red, 0; green, 0; blue, 0 }  ,draw opacity=1 ][line width=0.75]    (270,95) -- (270,75) ;
\draw [color={rgb, 255:red, 75; green, 17; blue, 33 }  ,draw opacity=1 ][line width=0.75]    (190,130) -- (190,75) ;
\draw [color={rgb, 255:red, 75; green, 17; blue, 33 }  ,draw opacity=1 ][line width=0.75]    (210,135) -- (210,95) ;
\draw [color={rgb, 255:red, 75; green, 17; blue, 33 }  ,draw opacity=1 ][line width=0.75]    (230,135) -- (230,95) ;
\draw [color={rgb, 255:red, 0; green, 0; blue, 0 }  ,draw opacity=1 ][line width=0.75]    (250,115) -- (250,95) ;
\draw [color={rgb, 255:red, 0; green, 0; blue, 0 }  ,draw opacity=1 ][line width=0.75]    (270,115) -- (270,95) ;
\draw  [draw opacity=0][fill={rgb, 255:red, 75; green, 17; blue, 33 }  ,fill opacity=0.15 ] (240,115) -- (280,115) -- (280,130) -- (240,130) -- cycle ;
\draw [color={rgb, 255:red, 0; green, 0; blue, 0 }  ,draw opacity=1 ][line width=0.75]  [dash pattern={on 0.84pt off 2.51pt}]  (240,115) -- (280,115) ;
\draw [color={rgb, 255:red, 0; green, 0; blue, 0 }  ,draw opacity=1 ][line width=0.75]  [dash pattern={on 0.84pt off 2.51pt}]  (240,130) -- (280,130) ;
\draw  [line width=0.75]  (165,170) -- (235,170) -- (235,185) -- (165,185) -- cycle ;
\draw [color={rgb, 255:red, 75; green, 17; blue, 33 }  ,draw opacity=1 ]   (230,150) .. controls (230.21,163.24) and (210.88,155.24) .. (210,170) ;
\draw [color={rgb, 255:red, 75; green, 17; blue, 33 }  ,draw opacity=1 ]   (210,150) .. controls (210.21,163.24) and (249.57,155.12) .. (250,170) ;
\draw [color={rgb, 255:red, 0; green, 0; blue, 0 }  ,draw opacity=1 ]   (250,150) .. controls (250.21,163.24) and (230.88,155.24) .. (230,170) ;
\draw  [line width=0.75]  (245,170) -- (275,170) -- (275,185) -- (245,185) -- cycle ;
\draw [color={rgb, 255:red, 0; green, 0; blue, 0 }  ,draw opacity=1 ][line width=0.75]    (270,150) -- (270,170) ;
\draw [color={rgb, 255:red, 75; green, 17; blue, 33 }  ,draw opacity=1 ][line width=0.75]    (190,130) -- (190,170) ;
\draw [color={rgb, 255:red, 0; green, 0; blue, 0 }  ,draw opacity=1 ]   (150,150) .. controls (150.25,140.45) and (152.08,135.22) .. (160,135) .. controls (167.1,134.81) and (169.17,138.27) .. (169.83,146.84) .. controls (169.9,147.82) and (169.96,148.88) .. (170,150) ;
\draw [line width=0.75]    (150,150) -- (150,205) ;
\draw [shift={(150,170.5)}, rotate = 90] [color={rgb, 255:red, 0; green, 0; blue, 0 }  ][line width=0.75]    (10.93,-3.29) .. controls (6.95,-1.4) and (3.31,-0.3) .. (0,0) .. controls (3.31,0.3) and (6.95,1.4) .. (10.93,3.29)   ;
\draw [color={rgb, 255:red, 0; green, 0; blue, 0 }  ,draw opacity=1 ][line width=0.75]    (170,150) -- (170,170) ;
\draw [color={rgb, 255:red, 75; green, 17; blue, 33 }  ,draw opacity=1 ][line width=0.75]    (210,130) -- (210,150) ;
\draw [color={rgb, 255:red, 75; green, 17; blue, 33 }  ,draw opacity=1 ][line width=0.75]    (230,130) -- (230,150) ;
\draw [color={rgb, 255:red, 0; green, 0; blue, 0 }  ,draw opacity=1 ][line width=0.75]    (250,130) -- (250,150) ;
\draw [color={rgb, 255:red, 0; green, 0; blue, 0 }  ,draw opacity=1 ][line width=0.75]    (270,130) -- (270,150) ;
\draw [color={rgb, 255:red, 0; green, 0; blue, 0 }  ,draw opacity=1 ]   (250,205) .. controls (250.21,191.76) and (230.88,199.76) .. (230,185) ;
\draw [color={rgb, 255:red, 0; green, 0; blue, 0 }  ,draw opacity=1 ][line width=0.75]    (270,205) -- (270,185) ;
\draw [color={rgb, 255:red, 0; green, 0; blue, 0 }  ,draw opacity=1 ][line width=0.75]    (250,225) -- (250,205) ;
\draw [color={rgb, 255:red, 0; green, 0; blue, 0 }  ,draw opacity=1 ][line width=0.75]    (270,225) -- (270,205) ;
\draw [color={rgb, 255:red, 0; green, 0; blue, 0 }  ,draw opacity=1 ]   (150,205) .. controls (150.25,214.55) and (152.08,219.78) .. (160,220) .. controls (167.92,220.22) and (169.58,215.88) .. (170,205) ;
\draw [color={rgb, 255:red, 0; green, 0; blue, 0 }  ,draw opacity=1 ][line width=0.75]    (170,185) -- (170,205) ;
\draw  [line width=0.75]  (465,115) -- (535,115) -- (535,130) -- (465,130) -- cycle ;
\draw [color={rgb, 255:red, 75; green, 17; blue, 33 }  ,draw opacity=1 ]   (530,75) .. controls (530.21,88.24) and (510.88,80.24) .. (510,95) ;
\draw [color={rgb, 255:red, 75; green, 17; blue, 33 }  ,draw opacity=1 ]   (510,75) .. controls (510.21,88.24) and (529.57,80.12) .. (530,95) ;
\draw [color={rgb, 255:red, 0; green, 0; blue, 0 }  ,draw opacity=1 ]   (550,95) .. controls (550.21,108.24) and (530.88,100.24) .. (530,115) ;
\draw  [line width=0.75]  (545,115) -- (575,115) -- (575,130) -- (545,130) -- cycle ;
\draw [color={rgb, 255:red, 0; green, 0; blue, 0 }  ,draw opacity=1 ][line width=0.75]    (570,95) -- (570,115) ;
\draw [color={rgb, 255:red, 75; green, 17; blue, 33 }  ,draw opacity=1 ][line width=0.75]    (490,75) -- (490,115) ;
\draw [color={rgb, 255:red, 0; green, 0; blue, 0 }  ,draw opacity=1 ]   (450,115) .. controls (450.25,105.45) and (452.08,100.22) .. (460,100) .. controls (467.1,99.81) and (469.17,103.27) .. (469.83,111.84) .. controls (469.9,112.82) and (469.96,113.88) .. (470,115) ;
\draw [line width=0.75]    (450,115) -- (450,130) ;
\draw [shift={(450,115.5)}, rotate = 90] [color={rgb, 255:red, 0; green, 0; blue, 0 }  ][line width=0.75]    (10.93,-3.29) .. controls (6.95,-1.4) and (3.31,-0.3) .. (0,0) .. controls (3.31,0.3) and (6.95,1.4) .. (10.93,3.29)   ;
\draw [color={rgb, 255:red, 75; green, 17; blue, 33 }  ,draw opacity=1 ][line width=0.75]    (510,95) -- (510,115) ;
\draw [color={rgb, 255:red, 0; green, 0; blue, 0 }  ,draw opacity=1 ][line width=0.75]    (550,75) -- (550,95) ;
\draw [color={rgb, 255:red, 0; green, 0; blue, 0 }  ,draw opacity=1 ][line width=0.75]    (570,75) -- (570,95) ;
\draw [color={rgb, 255:red, 0; green, 0; blue, 0 }  ,draw opacity=1 ][line width=0.75]    (570,150) -- (570,130) ;
\draw [color={rgb, 255:red, 0; green, 0; blue, 0 }  ,draw opacity=1 ][line width=0.75]    (550,170) -- (550,150) ;
\draw [color={rgb, 255:red, 0; green, 0; blue, 0 }  ,draw opacity=1 ][line width=0.75]    (570,170) -- (570,150) ;
\draw [color={rgb, 255:red, 0; green, 0; blue, 0 }  ,draw opacity=1 ]   (450,130) .. controls (450.25,139.55) and (452.08,144.78) .. (460,145) .. controls (467.92,145.22) and (469.58,140.88) .. (470,130) ;
\draw [color={rgb, 255:red, 75; green, 17; blue, 33 }  ,draw opacity=1 ]   (530,95) .. controls (530.21,108.24) and (549.57,100.12) .. (550,115) ;
\draw [color={rgb, 255:red, 0; green, 0; blue, 0 }  ,draw opacity=1 ]   (530,130) .. controls (530.21,143.24) and (549.57,135.12) .. (550,150) ;
\draw [color={rgb, 255:red, 75; green, 17; blue, 33 }  ,draw opacity=1 ]   (365,20) .. controls (365.21,33.24) and (345.88,25.24) .. (345,40) ;
\draw [color={rgb, 255:red, 75; green, 17; blue, 33 }  ,draw opacity=1 ]   (345,20) .. controls (345.21,33.24) and (364.57,25.12) .. (365,40) ;
\draw [color={rgb, 255:red, 0; green, 0; blue, 0 }  ,draw opacity=1 ]   (385,40) .. controls (385.21,53.24) and (365.88,45.24) .. (365,60) ;
\draw  [line width=0.75]  (380,60) -- (410,60) -- (410,75) -- (380,75) -- cycle ;
\draw [color={rgb, 255:red, 0; green, 0; blue, 0 }  ,draw opacity=1 ][line width=0.75]    (405,40) -- (405,60) ;
\draw [color={rgb, 255:red, 75; green, 17; blue, 33 }  ,draw opacity=1 ][line width=0.75]    (325,20) -- (325,60) ;
\draw [color={rgb, 255:red, 75; green, 17; blue, 33 }  ,draw opacity=1 ][line width=0.75]    (345,40) -- (345,60) ;
\draw [color={rgb, 255:red, 0; green, 0; blue, 0 }  ,draw opacity=1 ][line width=0.75]    (385,20) -- (385,40) ;
\draw [color={rgb, 255:red, 0; green, 0; blue, 0 }  ,draw opacity=1 ][line width=0.75]    (405,20) -- (405,40) ;
\draw [color={rgb, 255:red, 0; green, 0; blue, 0 }  ,draw opacity=1 ][line width=0.75]    (405,95) -- (405,75) ;
\draw [color={rgb, 255:red, 75; green, 17; blue, 33 }  ,draw opacity=1 ][line width=0.75]    (325,115) -- (325,75) ;
\draw [color={rgb, 255:red, 0; green, 0; blue, 0 }  ,draw opacity=1 ][line width=0.75]    (385,115) -- (385,95) ;
\draw [color={rgb, 255:red, 0; green, 0; blue, 0 }  ,draw opacity=1 ][line width=0.75]    (405,115) -- (405,95) ;
\draw [color={rgb, 255:red, 75; green, 17; blue, 33 }  ,draw opacity=1 ]   (365,40) .. controls (365.21,53.24) and (384.57,45.12) .. (385,60) ;
\draw [color={rgb, 255:red, 75; green, 17; blue, 33 }  ,draw opacity=1 ]   (385,75) .. controls (385.21,88.24) and (365.88,80.24) .. (365,95) ;
\draw [color={rgb, 255:red, 0; green, 0; blue, 0 }  ,draw opacity=1 ]   (365,75) .. controls (365.21,88.24) and (384.57,80.12) .. (385,95) ;
\draw [color={rgb, 255:red, 75; green, 17; blue, 33 }  ,draw opacity=1 ]   (365,95) .. controls (365.21,108.24) and (345.88,100.24) .. (345,115) ;
\draw [color={rgb, 255:red, 75; green, 17; blue, 33 }  ,draw opacity=1 ]   (345,95) .. controls (345.21,108.24) and (364.57,100.12) .. (365,115) ;
\draw [color={rgb, 255:red, 75; green, 17; blue, 33 }  ,draw opacity=1 ][line width=0.75]    (345,95) -- (345,75) ;
\draw  [draw opacity=0][fill={rgb, 255:red, 75; green, 17; blue, 33 }  ,fill opacity=0.15 ] (375,115) -- (415,115) -- (415,130) -- (375,130) -- cycle ;
\draw [color={rgb, 255:red, 0; green, 0; blue, 0 }  ,draw opacity=1 ][line width=0.75]  [dash pattern={on 0.84pt off 2.51pt}]  (375,115) -- (415,115) ;
\draw [color={rgb, 255:red, 0; green, 0; blue, 0 }  ,draw opacity=1 ][line width=0.75]  [dash pattern={on 0.84pt off 2.51pt}]  (375,130) -- (415,130) ;
\draw  [line width=0.75]  (300,170) -- (370,170) -- (370,185) -- (300,185) -- cycle ;
\draw [color={rgb, 255:red, 75; green, 17; blue, 33 }  ,draw opacity=1 ]   (365,130) .. controls (365.21,143.24) and (345.88,135.24) .. (345,150) ;
\draw [color={rgb, 255:red, 75; green, 17; blue, 33 }  ,draw opacity=1 ]   (345,130) .. controls (345.21,143.24) and (364.57,135.12) .. (365,150) ;
\draw [color={rgb, 255:red, 0; green, 0; blue, 0 }  ,draw opacity=1 ]   (385,150) .. controls (385.21,163.24) and (365.88,155.24) .. (365,170) ;
\draw  [line width=0.75]  (380,170) -- (410,170) -- (410,185) -- (380,185) -- cycle ;
\draw [color={rgb, 255:red, 0; green, 0; blue, 0 }  ,draw opacity=1 ][line width=0.75]    (405,150) -- (405,170) ;
\draw [color={rgb, 255:red, 75; green, 17; blue, 33 }  ,draw opacity=1 ][line width=0.75]    (325,130) -- (325,170) ;
\draw [color={rgb, 255:red, 75; green, 17; blue, 33 }  ,draw opacity=1 ][line width=0.75]    (345,150) -- (345,170) ;
\draw [color={rgb, 255:red, 0; green, 0; blue, 0 }  ,draw opacity=1 ][line width=0.75]    (385,130) -- (385,150) ;
\draw [color={rgb, 255:red, 0; green, 0; blue, 0 }  ,draw opacity=1 ][line width=0.75]    (405,130) -- (405,150) ;
\draw [color={rgb, 255:red, 0; green, 0; blue, 0 }  ,draw opacity=1 ][line width=0.75]    (405,205) -- (405,185) ;
\draw [color={rgb, 255:red, 0; green, 0; blue, 0 }  ,draw opacity=1 ][line width=0.75]    (385,225) -- (385,205) ;
\draw [color={rgb, 255:red, 0; green, 0; blue, 0 }  ,draw opacity=1 ][line width=0.75]    (405,225) -- (405,205) ;
\draw [color={rgb, 255:red, 75; green, 17; blue, 33 }  ,draw opacity=1 ]   (365,150) .. controls (365.21,163.24) and (384.57,155.12) .. (385,170) ;
\draw [color={rgb, 255:red, 0; green, 0; blue, 0 }  ,draw opacity=1 ]   (365,185) .. controls (365.21,198.24) and (384.57,190.12) .. (385,205) ;
\draw [color={rgb, 255:red, 75; green, 17; blue, 33 }  ,draw opacity=1 ][line width=0.75]    (325,130) -- (325,115) ;
\draw [color={rgb, 255:red, 75; green, 17; blue, 33 }  ,draw opacity=1 ][line width=0.75]    (345,130) -- (345,115) ;
\draw [color={rgb, 255:red, 75; green, 17; blue, 33 }  ,draw opacity=1 ][line width=0.75]    (365,130) -- (365,115) ;
\draw  [line width=0.75]  (320,60) -- (370,60) -- (370,75) -- (320,75) -- cycle ;
\draw  [draw opacity=0][line width=0.75]  (280,115) -- (310,115) -- (310,130) -- (280,130) -- cycle ;
\draw  [draw opacity=0][line width=0.75]  (415,115.5) -- (445,115.5) -- (445,130.5) -- (415,130.5) -- cycle ;
\draw [color={rgb, 255:red, 0; green, 0; blue, 0 }  ,draw opacity=1 ]   (285,170) .. controls (285.25,160.45) and (287.08,155.22) .. (295,155) .. controls (302.1,154.81) and (304.17,158.27) .. (304.83,166.84) .. controls (304.9,167.82) and (304.96,168.88) .. (305,170) ;
\draw [line width=0.75]    (285,170) -- (285,185) ;
\draw [shift={(285,170.5)}, rotate = 90] [color={rgb, 255:red, 0; green, 0; blue, 0 }  ][line width=0.75]    (10.93,-3.29) .. controls (6.95,-1.4) and (3.31,-0.3) .. (0,0) .. controls (3.31,0.3) and (6.95,1.4) .. (10.93,3.29)   ;
\draw [color={rgb, 255:red, 0; green, 0; blue, 0 }  ,draw opacity=1 ]   (285,185) .. controls (285.25,194.55) and (287.08,199.78) .. (295,200) .. controls (302.92,200.22) and (304.58,195.88) .. (305,185) ;

\draw (65,122.5) node  [font=\footnotesize]  {$f$};
\draw (125,122.5) node  [font=\footnotesize]  {$g$};
\draw (155,123) node  [font=\footnotesize]  {$=$};
\draw (500,122.5) node  [font=\footnotesize]  {$f$};
\draw (560,122.5) node  [font=\footnotesize]  {$g$};
\draw (295,123) node  [font=\footnotesize]  {$=$};
\draw (430,123) node  [font=\footnotesize]  {$=$};
\draw (210,67.5) node  [font=\footnotesize]  {${\textstyle f}_{\mathsf{now}}$};
\draw (260,67.5) node  [font=\footnotesize]  {$g_{\mathsf{now}}$};
\draw (200,177.5) node  [font=\footnotesize]  {${\textstyle f}_{\mathsf{later}}$};
\draw (260,177.5) node  [font=\footnotesize]  {$g_{\mathsf{later}}$};
\draw (335,177.5) node  [font=\footnotesize]  {${\textstyle f}_{\mathsf{later}}$};
\draw (395,177.5) node  [font=\footnotesize]  {$g_{\mathsf{later}}$};
\draw (345,67.5) node  [font=\footnotesize]  {${\textstyle f}_{\mathsf{now}}$};
\draw (395,67.5) node  [font=\footnotesize]  {$g_{\mathsf{now}}$};

\end{tikzpicture}
 \caption{The strength axiom (A4).}
\label{strings:strength}
\end{figure}

\begin{lem}\label{lemma:strength}
  The structure $(\STREAM, \fbk)$ with memories satisfies the strength axiom (A4).
  Given two streams
  $f \in \STREAM(\stream{S} \tensor \stream{X}, \stream{S} \tensor \stream{Y})$, and
  $g \in \STREAM(\stream{X}',\stream{Y}')$,
  we claim that
  \[\act{\alpha}{\fbk^{S}(f \tensor g)} = \fbk^{S}(f) \tensor g.\]
\end{lem}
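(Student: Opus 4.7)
The plan is to reduce this claim to \Cref{lemma:strengthlong}, the analogous strength axiom for streams with memories, exactly following the pattern already used for tightening (\Cref{lemma:tightening}), vanishing (\Cref{lemma:vanishing}) and joining (\Cref{lemma:joining}). In each of those cases, the ``with memories'' version does the real coinductive work (comparing first actions via string diagrams and invoking the coinduction hypothesis on \emph{later}), and the non-memory statement then follows by specializing the auxiliary memory parameters to $\monoidalunit$ and absorbing the residual coherence data.

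Concretely, I would instantiate \Cref{lemma:strengthlong} with $A = \monoidalunit$, $B = \monoidalunit$ and $P = \monoidalunit$, so that the streams $f^{A \tensor P}$ and $g^{B}$ reduce to the plain streams $f$ and $g$ of the stated types, the operator $\fbk^{S}$ with memory $A \tensor P = \monoidalunit$ reduces to ordinary $\fbk^{S}$, and the external tensor $\tensor$ is the one in $\STREAM$. Under this specialization, the associator appearing in \Cref{lemma:strengthlong} becomes the canonical $\alpha_{M(f), S_{0}, M(g)}$ relating the bracketings $(M(f) \tensor S_{0}) \tensor M(g)$ and $M(f) \tensor (S_{0} \tensor M(g))$, which is precisely the $\alpha$ that must appear in the no-memory statement.

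The remaining work is purely bookkeeping: the left and right unitors introduced by setting $A = B = \monoidalunit$ are absorbed silently via the coherence theorem (\Cref{theorem:coherence}), as explained in \Cref{remark:usingcoherence}. I do not anticipate any genuine obstacle here; the only thing to check is that the coherence morphisms on both sides of the equation match, which is guaranteed by the uniqueness of coherence isomorphisms and is exactly the same bookkeeping carried out in the proofs of \Cref{lemma:tightening,lemma:vanishing,lemma:joining}.
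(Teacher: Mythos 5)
Your proposal is correct and matches the paper's own proof, which likewise derives \Cref{lemma:strength} from \Cref{lemma:strengthlong} by specializing the memory parameters and applying the necessary coherence morphisms. Nothing further is needed.
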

\begin{proof}
  Consequence of \Cref{lemma:strengthlong}, after applying the necessary coherence morphisms.
\end{proof}

\begin{lem}\label{lemma:slidinglong}
  The structure $(\STREAM, \fbk)$ with memories satisfies the sliding axiom (A5).
  Given two streams
  $f \in \STREAM(\act{A}{\delay \stream{S} \tensor \stream{X}}, \stream{T} \tensor \stream{Y})$, and
  $r \in \STREAM(\act{C}{\stream{T}}, \stream{S})$
  we claim that, for each $k \colon B \tensor Q \to C \tensor P$,
  \[\act{k}{\fbk^{\stream{S}}(f^{A \tensor P} ; (r^{C} \tensor \im))} = \fbk^{\stream{T}}(\act{k}{(\delay r^{C} \tensor \im) ; f^{A \tensor P}}).\]
\end{lem}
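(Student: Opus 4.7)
The plan is to follow the same template as the previous four axioms (Lemmas~\ref{lemma:tighteninglong}, \ref{lemma:vanishinglong}, \ref{lemma:joininglong}, and~\ref{lemma:strengthlong}): both sides of the claimed equality are monoidal streams, so by the coinductive definition of equality in $\STREAM$ (\Cref{def:monoidalstream}) it suffices to exhibit a dinatural witness relating their memories and then verify, step by step, that the first actions agree up to that witness while the rests of the action agree by the coinduction hypothesis.

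First, I would compute the two memories explicitly. Using \Cref{def:sequentialstream} for composition with memories, and the definition of $\fbk$ together with \Cref{def:inclusion,def:delay-fun-stream} for $\delay r$, one gets
\[
M(\mathrm{LHS}) \;=\; (M(f)\tensor M(r))\tensor S_{0},
\qquad
M(\mathrm{RHS}) \;=\; (M(\delay r)\tensor M(f))\tensor T_{0} \;=\; (\monoidalunit\tensor M(f))\tensor T_{0}.
\]
The key point is that at time zero $\delay r$ contributes no memory, so the $r$-component of the memory on the left must come from outside the feedback on the right: the witness will combine the unitor $\monoidalunit \tensor M(f)\cong M(f)$ with the first action $\now(r)\colon T_{0}\to M(r)\tensor S_{0}$ of the sliding stream.

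Next, I would verify the equation on first actions. Using the definitions, $\now(\mathrm{LHS})$ is obtained by sequentially composing $\now(f)$ with $\now(r)\tensor\im$ and then precomposing with $k$, while $\now(\mathrm{RHS})$ is obtained by composing the identity on $\monoidalunit$ (from $\now(\delay r)$) with $\now(f)$ and precomposing with $k$. Chasing the wires through the string-diagrammatic definition of feedback (as in the diagram of \Cref{strings:feedback}) shows that $\now(\mathrm{LHS})$ and $\now(\mathrm{RHS})$ agree up to precisely the dinatural witness identified above; this is essentially the base-case instance of the standard sliding axiom, with $\now(r)$ playing the role of the sliding morphism $h$ in \Cref{diagram:sliding}.

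Finally, the rests of the action are handled by coinduction. Unfolding both sides gives $\later(\mathrm{LHS}) = \fbk^{\tail{\stream{S}}}(\later(f)^{\,\cdots}\comp(\later(r)^{\,\cdots}\tensor\im))$ (with memories updated accordingly) and $\later(\mathrm{RHS}) = \fbk^{\tail{\stream{T}}}((r^{C}\tensor\im)\comp\later(f)^{\,\cdots})$, because $\later(\delay r) = r$ by \Cref{def:delay-fun-stream}. These are again instances of sliding, now with the full stream $r$ appearing on one side and its tail appearing on the other, so the coinduction hypothesis applies directly. The hardest step in making this rigorous is bookkeeping the several memory reassociations and the interaction between the delay-induced $\monoidalunit$ at time zero and the genuine memory $M(r)$ at later times; once both sides are placed in the common form demanded by \Cref{def:monoidalstream}, the required dinatural witness is a composite of associators, a unitor, and $\now(r)$, and the verification reduces to a string-diagrammatic identity in the base monoidal category.
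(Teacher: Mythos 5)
Your overall strategy is the paper's: compute the two memories, exhibit a dinaturality witness built from a symmetry and $\now(r)$, check the first actions by string diagrams, and push the rest to coinduction. But there are two concrete gaps, both sitting exactly where you flag ``the hardest step''. The first is the memory of the right-hand side. You take $M(\delay r^{C}) = \monoidalunit$ from \Cref{def:delay-fun-stream}, but that clause is for memoryless streams; here $r \in \STREAM(\act{C}{\stream{T}},\stream{S})$ carries the external memory $C$ at time zero, and its first action is $\now(r) \colon C \tensor T_{0} \to M(r)\tensor S_{0}$, not $T_{0}\to M(r)\tensor S_{0}$ as you write. The delayed stream $\delay r^{C}$ must hold $C$ until time one, so the $r$-component of $M(RHS)$ is not $\monoidalunit$, and the dinaturality witness the paper uses, $\sigma \comp (\now(r)\tensor\im)$, needs both a $C$ and a $T_{0}$ in its source. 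With your $M(RHS)=(\monoidalunit\tensor M(f))\tensor T_{0}$ there is no $C$ to feed $\now(r)$, so your witness only typechecks when $C=\monoidalunit$ --- i.e.\ you are effectively proving the memoryless \Cref{lemma:sliding} rather than this memoried version; and since $\later(r)$ has external memory $M(r)\neq\monoidalunit$, the memoryless statement is not closed under taking tails, so the coinduction would not even be well-posed.

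The second gap is the claim that the coinduction hypothesis ``applies directly'' to the tails. After unfolding, the left tail slides by $\later(r)$ while the right tail slides by $\later(\delay r)=r$; these are different streams, so the hypothesis (which relates the two feedbacks for one and the same sliding stream) cannot be invoked as stated. The paper applies the hypothesis to $\later(r)$, obtaining $\fbk^{\stream{T}}\bigl(\act{(\sigma\comp\tid{\now(r)})}{(\delay\later(r)^{M(r)}\tensor\im)\comp\later(f)^{\cdots}}\bigr)$, and then uses the separate observation that acting with the carried witness $\now(r)$ on the memory channel converts $\delay(\later(r))$ back into $\later(\delay r)=r$. This reconciliation --- absorbing $\now(r)$ into $\delay(\later(r))$ to reconstitute $r$ --- is the crux of the lemma and is exactly what your sketch elides.
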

\begin{proof}
  First, we note that both sides of the equation (which, from now on, we call $LHS$ and $RHS$, respectively)
  represent streams with different memories.
  \begin{itemize}
    \item $M(LHS) = M(f) \tensor M(r)  \tensor S_{0}$,
    \item $M(RHS) = M(r) \tensor M(f) \tensor T_{0}$.
  \end{itemize}
  We will prove that they are related by dinaturality over the symmetry and the first action of $r$, that is, $\sigma ⨾ (\now(r) \tensor \im)$.
  We know that $\now(LHS)= \now(RHS) ⨾ (\sigma ⨾ (\im \tensor \now(r)))$ by string diagrams (see \Cref{strings:sliding}).
  Using coinduction,
  \[\begin{aligned}
      & \act{(\sigma ⨾ \tid{\now(r)})}{\later(LHS)}  \\
      = & \act{(\sigma ⨾ \tid{\now(r)})}
        {\fbk^{\stream{S}}(\later(f)^{M(f) \tensor S_{0}} 
        ⨾ (\later(r)^{M(r)} \tensor \im))} \\
      = & \fbk^{\stream{T}}(\act{(\sigma 
        ⨾ \tid{\now(r)})}{(\delay \later(r)^{M(r)} \tensor \im) 
        ⨾ \later(f)^{M(f) \tensor S_{0}}}) \\
      = & \fbk^{\stream{T}}(\act{\sigma}{(\later(\delay r)^{M(r)} \tensor \im) 
        ⨾ \later(f)^{M(f) \tensor S_{0}}}) \\
      = & \later(RHS),
    \end{aligned}\]
  we show that $\act{(\sigma ⨾ \tid{\now(r)})}{\later(LHS)} = \later(RHS)$.
\end{proof}

\begin{figure}

\tikzset{every picture/.style={line width=0.75pt}} %

\begin{tikzpicture}[x=0.75pt,y=0.75pt,yscale=-1,xscale=1]
\draw  [line width=0.75]  (45.01,115) -- (115.01,115) -- (115.01,130) -- (45.01,130) -- cycle ;
\draw  [line width=0.75]  (45.01,150) -- (75.01,150) -- (75.01,165) -- (45.01,165) -- cycle ;
\draw [color={rgb, 255:red, 0; green, 0; blue, 0 }  ,draw opacity=1 ]   (40.01,165) .. controls (40.26,174.55) and (42.09,179.78) .. (50.01,180) .. controls (57.92,180.22) and (59.59,175.88) .. (60.01,165) ;
\draw [color={rgb, 255:red, 0; green, 0; blue, 0 }  ,draw opacity=1 ]   (10.01,85) .. controls (10.26,75.45) and (12.09,70.22) .. (20.01,70) .. controls (27.11,69.81) and (29.18,73.27) .. (29.83,81.84) .. controls (29.91,82.82) and (29.97,83.88) .. (30.01,85) ;
\draw [line width=0.75]    (10.01,85) -- (10.01,125) ;
\draw [shift={(10.01,98)}, rotate = 90] [color={rgb, 255:red, 0; green, 0; blue, 0 }  ][line width=0.75]    (10.93,-3.29) .. controls (6.95,-1.4) and (3.31,-0.3) .. (0,0) .. controls (3.31,0.3) and (6.95,1.4) .. (10.93,3.29)   ;
\draw [color={rgb, 255:red, 0; green, 0; blue, 0 }  ,draw opacity=1 ]   (10.01,125) .. controls (9.42,162.64) and (40.22,144.24) .. (40.01,165) ;
\draw [color={rgb, 255:red, 75; green, 17; blue, 33 }  ,draw opacity=1 ][line width=0.75]    (90.01,60) -- (90.01,85) ;
\draw [color={rgb, 255:red, 0; green, 0; blue, 0 }  ,draw opacity=1 ][line width=0.75]    (70.01,130) -- (70.01,150) ;
\draw [color={rgb, 255:red, 0; green, 0; blue, 0 }  ,draw opacity=1 ][line width=0.75]    (110.01,130) -- (110.01,185) ;
\draw  [line width=0.75]  (305.01,190) -- (375.01,190) -- (375.01,205) -- (305.01,205) -- cycle ;
\draw [color={rgb, 255:red, 75; green, 17; blue, 33 }  ,draw opacity=1 ][line width=0.75]    (310.01,110) -- (310.01,125) ;
\draw  [line width=0.75]  (285.01,160) -- (315.01,160) -- (315.01,175) -- (285.01,175) -- cycle ;
\draw [color={rgb, 255:red, 0; green, 0; blue, 0 }  ,draw opacity=1 ][line width=0.75]    (370.01,205) -- (370.01,225) ;
\draw [color={rgb, 255:red, 0; green, 0; blue, 0 }  ,draw opacity=1 ]   (290.01,205) .. controls (290.26,214.55) and (292.09,219.78) .. (300.01,220) .. controls (307.92,220.22) and (309.59,215.88) .. (310.01,205) ;
\draw [color={rgb, 255:red, 0; green, 0; blue, 0 }  ,draw opacity=1 ]   (270.01,140) .. controls (270.26,130.45) and (272.09,125.22) .. (280.01,125) .. controls (287.92,124.78) and (289.59,129.12) .. (290.01,140) ;
\draw [color={rgb, 255:red, 0; green, 0; blue, 0 }  ,draw opacity=1 ][line width=0.75]    (310.01,175) -- (310.01,190) ;
\draw [color={rgb, 255:red, 75; green, 17; blue, 33 }  ,draw opacity=1 ][line width=0.75]    (350.01,110) -- (350.01,190) ;
\draw  [color={rgb, 255:red, 75; green, 17; blue, 33 }  ,draw opacity=1 ][line width=0.75]  (295.01,125) -- (325.01,125) -- (325.01,140) -- (295.01,140) -- cycle ;
\draw [color={rgb, 255:red, 0; green, 0; blue, 0 }  ,draw opacity=1 ][line width=0.75]    (370.01,100) -- (370.01,190) ;
\draw [color={rgb, 255:red, 75; green, 17; blue, 33 }  ,draw opacity=1 ][line width=0.75]    (330.01,110) -- (330.01,190) ;
\draw    (270.01,155) .. controls (270.21,174.65) and (289.87,185.65) .. (290.01,205) ;
\draw [shift={(276.25,174.12)}, rotate = 57.87] [color={rgb, 255:red, 0; green, 0; blue, 0 }  ][line width=0.75]    (10.93,-3.29) .. controls (6.95,-1.4) and (3.31,-0.3) .. (0,0) .. controls (3.31,0.3) and (6.95,1.4) .. (10.93,3.29)   ;
\draw [color={rgb, 255:red, 75; green, 17; blue, 33 }  ,draw opacity=1 ]   (70.01,85) .. controls (70.22,98.24) and (50.89,90.24) .. (50.01,105) ;
\draw [color={rgb, 255:red, 75; green, 17; blue, 33 }  ,draw opacity=1 ]   (90.01,85) .. controls (90.22,98.24) and (70.89,90.24) .. (70.01,105) ;
\draw    (30.01,85) .. controls (29.02,103.04) and (90.22,93.04) .. (90.01,115) ;
\draw [color={rgb, 255:red, 75; green, 17; blue, 33 }  ,draw opacity=1 ][line width=0.75]    (70.01,60) -- (70.01,85) ;
\draw [color={rgb, 255:red, 0; green, 0; blue, 0 }  ,draw opacity=1 ][line width=0.75]    (110.01,60) -- (110.01,81.29) -- (110.01,115) ;
\draw [color={rgb, 255:red, 75; green, 17; blue, 33 }  ,draw opacity=1 ][line width=0.75]    (30.01,105) -- (30.01,130) ;
\draw [color={rgb, 255:red, 75; green, 17; blue, 33 }  ,draw opacity=1 ]   (30.01,130) .. controls (30.22,143.24) and (48.62,135.44) .. (50.01,150) ;
\draw [color={rgb, 255:red, 75; green, 17; blue, 33 }  ,draw opacity=1 ][line width=0.75]    (70.01,105) -- (70.01,115) ;
\draw [color={rgb, 255:red, 75; green, 17; blue, 33 }  ,draw opacity=1 ][line width=0.75]    (50.01,105) -- (50.01,115) ;
\draw [color={rgb, 255:red, 75; green, 17; blue, 33 }  ,draw opacity=1 ]   (50.01,85) .. controls (50.07,88.8) and (48.52,90.85) .. (46.26,92.2) .. controls (40.64,95.53) and (30.63,94.48) .. (30.01,105) ;
\draw [color={rgb, 255:red, 75; green, 17; blue, 33 }  ,draw opacity=1 ][line width=0.75]    (50.01,60) -- (50.01,85) ;
\draw  [line width=0.75]  (175.02,155) -- (245.02,155) -- (245.02,170) -- (175.02,170) -- cycle ;
\draw  [line width=0.75]  (175.02,190) -- (205.02,190) -- (205.02,205) -- (175.02,205) -- cycle ;
\draw [color={rgb, 255:red, 0; green, 0; blue, 0 }  ,draw opacity=1 ]   (170.02,205) .. controls (170.27,214.55) and (172.1,219.78) .. (180.02,220) .. controls (187.93,220.22) and (189.6,215.88) .. (190.02,205) ;
\draw [color={rgb, 255:red, 0; green, 0; blue, 0 }  ,draw opacity=1 ]   (140.02,125) .. controls (140.27,115.45) and (142.1,110.22) .. (150.02,110) .. controls (157.12,109.81) and (159.19,113.27) .. (159.84,121.84) .. controls (159.92,122.82) and (159.97,123.88) .. (160.02,125) ;
\draw [line width=0.75]    (140.02,125) -- (140.02,165) ;
\draw [shift={(140.02,138)}, rotate = 90] [color={rgb, 255:red, 0; green, 0; blue, 0 }  ][line width=0.75]    (10.93,-3.29) .. controls (6.95,-1.4) and (3.31,-0.3) .. (0,0) .. controls (3.31,0.3) and (6.95,1.4) .. (10.93,3.29)   ;
\draw [color={rgb, 255:red, 0; green, 0; blue, 0 }  ,draw opacity=1 ]   (140.02,165) .. controls (139.43,202.64) and (170.23,184.24) .. (170.02,205) ;
\draw [color={rgb, 255:red, 75; green, 17; blue, 33 }  ,draw opacity=1 ][line width=0.75]    (220.02,100) -- (220.02,125) ;
\draw [color={rgb, 255:red, 0; green, 0; blue, 0 }  ,draw opacity=1 ][line width=0.75]    (200.02,170) -- (200.02,190) ;
\draw [color={rgb, 255:red, 0; green, 0; blue, 0 }  ,draw opacity=1 ][line width=0.75]    (240.02,170) -- (240.02,225) ;
\draw [color={rgb, 255:red, 75; green, 17; blue, 33 }  ,draw opacity=1 ]   (200.02,125) .. controls (200.23,138.24) and (180.89,130.24) .. (180.02,145) ;
\draw [color={rgb, 255:red, 75; green, 17; blue, 33 }  ,draw opacity=1 ]   (220.02,125) .. controls (220.23,138.24) and (200.89,130.24) .. (200.02,145) ;
\draw    (160.02,125) .. controls (159.03,143.04) and (220.23,133.04) .. (220.02,155) ;
\draw [color={rgb, 255:red, 75; green, 17; blue, 33 }  ,draw opacity=1 ][line width=0.75]    (200.02,100) -- (200.02,125) ;
\draw [color={rgb, 255:red, 0; green, 0; blue, 0 }  ,draw opacity=1 ][line width=0.75]    (240.02,100) -- (240.02,155) ;
\draw [color={rgb, 255:red, 75; green, 17; blue, 33 }  ,draw opacity=1 ][line width=0.75]    (160.02,145) -- (160.02,170) ;
\draw [color={rgb, 255:red, 75; green, 17; blue, 33 }  ,draw opacity=1 ]   (160.02,170) .. controls (160.23,183.24) and (178.63,175.44) .. (180.02,190) ;
\draw [color={rgb, 255:red, 75; green, 17; blue, 33 }  ,draw opacity=1 ][line width=0.75]    (200.02,145) -- (200.02,155) ;
\draw [color={rgb, 255:red, 75; green, 17; blue, 33 }  ,draw opacity=1 ][line width=0.75]    (180.02,145) -- (180.02,155) ;
\draw [color={rgb, 255:red, 75; green, 17; blue, 33 }  ,draw opacity=1 ]   (180.02,125) .. controls (180.08,128.8) and (178.53,130.85) .. (176.26,132.2) .. controls (170.65,135.53) and (160.64,134.48) .. (160.02,145) ;
\draw [color={rgb, 255:red, 75; green, 17; blue, 33 }  ,draw opacity=1 ][line width=0.75]    (180.02,100) -- (180.02,125) ;
\draw [color={rgb, 255:red, 75; green, 17; blue, 33 }  ,draw opacity=1 ]   (310.01,140) .. controls (310.22,153.24) and (290.89,145.24) .. (290.01,160) ;
\draw [color={rgb, 255:red, 0; green, 0; blue, 0 }  ,draw opacity=1 ]   (290.01,140) .. controls (290.22,153.24) and (310.89,145.24) .. (310.01,160) ;
\draw [color={rgb, 255:red, 0; green, 0; blue, 0 }  ,draw opacity=1 ][line width=0.75]    (270.01,140) -- (270.01,155) ;
\draw  [line width=0.75]  (195.01,30) -- (245.01,30) -- (245.01,45) -- (195.01,45) -- cycle ;
\draw  [line width=0.75]  (175.01,65) -- (205.01,65) -- (205.01,80) -- (175.01,80) -- cycle ;
\draw [color={rgb, 255:red, 75; green, 17; blue, 33 }  ,draw opacity=1 ][line width=0.75]    (220.01,10) -- (220.01,20) ;
\draw [color={rgb, 255:red, 75; green, 17; blue, 33 }  ,draw opacity=1 ][line width=0.75]    (220.01,65) -- (220.01,100) ;
\draw [color={rgb, 255:red, 0; green, 0; blue, 0 }  ,draw opacity=1 ][line width=0.75]    (240.01,45) -- (240.01,85) ;
\draw [color={rgb, 255:red, 75; green, 17; blue, 33 }  ,draw opacity=1 ][line width=0.75]    (200.01,10) -- (200.01,20) ;
\draw [color={rgb, 255:red, 0; green, 0; blue, 0 }  ,draw opacity=1 ][line width=0.75]    (240.01,10) -- (240.01,30) ;
\draw [color={rgb, 255:red, 75; green, 17; blue, 33 }  ,draw opacity=1 ][line width=0.75]    (180.01,10) -- (180.01,65) ;
\draw [color={rgb, 255:red, 75; green, 17; blue, 33 }  ,draw opacity=1 ][line width=0.75]    (220.01,20) -- (220.01,30) ;
\draw [color={rgb, 255:red, 75; green, 17; blue, 33 }  ,draw opacity=1 ][line width=0.75]    (200.01,20) -- (200.01,30) ;
\draw  [line width=0.75]  (450.01,145) -- (520.01,145) -- (520.01,160) -- (450.01,160) -- cycle ;
\draw [color={rgb, 255:red, 75; green, 17; blue, 33 }  ,draw opacity=1 ][line width=0.75]    (455.01,65) -- (455.01,80) ;
\draw  [line width=0.75]  (430.01,115) -- (460.01,115) -- (460.01,130) -- (430.01,130) -- cycle ;
\draw [color={rgb, 255:red, 0; green, 0; blue, 0 }  ,draw opacity=1 ][line width=0.75]    (515.01,160) -- (515.01,180) ;
\draw [color={rgb, 255:red, 0; green, 0; blue, 0 }  ,draw opacity=1 ]   (435.01,160) .. controls (435.26,169.55) and (437.09,174.78) .. (445.01,175) .. controls (452.92,175.22) and (454.59,170.88) .. (455.01,160) ;
\draw [color={rgb, 255:red, 0; green, 0; blue, 0 }  ,draw opacity=1 ]   (415.01,95) .. controls (415.26,85.45) and (417.09,80.22) .. (425.01,80) .. controls (432.92,79.78) and (434.59,84.12) .. (435.01,95) ;
\draw [color={rgb, 255:red, 0; green, 0; blue, 0 }  ,draw opacity=1 ][line width=0.75]    (455.01,130) -- (455.01,145) ;
\draw [color={rgb, 255:red, 75; green, 17; blue, 33 }  ,draw opacity=1 ][line width=0.75]    (495.01,65) -- (495.01,145) ;
\draw  [color={rgb, 255:red, 75; green, 17; blue, 33 }  ,draw opacity=1 ][line width=0.75]  (440.01,80) -- (470.01,80) -- (470.01,95) -- (440.01,95) -- cycle ;
\draw [color={rgb, 255:red, 0; green, 0; blue, 0 }  ,draw opacity=1 ][line width=0.75]    (515.01,65) -- (515.01,145) ;
\draw [color={rgb, 255:red, 75; green, 17; blue, 33 }  ,draw opacity=1 ][line width=0.75]    (475.01,65) -- (475.01,145) ;
\draw    (415.01,110) .. controls (415.21,129.65) and (434.87,140.65) .. (435.01,160) ;
\draw [shift={(421.25,129.12)}, rotate = 57.87] [color={rgb, 255:red, 0; green, 0; blue, 0 }  ][line width=0.75]    (10.93,-3.29) .. controls (6.95,-1.4) and (3.31,-0.3) .. (0,0) .. controls (3.31,0.3) and (6.95,1.4) .. (10.93,3.29)   ;
\draw [color={rgb, 255:red, 75; green, 17; blue, 33 }  ,draw opacity=1 ]   (455.01,95) .. controls (455.22,108.24) and (435.89,100.24) .. (435.01,115) ;
\draw [color={rgb, 255:red, 0; green, 0; blue, 0 }  ,draw opacity=1 ]   (435.01,95) .. controls (435.22,108.24) and (455.89,100.24) .. (455.01,115) ;
\draw [color={rgb, 255:red, 0; green, 0; blue, 0 }  ,draw opacity=1 ][line width=0.75]    (415.01,95) -- (415.01,110) ;
\draw [color={rgb, 255:red, 75; green, 17; blue, 33 }  ,draw opacity=1 ][line width=0.75]    (200.01,80) -- (200.01,90) ;
\draw [color={rgb, 255:red, 75; green, 17; blue, 33 }  ,draw opacity=1 ][line width=0.75]    (180.01,80) -- (180.01,90) ;
\draw [color={rgb, 255:red, 75; green, 17; blue, 33 }  ,draw opacity=1 ][line width=0.75]    (200.01,90) -- (200.01,100) ;
\draw [color={rgb, 255:red, 75; green, 17; blue, 33 }  ,draw opacity=1 ][line width=0.75]    (180.01,90) -- (180.01,100) ;
\draw [color={rgb, 255:red, 75; green, 17; blue, 33 }  ,draw opacity=1 ]   (220.01,45) .. controls (220.22,58.24) and (200.89,50.24) .. (200.01,65) ;
\draw [color={rgb, 255:red, 75; green, 17; blue, 33 }  ,draw opacity=1 ]   (200.01,45) .. controls (200.22,58.24) and (219.57,50.12) .. (220.01,65) ;
\draw  [line width=0.75]  (325.01,30) -- (375.01,30) -- (375.01,45) -- (325.01,45) -- cycle ;
\draw  [line width=0.75]  (305.01,65) -- (335.01,65) -- (335.01,80) -- (305.01,80) -- cycle ;
\draw [color={rgb, 255:red, 75; green, 17; blue, 33 }  ,draw opacity=1 ][line width=0.75]    (350.01,10) -- (350.01,20) ;
\draw [color={rgb, 255:red, 75; green, 17; blue, 33 }  ,draw opacity=1 ][line width=0.75]    (350.01,65) -- (350.01,110) ;
\draw [color={rgb, 255:red, 0; green, 0; blue, 0 }  ,draw opacity=1 ][line width=0.75]    (370.01,45) -- (370.01,85) ;
\draw [color={rgb, 255:red, 75; green, 17; blue, 33 }  ,draw opacity=1 ][line width=0.75]    (330.01,10) -- (330.01,20) ;
\draw [color={rgb, 255:red, 0; green, 0; blue, 0 }  ,draw opacity=1 ][line width=0.75]    (370.01,10) -- (370.01,30) ;
\draw [color={rgb, 255:red, 75; green, 17; blue, 33 }  ,draw opacity=1 ][line width=0.75]    (310.01,10) -- (310.01,65) ;
\draw [color={rgb, 255:red, 75; green, 17; blue, 33 }  ,draw opacity=1 ][line width=0.75]    (350.01,20) -- (350.01,30) ;
\draw [color={rgb, 255:red, 75; green, 17; blue, 33 }  ,draw opacity=1 ][line width=0.75]    (330.01,20) -- (330.01,30) ;
\draw [color={rgb, 255:red, 75; green, 17; blue, 33 }  ,draw opacity=1 ][line width=0.75]    (330.01,80) -- (330.01,110) ;
\draw [color={rgb, 255:red, 75; green, 17; blue, 33 }  ,draw opacity=1 ][line width=0.75]    (310.01,80) -- (310.01,110) ;
\draw [color={rgb, 255:red, 75; green, 17; blue, 33 }  ,draw opacity=1 ]   (350.01,45) .. controls (350.22,58.24) and (330.89,50.24) .. (330.01,65) ;
\draw [color={rgb, 255:red, 75; green, 17; blue, 33 }  ,draw opacity=1 ]   (330.01,45) .. controls (330.22,58.24) and (349.57,50.12) .. (350.01,65) ;
\draw  [draw opacity=0][fill={rgb, 255:red, 75; green, 17; blue, 33 }  ,fill opacity=0.15 ] (360.01,85) -- (380.01,85) -- (380.01,100) -- (360.01,100) -- cycle ;
\draw [color={rgb, 255:red, 0; green, 0; blue, 0 }  ,draw opacity=1 ][line width=0.75]  [dash pattern={on 0.84pt off 2.51pt}]  (360.01,85) -- (380.01,85) ;
\draw [color={rgb, 255:red, 0; green, 0; blue, 0 }  ,draw opacity=1 ][line width=0.75]  [dash pattern={on 0.84pt off 2.51pt}]  (360.01,100) -- (380.01,100) ;
\draw  [draw opacity=0][fill={rgb, 255:red, 75; green, 17; blue, 33 }  ,fill opacity=0.15 ] (230.01,85) -- (250.01,85) -- (250.01,100) -- (230.01,100) -- cycle ;
\draw [color={rgb, 255:red, 0; green, 0; blue, 0 }  ,draw opacity=1 ][line width=0.75]  [dash pattern={on 0.84pt off 2.51pt}]  (230.01,85) -- (250.01,85) ;
\draw [color={rgb, 255:red, 0; green, 0; blue, 0 }  ,draw opacity=1 ][line width=0.75]  [dash pattern={on 0.84pt off 2.51pt}]  (230.01,100) -- (250.01,100) ;
\draw  [draw opacity=0][line width=0.75]  (115.01,84.5) -- (145.01,84.5) -- (145.01,99.5) -- (115.01,99.5) -- cycle ;
\draw  [draw opacity=0][line width=0.75]  (255.01,85) -- (285.01,85) -- (285.01,100) -- (255.01,100) -- cycle ;
\draw  [draw opacity=0][line width=0.75]  (380.01,85) -- (410.01,85) -- (410.01,100) -- (380.01,100) -- cycle ;

\draw (80.01,122.5) node  [font=\footnotesize]  {$f$};
\draw (60.01,157.5) node  [font=\footnotesize]  {$r$};
\draw (340.01,197.5) node  [font=\footnotesize]  {$f$};
\draw (300.01,167.5) node  [font=\footnotesize]  {$\partial r$};
\draw (310.01,132.5) node  [font=\footnotesize,color={rgb, 255:red, 75; green, 17; blue, 33 }  ,opacity=1 ]  {$\mathsf{wait}$};
\draw (210.02,162.5) node  [font=\footnotesize]  {$f$};
\draw (190.02,197.5) node  [font=\footnotesize]  {$r$};
\draw (220.01,37.5) node  [font=\footnotesize]  {$f_{0}$};
\draw (190.01,72.5) node  [font=\footnotesize]  {$r_{0}$};
\draw (485.01,152.5) node  [font=\footnotesize]  {$f$};
\draw (445.01,122.5) node  [font=\footnotesize]  {$\partial r$};
\draw (455.01,87.5) node  [font=\footnotesize,color={rgb, 255:red, 75; green, 17; blue, 33 }  ,opacity=1 ]  {$\mathsf{wait}$};
\draw (350.01,37.5) node  [font=\footnotesize]  {$f_{0}$};
\draw (320.01,72.5) node  [font=\footnotesize]  {$r_{0}$};
\draw (130.01,92.5) node  [font=\footnotesize]  {$=$};
\draw (270.01,93) node  [font=\footnotesize]  {$=$};
\draw (395.01,93) node  [font=\footnotesize]  {$=$};

\end{tikzpicture}
 \caption{The sliding axiom (A5).}
\label{strings:sliding}
\end{figure}

\begin{lem}\label{lemma:sliding}
  The structure $(\STREAM, \fbk)$ satisfies the sliding axiom (A5).
  Given two streams
  $f \in \STREAM(\delay \stream{S} \tensor \stream{X}, \stream{T} \tensor \stream{Y})$, and
  $r \in \STREAM(\stream{T}, \stream{S})$
  we claim that, $\fbk^{\stream{S}}(f ⨾ (r \tensor \im)) = 
  \fbk^{\stream{T}}((\delay r \tensor \im) ⨾ f)$.
\end{lem}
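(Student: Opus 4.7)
The plan is to mirror the pattern used for the other feedback axioms in this section: each lemma (tightening, vanishing, joining, strength) is proved first in the version ``with memories'' by direct coinduction on \(\now\) and \(\later\), and then its memoryless counterpart is deduced as a corollary by plugging in trivial outer memories and rearranging with coherence morphisms. Here the with-memories version \Cref{lemma:slidinglong} has already been established, so what remains is the coherence rearrangement.

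Concretely, I would instantiate \Cref{lemma:slidinglong} by setting all the extraneous memory objects to the monoidal unit: take \(A = P = Q = B = C = I\), and let \(k \colon B \tensor Q \to C \tensor P\) be the identity (or, more precisely, the canonical unitor \(\lambda_{I} \colon I \tensor I \to I \tensor I\)). With this choice, the streams \(f^{A \tensor P}\) and \(r^{C}\) become just \(f\) and \(r\) up to left/right unitors, and the memories appearing on both sides of the equation in \Cref{lemma:slidinglong} reduce to \(M(f) \tensor S_{0}\) on the left and \(M(f) \tensor T_{0}\) on the right.

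The equation supplied by \Cref{lemma:slidinglong} in this special case reads
\[\act{\im}{\fbk^{\stream{S}}(f ⨾ (r \tensor \im))} = \fbk^{\stream{T}}(\act{\im}{(\delay r \tensor \im) ⨾ f}),\]
where the implicit \(\act{\sigma ⨾ (\im \tensor \now(r))}{-}\) action collapses because the symmetry \(\sigma_{I,I}\) is the identity and \(\now(r)\) is absorbed into the unitor. Strip off the surviving unitors using the \hyperlink{linkusingcoherence}{coherence theorem} (as done in the proofs of \Cref{lemma:tightening,lemma:vanishing,lemma:joining,lemma:strength}) and we obtain the desired equation
\[\fbk^{\stream{S}}(f ⨾ (r \tensor \im)) = \fbk^{\stream{T}}((\delay r \tensor \im) ⨾ f)\]
in \(\STREAM(\stream{X},\stream{Y})\).

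I do not expect any serious obstacle: the hard coinductive work has already been done inside \Cref{lemma:slidinglong}. The only delicate point is making sure the coherence cleanup is done consistently, in particular that the instance of \(k\) we pick, after being absorbed by \(\act{k}{-}\), really does unfold into the correct unitor on both sides so that the two expressions match on the nose (rather than only up to a nontrivial isomorphism); this is the same routine verification that appears, e.g., in the derivation of \Cref{lemma:strength} from \Cref{lemma:strengthlong}.
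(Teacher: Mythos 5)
Your proposal is correct and follows essentially the same route as the paper, whose entire proof of this lemma is ``Consequence of \Cref{lemma:slidinglong}, after applying the necessary coherence morphisms'': you instantiate the with-memories version at trivial memories and strip the unitors. One sentence of yours is slightly off — the mediating morphism $\sigma \dcomp (\now(r) \tensor \im)$ is the \emph{internal} dinaturality witness relating the two memories $M(f) \tensor M(r) \tensor S_{0}$ and $M(r) \tensor M(f) \tensor T_{0}$ inside the proof of \Cref{lemma:slidinglong}, and it does not collapse in the memoryless case — but this is immaterial, since the conclusion of \Cref{lemma:slidinglong} is already an equality of equivalence classes in $\STREAM$, so you only need to trivialise the external $\act{k}{-}$ action, exactly as you do.
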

\begin{proof}
  Consequence of \Cref{lemma:slidinglong}, after applying the necessary coherence morphisms.
\end{proof}

\begin{thm}[From \Cref{th:monoidalstreamsfeedback}]
  \label{th:monoidalstreamsfeedback}
  \label{th:appendix:monoidalstreamsfeedback}
  \MonoidalStreams{} over a \productive{} \symmetricMonoidalCategory{}
  $(\catC, \tensor, \monoidalunit)$ form a \feedbackMonoidalCategory{}
  $(\STREAM, \fbk)$ over the functor $\partial \colon \NcatC \to \NcatC$.
\end{thm}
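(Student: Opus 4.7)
The plan is to assemble the result directly from the pieces already established in this section. By \Cref{th:monoidalstreamscategory}, monoidal streams over a productive symmetric monoidal category form a symmetric monoidal category $\STREAM$, so what remains is to verify the data and axioms of a feedback monoidal category from \Cref{def:feedback}: a symmetric strong monoidal endofunctor and a feedback operator satisfying the five axioms (A1)--(A5).

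First I would argue that the delay endofunctor $\partial \colon \STREAM \to \STREAM$ introduced in \Cref{def:delay-fun-stream} is symmetric strong monoidal. Since $\partial$ prepends a unit object to a sequence and precomposes the first action with identities, the coherence isomorphisms $\partial \stream{X} \tensor \partial \stream{Y} \cong \partial(\stream{X} \tensor \stream{Y})$ and $\partial I \cong I$ are inherited coinductively from those of $\NcatC$, and compatibility with the monoidal structure and symmetry reduces, after one unfolding, to compatibility in $\NcatC$ and then, coinductively, to itself. This is a short coinductive check in the style of the associativity and functoriality proofs of \Cref{lemma:sequentialcompositionassociative,lemma:parfun}.

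Next, I would take the feedback operation $\fbk \colon \STREAM(\partial \stream{S} \tensor \stream{X}, \stream{S} \tensor \stream{Y}) \to \STREAM(\stream{X}, \stream{Y})$ defined coinductively just before this theorem, and verify each axiom by invoking the corresponding lemma already proved:
\begin{itemize}
\item (A1) Tightening is exactly \Cref{lemma:tightening}.
\item (A2) Vanishing is \Cref{lemma:vanishing}.
\item (A3) Joining is \Cref{lemma:joining}.
\item (A4) Strength is \Cref{lemma:strength}.
\item (A5) Sliding is \Cref{lemma:sliding}.
\end{itemize}
Each of these has already been reduced from the statement with memories (\Cref{lemma:tighteninglong,lemma:vanishinglong,lemma:joininglong,lemma:strengthlong,lemma:slidinglong}) to the statement without memories by absorbing the appropriate coherence morphism.

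Since no serious obstacle remains at this stage — the essential coinductive work lives in the with-memories lemmas above, and well-definedness up to dinaturality is baked into \Cref{def:monoidalstream} — the proof amounts to bundling these components. The only minor care required is to note that the coinductive definitions of sequential composition, parallel composition and feedback respect the dinatural quotient on memories, which follows from the fact that each step was defined using operations (tensor, composition, and coinductive recursion on $\later$) that preserve sliding equivalence, as witnessed by the sliding lemmas themselves. With this in place, $(\STREAM, \tensor, I, \partial, \fbk)$ satisfies all the conditions of \Cref{def:feedback} and the theorem follows.
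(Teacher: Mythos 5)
Your proposal matches the paper's own proof: the paper also assembles the theorem by citing the symmetric monoidal structure from \Cref{th:monoidalstreamscategory} and then discharging axioms (A1)--(A5) via \Cref{lemma:tightening,lemma:vanishing,lemma:joining,lemma:strength,lemma:sliding}, each of which was reduced from its with-memories version. Your additional remarks on the strong monoidality of $\partial$ and on compatibility with the dinatural quotient are points the paper leaves implicit (in \Cref{def:delay-fun-stream} and \Cref{def:monoidalstream}), so the argument is correct and takes essentially the same route.
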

\begin{proof}
  We have proven that it satisfies all the feedback axioms:
  \begin{itemize}
    \item the tightening axiom by \Cref{lemma:tightening},
    \item the vanishing axiom by \Cref{lemma:vanishing},
    \item the joining axiom by \Cref{lemma:joining},
    \item the strength axiom by \Cref{lemma:strength},
    \item and the sliding axiom by \Cref{lemma:sliding}.
  \end{itemize}
  Thus, it is a feedback structure.
\end{proof}

\begin{cor}\label{prop:functor-sequences-streams}
  There is a ``semantics'' identity-on-objects feedback monoidal functor
  \(\Semantics{} \colon \St_{\delay}{\NcatC} \to \STREAM\) 
  from the free \feedbackMonoidalCategory{} on the functor $\partial \colon \NcatC \to \NcatC$ to the category of monoidal streams.
  Every \extensionalStatefulSequence{} $\extseq{f_{n} \colon M_{n-1} \tensor X_{n} \to Y_{n} \tensor M_{n}}$ gives a monoidal stream $\Semantics{}(f)$, which is defined by $M(\Semantics{}(f)) = M_{0}$,
  $$\now(\Semantics{}(f)) = f_{0},\mbox{ and }\later(\Semantics{}(f)) = \Semantics{}(\tail{f}),$$
  and this is well-defined.
  Moreover, this functor is full when \(\catC\) is \productive{}; it is not generally faithful.
\end{cor}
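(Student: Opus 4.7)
The plan is to apply the universal property of the free feedback monoidal category (\Cref{th:free-feedback}). The assignment $\stream{f} \mapsto [\stream{f}]$ from \Cref{def:inclusion} gives a symmetric monoidal identity-on-objects functor $\iota \colon \NcatC \to \STREAM$. Functoriality is a one-step coinductive check: parallel and sequential composition of memoryless streams coincide with the memoryless lifts of the corresponding composites, since the memory $M([\stream{f}]) = \monoidalunit$ is neutral in \Cref{def:sequentialstream,def:parallelstream}. Likewise $\iota$ commutes strictly with the delay endofunctor: unfolding \Cref{def:delay-fun-stream}, both $\delay[\stream{f}]$ and $[\delay \stream{f}]$ have trivial memory, identity first action, and $[\stream{f}]$ as tail. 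Invoking \Cref{th:free-feedback} then produces a unique feedback monoidal functor $\Semantics{} \colon \St_{\delay}(\NcatC) \to \STREAM$ extending $\iota$, automatically identity-on-objects.

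To identify the coinductive formula, I would use \Cref{th:ext-stateful-sequences}: a morphism of $\St_{\delay}(\NcatC)$ is presented by a dinatural sequence $\extseq{f_{n} \colon M_{n-1} \tensor X_{n} \to M_{n} \tensor Y_{n}}$, which in the free feedback category is the image under $\fbk$ of the memoryless stream lifted from $(f_{n})_{n}$. Since $\Semantics{}$ is a feedback functor, one step of the definition of $\fbk$ in $\STREAM$ yields $M(\Semantics{}(\extseq{f})) = M_{0}$, $\now(\Semantics{}(\extseq{f})) = f_{0}$, and $\later(\Semantics{}(\extseq{f})) = \Semantics{}(\tail{\extseq{f}})$, matching the stated formula. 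Well-definedness with respect to sliding equivalence is automatic, because $\Semantics{}$ is defined on the sliding quotient $\St_{\delay}(\NcatC)$; concretely, a sliding $r \colon S \to T$ on the first memory translates under $\Semantics{}$ into the dinaturality of $M$ built into \Cref{def:monoidalstream}, which witnesses equality of the resulting monoidal streams.

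For fullness in the productive case, I would appeal to \Cref{theorem:observationalfinalcoalgebra}: observational sequences constitute the final coalgebra, hence coincide with $\STREAM(\stream{X},\stream{Y})$ via the same coinductive formula. Every observational sequence is represented by a dinatural sequence, so each monoidal stream lies in the image of $\Semantics{}$. For the failure of faithfulness, \Cref{example:observe} combined with \Cref{prop:semicartesian-sequences,prop:semicartesian-streams} shows that in a semicartesian setting, dinaturally distinct sequences such as $\fbk(\morphfby_{\stream{X}})$ and the discard stream can be identified as monoidal streams, so $\Semantics{}$ is not generally injective on hom-sets.

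The main obstacle will be ensuring that the inclusion $\iota$ is strictly a symmetric strong monoidal functor commuting with $\delay$; once this is in place, the whole statement reduces to invocations of the universal property and the terminal coalgebra characterisation of productive streams. In practice these coherences amount to a handful of one-step bisimulation arguments comparing memoryless streams, which are straightforward from the coinductive definitions and the fact that $M([\stream{f}]) = \monoidalunit$ collapses all coherence cells on the memory channel.
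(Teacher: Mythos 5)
Your proposal is correct and follows essentially the same route as the paper: the paper's (much terser) proof also obtains $\Semantics{}$ from the universal property of the free feedback monoidal category (\Cref{th:free-feedback}) applied to the feedback structure on $\STREAM$ (\Cref{th:monoidalstreamsfeedback}), and derives fullness from the identification of monoidal streams with observational sequences (\Cref{theorem:observationalfinalcoalgebra}). The extra details you supply — checking that the memoryless inclusion commutes with $\delay$, unfolding one step of $\fbk$ to recover the coinductive formula, and citing \Cref{example:observe} for the failure of faithfulness — are exactly the verifications the paper leaves implicit.
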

\begin{proof}
  We construct $\Semantics{}$ from \Cref{th:monoidalstreamsfeedback,th:free-feedback}. Moreover, when \(\catC\) is \productive{}, by~\Cref{theorem:observationalfinalcoalgebra}, \monoidalStreams{} are \extensionalSequences{} quotiented by \observationalEquality{}, giving the fullness of the functor.
\end{proof}
\section{Cartesian Streams}
\label{section:classicalstreams}

Dataflow languages such as \Lucid{} or \Lustre{} \cite{wadge1985lucid,halbwachs1991lustre} can be thought
of as using an underlying cartesian monoidal structure: we can copy and discard variables and resources without affecting the order of operations.
These abilities correspond exactly to cartesianness thanks to Fox's theorem (\Cref{th:fox}, see \cite{fox76}).

\subsection{Causal stream functions}

In the cartesian case, there is available literature on the categorical semantics of dataflow programming languages~\cite{benveniste93,uustalu2008comonadic,cousot19,delpeuch19,oliveira84}.
Uustalu and Vene~\cite{uustalu05} provide elegant comonadic semantics to a \Lucid-like programming language using the non-empty list comonad.
In their framework, \emph{streams} with types \(\stream{X} = \streamExpr{X}\) are families of elements $\mathbf{1} \to X_{n}$.
\emph{Causal stream functions} from \(\stream{X} = \streamExpr{X}\) to \(\stream{Y} = \streamExpr{Y}\) are families of functions $f_{n} \colon X_{0} \times \dots \times X_{n} \to Y_{n}$.
Equivalently, they are, respectively, the states $(\stream{1} \to \stream{X})$ and morphisms $(\stream{X} \to \stream{Y})$ of the cokleisli category of the comonad $\neList \colon \NSet \to \NSet$ defined by
\[(\neList(\stream{X}))_{n} \coloneqq \prod^{n}_{i=0} X_{i}.\]
The functor underlying this comonad can be defined on all symmetric monoidal categories.

\begin{defi}
  Let $(\catC,\tensor,\monoidalunit)$ be a symetric monoidal category.
  There is a functor
  $\defining{linknelist}{\ensuremath{\fun{List}^{+}}} \colon \catC \to \catC$ defined on objects by
  \[\neList(X)_{n} \coloneqq \bigotimes^{n}_{i=0} X_{i}.\]
  This functor is oplax monoidal with the natural transformations $\psi^{+}_{0} \colon \neList(\monoidalunit) \to \monoidalunit$ and $\psi_{X,Y} \colon \neList(X \tensor Y) \to \neList(X) \otimes \neList(Y)$ given by symmetries, associators and unitors.
\end{defi}

The comonad structure of \(\neList\), however, can be extended to other base categories \emph{only} as long as $\catC$ is cartesian.
More precisely, the structure of \(\neList\) on cartesian categories is that of an \emph{oplax monoidal comonad}.

\begin{defi}[Oplax monoidal comonad] In a monoidal category $(\catC,\otimes,I)$,
  a comonad $(R,\varepsilon,\delta)$
  is an \emph{oplax monoidal comonad} when the endofunctor
  $R \colon \catC \to \catC$ is oplax monoidal with laxators
  $\psi_{X,Y} \colon R(X \otimes Y) \to RX \otimes RY$ and $\psi^{I} \colon RI \to I$,
  and both the counit $\varepsilon_{X} \colon RX \to X$ and the comultiplication
  $\delta_{X} \colon RX \to RRX$ are monoidal natural transformations.

  Explicitly, the comonad structure needs to satisfy: $\varepsilon_{I} = \psi_{0}$,
  $\varepsilon_{X \otimes Y} = \psi_{X,Y} \comp (\varepsilon_X \otimes \varepsilon_Y)$,
  $\delta_{I} \comp \psi_{0} \comp \psi_{0} = \psi_{0}$ and
  $\delta_{X \otimes Y} \comp \psi_{X,Y} \comp \psi_{RX,RY} = 
  \psi_{X,Y} \comp (\delta_{X} \otimes \delta_{Y})$.

  Alternatively, an \emph{oplax monoidal comonad} is a comonoid in the bicategory $\mathbf{MonOplax}$ of oplax monoidal functors with composition and monoidal natural transformations.
\end{defi}

Indeed, we can prove (\Cref{th:nelist}) that the mere existence of such a comonad implies cartesianness of the base category.
For this, we introduce a refined version of Fox's theorem (\Cref{th:refinedfoxappendix}).

\subsection{Fox's theorem}

Fox's theorem~\cite{fox76} is a classical characterisation of cartesian monoidal categories in terms of the existence of a uniform cocommutative comonoid structure on all objects of a monoidal category.

\begin{thm}[Fox's theorem~\cite{fox76}]\label{th:fox}
  A \symmetricMonoidalCategory{} $(\catC,\otimes,I)$ is cartesian monoidal if and only if every object $X \in \catC$ has a cocommutative comonoid structure $(X,\varepsilon_{X},\delta_{X})$, every morphism of the category $f \colon X \to Y$ is a comonoid homomorphism, and this structure is uniform across the monoidal category, meaning that
  $\varepsilon_{X \otimes Y} = \varepsilon_{X} \otimes \varepsilon_{Y}$, that
  $\varepsilon_{I} = \mathrm{id}$, that $\delta_{I} = \mathrm{id}$ and that
  $\delta_{X \otimes Y} = (\delta_{X} \otimes \delta_{Y}) \dcomp (\mathrm{id} \otimes \sigma_{X,Y} \otimes \mathrm{id})$.
\end{thm}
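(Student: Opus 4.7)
The plan is to prove the two directions separately, each being essentially a routine verification once the correct data is identified; the only subtlety lies in extracting universal pairings from the comonoid structure in the backward direction.

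For the forward direction, suppose $(\catC,\otimes,I)$ is cartesian monoidal. Then for each object $X$, define $\varepsilon_X \colon X \to I$ as the unique map to the terminal object and $\delta_X \colon X \to X \otimes X$ as the diagonal $\langle \im_X, \im_X \rangle$. Cocommutativity, counitality and coassociativity are immediate from the universal property of products, as is the fact that every morphism $f \colon X \to Y$ is a comonoid homomorphism: both $\delta_Y \dcomp (f \otimes f)$ and $f \dcomp \delta_X$ have the same two projections, and both $\varepsilon_Y \dcomp f$ and $\varepsilon_X$ land in the terminal object. Uniformity equations such as $\delta_{X \otimes Y} = (\delta_X \otimes \delta_Y) \dcomp (\im \otimes \sigma_{X,Y} \otimes \im)$ follow from the universal property applied to $X \otimes Y$, since both sides are characterised by the same pairs of projections.

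For the backward direction, assume the uniform cocommutative comonoid structure. I would first show that $I$ is terminal: given any $f \colon X \to I$, the fact that $f$ is a comonoid homomorphism together with $\varepsilon_I = \im$ and $\delta_I = \im$ forces $f = \varepsilon_X$. Next, I would exhibit $X \otimes Y$ as a categorical product by defining projections
\[\pi_1 \defn (\im_X \otimes \varepsilon_Y) \dcomp \rho_X, \qquad \pi_2 \defn (\varepsilon_X \otimes \im_Y) \dcomp \lambda_Y,\]
and, for any pair $f \colon Z \to X$, $g \colon Z \to Y$, defining the pairing
\[\langle f, g \rangle \defn \delta_Z \dcomp (f \otimes g).\]
The projection equations $\pi_1 \dcomp \langle f,g \rangle = f$ and $\pi_2 \dcomp \langle f,g \rangle = g$ follow from naturality of $\varepsilon$ (as a consequence of every morphism being a comonoid homomorphism) combined with counitality of $\delta_Z$.

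The key step is uniqueness. Given any $h \colon Z \to X \otimes Y$ with $\pi_1 \dcomp h = f$ and $\pi_2 \dcomp h = g$, I would use that $h$ is itself a comonoid homomorphism together with the uniformity equation $\delta_{X \otimes Y} = (\delta_X \otimes \delta_Y) \dcomp (\im \otimes \sigma_{X,Y} \otimes \im)$ to rewrite
\[h = h \dcomp \delta_{X \otimes Y} \dcomp (\pi_1 \otimes \pi_2) = \delta_Z \dcomp (h \otimes h) \dcomp (\pi_1 \otimes \pi_2) = \delta_Z \dcomp (f \otimes g) = \langle f,g\rangle.\]
This is the main obstacle: getting the coherence data and the uniformity of $\delta$ to line up correctly with the comonoid-homomorphism property of an arbitrary $h$. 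Once this is in place, $X \otimes Y$ is the categorical product and $I$ is terminal, so the symmetric monoidal structure coincides with the cartesian one, completing the proof.
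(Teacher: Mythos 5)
Your proof is correct, but note that the paper never proves \Cref{th:fox}: the theorem is imported from Fox's original paper, and the only proof given in this manuscript is that of the refinement \Cref{th:refinedfoxappendix}, which derives cocommutativity and coassociativity of a uniform natural counital comagma and then \emph{invokes} the classical statement you have just proved. So your argument supplies exactly the step the paper outsources, and it is the standard one. The substance is sound: terminality of $I$ follows from $\varepsilon_I = \im$ together with every $f \colon X \to I$ being a counit-preserving homomorphism; the projection laws for the pairing $\langle f,g\rangle \defn \delta_Z \dcomp (f\otimes g)$ follow from counitality of $\delta_Z$ plus naturality of $\varepsilon$; and the uniqueness chain works because uniformity and counitality give $\delta_{X\otimes Y}\dcomp(\pi_1\otimes\pi_2)=\im_{X\otimes Y}$, after which the $\delta$-homomorphism property of $h$ does the rest. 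Two small points. First, your composition order is inconsistent: the paper's $\dcomp$ is diagrammatic, so the hypotheses should read $h \dcomp \pi_1 = f$ and $h \dcomp \pi_2 = g$; your displayed uniqueness chain is already in diagrammatic order, but your projection equations are written applicatively. Second, to conclude ``cartesian monoidal'' in the paper's sense (the tensor \emph{is} the categorical product and the unit \emph{is} terminal) you should add the remark that the action of $\otimes$ on morphisms and the coherence isomorphisms agree with the canonical ones induced by the universal property; this is immediate from uniqueness of pairings together with naturality of $\varepsilon$, but it is the one sentence missing behind ``the symmetric monoidal structure coincides with the cartesian one.''
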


Most sources ask the comonoid structure in Fox's theorem
(\Cref{th:fox}) to be cocommutative~\cite{fox76,fong2019supplying}.
However, cocommutativity and coassociativity of the comonoid structure are implied by uniformity and naturality of the structure.
We present an original refined version of Fox's theorem, which asks for a counital comagma structure on every object.

\begin{defi}
  A \emph{comagma} in a monoidal category \(\cat{C}\) is a pair \((X, \delta)\) of an object \(X\) of \(\cat{C}\) and a morphism \(\delta \colon X \to X \tensor X\).
  A comagma is \emph{counital} if, additionally, we specify a morphism \(\varepsilon \colon X \to I\) and this makes \(\delta\) unital: \(\delta \dcomp (\id{} \tensor \varepsilon) = \id{} = \delta \dcomp (\varepsilon \tensor \id{})\) (\Cref{fig:comagma}, left).
  A \emph{homomorphism} of counital comagmas \(f \colon (X,\varepsilon_{X},\delta_{X}) \to (Y,\varepsilon_{Y},\delta_{Y})\) is a morphism \(f \colon X \to Y\) that preserves the counit and the comultiplication: \(f \dcomp \delta_{Y} = \delta_{X} \dcomp (f \tensor f)\) and \(f \dcomp \varepsilon_{Y} = \varepsilon_{X}\) (\Cref{fig:comagma}, right).
\end{defi}
\begin{figure}[h!]
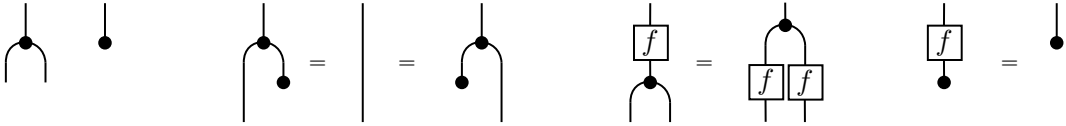

  \[\comagmaFig\]
  \caption{Structure and axioms of a counital comagma (left) and axioms of counital comagma homomorphism (right).\label{fig:comagma}}
\end{figure}

\begin{thm}[Refined Fox's theorem, c.f.~{{\cite{mellies2009categorical}}}]\label{th:refinedfoxappendix}
  A symmetric monoidal category $(\catC,\otimes,I)$ is cartesian monoidal if and
  only if every object $X \in \catC$ has a counital comagma structure
  $(X,\varepsilon_{X},\delta_{X})$, or $(X,\blackComonoidUnit_{X},\blackComonoid_{X})$, every morphism of the category
  $f \colon X \to Y$ is a comagma homomorphism, and this structure is uniform
  across the monoidal category: meaning that
  $\varepsilon_{X \otimes Y} = \varepsilon_{X} \otimes \varepsilon_{Y}$,
  $\varepsilon_{I} = \mathrm{id}$, $\delta_{I} = \mathrm{id}$ and
  $\delta_{X \otimes Y} = (\delta_{X} \otimes \delta_{Y}) ; (\im \tensor \sigma_{X,Y} \tensor \im)$.
\end{thm}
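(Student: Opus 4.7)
The forward direction is immediate: in a cartesian monoidal category the diagonal $\delta_X = \langle \id_X, \id_X \rangle$ and the unique map $\varepsilon_X \colon X \to I$ to the terminal object form a counital (in fact coassociative and cocommutative) comagma, uniformity follows from the universal property of products, and every morphism is a comagma homomorphism because it is a morphism into a product. So the content is in the reverse direction: assume every object carries a natural, uniform counital comagma. I will construct products directly, without first deriving coassociativity or cocommutativity (they will follow from Fox's original theorem once cartesianness is established).

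First, I will show that $I$ is terminal. Given any $h \colon Z \to I$, the fact that $h$ is a comagma homomorphism gives $h \dcomp \varepsilon_I = \varepsilon_Z$; combined with $\varepsilon_I = \id_I$ from uniformity, this forces $h = \varepsilon_Z$. Next, for any pair of objects $X, Y$, define projections $\pi_1 \defn (\id_X \tensor \varepsilon_Y) \colon X \tensor Y \to X$ and $\pi_2 \defn (\varepsilon_X \tensor \id_Y) \colon X \tensor Y \to Y$, and for any $f \colon Z \to X$, $g \colon Z \to Y$ define the pairing $\langle f, g \rangle \defn \delta_Z \dcomp (f \tensor g)$. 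The projection equation $\langle f, g\rangle \dcomp \pi_1 = f$ unfolds to $\delta_Z \dcomp (f \tensor (g \dcomp \varepsilon_Y))$, and since $g$ is a comagma homomorphism we have $g \dcomp \varepsilon_Y = \varepsilon_Z$, after which counitality of $\delta_Z$ gives $f$. The symmetric computation handles $\pi_2$.

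The main obstacle, and the step I would spend the most care on, is uniqueness of pairings. Given $h \colon Z \to X \tensor Y$ with $h \dcomp \pi_1 = f$ and $h \dcomp \pi_2 = g$, the naturality of the comagma at $h$ yields
\[ h \dcomp \delta_{X \tensor Y} = \delta_Z \dcomp (h \tensor h). \]
Post-composing both sides with $\pi_1 \tensor \pi_2$ gives $\delta_Z \dcomp (f \tensor g) = \langle f, g \rangle$ on the right. To conclude $h = \langle f, g\rangle$, I need the key identity
\[ \delta_{X \tensor Y} \dcomp (\pi_1 \tensor \pi_2) = \id_{X \tensor Y}. \]
This is where uniformity of the comagma enters decisively. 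Expanding $\delta_{X \tensor Y}$ as $(\delta_X \tensor \delta_Y) \dcomp (\id \tensor \sigma_{X,Y} \tensor \id)$ and then composing with $(\id_X \tensor \varepsilon_Y) \tensor (\varepsilon_X \tensor \id_Y)$, the symmetry cancels against the discards and the computation factors into two counit laws, $\delta_X \dcomp (\id \tensor \varepsilon_X) = \id_X$ and $\delta_Y \dcomp (\varepsilon_Y \tensor \id) = \id_Y$, giving the identity on $X \tensor Y$.

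Having verified the universal property of binary products and that $I$ is terminal, the category is cartesian monoidal (its tensor agrees with the categorical product up to canonical isomorphism). I expect the verification of the key identity $\delta_{X \tensor Y} \dcomp (\pi_1 \tensor \pi_2) = \id$ to be cleanest done string-diagrammatically; the only subtlety to watch is that uniformity is stated for $\delta$ and $\varepsilon$ separately, so the coherence isomorphisms need to be tracked carefully when reassociating $\delta_X \tensor \delta_Y$ against the projections.
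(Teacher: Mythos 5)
Your proof is correct, but it takes a genuinely different route from the paper's. The paper never verifies the universal property of the product directly: it observes that, since \emph{every} morphism is a comagma homomorphism, the comultiplication $\delta_{X}$ is itself one, which together with uniformity yields the single equation $\delta_{X} \comp (\delta_{X} \tensor \delta_{X}) = \delta_{X} \comp (\delta_{X} \tensor \delta_{X}) \comp (\im \tensor \sigma \tensor \im)$; postcomposing with $\varepsilon \tensor \im \tensor \im \tensor \varepsilon$ gives cocommutativity, postcomposing with $\im \tensor \varepsilon \tensor \im \tensor \im$ gives coassociativity, and then the classical Fox theorem (\Cref{th:fox}) finishes the job. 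You instead rebuild the product structure from scratch: the unit is terminal because $\varepsilon_{I} = \mathrm{id}$ forces $h = \varepsilon_{Z}$ for any $h \colon Z \to I$; the projections are discards, the pairing is $\delta_{Z} \comp (f \tensor g)$, existence follows from counitality plus preservation of counits, and uniqueness follows from your key identity $\delta_{X \tensor Y} \comp (\pi_{1} \tensor \pi_{2}) = \mathrm{id}$, which does hold by uniformity of $\delta$ and the two counit laws exactly as you describe. Your version is self-contained and makes visible that coassociativity and cocommutativity are never needed to establish the universal property (they follow a posteriori once cartesianness is known); the paper's version is shorter given that classical Fox is already available, and it isolates the genuinely new content of the refinement, namely that naturality plus uniformity force the missing comonoid axioms. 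Both arguments are complete.
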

\begin{proof}
  We prove that such a comagma structure is necessarily coassociative and cocommutative.
  Note that any comagma homomorphism $f \colon A \to B$ must satisfy $\delta_{A} \comp (f \tensor f) = f \comp \delta_{B}$.
  In particular, $\delta_{X} \colon X \to X \tensor X$ must itself be a comagma homomorphism
  (see \Cref{figure:comultiplication}), meaning that
  \begin{equation}\label{eq:comultiplicationhomomorphism}
    \delta_{X} \comp (\delta_X \tensor \delta_{X}) =
    \delta_{X} \comp \delta_{X \tensor X} =
    \delta_{X} \comp (\delta_X \tensor \delta_{X}) \comp 
    (\im \tensor \sigma_{X,Y} \tensor \im),
  \end{equation}
  where the second equality follows by uniformity.
  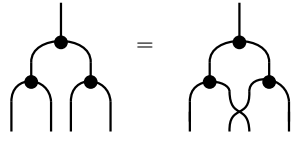
\begin{figure}[h]
\tikzset{every picture/.style={line width=0.85pt}}
\begin{tikzpicture}[x=0.75pt,y=0.75pt,yscale=-1,xscale=1]
\draw    (25,65) .. controls (24.8,50.8) and (44.2,51.6) .. (45,65) ;
\draw  [fill={rgb, 255:red, 0; green, 0; blue, 0 }  ,fill opacity=1 ] (32.1,55) .. controls (32.1,53.4) and (33.4,52.1) .. (35,52.1) .. controls (36.6,52.1) and (37.9,53.4) .. (37.9,55) .. controls (37.9,56.6) and (36.6,57.9) .. (35,57.9) .. controls (33.4,57.9) and (32.1,56.6) .. (32.1,55) -- cycle ;
\draw    (35,45) -- (35,55) ;
\draw    (35,45) .. controls (34.8,30.8) and (65.2,30.8) .. (65,45) ;
\draw  [fill={rgb, 255:red, 0; green, 0; blue, 0 }  ,fill opacity=1 ] (47.1,35) .. controls (47.1,33.4) and (48.4,32.1) .. (50,32.1) .. controls (51.6,32.1) and (52.9,33.4) .. (52.9,35) .. controls (52.9,36.6) and (51.6,37.9) .. (50,37.9) .. controls (48.4,37.9) and (47.1,36.6) .. (47.1,35) -- cycle ;
\draw    (25,65) -- (25,80) ;
\draw    (45,65) -- (45,80) ;
\draw    (50,15) -- (50,35) ;
\draw    (55,65) .. controls (54.8,50.8) and (74.2,51.6) .. (75,65) ;
\draw  [fill={rgb, 255:red, 0; green, 0; blue, 0 }  ,fill opacity=1 ] (62.1,55) .. controls (62.1,53.4) and (63.4,52.1) .. (65,52.1) .. controls (66.6,52.1) and (67.9,53.4) .. (67.9,55) .. controls (67.9,56.6) and (66.6,57.9) .. (65,57.9) .. controls (63.4,57.9) and (62.1,56.6) .. (62.1,55) -- cycle ;
\draw    (65,45) -- (65,55) ;
\draw    (55,65) -- (55,80) ;
\draw    (75,65) -- (75,80) ;
\draw    (115,65) .. controls (114.8,50.8) and (135.4,53.8) .. (135,60) ;
\draw  [fill={rgb, 255:red, 0; green, 0; blue, 0 }  ,fill opacity=1 ] (122.1,55) .. controls (122.1,53.4) and (123.4,52.1) .. (125,52.1) .. controls (126.6,52.1) and (127.9,53.4) .. (127.9,55) .. controls (127.9,56.6) and (126.6,57.9) .. (125,57.9) .. controls (123.4,57.9) and (122.1,56.6) .. (122.1,55) -- cycle ;
\draw    (125,45) -- (125,55) ;
\draw    (125,45) .. controls (124.8,30.8) and (155.2,30.8) .. (155,45) ;
\draw  [fill={rgb, 255:red, 0; green, 0; blue, 0 }  ,fill opacity=1 ] (137.1,35) .. controls (137.1,33.4) and (138.4,32.1) .. (140,32.1) .. controls (141.6,32.1) and (142.9,33.4) .. (142.9,35) .. controls (142.9,36.6) and (141.6,37.9) .. (140,37.9) .. controls (138.4,37.9) and (137.1,36.6) .. (137.1,35) -- cycle ;
\draw    (115,70) -- (115,80) ;
\draw    (140,15) -- (140,35) ;
\draw    (145,60) .. controls (144.6,50.2) and (164.2,51.6) .. (165,65) ;
\draw  [fill={rgb, 255:red, 0; green, 0; blue, 0 }  ,fill opacity=1 ] (152.1,55) .. controls (152.1,53.4) and (153.4,52.1) .. (155,52.1) .. controls (156.6,52.1) and (157.9,53.4) .. (157.9,55) .. controls (157.9,56.6) and (156.6,57.9) .. (155,57.9) .. controls (153.4,57.9) and (152.1,56.6) .. (152.1,55) -- cycle ;
\draw    (155,45) -- (155,55) ;
\draw    (165,65) -- (165,75) ;
\draw    (115,65) -- (115,75) ;
\draw    (135,60) .. controls (135.4,74.2) and (144.6,65.8) .. (145,80) ;
\draw    (165,70) -- (165,80) ;
\draw    (145,60) .. controls (145.4,74.2) and (134.6,65.8) .. (135,80) ;
\draw (86,33.4) node [anchor=north west][inner sep=0.75pt]    {$=$};
\end{tikzpicture}     \caption{Comultiplication is a comagma homomorphism.}
    \label{figure:comultiplication}
  \end{figure}

  Now, we prove cocommutativity (\Cref{string:cocommutative}): composing both sides of \Cref{eq:comultiplicationhomomorphism} with $(\epsilon_{X} \tensor \im \tensor \im \tensor \epsilon_{X})$ discards the two external outputs and gives $\delta_{X} = \delta_{X} \comp \sigma_{X}$.
  \begin{figure}[H]
\tikzset{every picture/.style={line width=0.75pt}} %
\begin{tikzpicture}[x=0.75pt,y=0.75pt,yscale=-1,xscale=1]
\draw    (72.1,65) .. controls (71.9,50.8) and (91.3,51.6) .. (92.1,65) ;
\draw  [fill={rgb, 255:red, 0; green, 0; blue, 0 }  ,fill opacity=1 ] (79.2,55) .. controls (79.2,53.4) and (80.5,52.1) .. (82.1,52.1) .. controls (83.7,52.1) and (85,53.4) .. (85,55) .. controls (85,56.6) and (83.7,57.9) .. (82.1,57.9) .. controls (80.5,57.9) and (79.2,56.6) .. (79.2,55) -- cycle ;
\draw    (82.1,45) -- (82.1,55) ;
\draw    (82.1,45) .. controls (81.9,30.8) and (112.3,30.8) .. (112.1,45) ;
\draw  [fill={rgb, 255:red, 0; green, 0; blue, 0 }  ,fill opacity=1 ] (94.2,35) .. controls (94.2,33.4) and (95.5,32.1) .. (97.1,32.1) .. controls (98.7,32.1) and (100,33.4) .. (100,35) .. controls (100,36.6) and (98.7,37.9) .. (97.1,37.9) .. controls (95.5,37.9) and (94.2,36.6) .. (94.2,35) -- cycle ;
\draw    (92.1,65) -- (92.1,80) ;
\draw    (97.1,15) -- (97.1,35) ;
\draw    (102.1,65) .. controls (101.9,50.8) and (121.3,51.6) .. (122.1,65) ;
\draw  [fill={rgb, 255:red, 0; green, 0; blue, 0 }  ,fill opacity=1 ] (109.2,55) .. controls (109.2,53.4) and (110.5,52.1) .. (112.1,52.1) .. controls (113.7,52.1) and (115,53.4) .. (115,55) .. controls (115,56.6) and (113.7,57.9) .. (112.1,57.9) .. controls (110.5,57.9) and (109.2,56.6) .. (109.2,55) -- cycle ;
\draw    (112.1,45) -- (112.1,55) ;
\draw    (102.1,65) -- (102.1,80) ;
\draw    (122.1,65) -- (122.1,70) ;
\draw    (150,65) .. controls (149.8,50.8) and (170.4,53.8) .. (170,60) ;
\draw  [fill={rgb, 255:red, 0; green, 0; blue, 0 }  ,fill opacity=1 ] (157.1,55) .. controls (157.1,53.4) and (158.4,52.1) .. (160,52.1) .. controls (161.6,52.1) and (162.9,53.4) .. (162.9,55) .. controls (162.9,56.6) and (161.6,57.9) .. (160,57.9) .. controls (158.4,57.9) and (157.1,56.6) .. (157.1,55) -- cycle ;
\draw    (160,45) -- (160,55) ;
\draw    (160,45) .. controls (159.8,30.8) and (190.2,30.8) .. (190,45) ;
\draw  [fill={rgb, 255:red, 0; green, 0; blue, 0 }  ,fill opacity=1 ] (172.1,35) .. controls (172.1,33.4) and (173.4,32.1) .. (175,32.1) .. controls (176.6,32.1) and (177.9,33.4) .. (177.9,35) .. controls (177.9,36.6) and (176.6,37.9) .. (175,37.9) .. controls (173.4,37.9) and (172.1,36.6) .. (172.1,35) -- cycle ;
\draw    (175,15) -- (175,35) ;
\draw    (180,60) .. controls (179.6,50.2) and (199.2,51.6) .. (200,65) ;
\draw  [fill={rgb, 255:red, 0; green, 0; blue, 0 }  ,fill opacity=1 ] (187.1,55) .. controls (187.1,53.4) and (188.4,52.1) .. (190,52.1) .. controls (191.6,52.1) and (192.9,53.4) .. (192.9,55) .. controls (192.9,56.6) and (191.6,57.9) .. (190,57.9) .. controls (188.4,57.9) and (187.1,56.6) .. (187.1,55) -- cycle ;
\draw    (190,45) -- (190,55) ;
\draw    (200,65) -- (200,70) ;
\draw    (150,65) -- (150,70) ;
\draw    (170,60) .. controls (170.4,74.2) and (179.6,65.8) .. (180,80) ;
\draw    (180,60) .. controls (180.4,74.2) and (169.6,65.8) .. (170,80) ;
\draw  [fill={rgb, 255:red, 0; green, 0; blue, 0 }  ,fill opacity=1 ] (119.2,70) .. controls (119.2,68.4) and (120.5,67.1) .. (122.1,67.1) .. controls (123.7,67.1) and (125,68.4) .. (125,70) .. controls (125,71.6) and (123.7,72.9) .. (122.1,72.9) .. controls (120.5,72.9) and (119.2,71.6) .. (119.2,70) -- cycle ;
\draw    (72.1,65) -- (72.1,70) ;
\draw  [fill={rgb, 255:red, 0; green, 0; blue, 0 }  ,fill opacity=1 ] (69.2,70) .. controls (69.2,68.4) and (70.5,67.1) .. (72.1,67.1) .. controls (73.7,67.1) and (75,68.4) .. (75,70) .. controls (75,71.6) and (73.7,72.9) .. (72.1,72.9) .. controls (70.5,72.9) and (69.2,71.6) .. (69.2,70) -- cycle ;
\draw    (10,45) -- (10,80) ;
\draw    (10,45) .. controls (9.8,30.8) and (40.2,30.8) .. (40,45) ;
\draw  [fill={rgb, 255:red, 0; green, 0; blue, 0 }  ,fill opacity=1 ] (22.1,35) .. controls (22.1,33.4) and (23.4,32.1) .. (25,32.1) .. controls (26.6,32.1) and (27.9,33.4) .. (27.9,35) .. controls (27.9,36.6) and (26.6,37.9) .. (25,37.9) .. controls (23.4,37.9) and (22.1,36.6) .. (22.1,35) -- cycle ;
\draw    (25,15) -- (25,35) ;
\draw    (40,45) -- (40,80) ;
\draw  [fill={rgb, 255:red, 0; green, 0; blue, 0 }  ,fill opacity=1 ] (147.1,70) .. controls (147.1,68.4) and (148.4,67.1) .. (150,67.1) .. controls (151.6,67.1) and (152.9,68.4) .. (152.9,70) .. controls (152.9,71.6) and (151.6,72.9) .. (150,72.9) .. controls (148.4,72.9) and (147.1,71.6) .. (147.1,70) -- cycle ;
\draw  [fill={rgb, 255:red, 0; green, 0; blue, 0 }  ,fill opacity=1 ] (197.1,70) .. controls (197.1,68.4) and (198.4,67.1) .. (200,67.1) .. controls (201.6,67.1) and (202.9,68.4) .. (202.9,70) .. controls (202.9,71.6) and (201.6,72.9) .. (200,72.9) .. controls (198.4,72.9) and (197.1,71.6) .. (197.1,70) -- cycle ;
\draw    (235,45) .. controls (234.8,30.8) and (265.2,30.8) .. (265,45) ;
\draw  [fill={rgb, 255:red, 0; green, 0; blue, 0 }  ,fill opacity=1 ] (247.1,35) .. controls (247.1,33.4) and (248.4,32.1) .. (250,32.1) .. controls (251.6,32.1) and (252.9,33.4) .. (252.9,35) .. controls (252.9,36.6) and (251.6,37.9) .. (250,37.9) .. controls (248.4,37.9) and (247.1,36.6) .. (247.1,35) -- cycle ;
\draw    (250,15) -- (250,35) ;
\draw    (265,45) .. controls (265.4,59.2) and (235,51.75) .. (235,80) ;
\draw    (235,45) .. controls (235.4,59.2) and (265,51.75) .. (265,80) ;
\draw (126,33.4) node [anchor=north west][inner sep=0.75pt]    {$=$};
\draw (46,33.4) node [anchor=north west][inner sep=0.75pt]    {$=$};
\draw (206,33.4) node [anchor=north west][inner sep=0.75pt]    {$=$};
\end{tikzpicture}     \caption{Cocommutativity}
    \label{string:cocommutative}
  \end{figure}
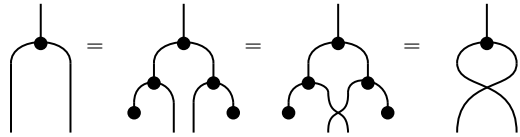

Now, we prove coassociativity (\Cref{string:coassociativity}): composing both sides of \Cref{eq:comultiplicationhomomorphism} with $(\im \tensor \epsilon_{X} \tensor \im \tensor \im)$ discards one of the middle outputs and gives $\delta_{X} \comp (\im \tensor \delta_{X}) = \delta_{X} \comp (\delta_{X} \tensor \im)$.
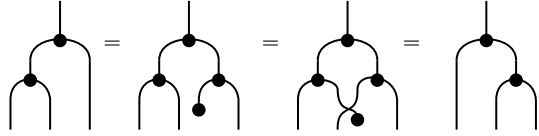
\begin{figure}
\tikzset{every picture/.style={line width=0.75pt}}
\begin{tikzpicture}[x=0.75pt,y=0.75pt,yscale=-1,xscale=1]
\draw    (80,65) .. controls (79.8,50.8) and (99.2,51.6) .. (100,65) ;
\draw  [fill={rgb, 255:red, 0; green, 0; blue, 0 }  ,fill opacity=1 ] (87.1,55) .. controls (87.1,53.4) and (88.4,52.1) .. (90,52.1) .. controls (91.6,52.1) and (92.9,53.4) .. (92.9,55) .. controls (92.9,56.6) and (91.6,57.9) .. (90,57.9) .. controls (88.4,57.9) and (87.1,56.6) .. (87.1,55) -- cycle ;
\draw    (90,45) -- (90,55) ;
\draw    (90,45) .. controls (89.8,30.8) and (120.2,30.8) .. (120,45) ;
\draw  [fill={rgb, 255:red, 0; green, 0; blue, 0 }  ,fill opacity=1 ] (102.1,35) .. controls (102.1,33.4) and (103.4,32.1) .. (105,32.1) .. controls (106.6,32.1) and (107.9,33.4) .. (107.9,35) .. controls (107.9,36.6) and (106.6,37.9) .. (105,37.9) .. controls (103.4,37.9) and (102.1,36.6) .. (102.1,35) -- cycle ;
\draw    (80,65) -- (80,80) ;
\draw    (100,65) -- (100,80) ;
\draw    (105,15) -- (105,35) ;
\draw    (110,65) .. controls (109.8,50.8) and (129.2,51.6) .. (130,65) ;
\draw  [fill={rgb, 255:red, 0; green, 0; blue, 0 }  ,fill opacity=1 ] (117.1,55) .. controls (117.1,53.4) and (118.4,52.1) .. (120,52.1) .. controls (121.6,52.1) and (122.9,53.4) .. (122.9,55) .. controls (122.9,56.6) and (121.6,57.9) .. (120,57.9) .. controls (118.4,57.9) and (117.1,56.6) .. (117.1,55) -- cycle ;
\draw    (120,45) -- (120,55) ;
\draw    (130,65) -- (130,80) ;
\draw    (110,65) -- (110,70) ;
\draw    (160,65) .. controls (159.8,50.8) and (180.4,53.8) .. (180,60) ;
\draw  [fill={rgb, 255:red, 0; green, 0; blue, 0 }  ,fill opacity=1 ] (167.1,55) .. controls (167.1,53.4) and (168.4,52.1) .. (170,52.1) .. controls (171.6,52.1) and (172.9,53.4) .. (172.9,55) .. controls (172.9,56.6) and (171.6,57.9) .. (170,57.9) .. controls (168.4,57.9) and (167.1,56.6) .. (167.1,55) -- cycle ;
\draw    (170,45) -- (170,55) ;
\draw    (170,45) .. controls (169.8,30.8) and (200.2,30.8) .. (200,45) ;
\draw  [fill={rgb, 255:red, 0; green, 0; blue, 0 }  ,fill opacity=1 ] (182.1,35) .. controls (182.1,33.4) and (183.4,32.1) .. (185,32.1) .. controls (186.6,32.1) and (187.9,33.4) .. (187.9,35) .. controls (187.9,36.6) and (186.6,37.9) .. (185,37.9) .. controls (183.4,37.9) and (182.1,36.6) .. (182.1,35) -- cycle ;
\draw    (160,70) -- (160,80) ;
\draw    (185,15) -- (185,35) ;
\draw    (190,60) .. controls (189.6,50.2) and (209.2,51.6) .. (210,65) ;
\draw  [fill={rgb, 255:red, 0; green, 0; blue, 0 }  ,fill opacity=1 ] (197.1,55) .. controls (197.1,53.4) and (198.4,52.1) .. (200,52.1) .. controls (201.6,52.1) and (202.9,53.4) .. (202.9,55) .. controls (202.9,56.6) and (201.6,57.9) .. (200,57.9) .. controls (198.4,57.9) and (197.1,56.6) .. (197.1,55) -- cycle ;
\draw    (200,45) -- (200,55) ;
\draw    (210,65) -- (210,80) ;
\draw    (160,65) -- (160,75) ;
\draw    (180,60) .. controls (180.4,74.2) and (190,66) .. (190,75) ;
\draw    (190,60) .. controls (190.4,74.2) and (179.6,65.8) .. (180,80) ;
\draw    (15,65) .. controls (14.8,50.8) and (34.2,51.6) .. (35,65) ;
\draw  [fill={rgb, 255:red, 0; green, 0; blue, 0 }  ,fill opacity=1 ] (22.1,55) .. controls (22.1,53.4) and (23.4,52.1) .. (25,52.1) .. controls (26.6,52.1) and (27.9,53.4) .. (27.9,55) .. controls (27.9,56.6) and (26.6,57.9) .. (25,57.9) .. controls (23.4,57.9) and (22.1,56.6) .. (22.1,55) -- cycle ;
\draw    (25,45) -- (25,55) ;
\draw    (25,45) .. controls (24.8,30.8) and (55.2,30.8) .. (55,45) ;
\draw  [fill={rgb, 255:red, 0; green, 0; blue, 0 }  ,fill opacity=1 ] (37.1,35) .. controls (37.1,33.4) and (38.4,32.1) .. (40,32.1) .. controls (41.6,32.1) and (42.9,33.4) .. (42.9,35) .. controls (42.9,36.6) and (41.6,37.9) .. (40,37.9) .. controls (38.4,37.9) and (37.1,36.6) .. (37.1,35) -- cycle ;
\draw    (15,65) -- (15,80) ;
\draw    (35,65) -- (35,80) ;
\draw    (40,15) -- (40,35) ;
\draw    (55,45) -- (55,80) ;
\draw  [fill={rgb, 255:red, 0; green, 0; blue, 0 }  ,fill opacity=1 ] (107.1,70) .. controls (107.1,68.4) and (108.4,67.1) .. (110,67.1) .. controls (111.6,67.1) and (112.9,68.4) .. (112.9,70) .. controls (112.9,71.6) and (111.6,72.9) .. (110,72.9) .. controls (108.4,72.9) and (107.1,71.6) .. (107.1,70) -- cycle ;
\draw  [fill={rgb, 255:red, 0; green, 0; blue, 0 }  ,fill opacity=1 ] (187.1,75) .. controls (187.1,73.4) and (188.4,72.1) .. (190,72.1) .. controls (191.6,72.1) and (192.9,73.4) .. (192.9,75) .. controls (192.9,76.6) and (191.6,77.9) .. (190,77.9) .. controls (188.4,77.9) and (187.1,76.6) .. (187.1,75) -- cycle ;
\draw    (240,45) -- (240,80) ;
\draw    (240,45) .. controls (239.8,30.8) and (270.2,30.8) .. (270,45) ;
\draw  [fill={rgb, 255:red, 0; green, 0; blue, 0 }  ,fill opacity=1 ] (252.1,35) .. controls (252.1,33.4) and (253.4,32.1) .. (255,32.1) .. controls (256.6,32.1) and (257.9,33.4) .. (257.9,35) .. controls (257.9,36.6) and (256.6,37.9) .. (255,37.9) .. controls (253.4,37.9) and (252.1,36.6) .. (252.1,35) -- cycle ;
\draw    (255,15) -- (255,35) ;
\draw    (260,65) .. controls (259.8,50.8) and (279.2,51.6) .. (280,65) ;
\draw  [fill={rgb, 255:red, 0; green, 0; blue, 0 }  ,fill opacity=1 ] (267.1,55) .. controls (267.1,53.4) and (268.4,52.1) .. (270,52.1) .. controls (271.6,52.1) and (272.9,53.4) .. (272.9,55) .. controls (272.9,56.6) and (271.6,57.9) .. (270,57.9) .. controls (268.4,57.9) and (267.1,56.6) .. (267.1,55) -- cycle ;
\draw    (270,45) -- (270,55) ;
\draw    (260,65) -- (260,80) ;
\draw    (280,65) -- (280,80) ;
\draw (140,33.4) node [anchor=north west][inner sep=0.75pt]    {$=$};
\draw (60,33.4) node [anchor=north west][inner sep=0.75pt]    {$=$};
\draw (211,33.4) node [anchor=north west][inner sep=0.75pt]    {$=$};
\end{tikzpicture}   \caption{Coassociativity}
  \label{string:coassociativity}
\end{figure}

A coassociative and cocommutative comagma is a cocommutative comonoid.
We can then apply the classical form of Fox's theorem (\Cref{th:fox}).
\end{proof}

One could hope to add effects such as probability or non-determinism to set-based streams via the bikleisli category arising from a monad-comonad distributive law $\neList \circ \stream{T} \Rightarrow \stream{T} \circ \neList$ \cite{power99,beck69}, as proposed by Uustalu and Vene~\cite{uustalu05}.
This would correspond to a lifting of the $\neList$ comonad to the kleisli category of some commutative monad $T$; the arrows $\stream{X} \to \stream{Y}$ of such a category would look as follows,
\[f_{n} \colon X_{1} \times \dots \times X_{n} \to TY_{n}.\]
However, we show that this would not result in a monoidal comonad whenever $\kleisli{T}$ is not cartesian.
To see explicitly what fails, we use the string diagrams to show how composition would not work for the case $n = 2$ (\Cref{string:klcomposition}).
This composition is not associative or unital whenever the kleisli category does not have natural comultiplications or counits, respectively (\Cref{string:klassocunital}).
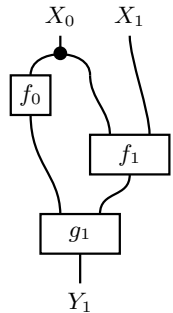
\begin{figure}[h]
\tikzset{every picture/.style={line width=0.85pt}} %
\begin{tikzpicture}[x=0.75pt,y=0.75pt,yscale=-1,xscale=1]
\draw    (325,70) -- (325,78.87) ;
\draw    (310,90) .. controls (309.8,75.8) and (340.2,75.8) .. (340,90) ;
\draw  [fill={rgb, 255:red, 0; green, 0; blue, 0 }  ,fill opacity=1 ] (322.1,78.87) .. controls (322.1,77.27) and (323.4,75.97) .. (325,75.97) .. controls (326.6,75.97) and (327.9,77.27) .. (327.9,78.87) .. controls (327.9,80.47) and (326.6,81.77) .. (325,81.77) .. controls (323.4,81.77) and (322.1,80.47) .. (322.1,78.87) -- cycle ;
\draw   (340,120) -- (380,120) -- (380,140) -- (340,140) -- cycle ;
\draw   (315,160) -- (355,160) -- (355,180) -- (315,180) -- cycle ;
\draw    (360,70) .. controls (360.14,82.14) and (370,107.83) .. (370,120) ;
\draw   (300,90) -- (320,90) -- (320,110) -- (300,110) -- cycle ;
\draw    (340,90) .. controls (340.14,102.14) and (350,107.83) .. (350,120) ;
\draw    (335,180) -- (335,195) ;
\draw    (360,140) .. controls (360.14,152.14) and (345,147.83) .. (345,160) ;
\draw    (310,110) .. controls (310.2,135) and (324.6,139.8) .. (325,160) ;
\draw (325,66.6) node [anchor=south] [inner sep=0.75pt]    {$X_{0}$};
\draw (360,130) node  [font=\footnotesize]  {$f_{1}$};
\draw (310,100) node  [font=\footnotesize]  {$f_{0}$};
\draw (335,170) node  [font=\footnotesize]  {$g_{1}$};
\draw (360,66.6) node [anchor=south] [inner sep=0.75pt]    {$X_{1}$};
\draw (335,198.4) node [anchor=north] [inner sep=0.75pt]    {$Y_{1}$};
\end{tikzpicture}
\caption{Composition in the case $n=2$.}\label{string:klcomposition}
\end{figure}
\begin{figure}[h]
  \tikzset{every picture/.style={line width=0.85pt}} %
\begin{tikzpicture}[x=0.75pt,y=0.75pt,yscale=-1,xscale=1,baseline=0cm]
\draw   (330,90) -- (370,90) -- (370,110) -- (330,110) -- cycle ;
\draw   (315,135) -- (355,135) -- (355,155) -- (315,155) -- cycle ;
\draw    (360,70) .. controls (360.14,82.14) and (360,77.83) .. (360,90) ;
\draw   (300,90) -- (320,90) -- (320,110) -- (300,110) -- cycle ;
\draw    (315,195) -- (315,215) ;
\draw    (350,110) .. controls (350.14,122.14) and (345,117.83) .. (345,130) ;
\draw    (325,70) -- (325,78.87) ;
\draw    (310,90) .. controls (309.8,75.8) and (340.2,75.8) .. (340,90) ;
\draw  [fill={rgb, 255:red, 0; green, 0; blue, 0 }  ,fill opacity=1 ] (322.1,78.87) .. controls (322.1,77.27) and (323.4,75.97) .. (325,75.97) .. controls (326.6,75.97) and (327.9,77.27) .. (327.9,78.87) .. controls (327.9,80.47) and (326.6,81.77) .. (325,81.77) .. controls (323.4,81.77) and (322.1,80.47) .. (322.1,78.87) -- cycle ;
\draw    (310,110) -- (310,118.87) ;
\draw    (295,130) .. controls (294.8,115.8) and (325.2,115.8) .. (325,130) ;
\draw  [fill={rgb, 255:red, 0; green, 0; blue, 0 }  ,fill opacity=1 ] (307.1,118.87) .. controls (307.1,117.27) and (308.4,115.97) .. (310,115.97) .. controls (311.6,115.97) and (312.9,117.27) .. (312.9,118.87) .. controls (312.9,120.47) and (311.6,121.77) .. (310,121.77) .. controls (308.4,121.77) and (307.1,120.47) .. (307.1,118.87) -- cycle ;
\draw   (285,135) -- (305,135) -- (305,155) -- (285,155) -- cycle ;
\draw   (295,175) -- (335,175) -- (335,195) -- (295,195) -- cycle ;
\draw    (295,160) .. controls (295.14,172.14) and (305,162.83) .. (305,175) ;
\draw    (335,160) .. controls (335.14,172.14) and (325,162.83) .. (325,175) ;
\draw    (435,70) -- (435,78.87) ;
\draw    (420,90) .. controls (419.8,75.8) and (450.2,75.8) .. (450,90) ;
\draw  [fill={rgb, 255:red, 0; green, 0; blue, 0 }  ,fill opacity=1 ] (432.1,78.87) .. controls (432.1,77.27) and (433.4,75.97) .. (435,75.97) .. controls (436.6,75.97) and (437.9,77.27) .. (437.9,78.87) .. controls (437.9,80.47) and (436.6,81.77) .. (435,81.77) .. controls (433.4,81.77) and (432.1,80.47) .. (432.1,78.87) -- cycle ;
\draw    (445,115) .. controls (444.8,100.8) and (475.2,100.8) .. (475,115) ;
\draw  [fill={rgb, 255:red, 0; green, 0; blue, 0 }  ,fill opacity=1 ] (457.1,103.87) .. controls (457.1,102.27) and (458.4,100.97) .. (460,100.97) .. controls (461.6,100.97) and (462.9,102.27) .. (462.9,103.87) .. controls (462.9,105.47) and (461.6,106.77) .. (460,106.77) .. controls (458.4,106.77) and (457.1,105.47) .. (457.1,103.87) -- cycle ;
\draw   (410,115) -- (430,115) -- (430,135) -- (410,135) -- cycle ;
\draw   (415,150) -- (435,150) -- (435,170) -- (415,170) -- cycle ;
\draw   (435,115) -- (455,115) -- (455,135) -- (435,135) -- cycle ;
\draw    (450,90) .. controls (450.14,102.14) and (460,91.7) .. (460,103.87) ;
\draw    (470,70) .. controls (470.14,82.14) and (495,102.83) .. (495,115) ;
\draw   (465,115) -- (505,115) -- (505,135) -- (465,135) -- cycle ;
\draw   (445,150) -- (485,150) -- (485,170) -- (445,170) -- cycle ;
\draw    (445,135) .. controls (445.14,147.14) and (455,137.83) .. (455,150) ;
\draw    (485,135) .. controls (485.14,147.14) and (475,137.83) .. (475,150) ;
\draw    (420,90) -- (420,115) ;
\draw    (420,135) .. controls (420.14,147.14) and (425,137.83) .. (425,150) ;
\draw   (425,185) -- (465,185) -- (465,205) -- (425,205) -- cycle ;
\draw    (425,170) .. controls (425.14,182.14) and (435,172.83) .. (435,185) ;
\draw    (465,170) .. controls (465.14,182.14) and (455,172.83) .. (455,185) ;
\draw    (445,205) -- (445,215) ;
\draw    (295,130) -- (295,135) ;
\draw    (325,130) -- (325,135) ;
\draw    (345,130) -- (345,135) ;
\draw    (335,155) -- (335,160) ;
\draw    (295,155) -- (295,160) ;

\draw (325,66.6) node [anchor=south] [inner sep=0.75pt]    {$X_{0}$};
\draw (350,100) node  [font=\footnotesize]  {$f_{1}$};
\draw (310,100) node  [font=\footnotesize]  {$f_{0}$};
\draw (335,145) node  [font=\footnotesize]  {$g_{1}$};
\draw (360,66.6) node [anchor=south] [inner sep=0.75pt]    {$X_{1}$};
\draw (315,218.4) node [anchor=north] [inner sep=0.75pt]    {$Y_{1}$};
\draw (295,145) node  [font=\footnotesize]  {$g_{0}$};
\draw (315,185) node  [font=\footnotesize]  {$h_{1}$};
\draw (420,125) node  [font=\footnotesize]  {$f_{0}$};
\draw (425,160) node  [font=\footnotesize]  {$g_{0}$};
\draw (445,125) node  [font=\footnotesize]  {$f_{0}$};
\draw (485,125) node  [font=\footnotesize]  {$f_{1}$};
\draw (465,160) node  [font=\footnotesize]  {$g_{1}$};
\draw (445,195) node  [font=\footnotesize]  {$h_{1}$};
\draw (445,218.4) node [anchor=north] [inner sep=0.75pt]    {$Y_{1}$};
\draw (435,66.6) node [anchor=south] [inner sep=0.75pt]    {$X_{0}$};
\draw (470,66.6) node [anchor=south] [inner sep=0.75pt]    {$X_{1}$};
\draw (376,138.4) node [anchor=north west][inner sep=0.75pt]    {$\neq $};
\end{tikzpicture}\qquad
\begin{tikzpicture}[x=0.75pt,y=0.75pt,yscale=-1,xscale=1,baseline=0.4cm]
\draw    (325,70) -- (325,78.87) ;
\draw    (310,90) .. controls (309.8,75.8) and (340.2,75.8) .. (340,90) ;
\draw  [fill={rgb, 255:red, 0; green, 0; blue, 0 }  ,fill opacity=1 ] (322.1,78.87) .. controls (322.1,77.27) and (323.4,75.97) .. (325,75.97) .. controls (326.6,75.97) and (327.9,77.27) .. (327.9,78.87) .. controls (327.9,80.47) and (326.6,81.77) .. (325,81.77) .. controls (323.4,81.77) and (322.1,80.47) .. (322.1,78.87) -- cycle ;
\draw   (340,120) -- (380,120) -- (380,140) -- (340,140) -- cycle ;
\draw    (360,70) .. controls (360.14,82.14) and (370,107.83) .. (370,120) ;
\draw   (300,90) -- (320,90) -- (320,110) -- (300,110) -- cycle ;
\draw    (340,90) .. controls (340.14,102.14) and (350,107.83) .. (350,120) ;
\draw    (335,180) -- (335,195) ;
\draw    (360,140) .. controls (360.14,152.14) and (335,167.83) .. (335,180) ;
\draw    (310,110) .. controls (310.2,135) and (324.6,139.8) .. (325,160) ;
\draw  [fill={rgb, 255:red, 0; green, 0; blue, 0 }  ,fill opacity=1 ] (322.1,160) .. controls (322.1,158.4) and (323.4,157.1) .. (325,157.1) .. controls (326.6,157.1) and (327.9,158.4) .. (327.9,160) .. controls (327.9,161.6) and (326.6,162.9) .. (325,162.9) .. controls (323.4,162.9) and (322.1,161.6) .. (322.1,160) -- cycle ;
\draw   (440,120) -- (480,120) -- (480,140) -- (440,140) -- cycle ;
\draw    (470,70) .. controls (470.14,82.14) and (470,107.83) .. (470,120) ;
\draw    (450,70) .. controls (450.14,82.14) and (450,107.83) .. (450,120) ;
\draw    (460,140) .. controls (460.14,152.14) and (460,177.83) .. (460,190) ;
\draw (325,66.6) node [anchor=south] [inner sep=0.75pt]    {$X_{0}$};
\draw (360,130) node  [font=\footnotesize]  {$f_{1}$};
\draw (310,100) node  [font=\footnotesize]  {$f_{0}$};
\draw (360,66.6) node [anchor=south] [inner sep=0.75pt]    {$X_{1}$};
\draw (335,198.4) node [anchor=north] [inner sep=0.75pt]    {$Y_{1}$};
\draw (460,130) node  [font=\footnotesize]  {$f_{1}$};
\draw (401,123.4) node [anchor=north west][inner sep=0.75pt]    {$\neq $};
\draw (450,66.6) node [anchor=south] [inner sep=0.75pt]    {$X_{0}$};
\draw (468.5,68.6) node [anchor=south] [inner sep=0.75pt]    {$X_{1}$};
\draw (460,193.4) node [anchor=north] [inner sep=0.75pt]    {$Y_{1}$};
\end{tikzpicture}
\caption{\emph{Left}: Failure of associativity if copying is not natural, as it happens, for instance, with stochastic functions. \emph{Right}: Failure of unitality if discarding is not natural, as it happens, for instance, with partial functions.}\label{string:klassocunital}
\end{figure}
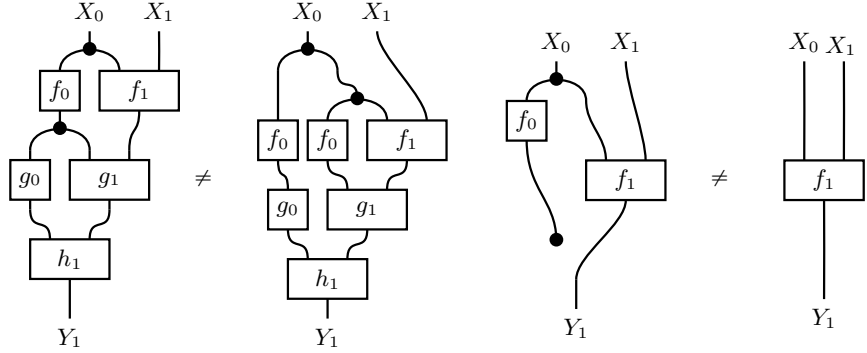

\begin{thm}\label{th:nelist}  %
  The oplax monoidal functor $\neList$ has an oplax monoidal comonad structure if and only if its base monoidal category $(\catC, \otimes, I)$ is cartesian monoidal.
\end{thm}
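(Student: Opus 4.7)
The plan is to apply the refined Fox's theorem (\Cref{th:refinedfoxappendix}) and reduce the problem to constructing a natural uniform counital comagma on every object of $\catC$ from the assumed comonad structure on $\neList$. The forward direction, that a cartesian base yields an oplax monoidal comonad, is the standard generalisation of the non-empty list comonad of~\cite{uustalu05}: the counit projects the last coordinate and the comultiplication sends a tuple to the tuple of its prefixes using the copies afforded by cartesianness, with compatibility with the oplaxators $\psi$ following from uniformity of the canonical comonoids (\Cref{th:fox}). I focus on the reverse direction.

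To extract comagma data on $X \in \catC$, I will probe the assumed comonad structure on the \emph{pointed stream} $\stream{X}^{\bullet} \defn (X, \monoidalunit, \monoidalunit, \dots)$. On this stream, $\neList(\stream{X}^{\bullet})_{n} \cong X$ for every $n$ via unit coherence, so the level-$1$ component of the counit $\varepsilon \colon \neList(\stream{X}^{\bullet}) \to \stream{X}^{\bullet}$ has type $X \tensor \monoidalunit \to \monoidalunit$, from which I read off a discard $\blackComonoidUnit_{X} \colon X \to \monoidalunit$; dually, the level-$1$ component of the comultiplication $\delta \colon \neList(\stream{X}^{\bullet}) \to \neList(\neList(\stream{X}^{\bullet}))$ has type $X \tensor \monoidalunit \to X \tensor (X \tensor \monoidalunit)$, which, modulo unitors, yields a copy $\blackComonoid_{X} \colon X \to X \tensor X$.

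It then remains to verify the three hypotheses of \Cref{th:refinedfoxappendix}. Counitality of $(X, \blackComonoidUnit_{X}, \blackComonoid_{X})$ will follow by specialising the comonad counit axioms $\varepsilon_{\neList \stream{X}^{\bullet}} \comp \delta = \im$ and $\neList(\varepsilon) \comp \delta = \im$ to level $1$: the second axiom, whose level-$1$ component factors as $\im_{X} \tensor (\text{discard})$, yields $(\im \tensor \blackComonoidUnit_{X}) \comp \blackComonoid_{X} = \im$, and a symmetric analysis of the first axiom yields the other counit law. Naturality, in the form that every morphism $f \colon X \to Y$ of $\catC$ is a comagma homomorphism, will follow from naturality of $\varepsilon$ and $\delta$ evaluated on the stream morphism $\stream{f}^{\bullet} \colon \stream{X}^{\bullet} \to \stream{Y}^{\bullet}$ that is $f$ at level $0$ and identities on $\monoidalunit$ elsewhere.

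The main obstacle will be uniformity: the identities $\blackComonoidUnit_{X \tensor Y} = \blackComonoidUnit_{X} \tensor \blackComonoidUnit_{Y}$, $\blackComonoidUnit_{\monoidalunit} = \im$, $\blackComonoid_{\monoidalunit} = \im$, and the canonical factorisation of $\blackComonoid_{X \tensor Y}$ must be extracted from the oplax monoidality equations $\varepsilon_{X \tensor Y} = \psi_{X,Y} \comp (\varepsilon_{X} \tensor \varepsilon_{Y})$ and the dual relation $\delta_{X \tensor Y} \comp \psi_{X,Y} \comp \psi_{\neList X, \neList Y} = \psi_{X,Y} \comp (\delta_{X} \tensor \delta_{Y})$, by showing that the oplaxators $\psi$ evaluated on pointed streams reduce, up to unit coherence and symmetry, to interchanges that are absorbed by the extracted structure. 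Once uniformity is confirmed, \Cref{th:refinedfoxappendix} immediately yields that $\catC$ is cartesian monoidal.
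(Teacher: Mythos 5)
Your proposal is correct and follows essentially the same route as the paper: both extract a counital comagma on each $X$ by evaluating the comonad's counit and comultiplication on the stream $(X, \monoidalunit, \monoidalunit, \dots)$, deduce counitality and naturality from the comonad axioms and naturality of $\varepsilon,\delta$, deduce uniformity from oplax monoidality of the structure maps, and then invoke the refined Fox theorem (\Cref{th:refinedfoxappendix}) — which is exactly why coassociativity and cocommutativity need not be checked separately. The paper is terser on the verification steps you flag as the "main obstacle," but the argument is the same.
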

\begin{proof}
  When $\catC$ is cartesian, we can construct the comonad structure using projections $\prod\nolimits_{i=0}^{n} X_{i} \to X_{n}$ and copying together with symmetries $\prod\nolimits_{i=0}^{n} X_{i} \to \prod\nolimits_{i=0}^{n}\prod\nolimits_{k=0}^{i} X_{k}$.
  These are monoidal natural transformations making $\neList$ an oplax monoidal comonad.

  Suppose $(L,\varepsilon,\delta)$ is an oplax monoidal comonad structure. This means it has families of natural transformations
  \[\delta_{n} \colon \bigotimes^{n}_{i=0} X_{i} \to \bigotimes^{n}_{i=0} \bigotimes^{i}_{k=0} X_{k}
    \mbox{ and } \varepsilon_{n} \colon \bigotimes^{n}_{i=0} X_{i} \to X_{n}.\]
  We use these to construct a uniform counital comagma structure on every object of the category.
  By the refined version of Fox's theorem (\Cref{th:refinedfoxappendix}),
  this implies that $\catC$ is cartesian monoidal.

  Let $X \in \catC$ be any object.
  Choosing $n=2$, $X_{0} = X$ and $X_{1} = I$; and using coherence maps, we get $\delta_{2} \colon X \to X \tensor X$ and $\varepsilon_{2} \colon X \to I$.
  These are coassociative, counital, natural and uniform because the corresponding transformations $\delta$ and $\varepsilon$ are themselves coassociative, counital, natural and monoidal.
  This induces a uniform comagma structure in every object $(X,\delta_{2},\varepsilon_{2})$;
  with this structure, every morphism of the category is a comagma homomorphism because $\delta_{2}$ and $\varepsilon_{2}$ are natural.
\end{proof}

This result implies that we cannot directly extend Uustalu and Vene's approach to non-cartesian monoidal structures.
However, we prove in the next section that our definition of \monoidalStreams{} particularizes to their \emph{causal stream functions}~\cite{uustalu05,katsumata19}.

\subsection{Cartesian monoidal streams}

The main claim of this section is that, in a \cartesianMonoidalCategory{}, \monoidalStreams{} instantiate to \emph{causal stream functions} (\Cref{th:cartesianstreams}). Let us fix such a category, $(\catC,\times,1)$.

The first observation is that the universal property of the cartesian product simplifies the fixpoint equation that defines monoidal streams.
This is a consequence of the following chain of isomorphisms, where we apply a \hyperlink{linkcoyoneda}{Yoneda reduction} to simplify the coend.
\[\begin{aligned}\label{eq:cartesianStreams}
  & \streamProf(\stream{X}, \stream{Y}) 
  \\\cong & \quad \mbox{(By definition)} \\
  & \coend{M} \idProf(X_{0}, M \times Y_{0}) \times \streamProf(\act{M}{\tail{\stream{X}}}, \stream{Y}) \\ \cong & \quad \mbox{(Universal property of the product)} & \\
  & \coend{M} \idProf(X_{0}, M) \times \idProf(X_{0}, Y_{0}) \times  \streamProf(\act{M}{\tail{\stream{X}}}, \stream{Y}) 
  \\ \cong & \quad \mbox{(Yoneda reduction)} \\
  & \idProf(X_{0}, Y_{0}) \times  \streamProf(\act{X_{0}}{\tail{\stream{X}}}, \tail{\stream{Y}}).
\end{aligned}\]
Explicitly, the Yoneda reduction works as follows:
the first action of a stream $f \in \STREAM(\stream{X},\stream{Y})$ can be uniquely split as
$\now(f) = (f_{1},f_{2})$ for some $f_{1} \colon X_{0} \to Y_{0}$ and $f_{2} \colon X_{0} \to M(f)$.
Under the \emph{dinaturality} equivalence relation, $(\sim)$, we can always find a unique representative with $M = X_{0}$ and $f_{2} = \im_{X_{0}}$.

\begin{prop}\label{prop:cartesianproductive}
  Cartesian monoidal categories are \productive{}.
\end{prop}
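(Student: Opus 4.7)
The plan is to leverage the universal property of the product to obtain a canonical factorisation of every representative of a 1-stage process. Since the $Y_0$-component of a morphism $\alpha_i \colon X_0 \to M_i \times Y_0$ is not touched by sliding — sliding acts only on the memory $M_i$ — the composite $g \defn \pi_{Y_0} \comp \alpha_i$ is invariant under the equivalence relation and is therefore a well-defined morphism $g \colon X_0 \to Y_0$ associated with the class $\alpha \in \Stage{1}(\stream{X}, \stream{Y})$.

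With $g$ in hand, I would take as the universal witness the choice $M_0 \defn X_0$ together with $\alpha_0 \defn \langle \im_{X_0}, g \rangle \colon X_0 \to X_0 \times Y_0$. For any representative $\alpha_i \colon X_0 \to M_i \times Y_0$, write $s_i \defn \pi_{M_i} \comp \alpha_i \colon X_0 \to M_i$; by the universal property of the product $\alpha_i = \langle s_i, g \rangle$, and a routine check gives $\alpha_0 \comp (s_i \times \im_{Y_0}) = \langle s_i, g\rangle = \alpha_i$, which is exactly the required factorisation.

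For the second, cancellation-style condition, the strategy is the same: exploit that in a cartesian category projections are natural and jointly monic. Given the hypothesised equality $\bra{\alpha_i \tensor \im_A \comp u \tensor \im_{Y_0}} = \bra{\alpha_j \tensor \im_A \comp v \tensor \im_{Y_0}}$ in $\Stage{2}$, I would post-compose both sides with $\im \tensor \blackComonoidUnit_{B} \tensor \blackComonoidUnit_{Y_0}$ on the $B$- and $Y_0$-wires (available by Fox's theorem, \Cref{th:fox}) so that only the memory-part of the composite survives. Using $\alpha_i = \alpha_0 \comp (s_i \tensor \im)$ and the universal property of $\langle \im, g\rangle$ on the first leg, the discarded version collapses to $\bra{s_i \tensor \im_A \comp u}$ on one side and $\bra{s_j \tensor \im_A \comp v}$ on the other, yielding the desired equation.

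The only delicate step I foresee is bookkeeping the symmetries and associators that relate $M_i \tensor Y_0 \tensor A$ to $M_0 \tensor A \tensor Y_0$ so that the composite $\alpha_i \tensor \im_A \comp u \tensor \im_{Y_0}$ is literally well-typed after invoking the factorisation $\alpha_i = \alpha_0 \comp (s_i \tensor \im_{Y_0})$; this is routine coherence manipulation in the sense of \Cref{remark:usingcoherence} and carries no mathematical difficulty. Once that is dispatched, naturality of the cartesian projections does all the real work, and productivity follows immediately from the definition.
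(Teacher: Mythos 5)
Your proof is correct and follows the paper's argument essentially verbatim: you use the same witness $M_0 = X_0$ with $\alpha_0 = \langle \im_{X_0}, g\rangle$ (the paper writes it as $\delta_{X_0} \dcomp (\im \tensor \alpha_Y)$), the same factorisation through $s_i = \pi_{M_i} \comp \alpha_i$, and the same appeal to naturality of discarding both for the well-definedness of $g$ across representatives and for the cancellation condition. One small slip to fix in the last step: you should discard only the $Y_0$-wire, keeping $B$ (which survives in your own stated conclusion $\bra{s_i \tensor \im_A \comp u}$) --- discarding $B$ as well would land you in the 1-stage processes with trivial output, where every element is identified by sliding, so the resulting equality would be vacuous.
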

\begin{proof}
  Let $\bra{\alpha} \in \Stage{1}(\stream{X},\stream{Y})$.
For some given representative $\alpha \colon X_{0} \to M \tensor Y_{0}$, we define the two projections $\alpha_{Y} = \alpha \comp \tid{\coUnit_{M}} \colon X_{0} \to Y_{0}$ and $\alpha_{M} = \alpha \comp \tid{\coUnit_{Y}}$.
The second projection $\alpha_{M}$ depends on the specific representative $\alpha$ we have chosen; however, the first projection $\alpha_{Y}$ is defined independently of the specific representative $\alpha$, as a consequence of naturality of the discarding map (see Fox's theorem for cartesian monoidal categories \Cref{th:fox}).
We define $\alpha_{0} = \delta_{X_{0}} \comp \tid{\alpha_{Y}}$.
Then, we can factor any representative as $\alpha = \alpha_{0} ; \tid{\alpha_{M}}$ (see \Cref{figure:productivity:cartesian}).
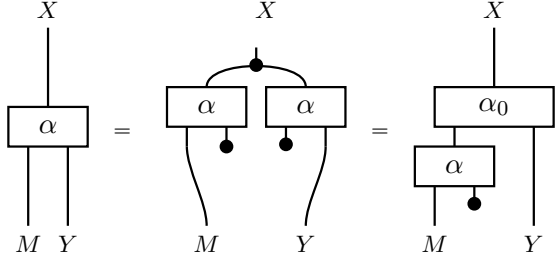
\begin{figure}
\tikzset{every picture/.style={line width=0.85pt}} %
\begin{tikzpicture}[x=0.75pt,y=0.75pt,yscale=-1,xscale=1]
\draw   (380,60) -- (420,60) -- (420,80) -- (380,80) -- cycle ;
\draw    (425,40) -- (425,48.87) ;
\draw    (400,60) .. controls (399.8,45.8) and (450.2,45.8) .. (450,60) ;
\draw  [fill={rgb, 255:red, 0; green, 0; blue, 0 }  ,fill opacity=1 ] (422.1,48.87) .. controls (422.1,47.27) and (423.4,45.97) .. (425,45.97) .. controls (426.6,45.97) and (427.9,47.27) .. (427.9,48.87) .. controls (427.9,50.47) and (426.6,51.77) .. (425,51.77) .. controls (423.4,51.77) and (422.1,50.47) .. (422.1,48.87) -- cycle ;
\draw    (410,80) -- (410,90) ;
\draw    (390,80) -- (390,90) ;
\draw    (460,90) .. controls (460.14,102.14) and (450,117.83) .. (450,130) ;
\draw   (300,70) -- (340,70) -- (340,90) -- (300,90) -- cycle ;
\draw    (310,90) -- (310,130) ;
\draw    (330,90) -- (330,130) ;
\draw    (320,30) -- (320,70) ;
\draw   (515,60) -- (575,60) -- (575,80) -- (515,80) -- cycle ;
\draw    (545,30) -- (545,60) ;
\draw    (565,80) -- (565,97.5) -- (565,130) ;
\draw    (525,80) -- (525,90) ;
\draw   (430,60) -- (470,60) -- (470,80) -- (430,80) -- cycle ;
\draw    (460,80) -- (460,90) ;
\draw    (440,80) -- (440,88.87) ;
\draw  [fill={rgb, 255:red, 0; green, 0; blue, 0 }  ,fill opacity=1 ] (437.1,88.87) .. controls (437.1,87.27) and (438.4,85.97) .. (440,85.97) .. controls (441.6,85.97) and (442.9,87.27) .. (442.9,88.87) .. controls (442.9,90.47) and (441.6,91.77) .. (440,91.77) .. controls (438.4,91.77) and (437.1,90.47) .. (437.1,88.87) -- cycle ;
\draw    (390,90) .. controls (390.14,102.14) and (400,117.83) .. (400,130) ;
\draw  [fill={rgb, 255:red, 0; green, 0; blue, 0 }  ,fill opacity=1 ] (407.1,90) .. controls (407.1,88.4) and (408.4,87.1) .. (410,87.1) .. controls (411.6,87.1) and (412.9,88.4) .. (412.9,90) .. controls (412.9,91.6) and (411.6,92.9) .. (410,92.9) .. controls (408.4,92.9) and (407.1,91.6) .. (407.1,90) -- cycle ;
\draw   (505,90) -- (545,90) -- (545,110) -- (505,110) -- cycle ;
\draw    (515,110) -- (515,130) ;
\draw    (535,110) -- (535,118.87) ;
\draw  [fill={rgb, 255:red, 0; green, 0; blue, 0 }  ,fill opacity=1 ] (532.1,118.87) .. controls (532.1,117.27) and (533.4,115.97) .. (535,115.97) .. controls (536.6,115.97) and (537.9,117.27) .. (537.9,118.87) .. controls (537.9,120.47) and (536.6,121.77) .. (535,121.77) .. controls (533.4,121.77) and (532.1,120.47) .. (532.1,118.87) -- cycle ;

\draw (430,26.6) node [anchor=south] [inner sep=0.75pt]    {$X_0$};
\draw (400,70) node  [font=\normalsize]  {$\alpha $};
\draw (450,133.4) node [anchor=north] [inner sep=0.75pt]    {$Y_0$};
\draw (320,80) node  [font=\normalsize]  {$\alpha $};
\draw (320,26.6) node [anchor=south] [inner sep=0.75pt]    {$X_0$};
\draw (330,133.4) node [anchor=north] [inner sep=0.75pt]    {$Y_0$};
\draw (310,133.4) node [anchor=north] [inner sep=0.75pt]    {$M$};
\draw (357.5,86.6) node [anchor=south] [inner sep=0.75pt]    {$=$};
\draw (545,70) node  [font=\normalsize]  {$\alpha _{0}$};
\draw (545,26.6) node [anchor=south] [inner sep=0.75pt]    {$X_0$};
\draw (565,133.4) node [anchor=north] [inner sep=0.75pt]    {$Y_0$};
\draw (400,133.4) node [anchor=north] [inner sep=0.75pt]    {$M$};
\draw (450,70) node  [font=\normalsize]  {$\alpha $};
\draw (487.5,86.6) node [anchor=south] [inner sep=0.75pt]    {$=$};
\draw (525,100) node  [font=\normalsize]  {$\alpha $};
\draw (515,133.4) node [anchor=north] [inner sep=0.75pt]    {$M$};
\end{tikzpicture}   \caption{Productivity for cartesian categories.}
  \label{figure:productivity:cartesian}
\end{figure}
Now, assume that we have two representatives $\bra{\alpha_{i}} = \bra{\alpha_{j}}$ for which $\bra{\alpha_{i} \comp u} = \bra{\alpha_{j} \comp v}$.
By naturality of the discarding map, $\alpha_{i} \comp \tid{\coUnit} = \alpha_{j} \comp \tid{\coUnit}$, and we call this map $\alpha_{Y}$.
Again by naturality of the discarding map, $\alpha_{i} \comp u \comp \tid{\coUnit} = \alpha_{j} \comp v \comp \tid{\coUnit}$, and discarding the output in $Y$, we get that $\alpha_{M,i} \comp u \comp \tid{\coUnit} = \alpha_{M,j} \comp v \comp \tid{\coUnit}$, which implies $\bra{\alpha_{M,i} \comp u} = \bra{\alpha_{M,j} \comp v}$.
\end{proof}

The definition of monoidal streams in the cartesian case is thus simplified (\Cref{prop:fixpoint-cartesian-streams}).
From there, the explicit construction of cartesian monoidal streams is straightforward.

\begin{defi}[Cartesian monoidal streams]\label{prop:fixpoint-cartesian-streams}
The set of \emph{cartesian monoidal streams}, given inputs $\stream{X}$ and outputs $\stream{Y}$, is the terminal fixpoint of the equation
  \[\streamProf(\stream{X}, \stream{Y}) \cong \idProf(X_{0}, Y_{0}) \times  \streamProf(\act{X_{0}}{\tail{\stream{X}}}, \tail{\stream{Y}}).\]
  In other words, a cartesian monoidal stream $f \in \STREAM(\stream{X},\stream{Y})$ is a pair consisting of
    \begin{itemize}
      \item $\fst(f) \in \hom{}(X_{0}, Y_{0})$, the \emph{first action}, 
      and
      \item $\snd(f) \in \STREAM(\act{X_{0}}{\stream{\tail{X}}}, \stream{\tail{Y}})$, the \emph{rest of the action}.
    \end{itemize}
\end{defi}

\begin{thm}\label{th:cartesianstreams}
  In the cartesian case, the final fixpoint of the equation in \Cref{eq:observationalstreamshort} is given by
  the set of causal functions,
  \begin{equation*}
    \streamProf(\stream{X}, \stream{Y}) = \prod_{n \in \naturals}^{\infty} \idProf(X_{0} \times \cdots \times X_{n}, Y_{n}).
  \end{equation*}
  That is, the category $\STREAM$ of \monoidalStreams{} coincides with the cokleisli monoidal category of the non-empty list monoidal comonad \(\neList \colon \NcatC \to \NcatC\).

\end{thm}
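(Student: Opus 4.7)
The plan is to combine the simplified fixpoint equation already derived in the preceding paragraph with the finality result of \Cref{theorem:observationalfinalcoalgebra}. Since cartesian monoidal categories are productive (\Cref{prop:cartesianproductive}), the endofunctor $(\hom{} \andThen\ \bullet)$ has a final coalgebra, computed as the limit of its terminal sequence; the Yoneda reduction shown just above collapses each step of that terminal sequence to
\[\streamProf(\stream{X}, \stream{Y}) \cong \idProf(X_0, Y_0) \times \streamProf(\act{X_0}{\tail{\stream{X}}}, \tail{\stream{Y}}).\]

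First I would iterate this simplified equation. Setting $\stream{X}^{(0)} := \stream{X}$ and $\stream{X}^{(n+1)} := \act{X_0^{(n)}}{\tail{\stream{X}^{(n)}}}$, an easy induction using associativity of the cartesian product identifies the head of $\stream{X}^{(n)}$ with the product $X_0 \times \cdots \times X_n$. Iterating the simplified fixpoint $N$ times therefore yields, for every $N$,
\[\streamProf(\stream{X}, \stream{Y}) \cong \prod_{n=0}^{N-1} \idProf(X_0 \times \cdots \times X_n, Y_n) \times \streamProf(\stream{X}^{(N)}, (Y_N, Y_{N+1}, \ldots)).\]
I would then observe that the infinite product $\prod_{n \in \naturals} \idProf(X_0 \times \cdots \times X_n, Y_n)$ is itself a fixpoint of the simplified equation, which is immediate by peeling off the $n=0$ factor. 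Finality (\Cref{th:adamek}) forces the canonical comparison map from $\streamProf(\stream{X}, \stream{Y})$ into this product to be an isomorphism, establishing the displayed equality.

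Finally, to identify $\STREAM$ with the cokleisli category of $\neList$, I would trace the isomorphism through the definitions of identity, sequential composition, and parallel composition from \Cref{sec:monoidal-streams}. Under the bijection, a causal function $(f_n \colon X_0 \times \cdots \times X_n \to Y_n)_{n \in \naturals}$ corresponds to the canonical stateful representative whose $n$-th memory is the history $X_0 \times \cdots \times X_n$ and whose $n$-th action pairs the diagonal into this memory with $f_n$. The main obstacle will be this last step: one must check that the coinductively defined composition of streams, with its dinatural quotienting of memory channels, reduces under Yoneda to the cokleisli formula using $\delta$ of $\neList$. The check relies on selecting canonical representatives of the coend and then invoking naturality of the cartesian diagonal and counit, which is exactly the structure guaranteed by cartesianness and already used in the Yoneda reduction.
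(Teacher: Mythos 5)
Your proposal is correct and takes essentially the same route as the paper, whose entire proof is the single line ``By Adamek's theorem'': it relies on exactly the Yoneda-reduced fixpoint equation and the terminal sequence whose $N$-th stage is $\prod_{k=0}^{N-1} \idProf(X_0 \times \cdots \times X_k, Y_k)$, with the limit being the displayed infinite product. One small point of phrasing: being a fixpoint does not by itself force the canonical comparison with the final fixpoint to be an isomorphism, so the clean way to finish is to note that the simplified endofunctor preserves the limit of its terminal sequence (products commute with limits) and then invoke \Cref{th:adamek} to identify that limit as the final fixpoint --- which is what your citation of Adamek is implicitly doing.
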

\begin{proof}
  By Adamek's theorem~(\Cref{th:adamek}).
\end{proof}

\begin{cor}
  Let $(\catC, \times, \mathbf{1})$ be a cartesian monoidal category. The category $\STREAM$ is cartesian monoidal.
\end{cor}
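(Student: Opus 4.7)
The plan is to apply the refined Fox theorem (\Cref{th:refinedfoxappendix}): it suffices to equip each object $\stream{X} \in \STREAM$ with a counital comagma structure $(\varepsilon_{\stream{X}}, \delta_{\stream{X}})$, verify uniformity with respect to $\tensor$, and check that every stream is a comagma homomorphism. I would transport the cartesian comagma structure on $\NcatC$ (which exists pointwise since $\catC$ is cartesian) along the memoryless embedding $\lift{\bullet} \colon \NcatC \to \STREAM$ of \Cref{def:inclusion}, setting $\varepsilon_{\stream{X}} \defn \lift{(\coUnit_{X_n})_n}$ and $\delta_{\stream{X}} \defn \lift{(\blackComonoid_{X_n})_n}$. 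Because $\lift{\bullet}$ is symmetric monoidal, the uniformity equations $\varepsilon_{\stream{X} \tensor \stream{Y}} = \varepsilon_{\stream{X}} \tensor \varepsilon_{\stream{Y}}$ and $\delta_{\stream{X} \tensor \stream{Y}} = (\delta_{\stream{X}} \tensor \delta_{\stream{Y}}) \dcomp (\im \tensor \sigma \tensor \im)$, together with $\varepsilon_{\stream{I}} = \im$ and $\delta_{\stream{I}} = \im$, transfer automatically from $\NcatC$.

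The real content is naturality of $\varepsilon$ and $\delta$ at every stream $f \in \STREAM(\stream{X},\stream{Y})$. Naturality of the counit is essentially free: cartesian categories are semicartesian, and by \Cref{prop:cartesianproductive} $\catC$ is \productive{}, so by \Cref{prop:semicartesian-streams} the category $\STREAM$ is semicartesian. Hence $f \dcomp \varepsilon_{\stream{Y}} = \varepsilon_{\stream{X}}$.

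For naturality of $\delta$, I would use the explicit description of streams in the cartesian case supplied by \Cref{th:cartesianstreams}: a stream $f \colon \stream{X} \to \stream{Y}$ corresponds to a family of causal functions $f_n \colon X_0 \times \dots \times X_n \to Y_n$, and composition in $\STREAM$ is the cokleisli composition of $\neList$. Under this identification, $\delta_{\stream{X}}$ is the stream lift of the pointwise diagonal and reduces, componentwise, to the diagonal $X_0 \times \dots \times X_n \to (X_0 \times \dots \times X_n)^2$ in $\catC$. The equation $f \dcomp \delta_{\stream{Y}} = \delta_{\stream{X}} \dcomp (f \tensor f)$ then collapses at each stage to naturality of the cartesian diagonal in $\catC$, which holds by assumption.

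Equivalently, one may package the argument abstractly: by \Cref{th:cartesianstreams}, $\STREAM$ is the cokleisli category of the comonad $\neList$ on the cartesian category $\NcatC$, and $\neList$ preserves finite products pointwise, since $\neList(\stream{X} \times \stream{Y})_n = \prod_{i=0}^{n}(X_i \times Y_i) \cong \neList(\stream{X})_n \times \neList(\stream{Y})_n$. The cokleisli category of a finite-product-preserving comonad on a cartesian category is itself cartesian, with products computed as in the base. I expect no substantive obstacle in the proof; the only subtlety is to ensure that the lifted comagma structure agrees with the tensor of $\STREAM$ inherited from $\NcatC$, which is immediate from the definition of the memoryless embedding and the monoidal structure on streams (\Cref{th:monoidalstreamscategory}).
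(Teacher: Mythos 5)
Your proof is correct, but note that the paper treats this corollary as an immediate consequence of \Cref{th:cartesianstreams} and gives no separate argument: once $\STREAM$ is identified with the cokleisli \emph{monoidal} category of $\neList$ on $\NcatC$, cartesianness follows because the cokleisli category of \emph{any} comonad $G$ on a category with finite products has finite products, via $\hom(GA, B\times C)\cong \hom(GA,B)\times\hom(GA,C)$ naturally in $A$ (no product-preservation hypothesis on $G$ is needed); the only substantive point is that this product agrees with the tensor $\STREAM$ inherits, which is exactly what the oplax monoidal comonad structure of \Cref{th:nelist} and the ``monoidal'' qualifier in \Cref{th:cartesianstreams} supply. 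Your closing paragraph is essentially this argument, stated with a slightly stronger hypothesis than necessary. Your primary route, through the refined Fox theorem (\Cref{th:refinedfoxappendix}), is a genuinely different and more hands-on decomposition: transporting the pointwise comagma along the memoryless embedding, getting counit naturality from \Cref{prop:cartesianproductive} and \Cref{prop:semicartesian-streams}, and reducing comultiplication naturality stagewise to naturality of the diagonal in $\catC$ under the cokleisli description. It is correct and stays entirely within the paper's own toolkit, at the cost of being considerably longer than the one-line cokleisli argument; its main payoff is that it isolates exactly which structural facts about $\STREAM$ (semicartesianness, naturality of copying) are responsible for cartesianness, rather than outsourcing them to the comonad identification.
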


\begin{rem}
  \label{rem:pointed-delayed-feedback}
  The feedback operation in cartesian streams is similar to the delayed trace of stateful morphism sequences described by Sprunger and Katsumata~\cite{katsumata19}.  In fact, extensional equivalence classes of stateful morphism sequences over a cartesian category \(\cat{C}\) are in bijection with cartesian streams over the same category \(\cat{C}\).
  Explicitly, for a stateful morphism sequence, \((i,\stream{s}) \colon \stream{X} \to \stream{Y}\), the corresponding cartesian stream is \(\act{i}{\stream{s}}\), of the same type.
  Vice versa, a cartesian stream \(\stream{f} \colon \stream{X} \to \stream{Y}\) gives an equivalence class of stateful morphism sequences \((\id{1},\stream{f})\) of the same type.
  These mappings give isomorphisms when stateful morphism sequences are quotiented by extensional equivalence as defined by Sprunger and Katsumata~\cite{katsumata19}.

  Moreover, delayed trace and feedback are interderivable. Following Sprunger and Katsumata~\cite{katsumata19}, consider the operation \(\bigcirc\) that removes the first component of the stream --- that yields \(\bigcirc \stream{T} = (T_{1}, T_{2}, \dots)\) for a stream \(\stream{T} = (T_{0}, T_{1}, \dots)\).
  The delay \(\delay\) is its left inverse, \(\bigcirc \delay \stream{T} = \stream{T}\).
  For a stateful morphism sequence \((i,\stream{s}) \colon \stream{T} \times \stream{X} \to \bigcirc \stream{T} \times \stream{Y}\), its delayed trace along \(\stream{T}\) with initial point \(p\) can be expressed as a feedback along \(\tail{\stream{T}}\).
  Vice versa, the feedback of \(\stream{f} \colon \delay \stream{U} \times \stream{X} \to \stream{U} \times \stream{Y}\) along \(\stream{U}\) can be expressed as a delayed trace along \(\delay\stream{U}\).
  \begin{align*}
    \mathsf{tr}_{p}^{\stream{T}}(i,\stream{s}) &= \fbk^{\tail{\stream{T}}}(\act{(i \times p)}{\stream{s}}), &
    \fbk^{\stream{U}}(\stream{f}) &= \mathsf{tr}_{\id{1}}^{\delay\stream{U}}(\stream{f}).
  \end{align*}
  We restrict our discussion to the cartesian case studied by Sprunger and Katsumata~\cite{katsumata19}, but a similar correspondence would hold generalising stateful morphism sequences to the monoidal case (cf.~\cite{carette21});
  in general, the equivalence relation of monoidal streams does not coincide with the one of monoidal stateful morphism sequences.
\end{rem}

\subsection{Example: the Fibonacci sequence}\label{example:fibonacci}
Consider $(\Set,\times,\mathbf{1})$, the \cartesianMonoidalCategory{} of small sets and functions.
And let us go back to the morphism $\mathsf{fib} \in \STREAM(\mathbf{1},\mathbb{N})$ that we presented in the Introduction (\Cref{figure:fibonacci}).
By \Cref{th:cartesianstreams}, a morphism of this type is, equivalently, a sequence of natural numbers. Using the previous definitions in \Cref{sec:monoidal-streams,section:classicalstreams}, \hyperlink{linkexamplefibonacci}{we can explicitly compute} this sequence to be $\mathsf{fib} = [0,1,1,2,3,5,8,\dots]$ (see the implementation \cite{Roman_Arrow_Streams_for_2022}).

\section{Stochastic Streams}
\label{section:stochastic-streams}

\MonoidalCategories{} are well suited for reasoning about probabilistic processes. Several different categories of probabilistic channels have been proposed in the literature \cite{panangaden1999, baez2016, cho2019}. They were largely unified by Fritz \cite{fritz2020} under the name of \emph{\MarkovCategories{}}. For simplicity, we work in the discrete stochastic setting, i.e. in the Kleisli category of the finite distribution monad, $\Stoch$, but we will be careful to isolate the relevant structure of \MarkovCategories{} that we use.

The main result of this section is that \emph{\controlledStochasticProcesses{}} \cite{fleming1975,ross1996stochastic} are precisely monoidal streams over $\Stoch$.
That is, controlled stochastic processes are the canonical solution over $\Stoch$ of the fixpoint equation in \Cref{eq:observationalstreamshort}.

\subsection{Preliminaries: Markov categories}
\begin{defi}
  [Markov category, {{\cite[Definition 2.1]{fritz2020}}}]
  \defining{linkmarkovcategory}{} A \emph{Markov category} $\catC$ is a \symmetricMonoidalCategory{} in which
  each object $X \in \catC$ has a cocommutative comonoid structure
  $(X, \varepsilon = \coUnit_{X} \colon X \to I, \delta = \coMult_{X} \colon X \to X \tensor X)$ with
  \begin{itemize}[label=$\triangleright$]
    \item \emph{uniform} comultiplications, $\blackComonoid_{X \tensor Y} = (\blackComonoid_{X} \tensor \blackComonoid_{Y}) \comp \tid{\sigma_{X,Y}}$;
    \item \emph{uniform} counits, $\coUnit_{X \tensor Y} = \coUnit_{X} \tensor \coUnit_{Y}$; and
    \item \emph{natural} counits, $f \comp \coUnit_{Y} = \coUnit_{X}$ for each $f \colon X \to Y$.
  \end{itemize}
  Crucially, comultiplications do not need to be natural.
\end{defi}

\begin{defi}\label{def:distributionmonad}\defining{linkfinitedistribution}
  The finite distribution commutative monad $\defining{linkdistr}{\ensuremath{\fun{D}}} \colon \Set \to \Set$ associates to each set the set of finite-support probability distributions over it.
  \[\distr(X) \defn
    \left\{ p \colon X \to [0,1] \ \\
      \middle|%
      \sum_{p(x) > 0}^{|\{ x \mid p(x) > 0\}| < \infty}p(x) = 1 \right\}.\]
  We call $\defining{linkstoch}{\ensuremath{\cat{Stoch}}}$ the symmetric monoidal Kleisli category of the finite distribution monad, $\kleisli{\distr}$.
  We write $f(y \vert x)$ for the probability $f(x)(y) \in [0, 1]$. Composition, $(f \comp g)$, is defined by
    $$(f \dcomp g)(z \vert x) = \sum_{y \in Y} g(z \vert y) f(y \vert x).$$
\end{defi}

The cartesian product $(\times)$ in $\Set$ induces a monoidal (non-cartesian) product in $\Stoch$.
That is, $\Stoch$ has comonoids $(\blackComonoid)_X: X \to X \times X$ on every object, with $(\blackComonoidUnit)_X \colon X \to 1$ as counit.
However, contrary to what happens in $\Set$, these comultiplications are not natural: \emph{sampling and copying the result} is different from \emph{taking two independent samples}. The \MarkovCategory{} $\Stoch$ also has \emph{conditionals}~\cite{fritz2020}, a property which we will use to prove the main result regarding stochastic processes.

\begin{remC}[{{\cite[Remark 2.4]{fritz2020}}}]
  Any cartesian category is a \MarkovCategory{}. However, not any \MarkovCategory{} is cartesian, and the most interesting examples are those that fail to be cartesian, such as $\Stoch$. The failure of comultiplication being natural makes it impossible to apply Fox's theorem (\Cref{th:fox}).
\end{remC}

\begin{rem}[Triangle notation]
  In a \MarkovCategory{}, given any $f \colon X_{0} \to Y_{0}$ and any $g \colon Y_{0} \tensor X_{0} \tensor X_{1} \to Y_{1}$, we write $(f \triangleleft g) \colon X_{0} \tensor X_{1} \to Y_{0} \tensor Y_{1}$ for the morphism defined by
  \[
    (f \triangleleft g) = 
      \tid{(\coMult_{X_0}) \comp f \comp (\coMult_{Y_0})} \comp \tid{g},
  \]
  which is the string diagram in \Cref{figure:notationtriangle}.
  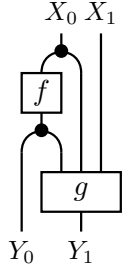
\begin{figure}[h!]
\tikzset{every picture/.style={line width=0.85pt}} %
\begin{tikzpicture}[x=0.75pt,y=0.75pt,yscale=-1,xscale=1]
\draw   (190,90) -- (210,90) -- (210,110) -- (190,110) -- cycle ;
\draw    (210,70) -- (210,78.87) ;
\draw    (190,130) -- (190,170) ;
\draw   (200,140) -- (240,140) -- (240,160) -- (200,160) -- cycle ;
\draw    (220,160) -- (220,170) ;
\draw    (230,70) -- (230,140) ;
\draw    (220,90) -- (220,140) ;
\draw    (210,130) -- (210,140) ;
\draw    (200,90) .. controls (199.8,75.8) and (220.2,75.8) .. (220,90) ;
\draw  [fill={rgb, 255:red, 0; green, 0; blue, 0 }  ,fill opacity=1 ] (207.1,78.87) .. controls (207.1,77.27) and (208.4,75.97) .. (210,75.97) .. controls (211.6,75.97) and (212.9,77.27) .. (212.9,78.87) .. controls (212.9,80.47) and (211.6,81.77) .. (210,81.77) .. controls (208.4,81.77) and (207.1,80.47) .. (207.1,78.87) -- cycle ;
\draw    (200,110) -- (200,118.87) ;
\draw    (190,130) .. controls (189.8,115.8) and (210.2,115.8) .. (210,130) ;
\draw  [fill={rgb, 255:red, 0; green, 0; blue, 0 }  ,fill opacity=1 ] (197.1,118.87) .. controls (197.1,117.27) and (198.4,115.97) .. (200,115.97) .. controls (201.6,115.97) and (202.9,117.27) .. (202.9,118.87) .. controls (202.9,120.47) and (201.6,121.77) .. (200,121.77) .. controls (198.4,121.77) and (197.1,120.47) .. (197.1,118.87) -- cycle ;
\draw (200,100) node  [font=\normalsize]  {$f$};
\draw (220,150) node  [font=\normalsize]  {$g$};
\draw (210,66.6) node [anchor=south] [inner sep=0.75pt]  [font=\small]  {$X_{0}$};
\draw (190,173.4) node [anchor=north] [inner sep=0.75pt]  [font=\small]  {$Y_{0}$};
\draw (220,173.4) node [anchor=north] [inner sep=0.75pt]  [font=\small]  {$Y_{1}$};
\draw (230,66.6) node [anchor=south] [inner sep=0.75pt]  [font=\small]  {$X_{1}$};
\end{tikzpicture}   \caption{The morphism $(f \triangleleft g)$.}
  \label{figure:notationtriangle}
  \end{figure}
\end{rem}

\begin{prop}[Triangle notation is associative]
  Up to symmetries,
  \[
    (f \triangleleft g) \triangleleft h = f \triangleleft (g \triangleleft h).
  \]
  We may simply write $(f \triangleleft g \triangleleft h)$ for any of the two, omitting the symmetry.
\end{prop}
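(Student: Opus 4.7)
The plan is to unfold both sides of the equation using the definition of \(\triangleleft\), and then reduce each resulting composite to a common normal form using the coassociativity and cocommutativity of the comonoid structure on each object, together with the uniformity of comultiplications in a Markov category. Since the triangle operation encodes ``copy the input, feed appropriate copies to each morphism, and pass earlier outputs into later morphisms,'' both bracketings should correspond to the same ``fully copied'' diagram in which $X_{0}, X_{1}, X_{2}$ are each copied enough times and routed (via symmetries) to the inputs of $f$, $g$, and $h$, while intermediate outputs $Y_{0}$ and $Y_{1}$ are copied and routed to the subsequent morphisms.

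The concrete steps I would carry out are as follows. First, I would fix the appropriate typing: $f \colon X_{0} \to Y_{0}$, $g \colon Y_{0} \tensor X_{0} \tensor X_{1} \to Y_{1}$, and $h \colon Y_{0} \tensor Y_{1} \tensor X_{0} \tensor X_{1} \tensor X_{2} \to Y_{2}$, so that both $(f \triangleleft g) \triangleleft h$ and $f \triangleleft (g \triangleleft h)$ are morphisms of type $X_{0} \tensor X_{1} \tensor X_{2} \to Y_{0} \tensor Y_{1} \tensor Y_{2}$. Next, I would expand $(f \triangleleft g) \triangleleft h$ by first substituting the definition of \(f \triangleleft g\) and then applying the outer triangle; this produces a string diagram in which $Y_{0}$ (the output of $f$) is first copied once by the inner triangle to feed both $g$ and an outer wire, and then the resulting pair $(Y_{0}, Y_{1})$ is copied again by the outer triangle to feed $h$. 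Using coassociativity \(\coMult \dcomp (\id \tensor \coMult) = \coMult \dcomp (\coMult \tensor \id)\) and the uniform comultiplication axiom \(\coMult_{Y_{0} \tensor Y_{1}} = (\coMult_{Y_{0}} \tensor \coMult_{Y_{1}}) \dcomp (\id \tensor \sigma \tensor \id)\), the cascade of copies of $Y_{0}$ collapses into a single ternary copy structure delivering $Y_{0}$ to $g$, to $h$, and to the external output.

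Then, I would perform the symmetric expansion for $f \triangleleft (g \triangleleft h)$: here $Y_{0}$ is copied by the outer triangle in a different bracketing order, but again coassociativity of \(\coMult_{Y_{0}}\) collapses the two distinct binary copies into the same ternary copy, and the same argument applies to the copies of each $X_{i}$. After this normalisation, both sides become the same diagram: a ternary copy of $X_{0}$ (distributed to $f$, $g$, $h$), a binary copy of $X_{1}$ (distributed to $g$ and $h$), a single wire for $X_{2}$ into $h$, plus a ternary copy of $Y_{0}$ and a binary copy of $Y_{1}$, all routed by symmetries. Equality up to symmetries then follows, which is exactly what the statement asserts.

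The main obstacle will be purely bookkeeping: writing out the full composite for each side requires carefully tracking which wires are copied and which symmetries are inserted, and ensuring that the coassociativity rewrites match on both sides. This is routine in a Markov category because the comonoid structure is coassociative, cocommutative, and uniform, so no naturality of \(\coMult\) is needed (which is important: naturality fails in \(\Stoch\)). Once the diagrams are drawn, the equality reduces to a standard coherence argument for cocommutative comonoids, and the ``up to symmetries'' phrasing absorbs the only residual discrepancy.
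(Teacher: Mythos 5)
Your proposal is correct and takes essentially the same route as the paper: the paper's proof is exactly the string-diagrammatic argument that both bracketings normalise to the same diagram using coassociativity and cocommutativity of \(\coMult\) (together with uniformity and the symmetries absorbed by the ``up to symmetries'' clause). Your write-up simply makes explicit the bookkeeping that the paper delegates to its figure, including the correct observation that naturality of \(\coMult\) is never invoked.
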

\begin{proof}
  Using string diagrams (\Cref{figure:assoctriangle}). Note that $(\coMult)$ is coassociative and cocommutative.
  \begin{figure}[ht]
\tikzset{every picture/.style={line width=0.85pt}} %
\begin{tikzpicture}[x=0.75pt,y=0.75pt,yscale=-1,xscale=1]
\draw   (320,60) -- (340,60) -- (340,80) -- (320,80) -- cycle ;
\draw    (340,40) -- (340,48.87) ;
\draw    (320,100) -- (320,140) ;
\draw   (330,110) -- (370,110) -- (370,130) -- (330,130) -- cycle ;
\draw    (350,130) -- (350,140) ;
\draw    (360,40) -- (360,110) ;
\draw    (350,60) -- (350,110) ;
\draw    (340,100) -- (340,110) ;
\draw    (330,60) .. controls (329.8,45.8) and (350.2,45.8) .. (350,60) ;
\draw  [fill={rgb, 255:red, 0; green, 0; blue, 0 }  ,fill opacity=1 ] (337.1,48.87) .. controls (337.1,47.27) and (338.4,45.97) .. (340,45.97) .. controls (341.6,45.97) and (342.9,47.27) .. (342.9,48.87) .. controls (342.9,50.47) and (341.6,51.77) .. (340,51.77) .. controls (338.4,51.77) and (337.1,50.47) .. (337.1,48.87) -- cycle ;
\draw    (330,80) -- (330,88.87) ;
\draw    (320,100) .. controls (319.8,85.8) and (340.2,85.8) .. (340,100) ;
\draw  [fill={rgb, 255:red, 0; green, 0; blue, 0 }  ,fill opacity=1 ] (327.1,88.87) .. controls (327.1,87.27) and (328.4,85.97) .. (330,85.97) .. controls (331.6,85.97) and (332.9,87.27) .. (332.9,88.87) .. controls (332.9,90.47) and (331.6,91.77) .. (330,91.77) .. controls (328.4,91.77) and (327.1,90.47) .. (327.1,88.87) -- cycle ;
\draw    (360,20) -- (360,28.87) ;
\draw    (340,40) .. controls (339.8,25.8) and (380.2,25.8) .. (380,40) ;
\draw  [fill={rgb, 255:red, 0; green, 0; blue, 0 }  ,fill opacity=1 ] (357.1,28.87) .. controls (357.1,27.27) and (358.4,25.97) .. (360,25.97) .. controls (361.6,25.97) and (362.9,27.27) .. (362.9,28.87) .. controls (362.9,30.47) and (361.6,31.77) .. (360,31.77) .. controls (358.4,31.77) and (357.1,30.47) .. (357.1,28.87) -- cycle ;
\draw    (380,20) -- (380,28.87) ;
\draw    (360,40) .. controls (359.8,25.8) and (400.2,25.8) .. (400,40) ;
\draw  [fill={rgb, 255:red, 0; green, 0; blue, 0 }  ,fill opacity=1 ] (377.1,28.87) .. controls (377.1,27.27) and (378.4,25.97) .. (380,25.97) .. controls (381.6,25.97) and (382.9,27.27) .. (382.9,28.87) .. controls (382.9,30.47) and (381.6,31.77) .. (380,31.77) .. controls (378.4,31.77) and (377.1,30.47) .. (377.1,28.87) -- cycle ;
\draw    (380,40) -- (380,180) ;
\draw    (320,140) -- (320,148.87) ;
\draw    (310,160) .. controls (309.8,145.8) and (330.2,145.8) .. (330,160) ;
\draw  [fill={rgb, 255:red, 0; green, 0; blue, 0 }  ,fill opacity=1 ] (317.1,148.87) .. controls (317.1,147.27) and (318.4,145.97) .. (320,145.97) .. controls (321.6,145.97) and (322.9,147.27) .. (322.9,148.87) .. controls (322.9,150.47) and (321.6,151.77) .. (320,151.77) .. controls (318.4,151.77) and (317.1,150.47) .. (317.1,148.87) -- cycle ;
\draw    (350,140) -- (350,148.87) ;
\draw    (340,160) .. controls (339.8,145.8) and (360.2,145.8) .. (360,160) ;
\draw  [fill={rgb, 255:red, 0; green, 0; blue, 0 }  ,fill opacity=1 ] (347.1,148.87) .. controls (347.1,147.27) and (348.4,145.97) .. (350,145.97) .. controls (351.6,145.97) and (352.9,147.27) .. (352.9,148.87) .. controls (352.9,150.47) and (351.6,151.77) .. (350,151.77) .. controls (348.4,151.77) and (347.1,150.47) .. (347.1,148.87) -- cycle ;
\draw    (400,40) .. controls (401,90.17) and (389.67,121.5) .. (390,180) ;
\draw    (410,20) -- (410,40) ;
\draw    (410,40) .. controls (411,90.17) and (399,119.5) .. (400,180) ;
\draw    (330,160) .. controls (330,179.5) and (360,161.17) .. (360,180) ;
\draw    (360,160) .. controls (360.67,171.5) and (370,169.5) .. (370,180) ;
\draw   (350,180) -- (410,180) -- (410,200) -- (350,200) -- cycle ;
\draw    (310,160) -- (310,220) ;
\draw    (340,160) -- (340,220) ;
\draw    (380,200) -- (380,220) ;
\draw   (470,40) -- (490,40) -- (490,60) -- (470,60) -- cycle ;
\draw    (490,20) -- (490,28.87) ;
\draw    (480,40) .. controls (479.8,25.8) and (500.2,25.8) .. (500,40) ;
\draw  [fill={rgb, 255:red, 0; green, 0; blue, 0 }  ,fill opacity=1 ] (487.1,28.87) .. controls (487.1,27.27) and (488.4,25.97) .. (490,25.97) .. controls (491.6,25.97) and (492.9,27.27) .. (492.9,28.87) .. controls (492.9,30.47) and (491.6,31.77) .. (490,31.77) .. controls (488.4,31.77) and (487.1,30.47) .. (487.1,28.87) -- cycle ;
\draw    (480,60) -- (480,68.87) ;
\draw    (470,80) .. controls (469.8,65.8) and (490.2,65.8) .. (490,80) ;
\draw  [fill={rgb, 255:red, 0; green, 0; blue, 0 }  ,fill opacity=1 ] (477.1,68.87) .. controls (477.1,67.27) and (478.4,65.97) .. (480,65.97) .. controls (481.6,65.97) and (482.9,67.27) .. (482.9,68.87) .. controls (482.9,70.47) and (481.6,71.77) .. (480,71.77) .. controls (478.4,71.77) and (477.1,70.47) .. (477.1,68.87) -- cycle ;
\draw    (515,20) -- (515,85) ;
\draw    (540,20) -- (540,180) ;
\draw    (500,40) -- (500,90) ;
\draw    (500,80) -- (500,100) ;
\draw    (490,110) .. controls (489.8,95.8) and (510.2,95.8) .. (510,110) ;
\draw  [fill={rgb, 255:red, 0; green, 0; blue, 0 }  ,fill opacity=1 ] (497.1,100) .. controls (497.1,98.4) and (498.4,97.1) .. (500,97.1) .. controls (501.6,97.1) and (502.9,98.4) .. (502.9,100) .. controls (502.9,101.6) and (501.6,102.9) .. (500,102.9) .. controls (498.4,102.9) and (497.1,101.6) .. (497.1,100) -- cycle ;
\draw    (515,80) -- (515,100) ;
\draw    (505,110) .. controls (504.8,95.8) and (525.2,95.8) .. (525,110) ;
\draw  [fill={rgb, 255:red, 0; green, 0; blue, 0 }  ,fill opacity=1 ] (512.1,100) .. controls (512.1,98.4) and (513.4,97.1) .. (515,97.1) .. controls (516.6,97.1) and (517.9,98.4) .. (517.9,100) .. controls (517.9,101.6) and (516.6,102.9) .. (515,102.9) .. controls (513.4,102.9) and (512.1,101.6) .. (512.1,100) -- cycle ;
\draw    (475,110) .. controls (474.8,95.8) and (495.2,95.8) .. (495,110) ;
\draw  [fill={rgb, 255:red, 0; green, 0; blue, 0 }  ,fill opacity=1 ] (482.1,100) .. controls (482.1,98.4) and (483.4,97.1) .. (485,97.1) .. controls (486.6,97.1) and (487.9,98.4) .. (487.9,100) .. controls (487.9,101.6) and (486.6,102.9) .. (485,102.9) .. controls (483.4,102.9) and (482.1,101.6) .. (482.1,100) -- cycle ;
\draw    (470,80) .. controls (470.17,102.5) and (459.5,111.17) .. (460,130) ;
\draw    (490,77.1) .. controls (490.67,88.6) and (485,89.5) .. (485,100) ;
\draw   (465,130) -- (505,130) -- (505,150) -- (465,150) -- cycle ;
\draw    (505,110) .. controls (504.83,119.17) and (495.17,119.5) .. (495,130) ;
\draw    (490,110) .. controls (489.83,119.17) and (485.17,119.5) .. (485,130) ;
\draw    (495,110) .. controls (494.83,119.17) and (510.17,119.5) .. (510,130) ;
\draw    (510,110) .. controls (509.83,119.17) and (520.17,119.5) .. (520,130) ;
\draw    (525,110) .. controls (524.83,119.17) and (530.17,119.5) .. (530,130) ;
\draw    (475,110) -- (475,130) ;
\draw    (460,130) -- (460,170) ;
\draw    (520,130) -- (520,180) ;
\draw    (485,150) -- (485,158.87) ;
\draw    (475,170) .. controls (474.8,155.8) and (495.2,155.8) .. (495,170) ;
\draw  [fill={rgb, 255:red, 0; green, 0; blue, 0 }  ,fill opacity=1 ] (482.1,158.87) .. controls (482.1,157.27) and (483.4,155.97) .. (485,155.97) .. controls (486.6,155.97) and (487.9,157.27) .. (487.9,158.87) .. controls (487.9,160.47) and (486.6,161.77) .. (485,161.77) .. controls (483.4,161.77) and (482.1,160.47) .. (482.1,158.87) -- cycle ;
\draw    (530,130) -- (530,180) ;
\draw    (495,170) .. controls (494.5,176.5) and (510.17,173.17) .. (510,180) ;
\draw   (490,180) -- (550,180) -- (550,200) -- (490,200) -- cycle ;
\draw    (460,170) -- (460,215) -- (460,220) ;
\draw    (520,200) -- (520,220) ;
\draw    (475,170) .. controls (474.83,179.17) and (480.17,209.5) .. (480,220) ;
\draw    (510,130) -- (510,170) ;
\draw    (510,170) .. controls (509.5,176.5) and (495.17,173.17) .. (495,180) ;
\draw (330,70) node  [font=\normalsize]  {$f$};
\draw (350,120) node  [font=\normalsize]  {$g$};
\draw (380,190) node  [font=\normalsize]  {$h$};
\draw (480,50) node  [font=\normalsize]  {$f$};
\draw (485,140) node  [font=\normalsize]  {$g$};
\draw (520,190) node  [font=\normalsize]  {$h$};
\draw (360,16.6) node [anchor=south] [inner sep=0.75pt]  [font=\small]  {$X_{0}$};
\draw (380,16.6) node [anchor=south] [inner sep=0.75pt]  [font=\small]  {$X_{1}$};
\draw (410,16.6) node [anchor=south] [inner sep=0.75pt]  [font=\small]  {$X_{2}$};
\draw (310,223.4) node [anchor=north] [inner sep=0.75pt]  [font=\small]  {$Y_{0}$};
\draw (340,223.4) node [anchor=north] [inner sep=0.75pt]  [font=\small]  {$Y_{1}$};
\draw (380,223.4) node [anchor=north] [inner sep=0.75pt]  [font=\small]  {$Y_{2}$};
\draw (490,16.6) node [anchor=south] [inner sep=0.75pt]  [font=\small]  {$X_{0}$};
\draw (515,16.6) node [anchor=south] [inner sep=0.75pt]  [font=\small]  {$X_{1}$};
\draw (540,16.6) node [anchor=south] [inner sep=0.75pt]  [font=\small]  {$X_{2}$};
\draw (460,223.4) node [anchor=north] [inner sep=0.75pt]  [font=\small]  {$Y_{0}$};
\draw (480,223.4) node [anchor=north] [inner sep=0.75pt]  [font=\small]  {$Y_{1}$};
\draw (520,223.4) node [anchor=north] [inner sep=0.75pt]  [font=\small]  {$Y_{2}$};
\draw (425,117.4) node [anchor=north west][inner sep=0.75pt]    {$=$};
\end{tikzpicture}     \caption{Associativity, up to symmetries, of the triangle operation.}
    \label{figure:assoctriangle}
  \end{figure}
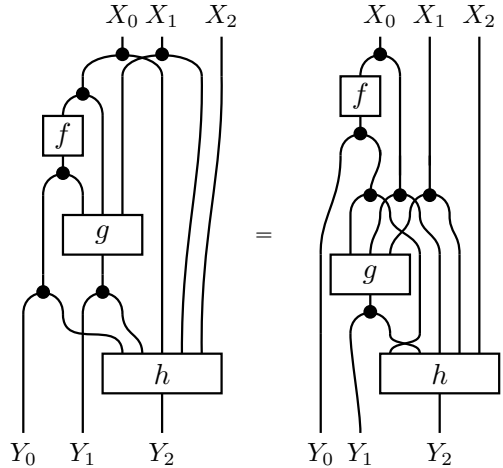
\end{proof}

The structure of a \MarkovCategory{} is very basic. In most cases, we do need extra structure to reason about probabilities: this is the role of conditionals and ranges.

\begin{defi}[Conditionals, {{\cite[Definition 11.5]{fritz2020}}}]
  Let $\catC$ be a \MarkovCategory{}. We say that $\catC$ has
  \emph{conditionals} if for every morphism $f \colon A \to X \tensor Y$,
  writing $f_{Y} \colon A \to X$ for its first projection, there
  exists $c_{f} \colon X \tensor A \to Y$ such that
  $f = f_{Y} \triangleleft c_{f}$ (\Cref{figure:conditionalsmarkov}).
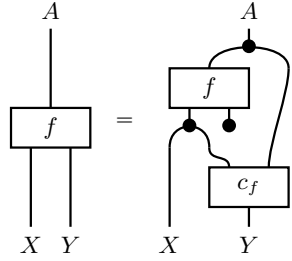
\begin{figure}[!ht]
\tikzset{every picture/.style={line width=0.85pt}} %
\begin{tikzpicture}[x=0.75pt,y=0.75pt,yscale=-1,xscale=1]
\draw   (100,110) -- (140,110) -- (140,130) -- (100,130) -- cycle ;
\draw    (120,70) -- (120,110) ;
\draw    (110,130) -- (110,170) ;
\draw    (130,130) -- (130,170) ;
\draw    (220,70) -- (220,78.87) ;
\draw    (200,90) .. controls (199.8,75.8) and (240.2,75.8) .. (240,90) ;
\draw  [fill={rgb, 255:red, 0; green, 0; blue, 0 }  ,fill opacity=1 ] (217.1,78.87) .. controls (217.1,77.27) and (218.4,75.97) .. (220,75.97) .. controls (221.6,75.97) and (222.9,77.27) .. (222.9,78.87) .. controls (222.9,80.47) and (221.6,81.77) .. (220,81.77) .. controls (218.4,81.77) and (217.1,80.47) .. (217.1,78.87) -- cycle ;
\draw   (180,90) -- (220,90) -- (220,110) -- (180,110) -- cycle ;
\draw    (210,110) -- (210,118.87) ;
\draw  [fill={rgb, 255:red, 0; green, 0; blue, 0 }  ,fill opacity=1 ] (207.1,118.87) .. controls (207.1,117.27) and (208.4,115.97) .. (210,115.97) .. controls (211.6,115.97) and (212.9,117.27) .. (212.9,118.87) .. controls (212.9,120.47) and (211.6,121.77) .. (210,121.77) .. controls (208.4,121.77) and (207.1,120.47) .. (207.1,118.87) -- cycle ;
\draw    (190,110) -- (190,118.87) ;
\draw    (180,130) .. controls (179.8,115.8) and (200.2,115.8) .. (200,130) ;
\draw  [fill={rgb, 255:red, 0; green, 0; blue, 0 }  ,fill opacity=1 ] (187.1,118.87) .. controls (187.1,117.27) and (188.4,115.97) .. (190,115.97) .. controls (191.6,115.97) and (192.9,117.27) .. (192.9,118.87) .. controls (192.9,120.47) and (191.6,121.77) .. (190,121.77) .. controls (188.4,121.77) and (187.1,120.47) .. (187.1,118.87) -- cycle ;
\draw    (180,130) -- (180,170) ;
\draw   (200,140) -- (240,140) -- (240,160) -- (200,160) -- cycle ;
\draw    (200,130) .. controls (199.8,137.4) and (210.2,133) .. (210,140) ;
\draw    (240,90) .. controls (240.14,102.14) and (230,127.83) .. (230,140) ;
\draw    (220,160) -- (220,170) ;
\draw (120,120) node  []  {$f$};
\draw (200,100) node  []  {$f$};
\draw (220,150) node  []  {$c_{f}$};
\draw (151,113.4) node [anchor=north west][inner sep=0.75pt]    {$=$};
\draw (120,66.6) node [anchor=south] [inner sep=0.75pt]    {$A$};
\draw (220,66.6) node [anchor=south] [inner sep=0.75pt]    {$A$};
\draw (110,173.4) node [anchor=north] [inner sep=0.75pt]    {$X$};
\draw (130,173.4) node [anchor=north] [inner sep=0.75pt]    {$Y$};
\draw (180,173.4) node [anchor=north] [inner sep=0.75pt]    {$X$};
\draw (220,173.4) node [anchor=north] [inner sep=0.75pt]    {$Y$};
\end{tikzpicture}   \caption{Condititionals in a Markov category.}
  \label{figure:conditionalsmarkov}
\end{figure}
\end{defi}

\begin{prop}
  The Markov category $\Stoch$ has conditionals \cite[Example 11.6]{fritz2020}.
\end{prop}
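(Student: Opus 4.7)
The plan is to give the standard disintegration of a discrete joint distribution. Fix a morphism $f \colon A \to X \tensor Y$ in $\Stoch$, which is a function assigning to each $a \in A$ a finitely-supported probability distribution $f(-,-\mid a)$ on $X \times Y$. First I would write down the marginal explicitly as
\[ f_{Y}(x \mid a) \defn \sum_{y \in Y} f(x,y \mid a), \]
which is the component of $f$ composed with $\im_{X} \tensor \coUnit_{Y}$, i.e.\ the first projection in the sense of \(\Stoch\).

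Next I would define the candidate conditional $c_{f} \colon X \tensor A \to Y$ by the elementary Bayes rule:
\[ c_{f}(y \mid x, a) \defn
  \begin{cases} f(x, y \mid a)/f_{Y}(x \mid a), & \text{if } f_{Y}(x \mid a) > 0, \\ u(y), & \text{otherwise,} \end{cases} \]
where $u$ is any fixed finitely-supported distribution on $Y$ (picked once and for all, possibly depending on $Y$ only). Finite support of $f(-,-\mid a)$ ensures $c_{f}(-\mid x, a)$ is a finitely-supported probability distribution for every $(x,a)$, so $c_{f}$ is a well-defined morphism of $\Stoch$.

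It then remains to verify the defining equation $f = f_{Y} \triangleleft c_{f}$ of~\Cref{figure:conditionalsmarkov}. Unpacking the triangle notation in $\Stoch$ using the fact that the comultiplication $\coMult$ is the diagonal $a \mapsto \delta_{(a,a)}$, the right-hand side sends $a \in A$ to the distribution
\[ (x, y) \mapsto f_{Y}(x \mid a) \cdot c_{f}(y \mid x, a). \]
A case split on whether $f_{Y}(x \mid a) = 0$ finishes the computation: when $f_{Y}(x \mid a) > 0$, the product telescopes to $f(x,y \mid a)$ by definition of $c_{f}$; when $f_{Y}(x \mid a) = 0$, both sides vanish because $0 \leq f(x, y \mid a) \leq f_{Y}(x \mid a) = 0$, regardless of the arbitrary choice of $u$. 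The only subtle point is precisely this arbitrariness on the measure-zero locus, which is harmless in the discrete setting because a single global choice of $u$ produces an honest morphism $c_{f}$ of $\Stoch$; no measurability considerations intervene, in contrast with the continuous case.
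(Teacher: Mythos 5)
Your proof is correct and takes essentially the same route as the paper: both disintegrate the discrete joint distribution via Bayes' rule, defining $c_{f}(y \mid x,a)$ as $f(x,y \mid a)$ divided by the marginal over $Y$ when that marginal is positive, and an arbitrary fixed distribution on $Y$ otherwise. If anything, your version is cleaner — your normalising sum $\sum_{y} f(x,y \mid a)$ is the correct one (the paper's displayed formula sums over $x$, which appears to be a typo), and your verification of $f = f_{Y} \triangleleft c_{f}$, including the case where the marginal vanishes, is more explicit than the paper's ``straightforward to check''.
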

\begin{proof}
  Let $f \colon A \to X \tensor Y$. If $Y$ is empty, we are automatically done. If not, pick some $y_{0} \in Y$, and define
  \[c_{f}(y|x,a) = \left\{
	\begin{array}{ll}
		f(x,y|a) / \sum_{x \in X} f(x,y|a)  \\ \qquad \mbox{if } f(x,y|a) > 0 \mbox{ for some } x \in X, \\
		(y = y_{0}) \quad \mbox{otherwise}.
	\end{array}
    \right. \]
  It is straightforward to check that this does indeed define a distribution, and that it factors the original $f$ as expected.
\end{proof}

\begin{defi}[Ranges]
  In a Markov category, a \emph{range} for a morphism $f \colon A \to B$ is a morphism $r_{f} \colon A \tensor B \to A \tensor B$
  that
  \begin{enumerate}
    \item does not change its output $f \triangleleft \im_{A \tensor B} = f \triangleleft r_{f}$,
    \item is \emph{deterministic}, meaning $r_{f} \comp \coMult_{A \tensor B} = \coMult_{A \tensor B} \comp (r_{f} \tensor r_{f})$,
    \item and has the \emph{range} property, $f \triangleleft g = f \triangleleft h$ must imply
    \[(r_{f} \tensor \im) \comp g = (r_{f} \tensor \im) \comp h\]
    for any suitably typed $g$ and $h$.
  \end{enumerate}
  We say that a Markov category \emph{has ranges} if there exists a range for each morphism of the category.
\end{defi}

\begin{prop}
  The \MarkovCategory{} $\Stoch$ has ranges.
\end{prop}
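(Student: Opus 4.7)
The plan is to construct $r_{f}$ explicitly as a deterministic stochastic map and verify each of the three range axioms. For each $a \in A$, the distribution $f(\cdot \mid a) \in \distr(B)$ has non-empty finite support, so (using the underlying set structure of $B$, well-ordered if necessary) we may pick a representative element $b_{0}(a) \in B$ with $f(b_{0}(a) \mid a) > 0$. Define $r_{f} \colon A \tensor B \to A \tensor B$ as the Kleisli arrow coming from the deterministic function sending $(a, b)$ to $(a, b)$ when $f(b \mid a) > 0$, and to $(a, b_{0}(a))$ otherwise.

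The determinism axiom is immediate once $r_{f}$ is a deterministic map (its output is always a point mass), since in $\Stoch$ the deterministic morphisms are precisely those that commute with the copy comultiplication $\coMult$. For the first axiom, $f \triangleleft \im_{A \tensor B} = f \triangleleft r_{f}$, I will expand both sides as stochastic kernels: every term is weighted by $f(b \mid a)$, so wherever the weight is non-zero $r_{f}$ acts as the identity by construction and the two sides agree termwise, while the remaining terms vanish.

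For the range axiom, I will unpack the hypothesis $f \triangleleft g = f \triangleleft h$ pointwise to obtain the identity $f(b \mid a) \cdot g(c \mid a, b, d) = f(b \mid a) \cdot h(c \mid a, b, d)$ for all inputs $(a, b, d)$ and outputs $c$. This forces $g$ and $h$ to coincide on every $(a, b, d)$ with $f(b \mid a) > 0$. The composite $(r_{f} \tensor \im_{D}) \dcomp g$ evaluated at $(a, b, d)$ becomes $g(c \mid a, r_{f}(a, b), d)$, and by construction the $B$-component of $r_{f}(a, b)$ always lies in the support of $f(\cdot \mid a)$; hence the precomposed maps agree, giving $(r_{f} \tensor \im_{D}) \dcomp g = (r_{f} \tensor \im_{D}) \dcomp h$.

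The main subtlety is the uniform choice of $b_{0}(a)$ and keeping track of the symmetries hidden inside the triangle notation when $r_{f}$ (whose source is $A \tensor B$) is composed with kernels whose source starts with $B \tensor A$. Neither is a real obstacle, since we work with finite discrete distributions over sets where supports are decidable and representatives can be chosen canonically from any well-ordering of $B$, and the reordering wires only contribute coherence symmetries that leave the set-theoretic argument intact.
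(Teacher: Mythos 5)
Your construction of $r_{f}$ is exactly the one the paper uses (identity on the support of $f(\cdot\mid a)$, otherwise redirect to a chosen $b_{a}$ with $f(b_{a}\mid a)>0$), and your verification of the three axioms fills in precisely the routine check the paper leaves as "straightforward." Correct, and the same approach.
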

\begin{proof}
  Given $f \colon A \to B$, we know that for each $a \in A$ there exists some $b_{a} \in B$ such that $f(b_{a}| a) > 0$.
  We fix such $b_{a} \in B$, and we define $r_{f} \colon A \tensor B \to A \tensor B$ as
  \[r_{f}(a,b) = \left\{
	\begin{array}{ll}
		(a,b)  & \mbox{if } f(b|a) > 0, \\
		(a,b_{a}) & \mbox{if } f(b|a) = 0.
	\end{array}
  \right.\]
  It is straightforward to check that it satisfies all the properties of ranges.
\end{proof}

\begin{rem}
  There already exists a notion of categorical \emph{range} in the literature, due to Cockett, Guo and Hofstra \cite{cockett2012range}.
  It arises in parallel to the notion of \emph{support} in \emph{restriction categories} \cite{cockett02}.
  The definition better suited for our purposes is different, even if it seems inspired by the same idea.
  The main difference is that we are using a \emph{controlled range}; that is, the range of a morphism depends on the input to the original morphism.
  We keep the name hoping that it will not cause any confusion, as we do not deal explicitly with restriction categories in this text.
\end{rem}

\begin{thm}\label{appendix:productivemarkov}
  Any \MarkovCategory{} with conditionals and ranges is \productive{}.
\end{thm}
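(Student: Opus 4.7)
The strategy is to build a canonical terminating representative for any 1-stage process by combining conditionals (which produce factorisations) with ranges (which provide the requisite cancellation).

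First, I would fix a 1-stage process $\bra{\alpha_{i} \colon X_{0} \to M_{i} \tensor Y_{0}}$. Since counits are natural in a Markov category, the marginal $\alpha_{Y} \defn \alpha_{i} \dcomp (\coUnit_{M_{i}} \tensor \im_{Y_{0}}) \colon X_{0} \to Y_{0}$ is independent of the chosen representative: any two representatives are related by some $r \colon N \to M_{i}$ by dinaturality, and naturality of $\coUnit$ makes $\coUnit_{M_{i}} \comp r = \coUnit_{N}$. Hence $\alpha_{Y}$ is a well-defined invariant of the dinatural class.

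Second, I would take the canonical memory to be $M_{0} \defn Y_{0} \tensor X_{0}$ and define the canonical representative $\alpha_{0} \colon X_{0} \to M_{0} \tensor Y_{0}$ as the morphism that copies $X_{0}$, applies $\alpha_{Y}$ to one copy, then copies the resulting $Y_{0}$. In triangle notation this is essentially $\alpha_{Y} \triangleleft (\coMult_{Y_{0}} \tensor \im_{X_{0}})$ with appropriate reshuffling. For each representative $\alpha_{i}$, the conditionals hypothesis applied to $\alpha_{i}$ with marginal $\alpha_{Y}$ yields $c_{i} \colon Y_{0} \tensor X_{0} \to M_{i}$ with $\alpha_{i} = \alpha_{Y} \triangleleft c_{i}$. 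Setting $s_{i} \defn c_{i}$, a short string-diagram calculation confirms $\alpha_{0} \dcomp (s_{i} \tensor \im_{Y_{0}}) = \alpha_{i}$.

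The hard part is the cancellation clause: whenever $\bra{(\alpha_{i} \tensor \im_{A}) \dcomp (u \tensor \im_{Y_{0}})} = \bra{(\alpha_{j} \tensor \im_{A}) \dcomp (v \tensor \im_{Y_{0}})}$, I must conclude $\bra{(s_{i} \tensor \im_{A}) \dcomp u} = \bra{(s_{j} \tensor \im_{A}) \dcomp v}$. Substituting $\alpha_{i} = \alpha_{Y} \triangleleft c_{i}$ and $\alpha_{j} = \alpha_{Y} \triangleleft c_{j}$, both sides of the hypothesis rewrite as $\alpha_{Y} \triangleleft \Phi_{i}$ and $\alpha_{Y} \triangleleft \Phi_{j}$ respectively, where $\Phi_{i}$ is built from $c_{i}$ and $u$ and $\Phi_{j}$ from $c_{j}$ and $v$. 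The range hypothesis applied to $\alpha_{Y}$ then gives $(r_{\alpha_{Y}} \tensor \im) \dcomp \Phi_{i} = (r_{\alpha_{Y}} \tensor \im) \dcomp \Phi_{j}$; determinism of $r_{\alpha_{Y}}$ allows this equality to be propagated through the copies introduced by the triangle notation, and sliding $r_{\alpha_{Y}}$ along the dinaturality bar recovers the claimed equality of $\bra{(s_{i} \tensor \im_{A}) \dcomp u}$ and $\bra{(s_{j} \tensor \im_{A}) \dcomp v}$.

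The main obstacle is not mathematical but bookkeeping: one must carefully track every wire through the copy, conditional, and range manipulations, and repeatedly invoke dinaturality to slide memory-side morphisms past the bar while keeping the output-side factor untouched. Conceptually the proof splits cleanly into "conditionals give existence of the factorisation" and "ranges give its essential uniqueness up to the support of $\alpha_{Y}$", which together yield the universal factorisation required by productivity.
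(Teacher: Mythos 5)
Your overall strategy is the same as the paper's: define the marginal $\alpha_{Y}$ (well-defined by naturality of $\coUnit$), take $M_{0} = Y_{0} \tensor X_{0}$ with $\alpha_{0}$ built from copying and $\alpha_{Y}$, obtain the factorisations from conditionals, and use ranges for the cancellation clause. However, there is one genuine gap, and it sits exactly where you wave at "sliding $r_{\alpha_{Y}}$ along the dinaturality bar". With your choice $s_{i} \defn c_{i}$, the range axiom only delivers
\[(r_{\alpha_{Y}} \tensor \im) \comp (c_{i} \tensor \im_{A}) \comp u \comp \tid{\coUnit} \;=\; (r_{\alpha_{Y}} \tensor \im) \comp (c_{j} \tensor \im_{A}) \comp v \comp \tid{\coUnit},\]
i.e.\ an equality that still carries $r_{\alpha_{Y}}$ on the $M_{0}$ wire, \emph{before} $s_{i}$. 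The dinaturality quotient in $\bra{\tid{s_{i}} \comp u}$ acts only on the \emph{output} memory of $u$ (morphisms slide past the bar on the channel that the next stage reads from); there is no dinaturality available at the input object $M_{0}$, so $r_{\alpha_{Y}}$ cannot be slid away, and you do not reach $\bra{\tid{c_{i}} \comp u} = \bra{\tid{c_{j}} \comp v}$. Determinism of $r_{\alpha_{Y}}$ does not rescue this either: it lets $r_{\alpha_{Y}}$ commute with copies, not disappear.

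The fix — which is what the paper does — is to absorb the range into the factorisation morphism: set $s_{i} \defn r_{\alpha_{Y}} \comp c_{i}$ rather than $s_{i} \defn c_{i}$. The factorisation $\alpha_{i} = \alpha_{0} \comp \tid{s_{i}}$ still holds because range axiom (1) gives $\alpha_{Y} \triangleleft \im = \alpha_{Y} \triangleleft r_{\alpha_{Y}}$, hence $\alpha_{Y} \triangleleft c_{i} = \alpha_{Y} \triangleleft (r_{\alpha_{Y}} \comp c_{i})$; and now the displayed equality above is literally the statement $\tid{s_{i}} \comp u \comp \tid{\coUnit} = \tid{s_{j}} \comp v \comp \tid{\coUnit}$, from which $\bra{\tid{s_{i}} \comp u} = \bra{\tid{s_{j}} \comp v}$ follows by discarding the final memory (legitimate under the bracket, since the last memory morphism is removed by dinaturality). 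Your closing remark that ranges give uniqueness only "up to the support of $\alpha_{Y}$" is the right intuition — but that intuition must be reflected in the definition of $s_{i}$ itself, not deferred to a sliding step that the quotient does not license. One further small point: to apply the range axiom you first need an honest equation $\alpha_{Y} \triangleleft \Phi_{i} = \alpha_{Y} \triangleleft \Phi_{j}$ out of the hypothesised bracket equality, which again requires discarding the output memories via naturality of $\coUnit$; you do this implicitly, but it should be said.
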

\begin{proof}
  Given any $\bra{\alpha} \in \Stage{1}(\stream{X},\stream{Y})$, we can define
  \[\alpha_{0} = \coMult_{X_0} \comp \tid{\alpha \comp (\coMult_{Y_0} \tensor \coUnit_{M})}.\]
  This is indeed well-defined because of naturality of the discarding map $(\coUnit)_{M} \colon M \to I$ in any \MarkovCategory{}.
  Let $c_{\alpha} \colon Y_0 \tensor X_0 \to M$ be a conditional of $\alpha$. This representative can then be factored as $\alpha = \alpha_{0} \comp \tid{c_{\alpha}}$ (\Cref{figure:productivemarkov}).
  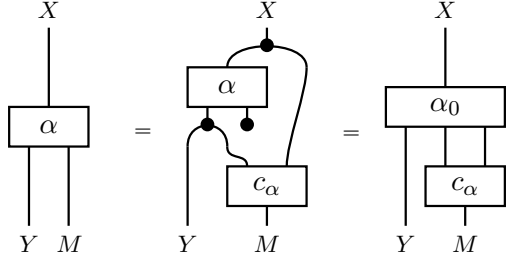
\begin{figure}
\tikzset{every picture/.style={line width=0.85pt}} %

\begin{tikzpicture}[x=0.75pt,y=0.75pt,yscale=-1,xscale=1]
\draw   (390,50) -- (430,50) -- (430,70) -- (390,70) -- cycle ;
\draw    (430,30) -- (430,38.87) ;
\draw    (410,50) .. controls (409.8,35.8) and (450.2,35.8) .. (450,50) ;
\draw  [fill={rgb, 255:red, 0; green, 0; blue, 0 }  ,fill opacity=1 ] (427.1,38.87) .. controls (427.1,37.27) and (428.4,35.97) .. (430,35.97) .. controls (431.6,35.97) and (432.9,37.27) .. (432.9,38.87) .. controls (432.9,40.47) and (431.6,41.77) .. (430,41.77) .. controls (428.4,41.77) and (427.1,40.47) .. (427.1,38.87) -- cycle ;
\draw    (420,70) -- (420,78.87) ;
\draw  [fill={rgb, 255:red, 0; green, 0; blue, 0 }  ,fill opacity=1 ] (417.1,78.87) .. controls (417.1,77.27) and (418.4,75.97) .. (420,75.97) .. controls (421.6,75.97) and (422.9,77.27) .. (422.9,78.87) .. controls (422.9,80.47) and (421.6,81.77) .. (420,81.77) .. controls (418.4,81.77) and (417.1,80.47) .. (417.1,78.87) -- cycle ;
\draw    (400,70) -- (400,78.87) ;
\draw    (390,90) .. controls (389.8,75.8) and (410.2,75.8) .. (410,90) ;
\draw  [fill={rgb, 255:red, 0; green, 0; blue, 0 }  ,fill opacity=1 ] (397.1,78.87) .. controls (397.1,77.27) and (398.4,75.97) .. (400,75.97) .. controls (401.6,75.97) and (402.9,77.27) .. (402.9,78.87) .. controls (402.9,80.47) and (401.6,81.77) .. (400,81.77) .. controls (398.4,81.77) and (397.1,80.47) .. (397.1,78.87) -- cycle ;
\draw    (390,90) -- (390,130) ;
\draw   (410,100) -- (450,100) -- (450,120) -- (410,120) -- cycle ;
\draw    (410,90) .. controls (409.8,97.4) and (420.2,93) .. (420,100) ;
\draw    (450,50) .. controls (450.14,62.14) and (440,87.83) .. (440,100) ;
\draw    (430,120) -- (430,130) ;
\draw   (300,70) -- (340,70) -- (340,90) -- (300,90) -- cycle ;
\draw    (310,90) -- (310,130) ;
\draw    (330,90) -- (330,130) ;
\draw    (320,30) -- (320,70) ;
\draw   (490,60) -- (550,60) -- (550,80) -- (490,80) -- cycle ;
\draw    (520,30) -- (520,60) ;
\draw    (520,80) -- (520,100) ;
\draw    (500,80) -- (500,97.5) -- (500,130) ;
\draw   (510,100) -- (550,100) -- (550,120) -- (510,120) -- cycle ;
\draw    (530,120) -- (530,130) ;
\draw    (540,80) -- (540,100) ;
\draw (430,26.6) node [anchor=south] [inner sep=0.75pt]    {$X_0$};
\draw (430,133.4) node [anchor=north] [inner sep=0.75pt]    {$M$};
\draw (410,60) node  [font=\normalsize]  {$\alpha $};
\draw (430,110) node  [font=\normalsize]  {$c_{\alpha }$};
\draw (390,133.4) node [anchor=north] [inner sep=0.75pt]    {$Y_0$};
\draw (320,80) node  [font=\normalsize]  {$\alpha $};
\draw (320,26.6) node [anchor=south] [inner sep=0.75pt]    {$X_0$};
\draw (310,133.4) node [anchor=north] [inner sep=0.75pt]    {$Y_0$};
\draw (330.5,133.4) node [anchor=north] [inner sep=0.75pt]    {$M$};
\draw (367.5,86.6) node [anchor=south] [inner sep=0.75pt]    {$=$};
\draw (520,70) node  [font=\normalsize]  {$\alpha _{0}$};
\draw (530,110) node  [font=\normalsize]  {$c_{\alpha }$};
\draw (472.5,87.6) node [anchor=south] [inner sep=0.75pt]    {$=$};
\draw (520,26.6) node [anchor=south] [inner sep=0.75pt]    {$X_0$};
\draw (500,133.4) node [anchor=north] [inner sep=0.75pt]    {$Y_0$};
\draw (530,133.4) node [anchor=north] [inner sep=0.75pt]    {$M$};
\end{tikzpicture}     \caption{Productivity for Markov categories.}
    \label{figure:productivemarkov}
  \end{figure}

Now assume that for two representatives $\bra{\alpha_{i}} = \bra{\alpha_{j}}$ we have that $\bra{\alpha_{i} \comp u} = \bra{\alpha_{j} \comp v}$.
By naturality of the discarding, $\alpha_{i} \comp \tid{\varepsilon} = \alpha_{j} \comp \tid{\varepsilon}$, and let $r$ be a range of this map.
Again by naturality of discarding, we have $\tid{\alpha_{i}} \comp \tid{u} \comp \tid{\coUnit} = \tid{\alpha_{j}} \comp \tid{v} \comp \tid{\coUnit}$.
Let then $c_{i}$ and $c_{j}$ be conditionals of $\alpha_{i}$ and $\alpha_{j}$:
we have that $(\alpha_{0}\triangleleft c_{i}) \comp u \comp \tid{\coUnit} = (\alpha_{0} \triangleleft c_{j}) \comp v \comp \tid{\coUnit}$.
By the properties of ranges (\Cref{figure:propertiesrange}), $(\alpha_{0}\triangleleft r \comp c_{i}) \comp u \comp \coUnit_{M(u)} = (\alpha_{0}\triangleleft r \comp c_{j}) \comp v \comp \coUnit_{M(v)}$,
and thus, $r \comp c_{i} \comp u \comp \coUnit_{M(u)} = r \comp c_{j} \comp v \comp \coUnit_{M(v)}$.
We pick $s_{i} = r \comp c_{i}$ and we have proven that $\bra{\tid{s_{i}} \comp u} = \bra{\tid{s_{j}} \comp v}$.
\end{proof}

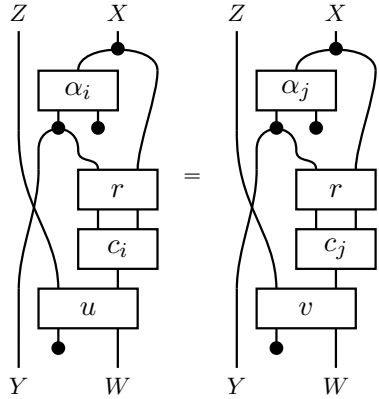
\begin{figure}
\tikzset{every picture/.style={line width=0.85pt}} %
\begin{tikzpicture}[x=0.75pt,y=0.75pt,yscale=-1,xscale=1]
\draw   (160.01,50) -- (200.01,50) -- (200.01,70) -- (160.01,70) -- cycle ;
\draw    (200.01,30) -- (200.01,38.87) ;
\draw    (180.01,50) .. controls (179.81,35.8) and (220.21,35.8) .. (220.01,50) ;
\draw  [fill={rgb, 255:red, 0; green, 0; blue, 0 }  ,fill opacity=1 ] (197.11,38.87) .. controls (197.11,37.27) and (198.41,35.97) .. (200.01,35.97) .. controls (201.61,35.97) and (202.91,37.27) .. (202.91,38.87) .. controls (202.91,40.47) and (201.61,41.77) .. (200.01,41.77) .. controls (198.41,41.77) and (197.11,40.47) .. (197.11,38.87) -- cycle ;
\draw    (190.01,70) -- (190.01,78.87) ;
\draw  [fill={rgb, 255:red, 0; green, 0; blue, 0 }  ,fill opacity=1 ] (187.11,78.87) .. controls (187.11,77.27) and (188.41,75.97) .. (190.01,75.97) .. controls (191.61,75.97) and (192.91,77.27) .. (192.91,78.87) .. controls (192.91,80.47) and (191.61,81.77) .. (190.01,81.77) .. controls (188.41,81.77) and (187.11,80.47) .. (187.11,78.87) -- cycle ;
\draw    (170.01,70) -- (170.01,78.87) ;
\draw    (160.01,90) .. controls (159.81,75.8) and (180.21,75.8) .. (180.01,90) ;
\draw  [fill={rgb, 255:red, 0; green, 0; blue, 0 }  ,fill opacity=1 ] (167.11,78.87) .. controls (167.11,77.27) and (168.41,75.97) .. (170.01,75.97) .. controls (171.61,75.97) and (172.91,77.27) .. (172.91,78.87) .. controls (172.91,80.47) and (171.61,81.77) .. (170.01,81.77) .. controls (168.41,81.77) and (167.11,80.47) .. (167.11,78.87) -- cycle ;
\draw    (160.01,90) -- (160.01,100) ;
\draw   (180.01,130) -- (220.01,130) -- (220.01,150) -- (180.01,150) -- cycle ;
\draw    (180.01,90) .. controls (179.81,97.4) and (190.21,93) .. (190.01,100) ;
\draw    (220.01,50) .. controls (220.15,62.14) and (210.01,87.83) .. (210.01,100) ;
\draw    (200.01,150) -- (200.01,160) ;
\draw   (180.01,100) -- (220.01,100) -- (220.01,120) -- (180.01,120) -- cycle ;
\draw    (190.01,120) -- (190.01,130) ;
\draw    (210.01,120) -- (210.01,130) ;
\draw   (160.01,160) -- (210.01,160) -- (210.01,180) -- (160.01,180) -- cycle ;
\draw    (200,180) -- (200,200) ;
\draw   (270.01,50) -- (310.01,50) -- (310.01,70) -- (270.01,70) -- cycle ;
\draw    (310.01,30) -- (310.01,38.87) ;
\draw    (290.01,50) .. controls (289.81,35.8) and (330.21,35.8) .. (330.01,50) ;
\draw  [fill={rgb, 255:red, 0; green, 0; blue, 0 }  ,fill opacity=1 ] (307.11,38.87) .. controls (307.11,37.27) and (308.41,35.97) .. (310.01,35.97) .. controls (311.61,35.97) and (312.91,37.27) .. (312.91,38.87) .. controls (312.91,40.47) and (311.61,41.77) .. (310.01,41.77) .. controls (308.41,41.77) and (307.11,40.47) .. (307.11,38.87) -- cycle ;
\draw    (300.01,70) -- (300.01,78.87) ;
\draw  [fill={rgb, 255:red, 0; green, 0; blue, 0 }  ,fill opacity=1 ] (297.11,78.87) .. controls (297.11,77.27) and (298.41,75.97) .. (300.01,75.97) .. controls (301.61,75.97) and (302.91,77.27) .. (302.91,78.87) .. controls (302.91,80.47) and (301.61,81.77) .. (300.01,81.77) .. controls (298.41,81.77) and (297.11,80.47) .. (297.11,78.87) -- cycle ;
\draw    (280.01,70) -- (280.01,78.87) ;
\draw    (270.01,90) .. controls (269.81,75.8) and (290.21,75.8) .. (290.01,90) ;
\draw  [fill={rgb, 255:red, 0; green, 0; blue, 0 }  ,fill opacity=1 ] (277.11,78.87) .. controls (277.11,77.27) and (278.41,75.97) .. (280.01,75.97) .. controls (281.61,75.97) and (282.91,77.27) .. (282.91,78.87) .. controls (282.91,80.47) and (281.61,81.77) .. (280.01,81.77) .. controls (278.41,81.77) and (277.11,80.47) .. (277.11,78.87) -- cycle ;
\draw    (270.01,90) -- (270.01,100) ;
\draw   (290.01,130) -- (330.01,130) -- (330.01,150) -- (290.01,150) -- cycle ;
\draw    (290.01,90) .. controls (289.81,97.4) and (300.21,93) .. (300.01,100) ;
\draw    (330.01,50) .. controls (330.15,62.14) and (320.01,87.83) .. (320.01,100) ;
\draw    (310.01,150) -- (310.01,160) ;
\draw   (290.01,100) -- (330.01,100) -- (330.01,120) -- (290.01,120) -- cycle ;
\draw    (300.01,120) -- (300.01,130) ;
\draw    (320.01,120) -- (320.01,130) ;
\draw   (270.01,160) -- (320.01,160) -- (320.01,180) -- (270.01,180) -- cycle ;
\draw    (310,180) -- (310,200) ;
\draw    (270.01,100) .. controls (270.18,119.83) and (260.18,139.5) .. (260.01,160) ;
\draw    (260.01,80) .. controls (259.51,119.17) and (279.51,130.17) .. (280.01,160) ;
\draw    (260.01,30) -- (260.01,80) ;
\draw    (260.01,160) -- (260,200) ;
\draw    (160.02,100) .. controls (160.19,119.83) and (150.19,139.5) .. (150.02,160) ;
\draw    (150.01,80) .. controls (149.51,119.17) and (169.51,130.17) .. (170.01,160) ;
\draw    (150.01,30) -- (150.01,80) ;
\draw    (150.02,160) -- (150,200) ;
\draw    (280,180) -- (280,190) ;
\draw    (170,180) -- (170,190) ;
\draw  [fill={rgb, 255:red, 0; green, 0; blue, 0 }  ,fill opacity=1 ] (167.1,190) .. controls (167.1,188.4) and (168.4,187.1) .. (170,187.1) .. controls (171.6,187.1) and (172.9,188.4) .. (172.9,190) .. controls (172.9,191.6) and (171.6,192.9) .. (170,192.9) .. controls (168.4,192.9) and (167.1,191.6) .. (167.1,190) -- cycle ;
\draw  [fill={rgb, 255:red, 0; green, 0; blue, 0 }  ,fill opacity=1 ] (277.1,190) .. controls (277.1,188.4) and (278.4,187.1) .. (280,187.1) .. controls (281.6,187.1) and (282.9,188.4) .. (282.9,190) .. controls (282.9,191.6) and (281.6,192.9) .. (280,192.9) .. controls (278.4,192.9) and (277.1,191.6) .. (277.1,190) -- cycle ;
\draw (200.01,26.6) node [anchor=south] [inner sep=0.75pt]    {$X$};
\draw (180.01,60) node  [font=\normalsize]  {$\alpha _{i}$};
\draw (200.01,140) node  [font=\normalsize]  {$c_{i}$};
\draw (200.01,110.5) node  [font=\normalsize]  {$r$};
\draw (185.01,170) node  [font=\normalsize]  {$u$};
\draw (310.01,26.6) node [anchor=south] [inner sep=0.75pt]    {$X$};
\draw (290.01,60) node  [font=\normalsize]  {$\alpha _{j}$};
\draw (310.01,140) node  [font=\normalsize]  {$c_{j}$};
\draw (310.01,110.5) node  [font=\normalsize]  {$r$};
\draw (295.01,170) node  [font=\normalsize]  {$v$};
\draw (237.51,107.6) node [anchor=south] [inner sep=0.75pt]    {$=$};
\draw (150.01,26.6) node [anchor=south] [inner sep=0.75pt]    {$Z$};
\draw (150,203.4) node [anchor=north] [inner sep=0.75pt]    {$Y$};
\draw (200,203.4) node [anchor=north] [inner sep=0.75pt]    {$W$};
\draw (260.01,26.6) node [anchor=south] [inner sep=0.75pt]    {$Z$};
\draw (260,203.4) node [anchor=north] [inner sep=0.75pt]    {$Y$};
\draw (310,203.4) node [anchor=north] [inner sep=0.75pt]    {$W$};
\end{tikzpicture}   \caption{Applying the properties of range.}
  \label{figure:propertiesrange}
\end{figure}

\subsection{Stochastic processes}

We start by recalling the notion of \emph{stochastic process} from probability theory and its ``controlled'' version.
The latter is used in the context of \emph{stochastic control} \cite{fleming1975,ross1996stochastic}, where the user has access to the parameters or optimization variables of a probabilistic model.

A \emph{discrete stochastic process} is defined as a collection of random variables $Y_1, \dots Y_n$ indexed by discrete time.
At any time step $n$, these random variables are distributed according to some $p_n \in \distr(Y_1 \times \dots \times Y_n)$.
Since the future cannot influence the past, the marginal of $p_{n+1}$ over $Y_{n+1}$ must equal $p_{n}$.
When this occurs, we say that the family of distributions $(p_n)_{n \in \naturals}$ is \emph{causal}.

More generally, there may be some additional variables $X_1, \dots, X_n$ which we have control over.
In this case, a \emph{\controlledStochasticProcess{}} is defined as a collection of \emph{controlled random variables}
distributing according to $f_n \colon  X_1 \times \dots \times X_n \to \distr(Y_1, \dots, Y_n)$.
Causality ensures that the marginal of $f_{n+1}$ over $Y_{n+1}$ must equal $f_{n}$.

\begin{defi}\label{def:stoch-rep-process}\defining{linkstochasticprocess}{}
Let \(\stream{X}\) and \(\stream{Y}\) be sequences of sets.
A \emph{controlled stochastic process} \(f \colon \stream{X} \to \stream{Y}\) is a sequence of functions \[f_{n} \colon X_{0} \times \dots \times X_{n} \to \distr(Y_{0} \times \dots \times Y_{n})\] satisfying \emph{causality} (the \emph{marginalisation property}).
That is, such that \(f_{n}\) coincides with the marginal distribution of \(f_{n+1}\) on the first \(n\) variables,  \(f_{n+1} \dcomp \distr\proj[Y_{0},\dots,Y_{n}] = \proj[X_{0},\dots,X_{n}] \dcomp f_{n}\), making the diagram in \Cref{figure:commute} commute.
\begin{figure}[h]
  \centering
  \begin{minipage}{0.45\textwidth}
    \begin{tikzcd}
      X_0 \times \dots \times X_{n+1} \rar{f_{n+1}} \dar[swap]{\pi_{0,\dots,n}} & D(Y_0 \times \dots \times Y_{n+1}) \dar{D\pi_{0,\dots,n}} \\
      X_0 \times \dots \times X_{n}  \rar{f_{n}} & D(Y_0 \times \dots \times Y_{n})
    \end{tikzcd}
  \end{minipage}\qquad
  \begin{minipage}{0.45\textwidth}

\tikzset{every picture/.style={line width=0.75pt}} %

\begin{tikzpicture}[x=0.75pt,y=0.75pt,yscale=-1,xscale=1]
\draw   (80,115) -- (150,115) -- (150,130) -- (80,130) -- cycle ;
\draw    (90,100) -- (90,115) ;
\draw    (120,130) -- (120,145) ;
\draw    (140,130) -- (140,140) ;
\draw    (90,130) -- (90,145) ;
\draw    (120,100) -- (120,115) ;
\draw    (140,100) -- (140,115) ;
\draw  [fill={rgb, 255:red, 0; green, 0; blue, 0 }  ,fill opacity=1 ] (137.1,140) .. controls (137.1,138.4) and (138.4,137.1) .. (140,137.1) .. controls (141.6,137.1) and (142.9,138.4) .. (142.9,140) .. controls (142.9,141.6) and (141.6,142.9) .. (140,142.9) .. controls (138.4,142.9) and (137.1,141.6) .. (137.1,140) -- cycle ;
\draw  [draw opacity=0] (90,95) -- (120,95) -- (120,115) -- (90,115) -- cycle ;
\draw  [draw opacity=0] (90,130) -- (120,130) -- (120,150) -- (90,150) -- cycle ;
\draw  [draw opacity=0] (150,110) -- (180,110) -- (180,135) -- (150,135) -- cycle ;
\draw   (180,115) -- (230,115) -- (230,130) -- (180,130) -- cycle ;
\draw    (190,100) -- (190,115) ;
\draw    (220,130) -- (220,145) ;
\draw    (190,130) -- (190,145) ;
\draw    (220,100) -- (220,115) ;
\draw    (240,100) -- (240,110) ;
\draw  [fill={rgb, 255:red, 0; green, 0; blue, 0 }  ,fill opacity=1 ] (237.1,110) .. controls (237.1,108.4) and (238.4,107.1) .. (240,107.1) .. controls (241.6,107.1) and (242.9,108.4) .. (242.9,110) .. controls (242.9,111.6) and (241.6,112.9) .. (240,112.9) .. controls (238.4,112.9) and (237.1,111.6) .. (237.1,110) -- cycle ;
\draw  [draw opacity=0] (190,95) -- (220,95) -- (220,115) -- (190,115) -- cycle ;
\draw  [draw opacity=0] (190,130) -- (220,130) -- (220,150) -- (190,150) -- cycle ;

\draw (115,122.5) node  [font=\footnotesize]  {$f_{n+1}$};
\draw (165,122.5) node    {$=$};
\draw (90.5,97.6) node [anchor=south] [inner sep=0.75pt]  [font=\footnotesize]  {$X_{0}$};
\draw (105,105) node  [font=\footnotesize]  {$\dotsc $};
\draw (105,140) node  [font=\footnotesize]  {$\dotsc $};
\draw (90,148.4) node [anchor=north] [inner sep=0.75pt]  [font=\footnotesize]  {$Y_{0}$};
\draw (120,96.6) node [anchor=south] [inner sep=0.75pt]  [font=\footnotesize]  {$X_{n}$};
\draw (140,96.6) node [anchor=south] [inner sep=0.75pt]  [font=\footnotesize]  {$X_{n+1}$};
\draw (120,148.4) node [anchor=north] [inner sep=0.75pt]  [font=\footnotesize]  {$Y_{n}$};
\draw (205,122.5) node  [font=\footnotesize]  {$f_{n}$};
\draw (190.5,97.6) node [anchor=south] [inner sep=0.75pt]  [font=\footnotesize]  {$X_{0}$};
\draw (205,105) node  [font=\footnotesize]  {$\dotsc $};
\draw (205,140) node  [font=\footnotesize]  {$\dotsc $};
\draw (190,148.4) node [anchor=north] [inner sep=0.75pt]  [font=\footnotesize]  {$Y_{0}$};
\draw (220,96.6) node [anchor=south] [inner sep=0.75pt]  [font=\footnotesize]  {$X_{n}$};
\draw (240,96.6) node [anchor=south] [inner sep=0.75pt]  [font=\footnotesize]  {$X_{n+1}$};
\draw (220,148.4) node [anchor=north] [inner sep=0.75pt]  [font=\footnotesize]  {$Y_{n}$};

\end{tikzpicture}
   \end{minipage}
  \caption{Marginalisation for stochastic processes.}
  \label{figure:commute}
\end{figure}

Controlled stochastic processes with componentwise composition, identities and tensoring, are the morphisms of a symmetric monoidal category \(\StochProc\).
\end{defi}

\begin{prop}[Factoring as conditionals]
  A stochastic process $f \colon \stream{X} \to \stream{Y}$ can be always written as
  \[f_{n} = c_{0} \triangleleft c_{1} \triangleleft \dots \triangleleft c_{n},\]
  for some family of functions
  \[c_{n} \colon Y_{0} \times \dots \times Y_{n-1} \times X_{0} \times \dots \times X_{n} \to Y_{n},\]
  called the \emph{conditionals} of the stochastic process.
\end{prop}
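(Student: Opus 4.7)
The plan is to proceed by induction on $n$, iteratively applying the existence of conditionals in $\Stoch$ together with the causality (marginalisation) property of the stochastic process.

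For the base case $n = 0$, the morphism $f_{0} \colon X_{0} \to \distr(Y_{0})$ is already of the desired form by setting $c_{0} \defn f_{0}$.

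For the inductive step, assume that $f_{n} = c_{0} \triangleleft c_{1} \triangleleft \dots \triangleleft c_{n}$ for appropriately typed conditionals. Regard $f_{n+1} \colon X_{0} \times \dots \times X_{n+1} \to \distr((Y_{0} \times \dots \times Y_{n}) \otimes Y_{n+1})$ as a morphism into a binary tensor product. Because $\Stoch$ has conditionals, there exists a map
\[c_{n+1} \colon Y_{0} \times \dots \times Y_{n} \times X_{0} \times \dots \times X_{n+1} \to Y_{n+1}\]
such that $f_{n+1} = g \triangleleft c_{n+1}$, where $g \colon X_{0} \times \dots \times X_{n+1} \to \distr(Y_{0} \times \dots \times Y_{n})$ is the marginal of $f_{n+1}$ onto the first $n+1$ output coordinates. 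By the marginalisation property of~\Cref{def:stoch-rep-process}, this marginal $g$ factors through the projection $\pi_{0,\dots,n} \colon X_{0} \times \dots \times X_{n+1} \to X_{0} \times \dots \times X_{n}$ as $g = \pi_{0,\dots,n} \dcomp f_{n}$, so the input $X_{n+1}$ is discarded by $g$ and only reaches $c_{n+1}$.

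Plugging in the inductive hypothesis $f_{n} = c_{0} \triangleleft \dots \triangleleft c_{n}$ and using associativity of the triangle operation up to symmetries, we obtain
\[f_{n+1} = c_{0} \triangleleft c_{1} \triangleleft \dots \triangleleft c_{n} \triangleleft c_{n+1},\]
as required. The only subtle point, and thus the main obstacle, is keeping careful track of which input wires are fed into which conditional: the causality hypothesis is precisely what ensures that the earlier conditionals $c_{0}, \dots, c_{n}$ receive only $X_{0}, \dots, X_{n}$ and that the fresh input $X_{n+1}$ is routed exclusively into the newly produced $c_{n+1}$; this bookkeeping is naturally handled by the triangle string-diagram notation.
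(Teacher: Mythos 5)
Your proof is correct and follows essentially the same route as the paper's: induct on $n$, use the existence of conditionals in $\Stoch$ to split $f_{n+1}$ as (marginal) $\triangleleft\; c_{n+1}$, and invoke the marginalisation property to identify that marginal with $f_{n}$ (precomposed with the projection discarding $X_{n+1}$). Your explicit bookkeeping of where the input $X_{n+1}$ is routed is a point the paper glosses over, but it is the same argument.
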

\begin{proof}
  We proceed by induction, noting first that $c_{0} = f_{0}$.
  In the general case, we apply conditionals to rewrite $f_{n+1} = (f_{n+1} ; (\coUnit)_{Y_{n+1}}) \triangleleft c_{n+1}$.
  Because of the marginalization property, we know that $f_{n+1} ; (\coUnit)_{Y_{n+1}} = f_{n}$.
  So finally, $f_{n+1} = f_{n} \triangleleft c_{n+1}$, which by the induction hypothesis gives the desired result.
\end{proof}

\begin{prop}\label{prop:condtoproc}
  If two families of conditionals give rise to the same stochastic process,
  \[c_{0} \triangleleft c_{1} \triangleleft \dots \triangleleft c_{n} =
  c_{0}' \triangleleft c_{1}' \triangleleft \dots \triangleleft c_{n}',\]
then, they also give rise to the same n-stage processes in $\Stoch$,
\[\bra{c_{0} \triangleleft \im |c_{1} \triangleleft \im |\dots|c_{n} \triangleleft \im} = \bra{c_{0}' \triangleleft \im |c_{1}' \triangleleft \im|\dots|c_{n}' \triangleleft \im}.\]
\end{prop}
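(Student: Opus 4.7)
The plan is induction on $n$, using that $\Stoch$ is a \hyperlink{linkmarkovcategory}{Markov category} admitting both \emph{conditionals} and \emph{ranges}, and in particular is \productive{} (\Cref{appendix:productivemarkov}).

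The base case $n = 0$ is immediate: the hypothesis at $n = 0$ gives $c_{0} = c_{0}'$ directly, so $\bra{c_{0} \triangleleft \im} = \bra{c_{0}' \triangleleft \im}$. For the inductive step, I first reduce to the previous level. Applying $\coUnit_{Y_{n}}$ to both sides of $c_{0} \triangleleft \dots \triangleleft c_{n} = c_{0}' \triangleleft \dots \triangleleft c_{n}'$ and using naturality of the counit in $\Stoch$ yields $c_{0} \triangleleft \dots \triangleleft c_{n-1} = c_{0}' \triangleleft \dots \triangleleft c_{n-1}'$. By the inductive hypothesis, the associated $(n-1)$-stage processes coincide. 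Appending the fixed morphism $c_{n} \triangleleft \im$ to a dinatural equivalence of $(n-1)$-stage processes preserves it, so the problem reduces to showing
\[\bra{c_{0}' \triangleleft \im \mid \dots \mid c_{n-1}' \triangleleft \im \mid c_{n} \triangleleft \im} = \bra{c_{0}' \triangleleft \im \mid \dots \mid c_{n-1}' \triangleleft \im \mid c_{n}' \triangleleft \im}\]
over a \emph{common} $(n-1)$-stage prefix.

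To do this, let $f = c_{0}' \triangleleft \dots \triangleleft c_{n-1}'$. Since $f \triangleleft c_{n} = f \triangleleft c_{n}'$, fixing a range $r$ for $f$ gives $(r \tensor \im) \dcomp c_{n} = (r \tensor \im) \dcomp c_{n}'$, while property~1 of ranges says $f \triangleleft r = f \triangleleft \im$, i.e.\ $r$ behaves as the identity on the joint distribution produced by $f$. The idea is to produce a representative of the $(n-1)$-stage prefix that ends with a copy of $r$ on the memory. Once such a representative is available, a single application of the sliding axiom across the last bar moves $r$ to the next stage, and the range equality $(r \tensor \im) \dcomp c_{n} = (r \tensor \im) \dcomp c_{n}'$ converts $c_{n}$ into $c_{n}'$; sliding $r$ back completes the argument.

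The hard part will be justifying that the $(n-1)$-stage prefix $\bra{c_{0}' \triangleleft \im \mid \dots \mid c_{n-1}' \triangleleft \im}$ is dinaturally equivalent to $\bra{c_{0}' \triangleleft \im \mid \dots \mid (c_{n-1}' \triangleleft \im) \dcomp (r \tensor \im)}$. Sliding alone cannot conjure $r$ out of nothing; one must use that the final memory of the prefix implements the joint $f$-distribution together with $f \triangleleft r = f \triangleleft \im$ to show that postcomposing with $r$ leaves the dinatural class invariant. Here the refined range property of $\Stoch$ becomes essential, and a careful propagation of the identity $f \dcomp r = f$ up through the conditional factorization $c_{0}' \triangleleft \dots \triangleleft c_{n-1}'$ is required; this is where the Markov-categorical structure and \productivity{} of $\Stoch$ are used in full force.
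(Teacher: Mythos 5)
There is a genuine gap, and it sits exactly at the step you pass over most quickly. You claim that ``appending the fixed morphism $c_{n} \triangleleft \im$ to a dinatural equivalence of $(n-1)$-stage processes preserves it.'' This is false in general: equality in the coend $\Stage{n-1}(\stream{X},\stream{Y})$ is witnessed by zigzags of dinaturality moves, and dinaturality in the \emph{last} memory $M_{n-1}$ absorbs arbitrary postcompositions (``the last $r_{n-1}$ is removed''). When you append a further stage, that absorbed postcomposition reappears as a precomposition on the appended morphism: from $\bra{f_{0} \comp (r\tensor\im)} = \bra{f_{0}}$ one only gets $\bra{f_{0}\comp(r\tensor\im) \mid h} = \bra{f_{0} \mid (r\tensor\im)\comp h}$, not $\bra{f_{0}\mid h}$. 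So the inductive hypothesis at level $n-1$ does not let you replace the primed prefix by the unprimed one underneath a fixed last stage. This failure of truncation-equality to be stable under extension is precisely the phenomenon the paper's productivity lemmas exist to control, and it cannot be waved through.

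The second half of your sketch is in the right spirit — the paper's proof is indeed built from ranges, determinism, and sliding across bars — but the part you explicitly defer as ``the hard part'' is the actual content, and a single range $r$ for $f = c_{0}'\triangleleft\dots\triangleleft c_{n-1}'$ attached only at the last bar is not enough to carry it out. The paper instead defines an inductive family $r_{0} = \im$ and $r_{i+1}$ a range of $r_{i}\comp\coMult\comp c_{i}$, proves the \emph{morphism-level} identities $r_{i}\comp\coMult\comp c_{i}\triangleleft\dots\triangleleft c_{n} = r_{i}\comp\coMult\comp c_{i}'\triangleleft\dots\triangleleft c_{n}'$ for all $i \le n$ by a double induction (the range at stage $i$ must already be in place before the range property can be invoked at stage $i+1$), and separately shows by sliding and range property~1 that inserting each $r_{i}$ into the $n$-stage tuple does not change its class in $\Stage{n}$. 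At $i=n$ this yields $r_{n}\comp\coMult\comp c_{n} = r_{n}\comp\coMult\comp c_{n}'$ and hence equality of the fully range-decorated tuples, with no appeal to extending a level-$(n-1)$ coend equality. To repair your argument you would need to reproduce essentially this cascade of ranges through every stage, not just the last one.
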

\begin{proof}
  We start by defining a family of morphisms $r_{n}$ by induction.
  We take $r_{0} = \im$ and $r_{n+1}$ to be a \emph{range} of $r_{n}; \coMult; c_{n}$.

  Let us prove now that for any $n \in \naturals$ and $i \leq n$,
  \[r_{i}\comp\coMult\comp c_{i} \triangleleft \dots \triangleleft c_{n} = r_{i}\comp \coMult \comp c_{i}' \triangleleft \dots \triangleleft c_{n}'.\]
  We proceed by induction. Observing that $c_{0} = c_{0}'$, we prove it for $n = 0$ and also for the case $i = 0$ for any $n \in \naturals$.
Assume we have it proven for $n$, so in particular we know that $r_{i}\comp  \coMult \comp  c_{i} = r_{i}\comp  \coMult \comp  c_{i}'$ for any $i \leq n$. Now, by induction on $i$, we can use the properties of ranges to show that
  \[\begin{gathered}
      r_{i}\comp  \coMult \comp  c_{i} \triangleleft \dots \triangleleft c_{n} = r_{i}\comp  \coMult \comp  c_{i}' \triangleleft \dots \triangleleft c_{n}' \\
      (r_{i}\comp  \coMult \comp  c_{i} \triangleleft \im) \comp  c_{i+1} \triangleleft  \dots \triangleleft c_{n} =
      (r_{i}\comp  \coMult \comp  c_{i}' \triangleleft \im) \comp  c_{i+1}' \triangleleft \dots \triangleleft c_{n}' \\
      (r_{i}\comp  \coMult \comp  c_{i} \triangleleft r_{i+1}) \comp  c_{i+1} \triangleleft  \dots \triangleleft c_{n} =
      (r_{i}\comp  \coMult \comp  c_{i}' \triangleleft r_{i+1}) \comp  c_{i+1}' \triangleleft \dots \triangleleft c_{n}' \\
      r_{i+1} \comp  \coMult \comp   c_{i+1} \triangleleft  \dots \triangleleft c_{n} =
      r_{i+1} \comp  \coMult \comp   c_{i+1}' \triangleleft \dots \triangleleft c_{n}'. \\
    \end{gathered}\]
  In particular, $r_{n}\comp  \coMult \comp  c_{n} = r_{n} \comp  \coMult \comp  c_{n}'$.

  Now, we claim the following for each $n \in \naturals$ and each $i \leq n$,
  \[\begin{gathered}\bra{c_{0} \triangleleft \im | \dots | c_{n} \triangleleft \im} = \\
  \bra{r_{0} (c_{0} \triangleleft \im) | \dots | r_{i} (c_{i} \triangleleft \im) | c_{i+1} | \dots | c_{n} \triangleleft \im}.\end{gathered}\]
  It is clear for $n = 0$ and for $i = 0$. In the inductive case for $i$,
  \[\begin{aligned}
      \bra{r_{0} (c_{0} \triangleleft \im) | \dots | r_{i} (c_{i} \triangleleft \im) | c_{i+1}\triangleleft \im | \dots | c_{n} \triangleleft \im} = \\
      \bra{r_{0} (c_{0} \triangleleft \im) | \dots | r_{i} \comp  \coMult \comp  c_{i} \triangleleft \im | c_{i+1}\triangleleft \im | \dots | c_{n} \triangleleft \im} = \\
      \bra{r_{0} (c_{0} \triangleleft \im) | \dots | r_{i} \comp  \coMult \comp  c_{i} \triangleleft r_{i+1} | c_{i+1} \triangleleft \im | \dots | c_{n} \triangleleft \im} = \\
      \bra{r_{0} (c_{0} \triangleleft \im) | \dots | r_{i} \comp  \coMult \comp  c_{i} \triangleleft \im | r_{i+1} (c_{i+1}\triangleleft \im) | \dots | c_{n} \triangleleft \im} = \\
      \bra{r_{0} (c_{0} \triangleleft \im) | \dots | r_{i} (c_{i} \triangleleft \im) | r_{i+1} (c_{i+1}\triangleleft \im) | \dots | c_{n} \triangleleft \im}.
   \end{aligned}\]
 A particular case of this claim is then that
 \[\begin{aligned}
     \bra{c_{0} \triangleleft \im | \dots |c_{n} \triangleleft \im} = \\
     \bra{r_{0} (c_{0} \triangleleft \im) | \dots| r_{n} (c_{n} \triangleleft \im)} = \\
     \bra{r_{0} (c_{0}' \triangleleft \im) | \dots| r_{n} (c_{n}' \triangleleft \im)} = \\
     \bra{c_{0}' \triangleleft \im | \dots |c_{n}' \triangleleft \im}.
   \end{aligned}\]
 This can be then proven for any $n \in \naturals$.
\end{proof}

\begin{cor}
  Any stochastic process $f \in \StochProc(\stream{X},\stream{Y})$ with
  a family of conditionals $c_{n}$ gives rise to the observational sequence
  \[\begin{gathered}
      \obser(f) = [\left\langle (c_{n} \triangleleft \im) \colon (X_{0} \times Y_{0} \times \dots \times X_{n-1} \times Y_{n-1}) \times X_{n} \to \right. \\ \left. (X_{0} \times Y_{0} \times \dots \times X_{n} \times Y_{n}) \times Y_{n}  \right\rangle ]_{\approx},
    \end{gathered}\]
  which is independent of the chosen family of conditionals.
\end{cor}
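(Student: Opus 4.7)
The plan is to unpack the statement into two claims: (i) the data assembled from the conditionals really is an observational sequence of the stated type, and (ii) its class modulo \observationalEquivalence{} does not depend on the choice of family of conditionals. Claim (i) is essentially a typing check: each conditional \(c_n \colon Y_{0} \times \dots \times Y_{n-1} \times X_{0} \times \dots \times X_{n} \to Y_{n}\), when tensored with an identity, matches the memory-input-output shape required for a stateful morphism in \(\eSeq{}\), with the memory at stage \(n\) being \(X_{0} \times Y_{0} \times \dots \times X_{n} \times Y_{n}\). So first I would explicitly display this stateful morphism sequence and verify that the source and target at each stage have the claimed form, observing that the memory is simply accumulating the whole past history of inputs and outputs.

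For claim (ii), the key input is \Cref{prop:condtoproc}. Suppose \((c_n)_{n}\) and \((c_n')_{n}\) are two families of conditionals for the same stochastic process \(f\). By definition of \(f\) being given by these conditionals, \(c_{0} \triangleleft \dots \triangleleft c_{n} = c_{0}' \triangleleft \dots \triangleleft c_{n}' = f_n\) for every \(n\). \Cref{prop:condtoproc} then yields equality of the n-stage truncations \(\bra{c_{0} \triangleleft \im | \dots | c_{n} \triangleleft \im} = \bra{c_{0}' \triangleleft \im | \dots | c_{n}' \triangleleft \im}\) in \(\Stage{n}(\stream{X},\stream{Y})\) for every \(n \in \naturals\). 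By \Cref{def:observationallyequal}, this is exactly \observationalEquivalence{} between the two corresponding \extensionalSequences{}, so they descend to the same class in \(\oSeq(\stream{X},\stream{Y})\).

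I would close by noting that, since \(\Stoch\) is a \MarkovCategory{} with conditionals and ranges, it is \productive{} by \Cref{appendix:productivemarkov}, so \observationalSequences{} are the intended construction of \observationalStreams{} (\Cref{theorem:observationalfinalcoalgebra}), and the assignment \(\obser\) lands in monoidal streams as promised. There is no substantive obstacle to anticipate: the conceptual work was done in \Cref{prop:condtoproc}, and the corollary is a direct packaging of that result together with the definition of observational equivalence.
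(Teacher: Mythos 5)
Your proposal is correct and follows essentially the same route as the paper: the whole content is \Cref{prop:condtoproc}, which gives equality of all $n$-stage truncations for any two families of conditionals, and then observational equivalence (equivalently, productivity of $\Stoch$ making observational sequences determined by their truncations) finishes the argument. The extra typing check and the explicit appeal to \Cref{appendix:productivemarkov} are fine but not substantively different from what the paper does.
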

\begin{proof}
  Any two families of conditionals for $f$ give rise to the same n-stage processes in $\Stoch{}$ (by \Cref{prop:condtoproc}).
  Being a \productive{} category, observational sequences are determined by their n-stage procesess.
\end{proof}

\subsection{Stochastic streams are stochastic processes}
Stochastic monoidal streams and stochastic processes not only are the same thing but they compose in the same way: they are \emph{isomorphic} as categories.

\begin{prop}
  An observational sequence in $\Stoch$,
  \[[\braket{g_{n} \colon M_{n-1} \tensor X_{n} \to M_{n} \tensor Y_{n}}]_{\approx} \in \oSeq(\stream{X},\stream{Y})\]
  gives rise to a stochastic process $\mathrm{proc}(g) \in \StochProc(\stream{X},\stream{Y})$
  defined by
  $\mathrm{proc}(g)_{n} = \tid{g_{0}} \comp  \tid{g_{1}} \comp  \dots \comp  \tid{g_{n}} \comp  \tid{\varepsilon_{M_{n}}}$.
\end{prop}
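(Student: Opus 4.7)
The plan is to verify two things: that $\mathrm{proc}(g)$ is well-defined on observational equivalence classes, and that it satisfies the marginalisation property of \Cref{def:stoch-rep-process}.

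For well-definedness, I would first argue that the expression $\tid{g_0} \dcomp \tid{g_1} \dcomp \dots \dcomp \tid{g_n} \dcomp \tid{\varepsilon_{M_n}}$ is invariant under sliding of morphisms across any internal bar: if $r \colon M_i \to M_i'$ is slid between $g_i$ and $g_{i+1}$, then both $g_i$ and $g_{i+1}$ get composed together so the insertion of $r$ between them does not affect the overall composite. Sliding across the \emph{final} bar at $M_n$ requires extra care, but it is handled by naturality of the counit in $\Stoch$ (which is a Markov category, \Cref{appendix:productivemarkov} applies): if $r \colon M_n \to M_n'$, then $r \dcomp \coUnit_{M_n'} = \coUnit_{M_n}$, so prepending $r$ before $\varepsilon_{M_n'}$ gives the same result as using $\varepsilon_{M_n}$ directly. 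This shows $\mathrm{proc}(g)_n$ is invariant under dinatural equivalence of the underlying extensional sequence. Invariance under observational equivalence is then immediate: $\mathrm{proc}(g)_n$ depends only on the $(n{+}1)$-stage truncation $\bra{g_0 \mid \dots \mid g_n}$ composed with the final discard, and \observationalEquality{} precisely identifies sequences with the same $n$-stage truncations for every $n$ (\Cref{def:observationallyequal}).

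For the causality condition, I would prove $\mathrm{proc}(g)_{n+1} \dcomp \tid{\coUnit_{Y_{n+1}}} = \tid{\coUnit_{X_{n+1}}} \dcomp \mathrm{proc}(g)_n$ directly. Expanding, the left-hand side ends with
\[
  \tid{g_{n+1}} \dcomp \tid{\varepsilon_{M_{n+1}}} \dcomp \tid{\coUnit_{Y_{n+1}}},
\]
which discards the entire output $M_{n+1} \tensor Y_{n+1}$ of $g_{n+1}$. By naturality of the counit in $\Stoch$, we have $g_{n+1} \dcomp \coUnit_{M_{n+1} \tensor Y_{n+1}} = \coUnit_{M_n \tensor X_{n+1}} = \coUnit_{M_n} \tensor \coUnit_{X_{n+1}}$, so $g_{n+1}$ is erased and its $X_{n+1}$ input is discarded. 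What remains is exactly $\tid{g_0} \dcomp \dots \dcomp \tid{g_n} \dcomp \tid{\varepsilon_{M_n}}$ applied to the unmodified $X_0, \dots, X_n$ wires, tensored with the $\coUnit_{X_{n+1}}$ on the $X_{n+1}$ wire, which is the right-hand side.

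I do not expect any deep obstacle here: the main subtlety is simply the bookkeeping around the final memory $M_n$ and its interaction with dinaturality, which is resolved by naturality of counits in $\Stoch$. The causality step is a single application of the same naturality principle. The argument essentially packages the well-known fact that marginalising the last coordinate of a finite joint distribution corresponds to ignoring the last stage of the computation.
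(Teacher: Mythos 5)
Your proof is correct and follows essentially the same route as the paper's: well-definedness via invariance of the composite under sliding (with the final memory handled by naturality of the discard map in $\Stoch$), and the marginalisation property via the same naturality argument absorbing $g_{n+1}$. Your write-up is somewhat more explicit than the paper's, which compresses the well-definedness step into a single sentence, but there is no substantive difference in approach.
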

\begin{proof}
  The symmetric monoidal category $\Stoch$ is \productive{}: by \Cref{lemma:observationalstatefulisterminalsequence}, observational sequences are determined by their n-stage truncations
  \[\bra{g_{0}|\dots|g_{n}} \in \Stage{n}(\stream{X},\stream{Y}).\]
  Each n-stage truncation gives rise to the n-th component of the stochastic process,
  $\mathrm{proc}(g)_{n} = \tid{g_{0}} \comp  \tid{g_{1}} \comp  \dots \comp  \tid{g_{n}} \comp  \tid{\varepsilon_{M_{n}}}$,
  and this is well-defined: composing the morphisms is invariant to \emph{sliding equivalence}, and the last discarding map is natural.

  It only remains to show that they satisfy the marginalisation property. Indeed,
  \[\begin{aligned}
      \mathrm{proc}(g)_{n+1} \comp  \tid{\varepsilon_{n+1}}
      &= \tid{g_{0}} \comp  \tid{g_{1}} \comp  \dots \comp  \tid{g_{n+1}} \comp  \tid{\varepsilon_{M_{n+1}}} \comp  \tid{\varepsilon_{Y_{n+1}}} \\
      &= \tid{g_{0}} \comp  \tid{g_{1}} \comp  \dots \comp  \tid{g_{n}} \comp  \tid{\varepsilon_{M_{n}}} \\
      &= \mathrm{proc}(g)_n.
    \end{aligned}\]
  Thus, $\mathrm{proc}(g)$ is a stochastic process in $\StochProc(\stream{X},\stream{Y})$.
\end{proof}

\begin{prop}\label{prop:procobs}
  Let $f \in \StochProc(\stream{X},\stream{Y})$, we have that $\proc(\obser(f)) = f$.
\end{prop}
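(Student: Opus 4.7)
The plan is to unfold both sides on a generic component $f_n$ and reduce the problem to a structural identity about the triangle operation. Fix a family of conditionals $c_0, c_1, \dots, c_n$ so that $f_n = c_0 \triangleleft c_1 \triangleleft \cdots \triangleleft c_n$; by construction of $\obser(f)$, this data determines the observational sequence $[\langle c_n \triangleleft \id \rangle]_{\approx}$ whose memory at stage $n$ is the accumulated history $M_n = X_0 \tensor Y_0 \tensor \cdots \tensor X_n \tensor Y_n$. I need to show
\[
\proc(\obser(f))_n \;=\; (c_0 \triangleleft \id) \dcomp (c_1 \triangleleft \id) \dcomp \cdots \dcomp (c_n \triangleleft \id) \dcomp (\id_{Y_0 \cdots Y_n} \tensor \coUnit_{M_n}) \;=\; f_n.
\]

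First I would verify the base case $n=0$: here $M_0 = X_0 \tensor Y_0$, and $(c_0 \triangleleft \id) \dcomp (\id_{Y_0} \tensor \coUnit_{M_0})$ collapses to $c_0 = f_0$ directly by the counit axiom $(\blackComonoid \dcomp (\id \tensor \coUnit)) = \id$ on $X_0$ and $Y_0$. The inductive step is the heart of the matter: assuming the statement for $n$, I would rewrite $\proc(\obser(f))_{n+1}$ by splitting the final discard $\coUnit_{M_{n+1}}$ along the tensor factorisation $M_{n+1} = M_n \tensor X_{n+1} \tensor Y_{n+1}$ into $\coUnit_{M_n} \tensor \coUnit_{X_{n+1}} \tensor \coUnit_{Y_{n+1}}$ (uniformity of counits in a Markov category). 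Then, using naturality of $\coUnit$ and the unitality of $\blackComonoid$, the final factor $(c_{n+1} \triangleleft \id_{M_n \tensor X_{n+1} \tensor Y_{n+1}}) \dcomp (\id \tensor \coUnit_{M_n \tensor X_{n+1} \tensor Y_{n+1}})$ collapses to $c_{n+1}$ triangled onto the previous outputs, yielding precisely the pattern $f_n \triangleleft c_{n+1} = c_0 \triangleleft \cdots \triangleleft c_{n+1} = f_{n+1}$.

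The key diagrammatic lemma I will lean on is that, for any morphisms $g$ and $h$ of appropriate type, the composite $(g \triangleleft \id) \dcomp (h \triangleleft \id) \dcomp (\id \tensor \coUnit)$ equals $g \triangleleft h$; this is a direct consequence of the associativity of the triangle operation together with naturality of $\coUnit$ on the memory carried between the two stages. Iterating this lemma $n$ times collapses the whole telescoping composition to $c_0 \triangleleft c_1 \triangleleft \cdots \triangleleft c_n$, which is $f_n$ by the factorisation.

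Finally I would verify marginalisation is preserved, though this is automatic once the componentwise identity $\proc(\obser(f))_n = f_n$ is established, since $f$ was assumed to be a stochastic process. The main obstacle is bookkeeping: the memory channels of $\obser(f)$ accumulate the entire history including the $Y_i$ outputs, and one has to carefully distinguish between ``the $Y_i$ output that flows forward as part of the stream output'' and ``the copy of $Y_i$ stored in memory''. The triangle notation together with the uniformity and naturality axioms of Markov categories (\Cref{def:markovcategory}, cf.~\Cref{figure:assoctriangle}) makes this bookkeeping manageable, but one must be careful to only ever use naturality of $\coUnit$ (which holds) and never naturality of $\blackComonoid$ (which does not hold in $\Stoch$).
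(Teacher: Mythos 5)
Your proposal is correct and follows essentially the same route as the paper: the paper's proof is the single identity $f_{n} = c_{0} \triangleleft \dots \triangleleft c_{n} = (c_{0} \triangleleft \im) \dcomp \dots \dcomp (c_{n} \triangleleft \im) \dcomp \varepsilon_{M_{n}}$, and your induction (counitality for the base case, the telescoping lemma via triangle-associativity and naturality of $\coUnit$ for the step) is just that identity spelled out in detail. Your explicit caution to use only naturality of discarding, never of copying, is exactly the right bookkeeping discipline.
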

\begin{proof}
  Indeed, for $c_{n}$ some family of conditionals,
  \[f_{n} = c_{0} \triangleleft \dots \triangleleft c_{n}
    = (c_{0} \triangleleft \im) \comp  \dots \comp  (c_{n} \triangleleft \im) \comp  \varepsilon_{M_{n}}.\qedhere\]
\end{proof}

\begin{thm}\label{theorem:obstoch}
  Observational sequences in $\Stoch$ are in bijection with stochastic processes.
\end{thm}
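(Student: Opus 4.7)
The plan is to complete the bijection by showing that $\obser$ and $\proc$ are mutually inverse. One direction, $\proc \circ \obser = \id$, is already established in \Cref{prop:procobs}. What remains is the other direction: given an observational sequence $[\braket{g_n \colon M_{n-1} \tensor X_n \to M_n \tensor Y_n}]_\approx$, we must show that $\obser(\proc(g)) = [\braket{g_n}]_\approx$.

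First I would unpack both sides. The stochastic process $f = \proc(g)$ has components $f_n = \tid{g_0} \dcomp \tid{g_1} \dcomp \cdots \dcomp \tid{g_n} \dcomp \tid{\varepsilon_{M_n}}$. To build $\obser(f)$ we choose a family of conditionals $c_n$ for $f$, and form the observational sequence $[\braket{(c_n \triangleleft \id)}]_\approx$. Since $\Stoch$ is \productive{} (by~\Cref{appendix:productivemarkov}, as $\Stoch$ has conditionals and ranges), by~\Cref{lemma:observationalstatefulisterminalsequence} observational equivalence is determined by all $n$-stage truncations, so it suffices to prove that for every $n \in \naturals$,
\[\bra{(c_0 \triangleleft \id)\,|\,(c_1 \triangleleft \id)\,|\,\cdots\,|\,(c_n \triangleleft \id)} = \bra{g_0\,|\,g_1\,|\,\cdots\,|\,g_n}\]
as elements of $\Stage{n}(\stream{X},\stream{Y})$.

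The key step will be to show that the two families define the same stochastic process \emph{with memory exposed at stage $n$}. By the naturality of discarding in any Markov category, the sequence $g_0,\ldots,g_n$ followed by $\varepsilon_{M_n}$ (i.e.~discarding the output memory) recovers $f_n$; that is precisely the defining property of $\proc$. On the other hand, by construction of the conditionals and the triangle associativity, $(c_0 \triangleleft \id) \dcomp \cdots \dcomp (c_n \triangleleft \id) \dcomp \tid{\varepsilon}$ equals $c_0 \triangleleft c_1 \triangleleft \cdots \triangleleft c_n = f_n$. So both $n$-tuples, once composed and discarded on their memory channels, produce the same morphism in $\Stoch$. I would then invoke \Cref{prop:condtoproc} (or rather the argument in its proof, using ranges) in the stronger form that says: whenever two sequences compose, after discarding the final memory, to the same stochastic process and share equal prefixes of the same shape, they are equal as $n$-stage processes. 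Concretely, applying the range construction of \Cref{appendix:productivemarkov} to slide range idempotents past the bars lets one identify the memory channels of $g_n$ and of $(c_n \triangleleft \id)$ up to dinaturality.

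The main obstacle, and the part that needs the most care, is bridging the asymmetry between the two memory types: the $g_n$ carry arbitrary memories $M_n$, while the $(c_n \triangleleft \id)$ carry the canonical memory $X_0 \tensor Y_0 \tensor \cdots \tensor X_n \tensor Y_n$. The induction in \Cref{prop:condtoproc} already handles exactly this sort of passage between different memory choices by building ranges $r_n$ at each stage and sliding them across bars by dinaturality. I would mimic that argument: construct $r_n$ as the range of $g_0 \dcomp \cdots \dcomp g_n$ composed with the copy/discard structure linking the $M_n$ memory to the canonical conditional memory, and push these $r_n$ across the bars to identify the two $n$-stage truncations. With both truncations equal for every $n$, productivity of $\Stoch$ concludes that $\obser(\proc(g)) = [\braket{g_n}]_\approx$, yielding the desired bijection. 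Functoriality of the correspondence with respect to sequential and parallel composition is then straightforward since both operations were already shown to preserve the marginalisation property on the $\StochProc$ side and sequential/parallel composition of streams on the $\oSeq$ side.
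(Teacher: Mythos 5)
Your overall architecture matches the paper's: injectivity of $\obser$ follows from $\proc(\obser(f)) = f$ (\Cref{prop:procobs}), and the remaining content is to show that an arbitrary observational sequence $[\braket{g_n}]_{\approx}$, compared truncation by truncation (legitimate since $\Stoch$ is \productive{}), agrees with one in conditional canonical form. The gap lies in how you justify the key equality $\bra{(c_0 \triangleleft \im)|\dots|(c_n \triangleleft \im)} = \bra{g_0|\dots|g_n}$. You reduce it to ``both tuples compose-and-discard to the same $f_n$'' and then appeal to a ``stronger form'' of \Cref{prop:condtoproc} asserting that equal composites imply equal $n$-stage processes. That stronger form is essentially the injectivity of the truncation-to-composite assignment, i.e.\ the substance of the theorem itself, so invoking it is circular; and \Cref{prop:condtoproc} only establishes it when \emph{both} tuples are already of the shape $c_i \triangleleft \im$, because its range construction relies on the memory at stage $i$ being exactly the observable history $X_0 \tensor Y_0 \tensor \dots \tensor X_i \tensor Y_i$. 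For an arbitrary memory $M_i$ that argument does not apply, so ``sliding range idempotents past the bars'' cannot by itself identify the two memory channels.

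The missing idea is simpler and uses no ranges: normalize $\bra{g_0|\dots|g_n}$ inductively by factoring the current last-stage morphism with the \emph{conditionals of the Markov category} as $g_{k} = c_{k} \triangleleft c_{M}$, where $c_{k}$ is its marginal onto $Y_{k}$ and $c_{M}$ is the conditional onto the memory; the factor $c_{M}$ acting on the final memory wire is then absorbed by dinaturality (the last $r_n$ is removed, cf.\ \Cref{remark:stagenotation}), or slid into the next stage when one is present. Iterating yields $\bra{g_0|\dots|g_n} = \bra{(c_0\triangleleft\im)|\dots|(c_n\triangleleft\im)}$ directly, which is surjectivity of $\obser$; \Cref{prop:condtoproc} is then needed only to reconcile the conditionals extracted this way with the ones you chose for $\proc(g)$. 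With that substitution your argument closes; as written, the crucial step is unsupported.
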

\begin{proof}
  The function $\obser$ is injective by \Cref{prop:procobs}.
  We only need to show it is also surjective.

  We will prove that any n-stage process $\bra{g_{0}|\dots|g_{n}}$
  can be equivalently written in the form $\bra{(c_{0} \triangleleft \im)|\dots|(c_{n} \triangleleft \im)}$.
  We proceed by induction.
  Given any $\bra{g_{0}}$ we use conditionals and dinaturality to rewrite it as
  \[\bra{g_{0}} = \bra{c_{0} \triangleleft c_{M}} = \bra{c_{0} \triangleleft \im}.\]
  Given any $\bra{c_{0} \triangleleft \im|\dots|c_{n} \triangleleft \im |g_{n+1}}$,
  we use again conditionals and dinaturality to rewrite it as
  \[\begin{aligned}
    \bra{c_{0} \triangleleft \im|\dots|c_{n} \triangleleft \im|g_{n+1}} = \\
    \bra{c_{0} \triangleleft \im|\dots|c_{n} \triangleleft \im|c_{n+1}\triangleleft c_{M}} = \\
    \bra{c_{0} \triangleleft \im|\dots|c_{n} \triangleleft \im|c_{n+1}\triangleleft \im}.
  \end{aligned}\]
   We have shown that $\obser$ is both injective and surjective.
\end{proof}

\begin{thm}[From \Cref{th:stochasticprocesses}]
  \label{theorem:obstochiso}
  \label{th:stochasticprocesses}
  The category $\StochProc$ of stochastic processes is monoidally isomorphic to the category $\STREAM$ over $\Stoch$.
\end{thm}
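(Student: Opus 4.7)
The plan is to upgrade the bijection of \Cref{theorem:obstoch} into a monoidal isomorphism of categories. Since $\Stoch$ is a Markov category with conditionals and ranges, it is \productive{} (\Cref{appendix:productivemarkov}), so \Cref{theorem:observationalfinalcoalgebra} identifies the hom-set $\STREAM(\stream{X},\stream{Y})$ with $\oSeq(\stream{X},\stream{Y})$, and \Cref{theorem:obstoch} then identifies the latter with $\StochProc(\stream{X},\stream{Y})$ via the mutually inverse maps $\proc$ and $\obser$. What remains is to check that this hom-set bijection is an identity-on-objects functor and respects the tensor product.

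First I would verify functoriality. For identities, it is immediate that $\proc(\obser(\im_{\stream{X}}))_{n} = \im_{X_{0}} \tensor \dots \tensor \im_{X_{n}}$, which is the identity controlled stochastic process on $\stream{X}$. For composition, given monoidal streams $f \in \STREAM(\stream{X},\stream{Y})$ and $g \in \STREAM(\stream{Y},\stream{Z})$, I would pick observational-sequence representatives $\bra{f_{0}|\dots|f_{n}}$ and $\bra{g_{0}|\dots|g_{n}}$ and use the explicit description of sequential composition in \Cref{def:sequentialstream} to write a representative of $f \dcomp g$ stagewise. Composing inside $\Stoch$ and then discarding the two memory channels $M_{n}(f) \tensor M_{n}(g)$ at stage $n$, and invoking naturality of $\coUnit$ in the Markov category $\Stoch$, one obtains $\proc(f \dcomp g)_{n} = \proc(f)_{n} \dcomp \proc(g)_{n}$, which is exactly componentwise composition in $\StochProc$.

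Next I would address the monoidal structure. The tensor of two streams is given stagewise by $\bra{f_{0} \tensor g_{0}|\dots|f_{n} \tensor g_{n}}$ up to symmetries (\Cref{def:parallelstream,lemma:parfunassociativememories}). Under $\proc$, discarding the joint memory $M_{n}(f) \tensor M_{n}(g)$ factors as discarding each component separately (uniformity of $\coUnit$ in a Markov category), so $\proc(f \tensor g)_{n} = \proc(f)_{n} \tensor \proc(g)_{n}$ up to the evident symmetries, which is the componentwise tensor of controlled stochastic processes. The unitors, associators and symmetry of $\STREAM$ come from constant memoryless streams (\Cref{def:inclusion}), so they are sent to the structural morphisms of $\StochProc$. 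Putting this together gives a symmetric monoidal, identity-on-objects functor $\STREAM \to \StochProc$ whose action on hom-sets is the bijection from \Cref{theorem:obstoch}, hence a monoidal isomorphism.

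The main obstacle is bookkeeping the dinatural equivalence throughout: one must check that choosing different representatives of an observational sequence (with different memory objects $M_{n}$) yields the same controlled stochastic process, and conversely that the inverse map $\obser$, constructed from a family of conditionals in \Cref{prop:condtoproc}, behaves functorially even though the choice of conditionals is not canonical. Both issues are handled by the fact already used in the proof of \Cref{theorem:obstoch}: discarding memories makes the result independent of these choices, because $\Stoch$ has conditionals and ranges, and because counits are natural in any Markov category. Once this is in place, the two categories are equal on objects and on hom-sets, and all structural data agrees, so they are monoidally isomorphic.
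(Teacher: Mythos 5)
Your proposal is correct and follows essentially the same route as the paper: start from the bijection of \Cref{theorem:obstoch}, then verify that $\proc$ preserves identities, composition, and tensoring by computing stagewise in $\Stoch$ and using naturality and uniformity of the discarding maps to commute the memory-discards past the other factors. The paper's proof is just the two displayed calculations for $\proc(g \dcomp h)_{n}$ and $\proc(g \tensor h)_{n}$; your extra remarks about well-definedness under dinaturality and about the choice of conditionals are already absorbed into \Cref{theorem:obstoch} and \Cref{prop:condtoproc}, so they add care but not a different argument.
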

\begin{proof}
  We have shown in \Cref{theorem:obstoch} that $\proc$ is a bijection.
  Let us show that it preserves compositions. Indeed,
  \[\begin{aligned}
      \proc(g \comp  h)_{n}
      & = \tid{g_0} \comp  \tid{h_{0}} \comp  \dots \comp  \tid{g_{n}}\comp  \tid{h_{n}}\comp  \tid{\varepsilon_{M_{n} \tensor N_{n}}} \\
      & = \tid{g_0} \comp  \dots \comp  \tid{g_{n}}\comp  \tid{h_{0}} \comp  \dots \comp  \tid{h_{n}}\comp  \tid{(\varepsilon_{M_{n}} \tensor \varepsilon_{N_{n}})} \\
      & = \tid{g_0} \dots \tid{g_{n}} \comp \tid{\varepsilon_{M_{n}}} \comp  \tid{h_{0}} \dots \tid{h_{n}} \comp  \tid{\varepsilon_{N_{n}}} \\
      & = \proc(g)_{n} \comp  \proc(h)_{n}.
  \end{aligned}\]
It also trivially preserves the identity. It induces thus an identity-on-objects functor which is moreover
an equivalence of categories. Let us finally show that it preserves tensoring of morphisms.
  \[\begin{aligned}
      \proc(g \tensor h)_{n}
      & = \tid{(g_0 \tensor h_{0})} \comp  \dots  \comp \tid{(g_{n} \tensor h_{n})} \comp  \tid{\varepsilon_{M_{n} \tensor N_{n}}} \\
      & = \tid{(g_0 \tensor h_{0})} \comp  \dots \tid{((g_{n}  \varepsilon_{M_{n}}) \tensor (h_{n} \varepsilon_{N_{n}}))}  \\
      & = \tid{(g_0 \dots g_{n}  \varepsilon_{M_{n}})} \tensor \tid{(h_{0} \dots h_{n}  \varepsilon_{N_{n}})} \\
      & = \proc(g)_{n} \tensor \proc(h)_{n}.
  \end{aligned}\]
It is thus also a monoidal equivalence.
\end{proof}

We expect that the theorem above can be generalised to interesting categories
of probabilistic channels over measurable spaces (such as the ones covered
in \cite{panangaden1999, cho2019, fritz2020}).

\begin{cor}
  \(\StochProc\) is a \feedbackMonoidalCategory{}.
\end{cor}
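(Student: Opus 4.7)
The plan is to transport the feedback structure of $\STREAM$ along the monoidal isomorphism established in \Cref{theorem:obstochiso}. First, I would invoke \Cref{appendix:productivemarkov} together with the fact that $\Stoch$ has conditionals and ranges to conclude that $\Stoch$ is productive. Then, by \Cref{th:monoidalstreamsfeedback}, the category $\STREAM$ over $\Stoch$ carries a feedback monoidal structure $(\STREAM,\fbk)$ over the delay endofunctor $\delay \colon \NStoch \to \NStoch$.

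Next, I would observe that feedback monoidal structure is preserved along monoidal isomorphisms. Concretely, given the identity-on-objects monoidal isomorphism $\proc \colon \STREAM \to \StochProc$ from \Cref{theorem:obstochiso}, I would define the delay functor on $\StochProc$ to agree with $\delay$ on objects (since both categories have the same objects, namely sequences in $\NStoch$), and define the feedback operator on a morphism $f \in \StochProc(\delay \stream{S} \tensor \stream{X}, \stream{S} \tensor \stream{Y})$ by
\[\fbk^{\StochProc}_{\stream{S}}(f) \defn \proc(\fbk_{\stream{S}}(\proc^{-1}(f))).\]
Since $\proc$ is a monoidal isomorphism, each of the five feedback axioms (tightening, vanishing, joining, strength, sliding) for $\fbk^{\StochProc}$ is equivalent, under applying $\proc^{-1}$ to both sides, to the corresponding axiom for $\fbk$ in $\STREAM$, which holds by \Cref{th:monoidalstreamsfeedback}. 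The naturality and monoidality of $\delay$ on $\StochProc$ are likewise inherited.

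The main obstacle is essentially bookkeeping: one should check that the transported delay endofunctor on $\StochProc$ admits an intrinsic description, for instance as the shift-by-one endofunctor sending a process $f_{n} \colon X_{0} \times \dots \times X_{n} \to \distr(Y_{0} \times \dots \times Y_{n})$ to the process that produces a trivial first output and then runs $f$ on the remaining inputs. Once this identification is made, the feedback operator on $\StochProc$ acquires the expected concrete meaning of wiring the output stream $\stream{S}$ of a process back to its delayed input stream $\delay \stream{S}$, producing the controlled stochastic process that feeds its own sampled state back as a control variable at the next time step. Verifying that this direct description agrees with the transported one follows from unfolding \Cref{def:sequentialstream} and the definition of $\proc$ in \Cref{theorem:obstochiso}, but no substantive new argument is required beyond the isomorphism itself.
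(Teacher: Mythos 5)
Your proposal is correct and matches the paper's (implicit) argument: the corollary is drawn directly from the monoidal isomorphism $\StochProc \cong \STREAM$ of \Cref{theorem:obstochiso} together with the feedback structure on $\STREAM$ from \Cref{th:monoidalstreamsfeedback}, exactly the transport-of-structure you describe. The extra bookkeeping you note about identifying the transported delay endofunctor intrinsically is a reasonable refinement but not something the paper carries out.
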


\subsection{Examples}

We have characterized in two equivalent ways the notion of controlled stochastic process.
This yields a categorical semantics for probabilistic dataflow programming:
we may use the syntax of feedback monoidal categories to specify simple stochastic programs and evaluate their semantics in $\StochProc$.

\begin{exa}[Random Walk] \label{example:walk}

Recall the morphism $\mathsf{walk} \in \STREAM(\mathbf{1},\mathbb{Z})$ that we
depicted back in \Cref{string:walk}.

Here, $\mathsf{unif} \in \STREAM(\mathbf{1},\{-1,1\})$, is a uniform random generator
that, at each step, outputs either $1$ or $(-1)$. The output of this
uniform random generator is then added to the current position, and we declare
the starting position to be $0$.
Our implementation of this morphism \cite{Roman_Arrow_Streams_for_2022}, following the definitions from \Cref{sec:monoidal-streams} is, by \Cref{th:stochasticprocesses}, a discrete stochastic process, and it produces samples like the following ones.
\[\begin{aligned}
& [0,1,0,-1,-2,-1,-2,-3,-2,-3,\dots] \\
& [0,1,2,1,2,1,2,3,4,5,\dots] \\
& [0,-1,-2,-1,-2,-1,0,-1,0,-1,\dots]
\end{aligned}\]
 \end{exa}

\begin{exa}[Ehrenfest model]\label{example:ehrenfest}
The Ehrenfest model \cite[\S 1.4]{kelly11reversibility} is a simplified model of particle diffusion.

\begin{figure}[h!]
\begin{minipage}{0.35\linewidth}
\tikzset{every picture/.style={line width=0.85pt}} %
\begin{tikzpicture}[x=0.75pt,y=0.75pt,yscale=-1,xscale=1]
\draw    (240,130) .. controls (240.4,111.8) and (300.67,118.5) .. (300,100) ;
\draw   (210,130) -- (250,130) -- (250,150) -- (210,150) -- cycle ;
\draw   (270,130) -- (310,130) -- (310,150) -- (270,150) -- cycle ;
\draw    (230.08,62.47) .. controls (231.97,102.56) and (260.33,130.96) .. (260,170) ;
\draw [shift={(230,60)}, rotate = 88.89] [color={rgb, 255:red, 0; green, 0; blue, 0 }  ][line width=0.75]    (10.93,-3.29) .. controls (6.95,-1.4) and (3.31,-0.3) .. (0,0) .. controls (3.31,0.3) and (6.95,1.4) .. (10.93,3.29)   ;
\draw    (230,150) -- (230,158.87) ;
\draw    (220,170) .. controls (219.8,155.8) and (240.2,155.8) .. (240,170) ;
\draw  [fill={rgb, 255:red, 0; green, 0; blue, 0 }  ,fill opacity=1 ] (227.1,158.87) .. controls (227.1,157.27) and (228.4,155.97) .. (230,155.97) .. controls (231.6,155.97) and (232.9,157.27) .. (232.9,158.87) .. controls (232.9,160.47) and (231.6,161.77) .. (230,161.77) .. controls (228.4,161.77) and (227.1,160.47) .. (227.1,158.87) -- cycle ;
\draw    (300,170) -- (300,201.13) ;
\draw    (290,150) -- (290,160) ;
\draw  [fill={rgb, 255:red, 0; green, 0; blue, 0 }  ,fill opacity=1 ] (287.1,158.87) .. controls (287.1,157.27) and (288.4,155.97) .. (290,155.97) .. controls (291.6,155.97) and (292.9,157.27) .. (292.9,158.87) .. controls (292.9,160.47) and (291.6,161.77) .. (290,161.77) .. controls (288.4,161.77) and (287.1,160.47) .. (287.1,158.87) -- cycle ;
\draw    (280,170) .. controls (279.8,155.8) and (300.2,155.8) .. (300,170) ;
\draw    (260,170) .. controls (260.67,184.17) and (279.67,185.5) .. (280,170) ;
\draw    (230,60) .. controls (229.8,45.8) and (250.2,45.8) .. (250,60) ;
\draw    (260,80) .. controls (260.14,92.14) and (280,117.83) .. (280,130) ;
\draw   (290,60) -- (330,60) -- (330,80) -- (290,80) -- cycle ;
\draw    (310,80) -- (310,90) ;
\draw  [fill={rgb, 255:red, 0; green, 0; blue, 0 }  ,fill opacity=1 ] (307.1,90) .. controls (307.1,88.4) and (308.4,87.1) .. (310,87.1) .. controls (311.6,87.1) and (312.9,88.4) .. (312.9,90) .. controls (312.9,91.6) and (311.6,92.9) .. (310,92.9) .. controls (308.4,92.9) and (307.1,91.6) .. (307.1,90) -- cycle ;
\draw    (300,100) .. controls (299.8,85.8) and (320.2,85.8) .. (320,100) ;
\draw    (200,170) .. controls (200.67,184.17) and (219.67,185.5) .. (220,170) ;
\draw    (240,170) -- (240,200) ;
\draw    (170,60) .. controls (169.8,45.8) and (190.2,45.8) .. (190,60) ;
\draw    (300,130) .. controls (300.33,111.17) and (320.67,126.83) .. (320,100) ;
\draw    (170.08,62.47) .. controls (171.96,102.57) and (200,131.29) .. (200,170) ;
\draw [shift={(170,60)}, rotate = 88.89] [color={rgb, 255:red, 0; green, 0; blue, 0 }  ][line width=0.75]    (10.93,-3.29) .. controls (6.95,-1.4) and (3.31,-0.3) .. (0,0) .. controls (3.31,0.3) and (6.95,1.4) .. (10.93,3.29)   ;
\draw   (180,60) -- (220,60) -- (220,80) -- (180,80) -- cycle ;
\draw   (190,20) -- (230,20) -- (230,40) -- (190,40) -- cycle ;
\draw  [fill={rgb, 255:red, 255; green, 255; blue, 255 }  ,fill opacity=1 ] (240,60) -- (280,60) -- (280,80) -- (240,80) -- cycle ;
\draw  [fill={rgb, 255:red, 255; green, 255; blue, 255 }  ,fill opacity=1 ] (250,20) -- (290,20) -- (290,40) -- (250,40) -- cycle ;
\draw    (210,40) -- (210,60) ;
\draw    (270,40) -- (270,60) ;
\draw    (200,80) .. controls (200.14,92.14) and (220,117.83) .. (220,130) ;
\draw (230,140) node  [font=\normalsize]  {$move$};
\draw (290,140) node  [font=\normalsize]  {$move$};
\draw (310,70) node  [font=\normalsize]  {$unif$};
\draw (200,70) node  [font=\normalsize]  {$fby$};
\draw (210,30) node  [font=\footnotesize]  {$( 1..4)$};
\draw (260,70) node  [font=\normalsize]  {$fby$};
\draw (270,30) node  [font=\footnotesize]  {$()$};
\end{tikzpicture}
\end{minipage}
\begin{minipage}{0.45\linewidth}
  \[\begin{gathered}
     \mathsf{ehr} \defn \\
     (1...4) \tensor () ;
     \fbk( \sigma ; \\
    \fby \tensor  \fby \tensor \mathsf{unif}; \\
    \im \tensor \im \tensor \COPY {;} \sigma ; \\
     \mathsf{move} \tensor \mathsf{move} {;} \\
     \COPY)
  \end{gathered}\]
\end{minipage}
\caption{Ehrenfest model: sig. flow graph and morphism.}
\label{fig:example:ehrenfest}
\end{figure}
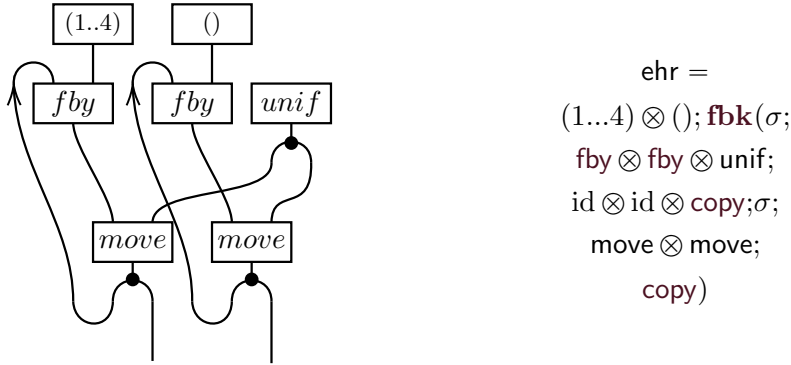

Assume we have two urns with 4 balls, labelled from 1 to 4.
Initially, the balls are all in the first urn.
We randomly (and uniformly) pick an integer from 1 to 4, and the ball labelled by that
number is removed from its box and placed in the other box.
We iterate the procedure, with independent uniform selections each time.

Our implementation of this morphism \cite{Roman_Arrow_Streams_for_2022}, following the definitions from \Cref{sec:monoidal-streams} yields samples such as the following.

$\begin{aligned}
& [([2,3,4],[1]),\ & ([3,4],[1,2]),\ && ([1,3,4],[2]), \\
& ([1,4],[2,3]),\ & ([1],[2,3,4]),\ && ([],[1,2,3,4]), \\
& ([2],[1,3,4]),\ & \dots]
\end{aligned}$
 \end{exa}

\section{Conclusions}
Monoidal streams are a common generalization of streams, causal functions and stochastic processes.
In the same way that streams give semantics to dataflow programming~\cite{wadge1985lucid,halbwachs1991lustre} with plain functions, monoidal streams give semantics to dataflow programming with monoidal theories of processes.
Signal flow graphs are a common tool to describe control flow in dataflow programming.
Signal flow graphs are also the natural string diagrams of feedback monoidal categories.
Monoidal streams form a feedback monoidal category, and signal flow graphs are a formal syntax to describe and reason about monoidal streams.
The second syntax we present comes from the type theory of monoidal categories, and it is inspired by the original syntax of dataflow programming.
We have specifically studied stochastic dataflow programming, but the same framework allows for \emph{linear}, \emph{quantum} and \emph{effectful} theories of resources.

The literature on dataflow and feedback is rich enough to provide multiple diverging definitions and approaches.
What we can bring to this discussion are universal constructions.
Universal constructions justify some mathematical object as \emph{the canonical object} satisfying some properties.
In our case, these exact properties are extracted from three, arguably under-appreciated, but standard category-theoretic tools: \emph{dinaturality}, \emph{feedback}, and \emph{coalgebra}.
\emph{Dinaturality}, profunctors and coends, sometimes regarded as highly theoretical developments, are the natural language to describe how processes communicate and compose.
\emph{Feedback}, sometimes eclipsed by trace in the mathematical literature, keeps appearing in multiple variants across computer science.
\emph{Coalgebra} is the established tool to specify and reason about stateful systems.
\pagebreak[5]
\subsection{Further work}

\paragraph{Other theories.}
Many interesting examples of theories of processes are not monoidal but just \emph{premonoidal categories}~\cite{jeffrey97,power02}. For instance, the kleisli categories of arbitrary monads, where effects (e.g. reading and writing to a global state) do not need to commute. %
Premonoidal streams can be constructed by restricting dinaturality to their \emph{centres}, and they seem to be easily implementable in existing software for category theory \cite{defelice2020discopy}.
Another important source of theories of processes that we have not covered is that of \emph{linearly distributive} and \emph{*-autonomous categories} \cite{seely87,cockett1997,blute93,blute96}.

Within monoidal categories, we would like to make monoidal streams explicit in the cases of partial maps~\cite{cockett02} for dataflow programming with different clocks \cite{uustalu2008comonadic}, non-deterministic maps~\cite{broy2001algebra,lee09} and quantum processes~\cite{carette21}. Generalizing this, a notion of process similar to the one we employ is described by Krstic, Launchbury, and Pavlovic \cite{krsticLP01}.

\paragraph{The 2-categorical view.}

We describe the  morphisms of a category as a final coalgebra.
However, it is also straightforward to describe the 2-endofunctor that should give rise to this category as a final coalgebra itself.

\paragraph{Implementation of the type theory.}

Justifying that the output of monoidal streams is the expected one requires some computations, which we have already implemented separately in the Haskell programming language (see \cite{Roman_Arrow_Streams_for_2022}).

Agda has similar foundations and supports the coinductive definitions of this text (\Cref{sec:monoidal-streams}).
It is possible to implement a whole interpreter for a \Lucid{}-like stochastic programming language with a dedicated parser, but that requires some software engineering effort that we postpone for further work.

\paragraph{Expressivity.}
A final question we do not pursue here is \emph{expressivity}: the class of functions a monoidal stream can define.
Analogous to Bucciarelli and Leperchey's finitary-PCF-interdefinability-classes~\cite{bucciarelli2004hypergraphs} there would be a notion of interdefinability relative to a PROP of generators, even if it remains unclear to us if their hypergraph approach could be transferred to our case.

We are reasonably confident that other questions about expressivity of monoidal streams can be answered from our \Cref{th:extensionalfreefeedback} and the previous literature on the subject.
There might be analogous results to those of Kahn, Plotkin and Park~\cite{kahn1976coroutines,kahn1993concrete,park1981concurrency} on recursive stream equations and to those of Panangaden and Stark~\cite{panangaden1988computations} on total-monotone-continuous relations.
Our streams are synchronous \cite{milner1980calculus,milner1983calculi} and thus it should be of interest to compare monoidal streams over Scott-continuous functions with the ``synchronous Kahn networks'' of Caspi and Pouzet~\cite{caspi1996synchronous}.
 \vfill

\subsection*{Acknowledgements}

We thank Pawe{\l} Soboci{\'n}ski, Edward
Morehouse, Niels Voorneveld, George Kaye, Chad Nester, Nathanael Arkor, Prakash
Panangaden and Ichiro Hasuo for discussion. We thank the anonymous reviewers at
LMCS and LiCS'22, whose comments have helped improve this manuscript multiple
times; we thank the reviewers for the earlier versions of this work at NWPT'21
and SYCO8.

\newpage
\bibliographystyle{alphaurl}
\bibliography{bibliography}

\end{document}